\newcommand{\lyxmathsym}[1]{\ifmmode\begingroup\def\b@ld{bold}
  \text{\ifx\math@version\b@ld\bfseries\fi#1}\endgroup\else#1\fi}
\providecommand{\algorithmname}{Algorithm}
\numberwithin{figure}{section}
\numberwithin{equation}{section}
\theoremstyle{plain}
\newtheorem*{question*}{\protect\questionname}
\theoremstyle{plain}
\newtheorem{thm}{\protect\theoremname}[section]
\theoremstyle{definition}
\newtheorem{defn}[thm]{\protect\definitionname}
\theoremstyle{plain}
\newtheorem{cor}[thm]{\protect\corollaryname}
\theoremstyle{plain}
\newtheorem{lem}[thm]{\protect\lemmaname}
\theoremstyle{plain}
\newtheorem{question}[thm]{\protect\questionname}
\theoremstyle{remark}
\newtheorem{rem}[thm]{\protect\remarkname}
\theoremstyle{plain}
\newtheorem{prop}[thm]{\protect\propositionname}
\theoremstyle{remark}
\newtheorem*{acknowledgement*}{\protect\acknowledgementname}
\renewcommand{\paragraph}{%
  \@startsection{paragraph}{4}%
  {\z@}{1.25ex \@plus 1ex \@minus .2ex}{-1em}%
  {\normalfont\normalsize\bfseries}%
}
\numberwithin{equation}{subsection}
\providecommand{\acknowledgementname}{Acknowledgement}
\providecommand{\corollaryname}{Corollary}
\providecommand{\definitionname}{Definition}
\providecommand{\lemmaname}{Lemma}
\providecommand{\propositionname}{Proposition}
\providecommand{\questionname}{Question}
\providecommand{\remarkname}{Remark}
\providecommand{\theoremname}{Theorem}
\begin{document}
%--------------------------------------------------------------------------------------------------------------------------------
% Environment shortcuts
%--------------------------------------------------------------------------------------------------------------------------------
\def\balign#1\ealign{\begin{align}#1\end{align}}
\def\baligns#1\ealigns{\begin{align*}#1\end{align*}}
\def\balignat#1\ealign{\begin{alignat}#1\end{alignat}}
\def\balignats#1\ealigns{\begin{alignat*}#1\end{alignat*}}
\def\bitemize#1\eitemize{\begin{itemize}#1\end{itemize}}
\def\benumerate#1\eenumerate{\begin{enumerate}#1\end{enumerate}}

% Align environments that use textstyle instead of displaystyle
\newenvironment{talign*}
 {\let\displaystyle\textstyle\csname align*\endcsname}
 {\endalign}
\newenvironment{talign}
 {\let\displaystyle\textstyle\csname align\endcsname}
 {\endalign}

\def\balignst#1\ealignst{\begin{talign*}#1\end{talign*}}
\def\balignt#1\ealignt{\begin{talign}#1\end{talign}}
%---------------------------------------------------

%--------------------------------------------------------------------------------------------------------------------------------
% Redefine left and right to remove initial and trailing space
%--------------------------------------------------------------------------------------------------------------------------------
\let\originalleft\left
\let\originalright\right
\renewcommand{\left}{\mathopen{}\mathclose\bgroup\originalleft}
\renewcommand{\right}{\aftergroup\egroup\originalright}

%--------------------------------------------------------------------------------------------------------------------------------
% Words with special symbols
%--------------------------------------------------------------------------------------------------------------------------------
\def\Gronwall{Gr\"onwall\xspace}
\def\Holder{H\"older\xspace}
\def\Ito{It\^o\xspace}
\def\Nystrom{Nystr\"om\xspace}
\def\Schatten{Sch\"atten\xspace}
\def\Matern{Mat\'ern\xspace}

%--------------------------------------------------------------------------------------------------------------------------------
% Smaller citations
%--------------------------------------------------------------------------------------------------------------------------------
\def\tinycitep*#1{{\tiny\citep*{#1}}}
\def\tinycitealt*#1{{\tiny\citealt*{#1}}}
\def\tinycite*#1{{\tiny\cite*{#1}}}
\def\smallcitep*#1{{\scriptsize\citep*{#1}}}
\def\smallcitealt*#1{{\scriptsize\citealt*{#1}}}
\def\smallcite*#1{{\scriptsize\cite*{#1}}}

%--------------------------------------------------------------------------------------------------------------------------------
% Colors
%--------------------------------------------------------------------------------------------------------------------------------
\def\blue#1{\textcolor{blue}{{#1}}}
\def\green#1{\textcolor{green}{{#1}}}
\def\orange#1{\textcolor{orange}{{#1}}}
\def\purple#1{\textcolor{purple}{{#1}}}
\def\red#1{\textcolor{red}{{#1}}}
\def\teal#1{\textcolor{teal}{{#1}}}

%--------------------------------------------------------------------------------------------------------------------------------
% Font styles
%--------------------------------------------------------------------------------------------------------------------------------
\def\mbi#1{\boldsymbol{#1}} % Bold and italic (math bold italic)
\def\mbf#1{\mathbf{#1}}
\def\mrm#1{\mathrm{#1}}
\def\tbf#1{\textbf{#1}}
\def\tsc#1{\textsc{#1}}

%--------------------------------------------------------------------------------------------------------------------------------
% Bold and italic variables
%--------------------------------------------------------------------------------------------------------------------------------
\def\mbiA{\mbi{A}}
\def\mbiB{\mbi{B}}
\def\mbiC{\mbi{C}}
\def\mbiDelta{\mbi{\Delta}}
\def\mbif{\mbi{f}}
\def\mbiF{\mbi{F}}
\def\mbih{\mbi{g}}
\def\mbiG{\mbi{G}}
\def\mbih{\mbi{h}}
\def\mbiH{\mbi{H}}
\def\mbiI{\mbi{I}}
\def\mbim{\mbi{m}}
\def\mbiP{\mbi{P}}
\def\mbiQ{\mbi{Q}}
\def\mbiR{\mbi{R}}
\def\mbiv{\mbi{v}}
\def\mbiV{\mbi{V}}
\def\mbiW{\mbi{W}}
\def\mbiX{\mbi{X}}
\def\mbiY{\mbi{Y}}
\def\mbiZ{\mbi{Z}}

%--------------------------------------------------------------------------------------------------------------------------------
% Textstyle vs. displaystyle
%--------------------------------------------------------------------------------------------------------------------------------
\def\textsum{{\textstyle\sum}} % Sum in textstyle form
\def\textprod{{\textstyle\prod}} % Prod in textstyle form
\def\textbigcap{{\textstyle\bigcap}} % Bigcap in textstyle form
\def\textbigcup{{\textstyle\bigcup}} % Bigcup in textstyle form

%--------------------------------------------------------------------------------------------------------------------------------
% Mathematical sets
%--------------------------------------------------------------------------------------------------------------------------------
\def\reals{\mathbb{R}} % Real number symbol
\def\integers{\mathbb{Z}} % Integer symbol
\def\rationals{\mathbb{Q}} % Rational numbers
\def\naturals{\mathbb{N}} % Natural numbers
\def\complex{\mathbb{C}} % Complex numbers

\def\what#1{\widehat{#1}}

\def\twovec#1#2{\left[\begin{array}{c}{#1} \\ {#2}\end{array}\right]}
\def\threevec#1#2#3{\left[\begin{array}{c}{#1} \\ {#2} \\ {#3} \end{array}\right]}
\def\nvec#1#2#3{\left[\begin{array}{c}{#1} \\ {#2} \\ \vdots \\ {#3}\end{array}\right]} % An n-vector with three arguments

%--------------------------------------------------------------------------------------------------------------------------------
% Eigenvalues
%--------------------------------------------------------------------------------------------------------------------------------
\def\maxeig#1{\lambda_{\mathrm{max}}\left({#1}\right)}
\def\mineig#1{\lambda_{\mathrm{min}}\left({#1}\right)}

%--------------------------------------------------------------------------------------------------------------------------------
% Operators
%--------------------------------------------------------------------------------------------------------------------------------
\def\Re{\operatorname{Re}} % Real part
\def\indic#1{\mbb{I}\left[{#1}\right]} % Indicator function
\def\logarg#1{\log\left({#1}\right)} % log with argument
\def\polylog{\operatorname{polylog}}
\def\maxarg#1{\max\left({#1}\right)} % max with argument
\def\minarg#1{\min\left({#1}\right)} % min with argument
\def\Earg#1{\E\left[{#1}\right]}
\def\Esub#1{\E_{#1}}
\def\Esubarg#1#2{\E_{#1}\left[{#2}\right]}
\def\bigO#1{\mathcal{O}\left(#1\right)} % big-oh notation
\def\littleO#1{o(#1)} % big-oh notation
\def\P{\mbb{P}} % Probability symbol
\def\Parg#1{\P\left({#1}\right)}
\def\Psubarg#1#2{\P_{#1}\left[{#2}\right]}
\def\Trarg#1{\Tr\left[{#1}\right]} % Trace with argument
\def\trarg#1{\tr\left[{#1}\right]} % trace with argument
\def\Var{\mrm{Var}} % Variance symbol
\def\Vararg#1{\Var\left[{#1}\right]}
\def\Varsubarg#1#2{\Var_{#1}\left[{#2}\right]}
\def\Cov{\mrm{Cov}} % Covariance symbol
\def\Covarg#1{\Cov\left[{#1}\right]}
\def\Covsubarg#1#2{\Cov_{#1}\left[{#2}\right]}
\def\Corr{\mrm{Corr}} % Covariance symbol
\def\Corrarg#1{\Corr\left[{#1}\right]}
\def\Corrsubarg#1#2{\Corr_{#1}\left[{#2}\right]}
\newcommand{\info}[3][{}]{\mathbb{I}_{#1}\left({#2};{#3}\right)} % Information symbol
\newcommand{\staticexp}[1]{\operatorname{exp}(#1)} % An exponential with parens that do not resize with input
\newcommand{\loglihood}[0]{\mathcal{L}} % log likelihood

% Copied from mathrsfs.sty

%--------------------------------------------------------------------------------------------------------------------------------
% Optimization macros
%--------------------------------------------------------------------------------------------------------------------------------
%\providecommand{\argmax}{\mathop\mathrm{arg max}} % Defining math symbols
%\providecommand{\argmin}{\mathop\mathrm{arg min}}
\providecommand{\arccos}{\mathop\mathrm{arccos}}
\providecommand{\dom}{\mathop\mathrm{dom}}
\providecommand{\diag}{\mathop\mathrm{diag}}
\providecommand{\tr}{\mathop\mathrm{tr}}
\providecommand{\card}{\mathop\mathrm{card}}
\providecommand{\sign}{\mathop\mathrm{sign}}
\providecommand{\conv}{\mathop\mathrm{conv}} % Convex hull
\def\rank#1{\mathrm{rank}({#1})}
\def\supp#1{\mathrm{supp}({#1})}

\providecommand{\minimize}{\mathop\mathrm{minimize}}
\providecommand{\maximize}{\mathop\mathrm{maximize}}
\providecommand{\subjectto}{\mathop\mathrm{subject\;to}}

\def\openright#1#2{\left[{#1}, {#2}\right)}

%--------------------------------------------------------------------------------------------------------------------------------
% Proof environments
%--------------------------------------------------------------------------------------------------------------------------------
\ifdefined\nonewproofenvironments\else
% The Theorems are numbered consecutively
% Lemmas are numbered by section, and observations, claims, facts, and 
% assumptions take their numbering. Propositions and definitions have their
% own numbering by section.
\ifdefined\ispres\else
% These conflict with Beamer definitions in pres mode
% \newtheorem{theorem}{Theorem}
% \newtheorem{lemma}[theorem]{Lemma}
% \newtheorem{corollary}[theorem]{Corollary}
% \newtheorem{definition}[theorem]{Definition}
% \newtheorem{fact}[theorem]{Fact}
% \renewenvironment{proof}{\noindent\textbf{Proof.}\hspace*{.3em}}{\qed \vspace{.1in}}
% \newenvironment{proof-sketch}{\noindent\textbf{Proof Sketch}
%   \hspace*{1em}}{\qed\bigskip\\}
% \newenvironment{proof-idea}{\noindent\textbf{Proof Idea}
%   \hspace*{1em}}{\qed\bigskip\\}
% \newenvironment{proof-of-lemma}[1][{}]{\noindent\textbf{Proof of Lemma {#1}}
%   \hspace*{1em}}{\qed\\}
%   \newenvironment{proof-of-proposition}[1][{}]{\noindent\textbf{Proof of Proposition {#1}}
%   \hspace*{1em}}{\qed\\}
% \newenvironment{proof-of-theorem}[1][{}]{\noindent\textbf{Proof of Theorem {#1}}
%   \hspace*{1em}}{\qed\\}
% \newenvironment{proof-attempt}{\noindent\textbf{Proof Attempt}
%   \hspace*{1em}}{\qed\bigskip\\}
% \newenvironment{proofof}[1]{\noindent\textbf{Proof of {#1}}
%   \hspace*{1em}}{\qed\bigskip\\}
 
% \newtheorem*{remark*}{Remark}
% \newenvironment{remark}{\noindent\textbf{Remark.}
%   \hspace*{0em}}{\smallskip}%\bigskip}
% \newenvironment{remarks}{\noindent\textbf{Remarks}
%   \hspace*{1em}}{\smallskip}
% \fi
% \newtheorem{observation}[theorem]{Observation}
% \newtheorem{proposition}[theorem]{Proposition}
% \newtheorem{claim}[theorem]{Claim}
% \newtheorem{assumption}{Assumption}
% \theoremstyle{definition}
% \newtheorem{example}[theorem]{Example}
%\renewcommand{\theassumption}{\Alph{assumption}} % Set counter for assumptions
                                                 % to be alphabetical
\fi
% Makes equation numbers have (1.1) style
% \numberwithin{equation}{section}
% \numberwithin{equation}{subsection}
\makeatletter
\@addtoreset{equation}{section}
\makeatother
\def\theequation{\thesection.\arabic{equation}}

\newcommand{\cmark}{\ding{51}}

\newcommand{\xmark}{\ding{55}}

%--------------------------------------------------------------------------------------------------------------------------------
% Equation environments
%--------------------------------------------------------------------------------------------------------------------------------
\newcommand{\eq}[1]{\begin{align}#1\end{align}}
\newcommand{\eqn}[1]{\begin{align*}#1\end{align*}}
\renewcommand{\Pr}[1]{\mathbb{P}\left( #1 \right)}
\newcommand{\Ex}[1]{\mathbb{E}\left[#1\right]}
%\newcommand{\var}[1]{\text{Var}\left(#1\right)}
%\newcommand{\ind}[1]{{\mathbbm{1}}_{\{ #1 \}} }
%\newcommand{\abs}[1]{\left|#1\right|}

%--------------------------------------------------------------------------------------------------------------------------------
% Comment environments
%--------------------------------------------------------------------------------------------------------------------------------
\newcommand{\matt}[1]{{\textcolor{Maroon}{[Matt: #1]}}}
\newcommand{\kook}[1]{{\textcolor{blue}{[Kook: #1]}}}
\definecolor{OliveGreen}{rgb}{0,0.6,0}
\newcommand{\sv}[1]{{\textcolor{OliveGreen}{[Santosh: #1]}}}

\global\long\def\on#1{\operatorname{#1}}%

\global\long\def\bw{\mathsf{Ball\ walk}}%
\global\long\def\sw{\mathsf{Speedy\ walk}}%
\global\long\def\gw{\mathsf{Gaussian\ walk}}%
\global\long\def\ps{\mathsf{Proximal\ sampler}}%

\global\long\def\har{\mathsf{Hit\text{-}and\text{-}Run}}%
\global\long\def\gc{\mathsf{Gaussian\ cooling}}%
\global\long\def\ino{\mathsf{\mathsf{In\text{-}and\text{-}Out}}}%
\global\long\def\tgc{\mathsf{Tilted\ Gaussian\ cooling}}%
\global\long\def\PS{\mathsf{PS}}%
\global\long\def\psunif{\mathsf{PS}_{\textup{unif}}}%
\global\long\def\psexp{\mathsf{PS}_{\textup{exp}}}%
\global\long\def\psann{\mathsf{PS}_{\textup{ann}}}%
\global\long\def\psgauss{\mathsf{PS}_{\textup{Gauss}}}%
\global\long\def\eval{\mathsf{Eval}}%
\global\long\def\mem{\mathsf{Mem}}%

\global\long\def\O{\mathcal{O}}%
\global\long\def\Otilde{\widetilde{\mathcal{O}}}%
\global\long\def\Omtilde{\widetilde{\Omega}}%

\global\long\def\E{\mathbb{E}}%
\global\long\def\Z{\mathbb{Z}}%
\global\long\def\P{\mathbb{P}}%
\global\long\def\N{\mathbb{N}}%

\global\long\def\R{\mathbb{R}}%
\global\long\def\Rd{\mathbb{R}^{d}}%
\global\long\def\Rdd{\mathbb{R}^{d\times d}}%
\global\long\def\Rn{\mathbb{R}^{n}}%
\global\long\def\Rnn{\mathbb{R}^{n\times n}}%

\global\long\def\psd{\mathbb{S}_{+}^{d}}%
\global\long\def\pd{\mathbb{S}_{++}^{d}}%

\global\long\def\defeq{\stackrel{\mathrm{{\scriptscriptstyle def}}}{=}}%

\global\long\def\veps{\varepsilon}%
\global\long\def\lda{\lambda}%
\global\long\def\vphi{\varphi}%
\global\long\def\K{\mathcal{K}}%

\global\long\def\half{\frac{1}{2}}%
\global\long\def\nhalf{\nicefrac{1}{2}}%
\global\long\def\texthalf{{\textstyle \frac{1}{2}}}%
\global\long\def\ltwo{L^{2}}%

\global\long\def\ind{\mathds{1}}%
\global\long\def\op{\mathsf{op}}%
\global\long\def\ch{\mathsf{Ch}}%
\global\long\def\kls{\mathsf{KLS}}%
\global\long\def\ts{\mathsf{Ts}}%
\global\long\def\hs{\textup{HS}}%

\global\long\def\cpi{C_{\mathsf{PI}}}%
\global\long\def\clsi{C_{\mathsf{LSI}}}%
\global\long\def\cch{C_{\mathsf{Ch}}}%
\global\long\def\clch{C_{\mathsf{logCh}}}%
\global\long\def\cexp{C_{\mathsf{exp}}}%
\global\long\def\cgauss{C_{\mathsf{Gauss}}}%

\global\long\def\chooses#1#2{_{#1}C_{#2}}%

\global\long\def\vol{\on{vol}}%

\global\long\def\law{\on{law}}%

\global\long\def\tr{\on{tr}}%

\global\long\def\diag{\on{diag}}%

\global\long\def\diam{\on{diam}}%

\global\long\def\poly{\on{poly}}%

\global\long\def\polylog{\on{polylog}}%

\global\long\def\Diag{\on{Diag}}%

\global\long\def\inter{\on{int}}%

\global\long\def\esssup{\on{ess\,sup}}%

\global\long\def\proj{\on{Proj}}%

\global\long\def\e{\mathrm{e}}%

\global\long\def\id{\mathrm{id}}%

\global\long\def\spanning{\on{span}}%

\global\long\def\rows{\on{row}}%

\global\long\def\cols{\on{col}}%

\global\long\def\rank{\on{rank}}%

\global\long\def\T{\mathsf{T}}%

\global\long\def\bs#1{\boldsymbol{#1}}%

\global\long\def\eu#1{\EuScript{#1}}%

\global\long\def\mb#1{\mathbf{#1}}%

\global\long\def\mbb#1{\mathbb{#1}}%

\global\long\def\mc#1{\mathcal{#1}}%

\global\long\def\mf#1{\mathfrak{#1}}%

\global\long\def\ms#1{\mathscr{#1}}%

\global\long\def\mss#1{\mathsf{#1}}%

\global\long\def\msf#1{\mathsf{#1}}%

\global\long\def\textint{{\textstyle \int}}%
\global\long\def\Dd{\mathrm{D}}%
\global\long\def\D{\mathrm{d}}%
\global\long\def\grad{\nabla}%
 
\global\long\def\hess{\nabla^{2}}%
 
\global\long\def\lapl{\triangle}%
 
\global\long\def\deriv#1#2{\frac{\D#1}{\D#2}}%
 
\global\long\def\pderiv#1#2{\frac{\partial#1}{\partial#2}}%
 
\global\long\def\de{\partial}%
\global\long\def\lagrange{\mathcal{L}}%
\global\long\def\Div{\on{div}}%

\global\long\def\Gsn{\mathcal{N}}%
 
\global\long\def\BeP{\textnormal{BeP}}%
 
\global\long\def\Ber{\textnormal{Ber}}%
 
\global\long\def\Bern{\textnormal{Bern}}%
 
\global\long\def\Bet{\textnormal{Beta}}%
 
\global\long\def\Beta{\textnormal{Beta}}%
 
\global\long\def\Bin{\textnormal{Bin}}%
 
\global\long\def\BP{\textnormal{BP}}%
 
\global\long\def\Dir{\textnormal{Dir}}%
 
\global\long\def\DP{\textnormal{DP}}%
 
\global\long\def\Expo{\textnormal{Expo}}%
 
\global\long\def\Gam{\textnormal{Gamma}}%
 
\global\long\def\GEM{\textnormal{GEM}}%
 
\global\long\def\HypGeo{\textnormal{HypGeo}}%
 
\global\long\def\Mult{\textnormal{Mult}}%
 
\global\long\def\NegMult{\textnormal{NegMult}}%
 
\global\long\def\Poi{\textnormal{Poi}}%
 
\global\long\def\Pois{\textnormal{Pois}}%
 
\global\long\def\Unif{\textnormal{Unif}}%

\global\long\def\bpar#1{\bigl(#1\bigr)}%
\global\long\def\Bpar#1{\Bigl(#1\Bigr)}%

\global\long\def\abs#1{|#1|}%
\global\long\def\babs#1{\bigl|#1\bigr|}%
\global\long\def\Babs#1{\Bigl|#1\Bigr|}%

\global\long\def\snorm#1{\|#1\|}%
\global\long\def\bnorm#1{\bigl\Vert#1\bigr\Vert}%
\global\long\def\Bnorm#1{\Bigl\Vert#1\Bigr\Vert}%

\global\long\def\sbrack#1{[#1]}%
\global\long\def\bbrack#1{\bigl[#1\bigr]}%
\global\long\def\Bbrack#1{\Bigl[#1\Bigr]}%

\global\long\def\sbrace#1{\{#1\}}%
\global\long\def\bbrace#1{\bigl\{#1\bigr\}}%
\global\long\def\Bbrace#1{\Bigl\{#1\Bigr\}}%

\global\long\def\Abs#1{\left\lvert #1\right\rvert }%
\global\long\def\Par#1{\left(#1\right)}%
\global\long\def\Brack#1{\left[#1\right]}%
\global\long\def\Brace#1{\left\{  #1\right\}  }%

\global\long\def\inner#1{\langle#1\rangle}%
 
\global\long\def\binner#1#2{\left\langle {#1},{#2}\right\rangle }%

\global\long\def\norm#1{\lVert#1\rVert}%
\global\long\def\onenorm#1{\norm{#1}_{1}}%
\global\long\def\twonorm#1{\norm{#1}_{2}}%
\global\long\def\infnorm#1{\norm{#1}_{\infty}}%
\global\long\def\fronorm#1{\norm{#1}_{\text{F}}}%
\global\long\def\nucnorm#1{\norm{#1}_{*}}%
\global\long\def\staticnorm#1{\|#1\|}%
\global\long\def\statictwonorm#1{\staticnorm{#1}_{2}}%

\global\long\def\mmid{\mathbin{\|}}%

\global\long\def\otilde#1{\widetilde{\mc O}(#1)}%
\global\long\def\wtilde{\widetilde{W}}%
\global\long\def\wt#1{\widetilde{#1}}%

\global\long\def\KL{\msf{KL}}%
\global\long\def\dtv{d_{\textrm{\textup{TV}}}}%
\global\long\def\FI{\msf{FI}}%
\global\long\def\tv{\msf{TV}}%
\global\long\def\ent{\msf{Ent}}%
\global\long\def\TV{\msf{TV}}%

\global\long\def\cov{\on{cov}}%
\global\long\def\var{\on{var}}%

\global\long\def\cred#1{\textcolor{red}{#1}}%
\global\long\def\cblue#1{\textcolor{blue}{#1}}%
\global\long\def\cgreen#1{\textcolor{green}{#1}}%
\global\long\def\ccyan#1{\textcolor{cyan}{#1}}%

\global\long\def\iff{\Leftrightarrow}%
 
\global\long\def\textfrac#1#2{{\textstyle \frac{#1}{#2}}}%

\title{Faster Logconcave Sampling from a Cold Start in High Dimension\date{}\author{Yunbum Kook\\ Georgia Tech\\  \texttt{yb.kook@gatech.edu} \and Santosh S. Vempala\\ Georgia Tech\\ \texttt{vempala@gatech.edu}}}
\maketitle
\begin{abstract}
We present a faster algorithm to generate a warm start for sampling
an arbitrary logconcave density specified by an evaluation oracle,
leading to the first sub-cubic sampling algorithms for inputs in (near-)isotropic
position. A long line of prior work incurred a warm-start penalty
of at least linear in the dimension, hitting a cubic barrier, even
for the special case of uniform sampling from convex bodies.

Our improvement relies on two key ingredients of independent interest.
(1) We show how to sample given a warm start in weaker notions of
distance, in particular $q$-R\'enyi divergence for $q=\widetilde{\mathcal{O}}(1)$,
whereas previous analyses required stringent $\infty$-R\'enyi divergence
(with the exception of Hit-and-Run, whose known mixing time is higher).
This marks the first improvement in the required warmness since Lov\'asz
and Simonovits (1991). (2) We refine and generalize the log-Sobolev
inequality of Lee and Vempala (2018), originally established for isotropic
logconcave distributions in terms of the diameter of the support,
to logconcave distributions in terms of a geometric average of the
support diameter and the largest eigenvalue of the covariance matrix.
\end{abstract}
\tableofcontents{}

\setcounter{page}{0}
\thispagestyle{empty}\newpage{}

\section{Introduction}

Sampling from high-dimensional distributions is a fundamental problem
that arises frequently in differential privacy \cite{MT07mechanism,HT10geometry,Mironov17renyi},
scientific computing \cite{CV16practical,KLSV22sampling}, and system
biology \cite{TSF13community,HCTFV17chrr}. It has broad applications,
ranging from volume estimation of convex bodies and integration of
high-dimensional functions to Bayesian inference and stochastic optimization.

In this paper, we focus on the problem of sampling arbitrary logconcave
functions given access to an evaluation oracle. An important special
case of this problem, which captures many of its challenges and has
provided the gateway to efficient algorithms, is uniform sampling
from a convex body given by a membership oracle (Definition~\ref{def:welldefined-membership}).
We introduce new ideas and analyses first for uniform sampling, and
later extend them to general logconcave sampling. To put our results
in context, we begin with a brief account of the history of sampling
and the crucial role of warm starts.

\paragraph{A brief history of sampling and warmth.}

In high-dimensional regimes, Markov chain Monte Carlo (MCMC) (or \emph{random
walk}) serves as the primary general approach. At its core, MCMC leverages
a transition kernel designed to explore a space in a way that, after
sufficiently many iterations, the resulting sample has distribution
close to a desired one. The efficiency of an MCMC algorithm is quantified
by its query complexity: given target accuracy $\veps>0$, query complexity
measures the number of oracle queries required to generate a sample
whose law is $\veps$-close to the target distribution, measured by
metrics such as total variation ($\tv$) distance or $\chi^{2}$-divergence
(see \S\ref{sec:prelim} for definitions).

Dyer, Frieze, and Kannan \cite{DFK91random} proposed the first algorithm
for uniformly sampling a convex body $\K\subset\Rn$ in polynomial
time, the central ingredient in the first polynomial time (randomized)
algorithm for volume estimation. They used a walk on a sufficiently
fine grid in a suitably ``smoothened'' body and bounded its \emph{conductance}
(i.e., conditional escape probability) from below by an inverse polynomial
in the dimension $n$ and the Euclidean diameter $D$ of the body.

Following this, Lov\'asz and Simonovits \cite{lovasz90compute,LS90mixing}
introduced the $\bw$, where each proposed step is a uniform random
point in a ball of fixed radius around the current point, and is accepted
if the proposal is in the body. They developed a general framework
for sampling in continuous domains, bounded the conductance via \emph{isoperimetry}
of the target distribution, and introduced \emph{$s$-conductance},
which lower-bounds the conductance of the Markov chain for subsets
of measure at least $s>0$. Since the $\bw$ can have bad starts (e.g.,
near the corner of a convex body), this notion led to a mixing rate
of the $\bw$ from a \emph{warm start} (i.e., $M=\esssup_{\pi_{0}}\D\pi_{0}/\D\pi\lesssim1$
for initial $\pi_{0}$ and target $\pi$), where the mixing rate depends
polynomially on $M$. In words, the warm start refers to an initial
distribution already close to a target in this sense. Hereafter, we
use the $q$-R\'enyi divergence ($\eu R_{q}$) for $q\in(1,\infty]$
to quantify the warmness (see \S\ref{sec:prelim}; recall the monotonicity
of $\eu R_{q}$ in $q$, so $\eu R_{\infty}$-warm is stronger than
$\eu R_{2}$-warm for instance).

To achieve such an $M$-warm start, they used a basic annealing procedure---starting
from a unit ball contained in $\K$ and gradually growing the ball
truncated to $\K$; each distribution in the sequence provides an
$\O(1)$-warm start to the next one. In their analysis, the dependence
on the distance $\veps$ to the target distribution is also inverse
polynomial, $\poly\nicefrac{1}{\veps}$. We note that in their paper
and in all subsequent work till very recently \cite{KZ25Renyi,KV25sampling},
even though the initial distribution has bounded pointwise distance
(i.e., in $\eu R_{\infty}$), the final distribution has weaker guarantees
(in $\chi^{2}$ or $\tv$). 

The next major improvement came in \cite{KLS97random} which introduced
the \emph{$\sw$}, an analytical tool to overcome some challenges
of the $\bw$. It directly uses conductance (rather than $s$-conductance),
so its mixing rate achieves $\log\nicefrac{M}{\veps}$, log-dependence
on both the warmness and the target distance. However, its implementation
uses rejection sampling, so its query complexity scales as $\poly M$.
They used the annealing approach in earlier work to generate a warm
start.

Nevertheless, from an $\O(1)$-warm start in $\eu R_{\infty}$, they
achieved an improved $n^{3}$-complexity of sampling an isotropic
convex body. In a companion paper \cite{KLS95isop}, directly motivated
by the mixing rate of the $\bw$ in the same setup, they posited the
Kannan-Lov\'asz-Simonovits (KLS) conjecture, which has since become
a central and unifying problem for asymptotic convex geometry and
functional analysis. Following a long line of progress, the current
best bound is $\O(\log^{1/2}n)$ \cite{Klartag23log}, which in particular
implies a mixing rate of $\Otilde(n^{2})$ for the $\bw$ in an isotropic
convex body from an $\O(1)$-warm start.

So far, the mixing rate was analyzed based on Cheeger or Poincar\'e
inequalities, which have, at best, logarithmic dependence on the warmness
parameter. Kannan, Lov\'asz, and Montenegro \cite{KL99average,KLM06blocking}
introduced \emph{blocking conductance} to improve the analysis of
the $\sw$. They established a \emph{logarithmic Cheeger inequality}
for convex bodies, improving the mixing rate to $n^{3}$ for isotropic
convex bodies \emph{without} a warm start. While this does not yield
an algorithmic improvement (as the $\sw$ has to ultimately be implemented
with the $\bw$ and rejection sampling), this framework has turned
out to be useful for the analysis of Markov chains.

In a parallel line of work, Lov\'asz and Vempala \cite{LV06hit}
analyzed the $\har$ sampler \cite{Smith84HAR}, showing a mixing
rate of $n^{2}D^{2}$ for a convex body with diameter $D$ under a
weaker warm start, namely in $\eu R_{2}$-divergence (i.e., $M_{2}=\norm{\D\pi_{0}/\D\pi}_{L^{2}(\pi)}$)
with a logarithmic dependence on $M_{2}$. This was later extended
by \cite{LV06fast} to logconcave distributions and has remained the
only rapidly mixing Markov chain from any start for general logconcave
distributions. It played a crucial role in an improved volume algorithm
\cite{LV06simulated}. However, unlike the $\bw$, the KLS conjecture
does not imply a faster mixing rate for $\har$ even in isotropic
convex bodies.

In 2015, using the $\bw$, Cousins and Vempala \cite{CV15bypass}
showed how to sample a well-rounded convex body (a condition weaker
than isotropy) with $n^{3}$ queries via a new annealing scheme called
$\gc$, without any warm start assumption. It uses a sequence of Gaussians
of increasing variance restricted to the convex body, each providing
a warm start to the next. This is the current state-of-the-art and
was recently extended to arbitrary logconcave densities for sampling
(and integration) while preserving the same complexity as the special
case of convex bodies \cite{KV25sampling}. Going below $n^{3}$ has
remained an open problem, even after the near-resolution of the KLS
conjecture.

There is, however, one other development that raises the possibility
of going below cubic time. In 2018, Lee and Vempala \cite{LV24eldan}
established logarithmic Sobolev inequality for isotropic logconcave
distributions, improving the log-Sobolev isoperimetric constant from
$\O(D^{2})$ to tight $\O(D)$. This leads to an improved $n^{2.5}$-mixing
of the $\sw$ in an isotropic convex body from any start. Again, since
the $\sw$ is a conceptual process, it does not give an algorithmic
improvement.

The leads us to the main motivating question of this paper. 
\begin{question*}
Is there an algorithm of sub-cubic complexity to sample a convex body
in near-isotropic position? Can the sub-cubic complexity of the $\sw$
be realized algorithmically? 
\end{question*}
Besides uniform sampling (and general logconcave sampling), a special
case of considerable interest is sampling a truncated Gaussian, i.e.,
a standard Gaussian distribution restricted to an arbitrary convex
body containing the origin. 

The above question opens up a set of concrete challenges, which are
also interesting in their own right. We discuss them in more detail
below. To give a quick preview, we mention briefly that our main result
is an affirmative answer to this main motivating question.

\paragraph{Uniform sampling from a warm start.}

 \cite{LS90mixing} connected the convergence rate of the $\bw$
to the \emph{Cheeger constant} $\cch(\pi)$ defined as follows: for
a probability measure $\pi$ over $\Rn$ and Euclidean distance $d$,
\[
C_{\ch,d}(\pi):=\inf_{S:\pi(S)\le\frac{1}{2}}\frac{\pi(\de S)}{\pi(S)}\,,
\]
where $S$ is any open subset of $\Rn$ with smooth boundary, $\pi(\de S):=\liminf_{\veps\downarrow0}\frac{\pi(S_{\veps})-\pi(S)}{\veps}$,
and $S_{\veps}=\{x:d(x,S)\leq\veps\}$. The $\bw$ requires $\Otilde(n^{2}C_{\ch,d}^{-2}(\pi))$
queries in expectation to reach a distribution that is $\veps$-close
to the target uniform distribution $\pi$ over a convex body in $\tv$-distance,
starting from a $\eu R_{\infty}$-warm start. We use the notation
$\bw:\eu R_{\infty}\to\tv$ to indicate the required input warmness
and output metric. Significant breakthroughs have refined the Cheeger
bound \cite{PW60optimal,KLS95isop,eldan13thin,LV24eldan,Chen21almost,KL22Bourgain,Klartag23log},
leading to the current best bound $C_{\ch,d}^{-2}(\pi)\lesssim\norm{\cov\pi}\log n$
by Klartag \cite{Klartag23log}, where $\norm{\cov\pi}$ denotes the
largest eigenvalue of the covariance matrix of $\pi$.

A different sampler $\har:\eu R_{2}\to\eu R_{2}$, studied in \cite{Lovasz99hit,LV06hit},
was shown to mix using $n^{2}D^{2}C_{\ch,d_{\K}}^{-2}(\pi)$ queries,
where $d_{\K}$ is the cross-ratio distance. Unlike the $\bw$, since
$C_{\ch,d_{\K}}\leq1$ is tight, progress on the KLS conjecture does
not lead to an improvement in its complexity. Since $\norm{\cov\pi}<D^{2}$,
the $\bw$ has a better bound from a $\eu R_{\infty}$-warm start.

Recently, Kook, Vempala, and Zhang~\cite{KVZ24INO} introduced $\ino$
(equivalently, the proximal sampler \cite{LST21structured} for uniform
distributions, referred to here as $\psunif$), a refinement of the
$\bw$. Its convergence rate can be directly related to the \emph{Poincar\'e
constant} for a target $\pi$:
\begin{defn}
A probability measure $\pi$ on $\R^{n}$ satisfies a \emph{Poincar\'e
inequality }with constant $C_{\msf{PI}}(\pi)$ if for any locally
Lipschitz function $f:\R^{n}\to\R$,
\begin{equation}
\var_{\pi}f\leq\cpi(\pi)\,\E_{\pi}[\norm{\nabla f}^{2}]\,,\tag{\ensuremath{\msf{PI}}}\label{eq:pi}
\end{equation}
where $\var_{\pi}f=\E_{\pi}[\abs{f-\E_{\pi}f}^{2}]$.
\end{defn}

In general, the Poincar\'e inequality is implied by the log-Sobolev
inequality.
\begin{defn}
A probability measure $\pi$ on $\R^{n}$ satisfies a \emph{logarithmic
Sobolev inequality} with constant $C_{\msf{LSI}}(\pi)$ if for any
locally Lipschitz function $f:\Rn\to\R$,
\begin{equation}
\ent_{\pi}(f^{2})\leq2\clsi(\pi)\,\E_{\pi}[\norm{\nabla f}^{2}]\,,\tag{\ensuremath{\msf{LSI}}}\label{eq:lsi}
\end{equation}
where $\ent_{\pi}(f^{2}):=\E_{\pi}[f^{2}\log f^{2}]-\E_{\pi}[f^{2}]\log\E_{\pi}[f^{2}]$. 
\end{defn}

$\psunif$ achieves a query complexity of $qn^{2}\cpi(\pi)$ to generate
a sample with $\eu R_{q}$-guarantee, starting from a $\eu R_{\infty}$-warmness
(i.e., $\psunif:\eu R_{\infty}\to\eu R_{q}$ under \eqref{eq:pi}).
This guarantee can be viewed as a right generalization of the $\bw$
to $\eu R_{q}$-divergence, since for logconcave probability measures
$\pi$,
\[
\frac{1}{4}\leq\frac{C_{\ch,d}^{-2}(\pi)}{\cpi(\pi)}\leq9\,,
\]
where the first inequality is due to Cheeger \cite{Chee70lower},
and the second follows from the Buser--Ledoux inequality \cite{Buser82iso,Led04spectral}.
Then, Kook and Zhang \cite{KZ25Renyi} provided a query complexity
of $n^{2}\clsi(\pi)$ for $\eu R_{\infty}$-guarantees from a $\eu R_{\infty}$-warmness
(i.e., $\psunif:\eu R_{\infty}\to\eu R_{\infty}$ under \eqref{eq:lsi}),
where $\clsi\lesssim D^{2}$ holds for any logconcave distribution
with finite support of diameter $D$. Even though its output guarantee
is stronger than that of the $\bw$, it still has the same bottleneck,
requiring a rather stringent $\eu R_{\infty}$-warmness. A natural
question arises from these observations. 
\begin{question*}
Can we combine the better isoperimetric bounds of $\psunif$ (e.g.,
$\cpi$) with the relaxed warmness requirements of $\har$ (e.g.,
$\eu R_{2}$-warmness)?
\end{question*}
Our \textbf{Result 1} in \S\ref{subsec:results} demonstrates that
$\psunif$ requires only $\eu R_{c}$-warmness for $c=\Otilde(1)$\footnote{$\Otilde(1)$ is not a rigorous expression to indicate $\polylog$,
but we abuse notation here for the sake of exposition.} to obtain a sample with $\eu R_{2}$-guarantee while retaining a
mixing guarantee characterized by \eqref{eq:pi}. Hence, $\psunif:\eu R_{c}\to\eu R_{2}$
under \eqref{eq:pi}, and this relaxes the requirement of stringent
initial warmness $\eu R_{\infty}$ to $\eu R_{c}$ without compromising
query complexity. This result not only improves theoretical guarantees
but also opens doors to more flexible algorithmic design for downstream
tasks, as seen shortly.

Moreover, we show analogous improvements for the $\ps$ for truncated
Gaussian distributions $\pi\gamma_{\sigma^{2}}$ (referred to as $\psgauss$
in \cite{KZ25Renyi}), where $\pi$ is the uniform distribution over
$\K$ and $\gamma_{\sigma^{2}}$ is a standard Gaussian with covariance
$\sigma^{2}I_{n}$. This family of distributions plays an important
role in annealing (e.g., $\gc$). Specifically, we prove that the
warmness requirement can also be relaxed from $\eu R_{\infty}$ to
$\eu R_{c}$ without deteriorating the previously established query
complexity of $n^{2}\clsi(\pi\gamma_{\sigma^{2}})\leq n^{2}\sigma^{2}$
(i.e., $\psgauss:\eu R_{c}\to\eu R_{2}$ under \eqref{eq:pi}).

\paragraph{Warm-start generation by annealing.}

Our discussion naturally prompts the question of how to quickly generate
a warm start. The main idea is \emph{annealing} \cite{DFK91random,KLS97random,LV06simulated,CV18Gaussian,KZ25Renyi}.
It involves constructing a sequence $\{\mu_{i}\}_{i\in[m]}$ of distributions
gradually approaching the target $\pi$, such that $(i)$ $\mu_{1}$
is easy to sample from (e.g., Gaussians), $(ii)$ $\mu_{m}$ is close
to the target $\pi$, and $(iii)$ a current distribution $\mu_{i}$
provides a warm start to the next one $\mu_{i+1}$.

Lov\'asz and Vempala~\cite{LV06simulated} proposed an annealing
scheme maintaining $\eu R_{2}$-warmness, with $\har$ used to transition
across annealing distributions. Due to the approximate nature of the
distributions obtained during annealing, their analysis uses a coupling
argument to account for this issue, providing guarantees for final
samples in $\tv$-distance. Hence, combined with $\har$, when $R^{2}=\E_{\pi}[\norm{\cdot}^{2}]$,
their final sampling algorithm has query complexity of $n^{3}R^{2}$
for $\tv$-guarantees.

$\gc$ \cite{CV18Gaussian} uses a sequence of truncated Gaussians
$\pi\gamma_{\sigma^{2}}$, increasing the variance $\sigma^{2}$ from
$n^{-1}$ to $R^{2}$ according to a predetermined schedule, $\sigma^{2}\gets\sigma^{2}(1+\nicefrac{\sigma^{2}}{R^{2}})$.
This scheme relays stronger $\eu R_{\infty}$-warmness, thus being
more `conservative' than the Lov\'asz--Vempala scheme, but benefits
from the faster mixing of the $\bw$ (for truncated Gaussian) compared
to $\har$. $\gc$ has complexity of $n^{2}(n\vee R^{2})$ with $\tv$-distance
guarantees. Roughly, doubling $\sigma^{2}$ takes $R^{2}/\sigma^{2}$
phases, and with the complexity per phase of $n^{2}\sigma^{2}$, this
gives the overall claimed complexity. Later, Kook and Zhang~\cite{KZ25Renyi}
refined this scheme by replacing the $\bw$ with $\psgauss$ for Gaussian
sampling, achieving final guarantees in $\eu R_{\infty}$-divergence
with the same complexity. 

Now that we have the sampler for truncated Gaussian and uniform distributions
with $\eu R_{c}$-warmness for $c=\Otilde(1)$, we are naturally led
to accelerate an annealing schedule of $\sigma^{2}$ in $\gc$, with
the Lov\'asz--Vempala scheme in mind. This would make it sufficient
to maintain only $\eu R_{c}$-warmness instead of stringent $\eu R_{\infty}$-warmness,
while still leveraging the faster mixing of $\psgauss$. This relaxation
allows us to update $\sigma^{2}$ more \emph{rapidly} by $\sigma^{2}\gets\sigma^{2}(1+\nicefrac{\sigma}{R})$
(\S\ref{subsubsec:faster-sampling}). Doubling $\sigma^{2}$ now
requires roughly $R/\sigma$ annealing phases, and the query complexity
improves to $n^{2}R\sigma$, compared to $n^{2}R^{2}$. Unfortunately,
since the annealing procedure must continue until $\sigma^{2}\asymp R^{2}$,
this seemingly promising approach yields no overall complexity gain.

\paragraph{Log-Sobolev inequality for strongly logconcave distributions.}

This leads us to the question how to sample $\pi\gamma_{\sigma^{2}}$
faster. One possibility is to develop an improved \eqref{eq:lsi}
if possible.  There are at least  two distinct approaches to bounding
$\clsi$ for a strongly logconcave density with support of diameter
$D$. The first, which yields a bound $\clsi(\pi\gamma_{\sigma^{2}})\leq\sigma^{2}$,
directly leverages strong logconcavity and can be established by multiple
ways, e.g., Caffarelli's contraction theorem \cite{Caffarelli00monotonicity}
or the Bakry--\'Emery condition \cite{BGL14analysis}. The second
is of the form $\clsi(\pi\gamma_{\sigma^{2}})\lesssim D^{2}$ \cite{FK99lsi},
independent of strong logconcavity and arises from finite diameter
of the distribution. The latter was improved to $\mathcal{O}(D)$
for \emph{isotropic} logconcave distributions \cite{LV24eldan}.

In fact, our current understanding still remains incomplete. To illustrate
this, consider the uniform distribution $\pi$ over an isotropic convex
body, which has diameter at most $n+1$ \cite{KLS95isop}. Here, the
strong logconcavity-based bound gives $\clsi(\pi\gamma_{n^{2}})\leq n^{2}$,
while the second method, through the Holley--Stroock perturbation
principle (Lemma~\ref{lem:bdd-perturbation}) and the $\O(D)$-bound
above, implies a better bound of $\clsi(\pi\gamma_{n^{2}})\lesssim\clsi(\pi)\lesssim n$.
This gap highlights the following question.
\begin{question*}
Is there a better bound on $\clsi$ for strongly logconcave distributions
with compact support?
\end{question*}
Our \textbf{Result 2} provides a new bound of the form $\clsi(\pi\gamma_{\sigma^{2}})\lesssim D\,\norm{\cov\pi}^{1/2}$,
valid for all $\sigma^{2}>0$ and logconcave distributions $\pi$
supported on diameter-$D$ domains. In the example above, our bound
yields $\clsi(\pi\gamma_{\sigma^{2}})\lesssim n$ for \emph{any} $\sigma^{2}$.

\paragraph{Back to warm-start generation.}

With this new machinery in hand, we return to the warm-start generation
problem. It is now apparent that our improved LSI bound should have
direct algorithmic implications. In the Gaussian-annealing method,
we may assume that the diameter is of order $R$ by a suitable preprocessing,
and thus once we reach $\sigma^{2}\gtrsim R\,\norm{\cov\pi}^{1/2}$,
sampling from $\pi\gamma_{\sigma^{2}}$ can be \emph{accelerated}
due to the faster mixing rate of $n^{2}\clsi(\pi\gamma_{\sigma^{2}})\leq n^{2}R\,\norm{\cov\pi}^{1/2}$.
The complexity for doubling $\sigma^{2}$ before reaching this accelerated
phase is bounded by $n^{2}R\sigma\lesssim n^{2}R^{3/2}\,\norm{\cov\pi}^{1/4}$,
and similarly, subsequent doublings within the accelerated phase also
incur at most the same complexity. Therefore, we establish a faster
sampling algorithm with total query complexity of $n^{2}R^{3/2}\,\norm{\cov\pi}^{1/4}$
(\textbf{Result 3}) (note $\norm{\cov\pi}\leq R^{2}$). Notably, when
$\pi$ is near-isotropic, this translates to an $n^{2.75}$-complexity
sampling algorithm, breaking the previous cubic bound first established
in \cite{CV15bypass,CV18Gaussian}. For the special case of sampling
a standard Gaussian restricted to an arbitrary convex body, the complexity
is $n^{2.5}$, again improving on the previous best cubic bound \cite{CV14cubic}.

\paragraph{Extension to logconcave distributions.}

Our approach can be extended to general logconcave distributions
under a well-defined function oracle. Although the extension is conceptually
similar, it involves some technical complications related to handling
general convex potentials.

The complexity of logconcave sampling under this zeroth-order oracle
has been studied using similar tools, notably by the $\bw$ \cite{LS93random},
$\har$ \cite{LV06fast,LV07geometry}, and the $\ps$ \cite{KV25sampling}.
In \cite{KV25sampling}, sampling from $\pi$ is reduced to sampling
from an exponential distribution: 
\[
\bar{\pi}(x,t)\propto\exp(-nt)\,\ind_{\K}(x,t)\quad\text{for the covex set }\K:=\{(x,t)\in\Rn\times\R:V(x)\leq nt\}\,.
\]
To facilitate warm-start generation, they proposed $\tgc$ that uses
``tilted'' Gaussians defined as $\mu_{\sigma^{2},t}(x,t)\propto\exp(-\rho t)\,\gamma_{\sigma^{2}}(x)\,\ind_{\K}(x,t)$.
Here, $\exp(-\rho t)$ controls the augmented $t$-direction, while
$\gamma_{\sigma^{2}}$ tames the original $x$-direction similar to
approaches in uniform sampling.

Our \textbf{Result 4} demonstrates that both the exponential and the
tilted Gaussian distributions can also be sampled under relaxed warmness
as in the case of uniform sampling. By suitably adapting $\tgc$ and
leveraging the improved LSI bound, we establish that logconcave sampling
for a $\tv$-guarantee can be accomplished using $n^{2.5}+n^{2}R^{3/2}\,\norm{\cov\pi}^{1/4}$
evaluation queries, which is $n^{2.75}$ for near-isotropic distributions,
matching the complexity of uniform sampling. This improves the previous
best complexity of $n^{2}(n\vee R^{2})$ for logconcave sampling \cite{KV25sampling}. 

\subsection{Results\label{subsec:results}}

Here we state our main results: (1) relaxation of a warmness requirement
of the proximal sampler ($\PS$) for uniform and truncated Gaussian
distributions, (2) a new bound on \eqref{eq:lsi} for strongly logconcave
distributions with compact support, (3) uniform and Gaussian sampling
through faster warm-start generation, and (4) an extension of these
results to general logconcave distributions. We defer a detailed discussion
of techniques to \S\ref{sec:techniques} and preliminaries to \S\ref{sec:prelim}.

\paragraph{Result 1: Uniform and Gaussian sampling under a relaxed warmness
(\S\ref{sec:unif-sampling-warm}).}

The proximal sampler is a general sampling framework consisting of
two steps. For a target distribution $\pi^{X}$ and step size $h$,
it considers an augmented distribution defined as $\pi^{X,Y}(x,y)\propto\pi^{X}(x)\,\gamma_{h}(y-x)$,
and then iterates $(i)$ $y_{i+1}\sim\pi^{Y|X=x_{i}}$ and $(ii)$
$x_{i+1}\sim\pi^{X|Y=y_{i+1}}$. Hence, $\psunif$ can be obtained
by setting $\pi^{X}\propto\ind_{\K}(x)$, under which the second step
corresponds to sampling from $\pi^{X|Y=y_{i+1}}=\gamma_{h}(\cdot-y_{i+1})|_{\K}$,
and it can be implemented by rejection sampling with the Gaussian
proposal $\gamma_{h}(\cdot-y_{i+1})$. 

In \S\ref{subsec:Uniform-sampling}, we relax warmness requirements
of $\psunif$ for uniform distributions from $\eu R_{\infty}$ to
$\eu R_{c}$ with $c=\Otilde(1)$, improving the warmness condition
in \cite{KVZ24INO} (see Theorem~\ref{thm:body-unif-samp} for details).
Below, $M_{q}$ indicates an $L^{q}$-norm defined as $M_{q}:=\norm{\D\pi_{0}/\D\pi}_{L^{q}(\pi)}=\exp(\frac{q-1}{q}\,\eu R_{q}(\pi_{0}\mmid\pi))$.

\begin{algorithm}[t]
\hspace*{\algorithmicindent} \textbf{Input:} initial point $x_{0}\sim\pi_{0}$,
convex body $\K\subset\Rn$, iterations $k\in\mathbb{N}$, threshold
$N$, variance $h$.

\hspace*{\algorithmicindent} \textbf{Output:} $x_{k+1}$.

\begin{algorithmic}[1] \caption{$\protect\ps$ $\protect\psunif$\label{alg:prox-unif}}
\FOR{$i=0,\dotsc,k$}

\STATE Sample $y_{i+1}\sim\gamma_{h}(\cdot-x_{i})=\mc N(x_{i},hI_{n})$.\label{line:forward}

\STATE Sample $x_{i+1}\sim\gamma_{h}(\cdot-y_{i+1})|_{\K}=\mc N(y_{i+1},hI_{n})|_{\K}$.
\label{line:backward}

\STATE$\quad(\uparrow)$ {\small Repeat $x_{i+1}\sim\gamma_{h}(\cdot-y_{i+1})$
until $x_{i+1}\in\mc K$. If $\#$attempts$_{i}\,\geq N$, declare
\textbf{Failure}.}

\ENDFOR \end{algorithmic}
\end{algorithm}

\begin{thm}[Uniform sampling from warm start]
\label{thm:unif-sampling-warmstart-intro} Let $\pi$ be a uniform
distribution over a convex body $\K\subset\Rn$ specified by $\mem_{R}(\K)$
with $\lda=\norm{\cov\pi}$. For any $\eta,\veps\in(0,1)$, initial
distribution $\pi_{0}$ with $c=\polylog\frac{n\lda M_{2}}{\eta\veps}$
and $M_{c}=\norm{\nicefrac{\D\pi_{0}}{\D\pi}}_{L^{c}(\pi)}$, we can
use $\psunif$ (Algorithm~\ref{alg:prox-unif}) with suitable choices
of parameters, so that with probability at least $1-\eta$, we obtain
a sample whose law $\mu$ satisfies $\eu R_{2}(\mu\mmid\pi)\leq\veps$,
using
\[
\Otilde\bpar{M_{c}n^{2}\lda\,\polylog\frac{1}{\eta\veps}}
\]
membership queries in expectation.
\end{thm}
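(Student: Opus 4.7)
The plan is to analyze Algorithm~\ref{alg:prox-unif} by decoupling the \emph{ideal} proximal sampler (exact forward/backward Gaussian steps) from its rejection-sampling implementation and controlling the iteration count and per-step cost separately. The two main ingredients, both to be developed in \S\ref{sec:unif-sampling-warm}, are: (i) a $\chi^{2}$-contraction for $\psunif$ under \eqref{eq:pi} that only requires the initial distribution to be $\eu R_{c}$-warm with $c=\polylog$ (rather than $\eu R_{\infty}$-warm); and (ii) a reverse-H\"{o}lder transfer that bounds the expected number of rejection attempts per backward step by the stationary expectation times a factor $\Otilde(M_{c})$.

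First I would establish convergence of the exact chain in $\eu R_{2}$. Using the heat-flow/simultaneous-diffusion representation of the proximal sampler together with \eqref{eq:pi} for $\pi$ (with $\cpi(\pi)\lesssim\lda\log n$ by Klartag), one step contracts $\chi^{2}$-divergence by a factor $(1+h/\cpi(\pi))^{-1}$, so that
\[
\chi^{2}(\mu_{k}\mmid\pi)\leq\chi^{2}(\pi_{0}\mmid\pi)\,\exp\Bpar{-\frac{kh}{\cpi(\pi)}}\,.
\]
Taking $h$ at the conservative threshold $h=\Theta(n^{-2})$ dictated by the per-step rejection analysis and $k=\Otilde\bpar{n^{2}\lda\log\tfrac{M_{c}}{\veps}}$ drives $\eu R_{2}(\mu_{k}\mmid\pi)\leq\veps$, using $\chi^{2}(\pi_{0}\mmid\pi)\leq M_{2}^{2}\leq M_{c}^{2}$ by monotonicity of the $L^{q}$-norm.

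Next I would bound the per-iteration query cost. The forward step (line~\ref{line:forward}) needs no membership query; the backward step (line~\ref{line:backward}) consumes one query per rejection attempt, so the expected number of attempts from state $y$ is $1/\gamma_{h}(\K-y)$. Since $\K\supset B_{1}$ by $\mem_{R}$, a standard Gaussian concentration bound at the chosen $h$ yields $\E_{\pi^{Y}}\bbrack{\gamma_{h}(\K-Y)^{-p}}=\Otilde(1)$ for $p=c/(c-1)$. Reverse H\"{o}lder then gives
\[
\E_{\mu_{i}^{Y}}\Bbrack{\frac{1}{\gamma_{h}(\K-Y)}}\leq\Bnorm{\tfrac{\D\mu_{i}^{Y}}{\D\pi^{Y}}}_{L^{c}(\pi^{Y})}\cdot\Bpar{\E_{\pi^{Y}}\Bbrack{\gamma_{h}(\K-Y)^{-p}}}^{1/p}=\Otilde(M_{c})\,,
\]
where the first factor is controlled by data-processing of $\eu R_{c}$ through the forward Gaussian channel, giving back the initial $M_{c}$-warmness.

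Finally I would choose the failure threshold $N=\Otilde\bpar{M_{c}\log\tfrac{k}{\eta}}$: by Markov's inequality applied to the geometric number of attempts and a union bound over the $k$ iterations, the probability that any backward step exceeds $N$ attempts is at most $\eta$. Conditioned on no such failure, the output law equals that of the ideal chain, so $\eu R_{2}(\mu\mmid\pi)\leq\veps$. The total expected query count is then $k\cdot\Otilde(M_{c})=\Otilde\bpar{M_{c}\,n^{2}\lda\,\polylog\tfrac{1}{\eta\veps}}$. The main obstacle is ingredient (i): previous contraction analyses of $\psunif$ invoked $\eu R_{\infty}$ precisely to absorb a maximum-density factor appearing in the backward kernel, and replacing this by an $\eu R_{c}$ control demands a sharper interpolation along the proximal flow between $\eu R_{c}$ and $\eu R_{2}$, which in turn is what forces the threshold $c=\polylog$ rather than a constant.
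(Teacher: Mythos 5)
The mixing analysis in your proposal (PI-based $\chi^{2}$-contraction, Klartag's $\cpi(\pi)\lesssim\lda\log n$, $h\asymp n^{-2}$, $k=\Otilde(n^{2}\lda)$) matches the paper. But the per-step query bound has a genuine gap that the paper's three-region decomposition is specifically designed to close.

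Your reverse-H\"older step writes
\[
\E_{\mu_{i}^{Y}}\bbrack{\ell^{-1}}\leq\Bnorm{\tfrac{\D\mu_{i}^{Y}}{\D\pi^{Y}}}_{L^{c}(\pi^{Y})}\cdot\bpar{\E_{\pi^{Y}}[\ell^{-p}]}^{1/p}
\]
with $p=c/(c-1)>1$, and asserts $\E_{\pi^{Y}}[\ell^{-p}]=\Otilde(1)$. This is false for \emph{every} $p\geq1$: since $\pi^{Y}$ has density $\ell(y)/\vol\K$ on all of $\Rn$, one has $\E_{\pi^{Y}}[\ell^{-p}]=\int_{\Rn}\ell^{1-p}\,\D y/\vol\K$, which already diverges at $p=1$ (it equals $\int\D y/\vol\K$) and worsens for $p>1$. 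The paper flags this explicitly: even at stationarity, $\E_{\pi^{Y}}[\ell^{-1}]$ is unbounded. So the quantity your H\"older step needs to be finite is not finite, and the ``Gaussian concentration at the chosen $h$'' step cannot produce it. The remedy in the paper is not an unconditional reverse-H\"older on $\ell^{-1}$, but rather (a) working with the truncated random variable $\ell^{-1}\wedge N$ introduced by the threshold, (b) splitting $\Rn$ into the essential region $\K_{\delta}$ with high local conductance, the essential region with low local conductance, and the non-essential region $\K_{\delta}^{c}$, and (c) applying $(p,q)$-H\"older with $p=1+1/\alpha$, $q=1+\alpha$ for $\alpha=\log N$, so that on the high-conductance region $\ell^{-(p-1)}\leq N^{(p-1)p}$ and $N^{1/\alpha}=\O(1)$, while the two remaining regions are controlled by the threshold $N$ and by $\pi_{h}(\K_{\delta}^{c})$. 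Notice the roles of the exponents are reversed from your setup: the warmness is measured at the \emph{large} exponent $q\asymp\log N$, and the integral is taken with the exponent $p$ close to $1$, not its conjugate $c/(c-1)$ applied to $\ell^{-1}$.

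Your failure-probability sketch (Markov plus union bound on the geometric counts with $N=\Otilde(M_{c}\log(k/\eta))$) also does not go through as stated: without pointwise warmness, one cannot directly bound the tail of the geometric variable at a general $y$ drawn from $\mu_{i}^{Y}$, since $\ell(y)$ can be arbitrarily small in the tails. The paper instead bounds $\E_{\mu_{h}}[(1-\ell)^{N}]$ by Cauchy--Schwarz against $\pi_{h}$ (needing only $M_{2}$), and then again decomposes $\E_{\pi_{h}}[(1-\ell)^{2N}]$ over the same three regions. In short, your high-level plan identifies the right obstacle, but the concrete mechanism you propose for handling the backward step does not close it; the paper's new idea is precisely the truncation-plus-decomposition that replaces the reverse-H\"older estimate you rely on.
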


Similarly, we relax the previous $\eu R_{\infty}$-warmness condition
for $\psgauss$ \cite{KZ25Renyi} in \S\ref{subsec:gaussian-sampling}.
See Theorem~\ref{thm:body-gauss-samp} for more details.
\begin{thm}[Restricted Gaussian sampling from warm start]
\label{thm:gauss-sampling-warmstart-intro} Let $\pi$ be the uniform
distribution $\pi$ over a convex body $\K\subset\Rn$ specified by
$\mem_{R}(\K)$ with $\lda=\norm{\cov\pi}$and $x_{0}=0$. Consider
a Gaussian $\pi\gamma_{\sigma^{2}}$ truncated to $\K$. For any $\eta,\veps\in(0,1)$,
initial distribution $\pi_{0}$ with $c=\polylog\frac{n\sigma DM_{2}}{\eta\veps}$
and $M_{c}=\norm{\nicefrac{\D\pi_{0}}{\D\pi}}_{L^{c}(\pi)}$, we can
use $\psgauss$ with suitable choices of parameters, so that with
probability at least $1-\eta$, we obtain a sample whose law $\mu$
satisfies $\eu R_{2}(\mu\mmid\pi\gamma_{\sigma^{2}})\leq\veps$, using
\[
\Otilde\bpar{M_{c}n^{2}(\sigma^{2}\wedge D\lda^{1/2})\,\polylog\frac{1}{\eta\veps}}
\]
membership queries in expectation.
\end{thm}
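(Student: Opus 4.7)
The plan is to run $\psgauss$ on the target $\pi\gamma_{\sigma^{2}}$, with step size $h$ tuned so that the rejection-based backward step succeeds with $\Omega(1)$ probability per attempt, and a rejection-attempt cap $N$ chosen so that the implementation-failure event contributes at most $\eta$ to the total error, exactly in the spirit of Algorithm~\ref{alg:prox-unif} but with the target replaced by $\pi\gamma_{\sigma^{2}}$. One iteration consists of a free forward step $y_{i+1}\sim\mc N(x_{i},hI_{n})$ followed by a backward step which, after completing the square, is Gaussian with variance $(h^{-1}+\sigma^{-2})^{-1}\le h$ and mean $\sigma^{2}y_{i+1}/(h+\sigma^{2})$ truncated to $\K$; this is implemented by Gaussian rejection sampling against $\mem_{R}(\K)$, giving $\Otilde(1)$ expected queries per iteration.

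The mixing estimate uses the LSI-based one-step R\'enyi contraction of the proximal sampler from \cite{KZ25Renyi}: if the target satisfies \eqref{eq:lsi} with constant $\clsi$, then the $\eu R_{q}$-distance to it contracts at a rate governed by $h/\clsi$. For our target I would bound $\clsi(\pi\gamma_{\sigma^{2}})$ by the minimum of two quantities: the Bakry--\'Emery bound $\clsi\le\sigma^{2}$ from $\sigma^{-2}$-strong log-concavity of the Gaussian factor, and the new bound $\clsi\lesssim D\,\norm{\cov\pi}^{1/2}=D\lambda^{1/2}$ supplied by Result 2 of the paper. Taking the minimum gives $\clsi(\pi\gamma_{\sigma^{2}})\lesssim\sigma^{2}\wedge D\lambda^{1/2}$, and plugging into the standard $n^{2}\clsi$ complexity for $\psgauss$ from an $\eu R_{\infty}$-warm start (as in \cite{KZ25Renyi}) yields $\Otilde(n^{2}(\sigma^{2}\wedge D\lambda^{1/2}))$ iterations.

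The final and most delicate ingredient is the relaxation of the required warmness from $\eu R_{\infty}$ to $\eu R_{c}$, in direct parallel with Theorem~\ref{thm:unif-sampling-warmstart-intro}. I would take $c=\polylog(n\sigma DM_{2}/(\eta\veps))$, run a short initial descent along the R\'enyi scale from $\eu R_{c}$ down to $\eu R_{2}$ that introduces at most a linear $M_{c}$ factor in the iteration count, and then continue with pure $\eu R_{2}$ contraction until accuracy $\veps$. Multiplying by the per-iteration cost produces the claimed $\Otilde(M_{c}n^{2}(\sigma^{2}\wedge D\lambda^{1/2})\,\polylog(1/(\eta\veps)))$ membership queries.

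The main obstacle is the coupling between the ideal chain and its rejection-implemented counterpart: failure events must be absorbed into the $\eta$ budget without degrading the $\eu R_{2}$ guarantee, which forces $N$ to be large enough yet compatible with the polylog overhead, and the improved LSI bound must remain valid along the conditioned iterates rather than only at initialization, so that the minimum $\sigma^{2}\wedge D\lambda^{1/2}$ is genuinely achieved uniformly along the chain. Handling this cleanly is the bulk of the technical work; the pieces themselves--the one-step contraction, the LSI bound, and the $\eu R_{c}\to\eu R_{2}$ interpolation--are now available from Results 1 and 2 together with the relaxed-warmness machinery developed in Section~\ref{sec:unif-sampling-warm}.
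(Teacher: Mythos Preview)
Your proposal misidentifies where the linear $M_{c}$ factor enters the complexity, and this is not a cosmetic point: it is the heart of the argument.

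You claim that the Gaussian rejection implementation of the backward step costs $\Otilde(1)$ expected queries per iteration, and that the $M_{c}$ factor instead appears in the iteration count via ``a short initial descent along the R\'enyi scale from $\eu R_{c}$ down to $\eu R_{2}$.'' Neither is how the paper proceeds, and the first claim is false without further work. The expected number of trials at $y$ is $1/\ell(y)$ with $\ell(y)=\P_{\mc N(\tau y,\tau h I_{n})}(\K)$, and when the current law $\nu$ is only $M_{c}$-warm (not $M_{\infty}$-warm), there is no reason for $\E_{\nu_{h}}[\ell^{-1}]$ to be $\O(1)$; indeed $\int \ell^{-1}$ is unbounded in general. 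In the paper the mixing count is $\Otilde(h^{-1}\clsi(\pi\gamma_{\sigma^{2}}))$ with only a \emph{logarithmic} dependence on $M_{2}$; the linear $M_{c}$ factor comes entirely from bounding the per-step rejection cost $\E_{\nu_{h}}[\ell^{-1}\wedge N]$.

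The actual technical content you are missing is the three-region $(p,q)$-H\"older decomposition of this expectation over the essential domain $R=\tau^{-1}\K_{\delta}$: split into $R\cap[\ell\ge N^{-p}]$, $R\cap[\ell<N^{-p}]$, and $R^{c}$, take $p=1+1/\alpha$, $q=1+\alpha$ with $\alpha=\log N$, and use Lemma~\ref{lem:gaussian-effective} and Lemma~\ref{lem:gaussian-helper} to control the first two pieces by $M_{q}\cdot e^{\O(t)}$ and the third by $NM_{2}\sqrt{\mu_{h}(R^{c})}$. The choice $\alpha=\log N$ kills the extra $N^{1/\alpha}$ and forces $q=1+\log N=\Otilde(1)$, which is precisely the $c$ in the statement. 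There is no ``R\'enyi descent'' phase; the $\eu R_{c}$ warmness is consumed solely in this per-step bound, and the DPI ensures it does not degrade across iterations. Your identification of the LSI input $\clsi(\pi\gamma_{\sigma^{2}})\lesssim\sigma^{2}\wedge D\lambda^{1/2}$ is correct, but note this is a property of the fixed target, not of the iterates, so your stated concern that it ``must remain valid along the conditioned iterates'' is a non-issue.
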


\paragraph{Result 2: LSI for strongly logconcave distributions with compact
support (\S\ref{app:LSI-interpolation-SL}).}

To attain the improved bound of $\clsi(\pi\gamma_{h})\lesssim D\,\norm{\cov\pi}^{1/2}$,
we first state two new findings which together yield the improved
bound as an immediate corollary.

For a logconcave distribution $\pi$ with support of diameter $D$,
we have $\clsi(\pi)\lesssim D^{2}$, and $\clsi(\pi)\lesssim D$ when
$\pi$ is near-isotropic. In \S\ref{subsec:LSI-interpolation}, we
find an interpolation of these two bounds.
\begin{thm}
\label{thm:intro-LSI-interpolation} Let $\pi$ be a logconcave distribution
 over $\Rn$ with support of diameter $D>0$. Then,
\[
\clsi(\pi)\lesssim\max\{D\,\norm{\cov\pi}^{1/2},D^{2}\wedge\norm{\cov\pi}\log^{2}n\}\,,
\]
and $\clsi(\pi)\lesssim D\cpi^{1/2}(\pi)\lesssim D\,\norm{\cov\pi}^{1/2}\log^{1/2}n$.
\end{thm}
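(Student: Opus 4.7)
The plan is to first establish the key inequality $\clsi(\pi) \lesssim D\,\cpi^{1/2}(\pi)$, from which the second displayed bound follows immediately by plugging in Klartag's estimate $\cpi(\pi) \lesssim \norm{\cov\pi}\log n$. The first (tighter) bound is then recovered by taking the minimum with two complementary endpoints: the classical Frieze--Kannan bound $\clsi(\pi) \lesssim D^2$, valid for any logconcave measure on a $D$-bounded set, and a bound of order $\norm{\cov\pi}\log^2 n$ obtained by combining Klartag's PI estimate with a Holley--Stroock-style bounded-oscillation argument.

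To prove the key inequality, I would extend the Lee--Vempala stochastic-localization argument (which handles only isotropic $\pi$, where $\cpi(\pi) \asymp 1$ and the bound reduces to $\clsi \lesssim D$) to arbitrary covariance. Run Eldan's stochastic localization $\{\pi_t\}_{t \geq 0}$ with $\pi_0 = \pi$. Almost surely each $\pi_t$ is $t$-strongly logconcave (its log-density acquires a deterministic $-\tfrac{t}{2}\norm{\cdot}^2$ term), so Bakry--\'Emery yields $\clsi(\pi_t) \leq 1/t$. The chain rule for entropy along the localization gives, for any smooth $f$,
\[
\ent_\pi(f^2) = \E\bbrack{\ent_{\pi_T}(f^2)} + \ent\bpar{\E_{\pi_T}[f^2]},
\]
with the inner (conditional) term bounded by $(1/T)\,\E_\pi\bbrack{\norm{\grad f}^2}$.

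The bulk of the work is controlling the outer entropy of the martingale $g_t := \E_{\pi_t}[f^2]$. By It\^o's formula, its quadratic variation is driven by $\snorm{\cov_{\pi_t}(f^2, x)}^2$, and using the support-diameter control (the support of $\pi_t$ sits inside that of $\pi$), Cauchy--Schwarz gives $\snorm{\cov_{\pi_t}(f^2,x)}^2 \leq D^2\,\var_{\pi_t}(f^2)$. Applying the Poincar\'e inequality at $\pi_t$ together with the monotonicity $\cpi(\pi_t) \leq \cpi(\pi)$ (which holds by Brascamp--Lieb, since $\pi_t$ is a Gaussian tilt of the logconcave $\pi$) and integrating in $t$ yields an outer-entropy bound of order $T\,D^2\,\cpi(\pi)\,\E_\pi\bbrack{\norm{\grad f}^2}/\E_\pi[f^2]$ times a bounded-oscillation factor. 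Balancing at $T \asymp (D\,\cpi^{1/2}(\pi))^{-1}$ produces a defective LSI of the claimed order; a standard Rothaus-lemma tightening (e.g.\ applied to $f = 1 + \varepsilon g$ with $\varepsilon \downarrow 0$ or via a sliced-entropy bootstrap) upgrades this to a true LSI.

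The main obstacle lies in the outer-entropy estimate. Two stability inputs along the localization must be verified carefully: (i) $\cpi(\pi_t)$ stays comparable to $\cpi(\pi)$ up to the balanced stopping time $T$ (via Brascamp--Lieb for Gaussian tilts of logconcave measures), and (ii) the defective $\snorm{f}_\infty$-type oscillation factor arising from bounding $\E_{\pi_t}[f^2\,\norm{\grad f}^2]$ by $\snorm{f}_\infty^2\,\E_{\pi_t}[\norm{\grad f}^2]$ can be absorbed cleanly. Carrying the anisotropic dependence so that the correct pair of factors $D$ and $\cpi(\pi)$ (rather than, for instance, $\cpi(\pi)^2$ or spurious factors of the dimension) appears in the final estimate is the central technical challenge, and is where the new geometric-mean interpolation is produced.
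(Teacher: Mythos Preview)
Your stochastic-localization sketch has a genuine gap at the step you flag as needing ``Brascamp--Lieb'': the claimed monotonicity $\cpi(\pi_t)\le\cpi(\pi)$ is not a consequence of Brascamp--Lieb. Brascamp--Lieb gives $\cpi(\pi_t)\le 1/t$ from $t$-strong convexity, but says nothing comparing $\cpi(\pi_t)$ to $\cpi(\pi_0)$. The linear tilt in $\pi_t\propto\pi\cdot e^{\langle\theta_t,x\rangle-\frac{t}{2}\norm x^2}$ can move mass arbitrarily, and controlling $\cpi(\pi_t)$ (equivalently $\norm{\cov\pi_t}$) along the localization is precisely the hard part of the SL machinery---it requires an operator-norm control lemma of Klartag--Lehec type, valid only up to a time $T\asymp(\norm{\cov\pi}\log^2 n)^{-1}$ and only with high probability, not almost surely. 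Without such control your outer-entropy bound does not go through, and even with it the time horizon you get is not $(D\cpi^{1/2})^{-1}$ but $(\norm{\cov\pi}\log^2 n)^{-1}$, which produces the \emph{first} displayed bound rather than $D\cpi^{1/2}$. Separately, the $\norm f_\infty$-type factor you obtain is not the additive $\E[f^2]$ defect that Rothaus removes, so the tightening step is also not routine. Finally, your route to the endpoint $\norm{\cov\pi}\log^2 n$ via ``Klartag's PI estimate plus Holley--Stroock'' is unexplained: Holley--Stroock transfers an LSI, not a PI, and you have not named a reference measure whose LSI is of that order.

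The paper's proof is entirely different and far shorter. It exploits Milman's equivalence $\clsi(\pi)\asymp\cgauss(\pi)$ for logconcave $\pi$: since the support has diameter $D$, the concentration function satisfies $\alpha_\pi(r)=0$ for $r>D$, so any \emph{exponential} concentration bound $\alpha_\pi(r)\le 2e^{-r/C}$ upgrades automatically to the \emph{Gaussian} bound $\alpha_\pi(r)\le 2e^{-r^2/(CD)}$ simply because $r/D\le 1$ on the relevant range. Taking $C\asymp\cpi^{1/2}(\pi)$ (standard PI $\Rightarrow$ exponential concentration) gives $\clsi(\pi)\lesssim D\cpi^{1/2}(\pi)$ directly; taking instead Bizeul's mixed exponential/Gaussian concentration for logconcave measures gives the sharper first bound. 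No SL, no martingale control, and no defective-LSI tightening are needed. The paper does include an SL proof in an appendix, but it works with the log-Cheeger constant and an explicit high-probability control of $\norm{\cov\pi_t}$---exactly the ingredient your sketch tried to bypass.
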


Since $D\geq n^{1/2}$ for isotropic $\pi$, this bound recovers the
$D$-bound for the isotropic case. For a general case, as $\norm{\cov\pi}=\sup_{v\in\mbb S^{n-1}}\E_{\pi}[(v^{\T}(X-\E_{\pi}X))^{2}]\leq D^{2}$,
it also recovers the $D^{2}$-bound. Furthermore, the second bound
achieves $\clsi(\pi)\lesssim D\,\norm{\cov\pi}^{1/2}$ without any
logarithmic factor when the KLS conjecture is true.

Our second result in \S\ref{subsec:cov-str-LC} is purely relevant
to convex geometry, answering a geometric question of how Gaussian
weighting affects the largest eigenvalue of a covariance matrix.
\begin{thm}
\label{thm:intro-cov-str-LC} For a logconcave distribution $\pi$
over $\Rn$ with $R=\E_{\pi}\norm{\cdot}$ and $\lda=\norm{\cov\pi}$,
and any $h\gtrsim R\lda^{1/2}\log^{2}n\log^{2}\nicefrac{R^{2}}{\lda}$,
it holds that
\[
\norm{\cov\pi\gamma_{h}}\lesssim\norm{\cov\pi}\,.
\]
\end{thm}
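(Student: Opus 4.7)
Interpolate between $\pi$ and $\mu:=\pi\gamma_{h}$ along the one-parameter family $\pi_{t}\propto\pi\,e^{-t\|x\|^{2}/2}$ for $t\in[0,1/h]$; each $\pi_{t}$ is logconcave with $\pi_{0}=\pi$ and $\pi_{1/h}=\mu$. Write $m_{t}:=\E_{\pi_{t}}X$, $R_{t}:=\E_{\pi_{t}}\|X\|$, $\Sigma_{t}:=\cov\pi_{t}$, and $\lda_{t}:=\|\Sigma_{t}\|_{\op}$. A direct computation using $\D\log\pi_{t}/\D t=\tfrac{1}{2}(\E_{\pi_{t}}\|X\|^{2}-\|X\|^{2})$ gives, for any fixed unit vector $v\in\mbb{S}^{n-1}$,
\[
\frac{\D}{\D t}\var_{\pi_{t}}(v^{\T}X)=-\tfrac{1}{2}\cov_{\pi_{t}}\bpar{(v^{\T}(X-m_{t}))^{2},\,\|X\|^{2}},
\]
where the $\dot m_{t}$-contribution cancels because $\E_{\pi_{t}}[v^{\T}(X-m_{t})]=0$. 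The plan is to bound this derivative uniformly and conclude via a bootstrap that $\lda_{1/h}\lesssim\lda$.

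\textbf{Derivative bound.} By Cauchy--Schwarz, the derivative above is at most $\tfrac{1}{2}\sqrt{\var_{\pi_{t}}((v^{\T}(X-m_{t}))^{2})\cdot\var_{\pi_{t}}(\|X\|^{2})}$. The 1-D projection $v^{\T}(X-m_{t})$ is centered logconcave with variance at most $\lda_{t}$, so the logconcave 4th-moment estimate gives $\var_{\pi_{t}}((v^{\T}(X-m_{t}))^{2})\lesssim\lda_{t}^{2}$. For the other factor, the Poincar\'e inequality for $\pi_{t}$ with Klartag's KLS-type bound $\cpi(\pi_{t})\lesssim\lda_{t}\,\polylog n$, applied to $\|X\|^{2}$, yields
\[
\var_{\pi_{t}}(\|X\|^{2})\leq4\cpi(\pi_{t})\,\E_{\pi_{t}}\|X\|^{2}\lesssim\lda_{t}\,R_{t}^{2}\,\polylog n,
\]
where $\E_{\pi_{t}}\|X\|^{2}\lesssim R_{t}^{2}$ follows from $\|m_{t}\|\leq R_{t}$ and the logconcave reverse-H\"older bound $\tr\Sigma_{t}=\E_{\pi_{t}}\|X-m_{t}\|^{2}\lesssim(\E_{\pi_{t}}\|X-m_{t}\|)^{2}\lesssim R_{t}^{2}$. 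Taking the supremum over $v$,
\[
\Bnorm{\tfrac{\D}{\D t}\Sigma_{t}}_{\op}\lesssim\lda_{t}^{3/2}\,R_{t}\,\polylog n.
\]

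\textbf{Integration and bootstrap.} Next, $R_{t}$ is nonincreasing along the flow: $\D R_{t}/\D t=-\tfrac{1}{2}\cov_{\pi_{t}}(\|X\|,\|X\|^{2})\leq0$ by the positive correlation (FKG/Chebyshev) of two monotone functions of $\|X\|$, hence $R_{t}\leq R$ throughout. Set up a continuity/bootstrap: suppose $\lda_{t}\leq2\lda$ on a maximal interval $[0,s]$. Integrating the derivative bound gives
\[
\lda_{s}\leq\lda+C\int_{0}^{s}\lda_{t}^{3/2}R_{t}\,\polylog n\,\D t\leq\lda+C'\lda^{3/2}R\,\polylog n\cdot s.
\]
At $s=1/h$, the hypothesis $h\gtrsim R\lda^{1/2}\log^{2}n\log^{2}(R^{2}/\lda)$ forces the correction term below $\lda/2$, closing the bootstrap and yielding $\lda_{1/h}=\|\cov(\pi\gamma_{h})\|\lesssim\lda$.

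\textbf{Main obstacle.} The main technical effort is to make this bootstrap fully rigorous while matching the logarithmic overhead in the hypothesis. The $\log^{2}n$ factor tracks Klartag's KLS bound, used both in the derivative estimate and implicitly in the bootstrap to keep $\cpi(\pi_{t})$ under control. The $\log^{2}(R^{2}/\lda)$ factor comes from a Paouris-type high-probability truncation at $\|X-m_{t}\|\lesssim R\log(R^{2}/\lda)$: this cutoff is needed to justify the reverse-H\"older-style bound $\E_{\pi_{t}}\|X\|^{2}\lesssim R_{t}^{2}$ with only logarithmic loss and to handle tail contributions to the Cauchy--Schwarz step, with the squared threshold producing the quadratic logarithm.
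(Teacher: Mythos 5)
Your proposal is correct and takes a genuinely different route from the paper. The paper rescales to unit operator-norm covariance, then directly estimates the ratio $\E_\pi[(X\cdot v)^2 e^{-\norm X^2/2h}]\big/\E_\pi e^{-\norm X^2/2h}$ via thin-shell concentration (Lipschitz concentration under \eqref{eq:pi}), handling the inner and outer tail contributions with the co-area formula and integration by parts. You instead interpolate along the one-parameter family $\pi_t\propto\pi\,e^{-t\norm x^2/2}$, differentiate $v^\T\Sigma_t v$ in $t$ to get $-\tfrac12\cov_{\pi_t}\bpar{(v^\T(X-m_t))^2,\norm X^2}$, and bound this via Cauchy--Schwarz, the one-dimensional logconcave fourth-moment inequality $\var_{\pi_t}((v^\T(X-m_t))^2)\lesssim(v^\T\Sigma_t v)^2\leq\lda_t^2$, and Poincar\'e applied to $\norm X^2$, then close a continuity bootstrap. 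Every step checks out: the $\dot m_t$-term vanishes because $\E_{\pi_t}[v^\T(X-m_t)]=0$; the bound $\E_{\pi_t}\norm X^2\lesssim R_t^2$ is the standard vector-valued reverse H\"older (Borell) inequality for logconcave measures, which holds with a universal constant; and $R_t$ is nonincreasing since $\cov_{\pi_t}(\norm X,\norm X^2)\geq 0$ by Chebyshev's sum inequality applied to the pushforward of $\norm X$.

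In fact your approach, executed carefully, yields the sharper requirement $h\gtrsim R\lda^{1/2}\log^{1/2}n$: the only logarithm comes from Klartag's bound $\cpi(\pi_t)\lesssim\lda_t\log n$, and it appears under a square root in the Cauchy--Schwarz step. The extra $\log^{3/2}n\log^2(R^2/\lda)$ overhead in the paper's hypothesis appears to be an artefact of its thin-shell method, not intrinsic. One caution: the ``Main obstacle'' paragraph is misguided. You do not need any Paouris-type truncation to justify $\E_{\pi_t}\norm X^2\lesssim R_t^2$ --- that is exactly the reverse-H\"older bound, with no log loss --- and there are no problematic tail contributions in the Cauchy--Schwarz step, since the second factor $\var_{\pi_t}(\norm X^2)$ is already controlled by Poincar\'e with $\norm{\nabla\norm X^2}^2=4\norm X^2$. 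You reverse-engineered the paper's logarithmic factors into your sketch when in fact your argument does not need them.
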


Combining these two results and using the classical result of $\clsi(\pi\gamma_{h})\leq h$
\cite{BGL14analysis}, we state the improved LSI bound in \S\ref{subsec:FI-together}.
\begin{cor}
\label{cor:intro-new-LSI-str-LC} Let $\pi$ be a logconcave distribution
over $\Rn$ with $\lda=\norm{\cov\pi}$ and support of diameter $D>0$.
Then, for any $h>0$,
\[
\clsi(\pi\gamma_{h})\lesssim D\lda^{1/2}\log^{2}n\log^{2}\tfrac{D^{2}}{\lda}\,.
\]
\end{cor}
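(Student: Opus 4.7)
The plan is to prove the corollary by a case split on the Gaussian parameter $h$, combining three ingredients already in hand: (i) the classical Bakry--Émery bound $\clsi(\mu) \le 1/\alpha$ for any $\alpha$-strongly logconcave measure $\mu$; (ii) the covariance-contraction statement Theorem~\ref{thm:intro-cov-str-LC}; and (iii) the diameter-covariance LSI interpolation Theorem~\ref{thm:intro-LSI-interpolation}. Denote the target bound by $h^{\star} := c_{0}\,D\lda^{1/2}\log^{2} n\log^{2}\!\tfrac{D^{2}}{\lda}$ for a suitable absolute constant $c_{0}$, and first reduce to the case that the origin lies in $\supp{\pi}$; since translation preserves $\clsi$, $\lda$, and $D$, this is without loss of generality and has the benefit of ensuring $R := \E_{\pi}\norm{\,\cdot\,} \le D$.

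In the small-$h$ regime $h \le h^{\star}$, I would invoke Bakry--Émery directly: since $\pi\gamma_{h}$ has density proportional to $\exp(-V(x) - \norm{x}^{2}/(2h))$ with $V = -\log\pi$ convex, the potential $-\log(\pi\gamma_{h})$ is $(1/h)$-strongly convex, so $\clsi(\pi\gamma_{h}) \le h \le h^{\star}$. In the complementary regime $h > h^{\star}$, using $R \le D$ together with monotonicity of the logarithm gives $h^{\star} \ge c_{0}R\lda^{1/2}\log^{2} n\log^{2}\!\tfrac{R^{2}}{\lda}$, so for $c_{0}$ large enough the hypothesis of Theorem~\ref{thm:intro-cov-str-LC} is satisfied and it yields $\norm{\cov(\pi\gamma_{h})} \lesssim \lda$. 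Since $\pi\gamma_{h}$ is still logconcave and its support has diameter at most $D$, the second assertion of Theorem~\ref{thm:intro-LSI-interpolation} applied to $\pi\gamma_{h}$ gives
\[
\clsi(\pi\gamma_{h}) \;\lesssim\; D\,\norm{\cov(\pi\gamma_{h})}^{1/2}\log^{1/2} n \;\lesssim\; D\lda^{1/2}\log^{1/2} n \;\le\; h^{\star},
\]
and the two cases together produce the claimed bound.

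The corollary itself is essentially a bookkeeping step; the genuine difficulty has already been absorbed into Theorems~\ref{thm:intro-LSI-interpolation} and~\ref{thm:intro-cov-str-LC}. The only mild points to verify are that the logarithmic factor in the hypothesis of Theorem~\ref{thm:intro-cov-str-LC} (which involves $R^{2}/\lda$) is dominated by the one in $h^{\star}$ (which involves $D^{2}/\lda$), and that the threshold chosen for the case split is consistent with the constants implicit in both theorems; both are routine after the centering step.
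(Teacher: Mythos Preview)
Your proof is the same as the paper's: case-split at $h^{\star}\asymp D\lda^{1/2}\log^{2}n\log^{2}(D^{2}/\lda)$, apply Bakry--\'Emery below the threshold, and combine Theorem~\ref{thm:intro-cov-str-LC} with Theorem~\ref{thm:intro-LSI-interpolation} above it.

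The one point to correct is the centering step. Translating $\pi$ while leaving $\gamma_{h}$ centered at the origin does not merely translate the product $\pi\gamma_{h}$; it introduces an additional linear tilt, so $\clsi(\pi\gamma_{h})$ is \emph{not} preserved and the ``without loss of generality'' fails. The paper does not attempt this reduction either: it silently identifies the threshold in Theorem~\ref{thm:intro-cov-str-LC} (stated in terms of $R=\E_{\pi}\norm{\cdot}$) with $h^{\star}$ (stated in terms of $D$), which is justified once $0\in\supp{\pi}$ so that $R\le D$---a hypothesis in force throughout the paper's applications. The cleanest fix is to assume this outright rather than translate.
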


\paragraph{Result 3: Faster sampling from uniform and Gaussian distributions
(\S\ref{sec:faster-generation}).}

We accelerate $\gc$ further, where truncated Gaussians $\pi\gamma_{\sigma^{2}}$
are the annealing distributions. Sampling from $\pi\gamma_{\sigma^{2}}$
under $\eu R_{c}$-warmness with $c=\Otilde(1)$ (\textbf{Result 1})
makes it possible to anneal $\sigma^{2}$ more rapidly. Moreover,
once $\sigma^{2}$ reaches $R\,\norm{\cov\pi}^{1/2}$, sampling from
$\pi\gamma_{\sigma^{2}}$ can also be accelerated due to the improved
LSI bound (\textbf{Result 2}). Interweaving these developments together,
we sample from uniform distributions over convex bodies using $\Otilde(n^{2}R^{3/2}\,\norm{\cov\pi}^{1/4})$
queries, provably less than $\Otilde(n^{2}(n\vee R^{2}))$ queries
given in \cite{CV18Gaussian,KZ25Renyi}. See Theorem~\ref{thm:body-unif-gc}
for details.
\begin{thm}[Uniform sampling from cold start]
\label{thm:intro-unif-gc}  For any uniform distribution $\pi$ over
a convex body $\K\subset\Rn$ specified by $\mem_{x_{0},R}(\K)$ with
$\lda=\norm{\cov\pi}$, for any given $\eta,\veps\in(0,1)$, there
exists an algorithm that with probability at least $1-\eta$, returns
a sample whose law $\mu$ satisfies $\norm{\mu-\pi}_{\tv}\leq\veps$,
using
\[
\Otilde\bpar{n^{2}R^{3/2}\lda^{1/4}\polylog\frac{R\lda^{-1/2}}{\eta\veps}}
\]
membership queries in expectation. If $\pi$ is near-isotropic, then
$\Otilde(n^{2.75}\polylog\nicefrac{1}{\eta\veps})$ queries suffice.
\end{thm}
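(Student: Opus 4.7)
The strategy is to run an accelerated Gaussian cooling: anneal through truncated Gaussians $\pi_i := \pi\gamma_{\sigma_i^{2}}$ with the aggressive schedule $\sigma_{i+1}^{2} = \sigma_i^{2}(1+\sigma_i/R)$, starting from $\sigma_0^{2}\asymp 1/n$ (where $\pi_0$ is essentially a Gaussian concentrated near $x_0$, trivially samplable) and ending at $\sigma_m^{2}\gtrsim R^{2}\log(1/\veps)$ (where $\pi_m$ is within $\veps/2$ in $\tv$ of $\pi$ by a tail-bound argument using $\E_\pi\snorm{\cdot-x_0}^{2}\lesssim R^{2}$). At each phase we invoke Theorem~\ref{thm:gauss-sampling-warmstart-intro} to transition from a sample approximately distributed as $\pi_i$ to one approximately distributed as $\pi_{i+1}$, relying on the $\eu R_c$-warmness with $c=\polylog(nR/(\eta\veps))$ rather than $\eu R_\infty$.

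\textbf{Warmness between consecutive phases.} The first key step is to show that for the chosen schedule, $\eu R_c(\pi_i\mmid\pi_{i+1})=\O(1)$. A direct computation of $\D\pi_i/\D\pi_{i+1}\propto \gamma_{\sigma_i^{2}}/\gamma_{\sigma_{i+1}^{2}}$ on $\K$ and a standard Gaussian moment bound yield $M_c(\pi_i\mmid\pi_{i+1})=\O(1)$ when $\sigma_{i+1}^{2}/\sigma_i^{2}-1\lesssim 1/\sqrt{c}$ times $\sigma_i/R$ (using the concentration $\E_{\pi_i}\snorm{\cdot-x_0}\lesssim R$). To combine with the $\eu R_2$ output from the previous sampler invocation, I would use the weak triangle inequality for Rényi divergence together with a quasi-conditional bound to lift $\eu R_2$-closeness of the output law $\mu_i$ to $\pi_i$ into $\eu R_c$-closeness of $\mu_i$ to $\pi_{i+1}$, picking the per-phase accuracy $\veps_0 = \veps/m$ (polylogarithmically small) so that the accumulated error is $O(\veps)$ via a union bound.

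\textbf{Complexity accounting.} By Theorem~\ref{thm:gauss-sampling-warmstart-intro}, the per-sample query cost of phase $i$ is $\Otilde\bpar{n^{2}(\sigma_i^{2}\wedge D\lda^{1/2})}$, and after preprocessing we may assume $D\lesssim R$, so this is $\Otilde(n^{2}(\sigma_i^{2}\wedge R\lda^{1/2}))$. The number of phases between $\sigma^{2}$ and $2\sigma^{2}$ is $\O(R/\sigma)$ under the schedule. Splitting at the threshold $\sigma_\star := R^{1/2}\lda^{1/4}$:
\begin{itemize}
\item For $\sigma\le\sigma_\star$ (non-accelerated regime), per-doubling cost is $n^{2}\sigma^{2}\cdot(R/\sigma)=n^{2}R\sigma\le n^{2}R^{3/2}\lda^{1/4}$.
\item For $\sigma>\sigma_\star$ (accelerated regime enabled by Corollary~\ref{cor:intro-new-LSI-str-LC}), per-doubling cost is $n^{2}R\lda^{1/2}\cdot(R/\sigma)\le n^{2}R^{3/2}\lda^{1/4}$.
\end{itemize}
Summing over the $\O(\log R)$ doublings yields $\Otilde(n^{2}R^{3/2}\lda^{1/4})$ queries in expectation, and the near-isotropic bound $n^{2.75}$ follows from $\lda=\O(1)$ and $R=\O(\sqrt n)$.

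\textbf{Main obstacle.} I expect the trickiest part to be the rigorous control of warmness degradation across many phases: at each step the sampler returns only an $\eu R_2$-approximation, but the next step demands an $\eu R_c$-warm input. Establishing the correct ``$\eu R_2\to\eu R_c$ lift'' via the bounded $\eu R_c$-gap between consecutive $\pi_i,\pi_{i+1}$, without inflating the exponent beyond $c=\polylog$, is the delicate estimate that must be done with care. Everything else, including the initial distribution setup, the terminal $\tv$-closeness $\snorm{\pi_m-\pi}_{\tv}\le\veps/2$, and the union-bounded failure probability across all phases, is standard given the tools already in place.
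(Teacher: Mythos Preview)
Your overall architecture---accelerated Gaussian cooling with schedule $\sigma_{i+1}^{2}=\sigma_i^{2}(1+\sigma_i/(q^{1/2}R))$, splitting the cost analysis at the threshold $\sigma_\star=R^{1/2}\lda^{1/4}$, and invoking Corollary~\ref{cor:intro-new-LSI-str-LC} in the accelerated regime---matches the paper's proof, and your complexity accounting is correct.

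The gap is exactly where you flagged it. The proposed ``$\eu R_2\to\eu R_c$ lift'' via the weak triangle inequality for R\'enyi divergence cannot work: the weak triangle inequality requires control of $\eu R_{pc}(\mu_i\mmid\pi_i)$ for some $p>1$ in order to bound $\eu R_c(\mu_i\mmid\pi_{i+1})$, but you only have $\eu R_2(\mu_i\mmid\pi_i)\le\veps_0$, and there is no way to upgrade a low-order R\'enyi bound to a higher-order one (monotonicity goes the wrong direction). A density $\mu_i$ can be arbitrarily close to $\pi_i$ in $\eu R_2$ while having unbounded $\eu R_c$ for $c>2$.

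The paper sidesteps this entirely and the fix is simpler than what you propose. It does \emph{not} attempt to certify $\eu R_c$-warmness of the actual iterates $\hat\mu_i$. Instead, it analyzes each phase as if started from the \emph{ideal} input $\pi_i$---for which $M_c(\pi_i\mmid\pi_{i+1})=\O(1)$ by the R\'enyi annealing lemma (Lemma~\ref{lem:variance-annealing})---and then tracks the drift of the actual law $\hat\mu_i$ from $\pi_i$ purely in $\tv$: by the data-processing inequality, $\norm{\hat\mu_{i-1}P_i-\pi_{i-1}P_i}_{\tv}\le\norm{\hat\mu_{i-1}-\pi_{i-1}}_{\tv}$, and $\norm{\pi_{i-1}P_i-\pi_i}_{\tv}\le\veps_0$ by the sampler guarantee applied to the ideal input, so $\norm{\hat\mu_i-\pi_i}_{\tv}\le i\veps_0$. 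For the query count and failure probability starting from $\hat\mu_{i-1}$ rather than $\pi_{i-1}$, observe that both the per-iteration query count ($\le N$) and the failure indicator ($\le 1$) are bounded random variables, so their expectations under $\hat\mu_{i-1}$ and $\pi_{i-1}$ differ by at most $kN\cdot\norm{\hat\mu_{i-1}-\pi_{i-1}}_{\tv}$ and $\norm{\hat\mu_{i-1}-\pi_{i-1}}_{\tv}$ respectively; choosing $\veps_0$ inverse-polynomial in $mkN$ (which only costs polylog factors) makes these corrections negligible.

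A minor point: your terminal condition $\sigma_m^{2}\gtrsim R^{2}\log(1/\veps)$ is not sufficient to make $\norm{\pi_m-\pi}_{\tv}\le\veps/2$; on the truncated body of diameter $D\asymp R$ you would need $\sigma_m^{2}\gtrsim D^{2}/\veps$. The paper instead terminates at $\sigma^{2}=D^{2}$ (where $\bar\pi\gamma_{D^{2}}$ is $\O(1)$-warm in $\eu R_\infty$ with respect to $\pi$) and runs a final $\psunif$ phase, which costs only $\Otilde(n^{2}\lda)$ and keeps the $\polylog(1/\veps)$ dependence.
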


As a consequence, we can sample a truncated standard Gaussian in $n^{2.5}$-complexity
without a warm start, improving the previous best complexity by a
factor of $n^{1/2}$ than \cite{CV18Gaussian}.
\begin{cor}[Restricted Gaussian sampling from cold start]
\label{cor:intro-gauss-gc}  For any convex body $\K\subset\Rn$
given by $\mem_{x_{0}}(\K)$, there exists an algorithm that for any
given $\eta,\veps\in(0,1)$, with probability at least $1-\eta$ returns
a sample whose law $\mu$ satisfies $\norm{\mu-\gamma|_{\K}}_{\tv}\leq\veps$,
using $\Otilde(n^{2.5}\polylog\nicefrac{1}{\eta\veps})$ membership
queries in expectation.
\end{cor}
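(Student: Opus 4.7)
The plan is to specialize the Gaussian-annealing scheme behind Theorem~\ref{thm:intro-unif-gc} to the case where the target is itself a (truncated) standard Gaussian, in which case the annealing can terminate at $\sigma^{2}=1$ rather than at $\sigma^{2}\asymp R^{2}$. First I would preprocess by truncating the body: since $\gamma(\R^{n}\setminus B_{C\sqrt{n\log(1/\veps)}})\le\veps/2$ for a suitable absolute constant $C$, replace $\K$ by $\K':=\K\cap B_{C\sqrt{n\log(1/\veps)}}$. This incurs only $\veps/2$ in $\tv$ and yields a body with $D:=\diam\K'=\Otilde(n^{1/2})$, while preserving the membership oracle. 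Let $\pi$ denote the uniform distribution on $\K'$, so that $\lda:=\norm{\cov\pi}\leq D^{2}/4\lesssim n$.

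Then I would set up annealing distributions $\mu_{i}\propto\gamma_{\sigma_{i}^{2}}\ind_{\K'}$ with the accelerated schedule $\sigma_{i+1}^{2}=\sigma_{i}^{2}(1+c\,\sigma_{i}/R_{i})$, where $R_{i}:=\E_{\mu_{i}}\norm{\cdot}$ and $c=\polylog$. I would start at $\sigma_{0}^{2}\asymp 1/n$, from which $\mu_{0}$ is trivially sampled by rejection using the unit ball around the origin guaranteed by $\mem_{x_{0}}(\K)$, and stop at the first $m$ with $\sigma_{m}^{2}\geq 1$, clipping $\sigma_{m}^{2}$ to $1$ so as to hit the target $\gamma|_{\K'}$ exactly. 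For each transition $\mu_{i}\rightsquigarrow\mu_{i+1}$, I would invoke Theorem~\ref{thm:gauss-sampling-warmstart-intro}: since $\sigma_{i}\leq 1\lesssim D\lda^{1/2}$ throughout, its per-phase complexity reduces to $\Otilde(n^{2}(\sigma_{i}^{2}\wedge D\lda^{1/2}))=\Otilde(n^{2}\sigma_{i}^{2})$.

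Because $\mu_{i}$ is a Gaussian of variance $\sigma_{i}^{2}$ truncated to a set containing $B_{1}$, a standard Gaussian-tail estimate yields $R_{i}\lesssim\sigma_{i}\sqrt{n}$. Hence each doubling of $\sigma^{2}$ requires $\Otilde(R_{i}/\sigma_{i})=\Otilde(n^{1/2})$ annealing phases and thus costs $\Otilde(n^{2.5}\sigma^{2})$. Summing geometrically over the $\O(\log n)$ doublings from $\sigma_{0}^{2}=1/n$ up to $\sigma^{2}=1$ produces a total of $\Otilde(n^{2.5})$, dominated by the last doubling. Combined with the $\veps/2$ truncation error and a $\eu R_{2}$-to-$\tv$ conversion, this yields the claimed $\tv$-guarantee.

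The hard part is verifying that the accelerated schedule maintains $\eu R_{c}$-warmness between consecutive $\mu_{i},\mu_{i+1}$ with only $c=\polylog$: this is precisely what justifies invoking Theorem~\ref{thm:gauss-sampling-warmstart-intro} instead of requiring the much more restrictive $\eu R_{\infty}$-warmness used in previous $\gc$ analyses, and is what enables the $n^{1/2}$ savings per doubling relative to the schedule $\sigma^{2}\gets\sigma^{2}(1+\sigma^{2}/R^{2})$ of \cite{CV18Gaussian}. A secondary but routine step is propagating the per-phase approximate-sample error across the $\Otilde(n^{1/2})$ phases via a standard coupling/union-bound argument.
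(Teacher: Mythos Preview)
Your plan reaches the correct $n^{2.5}$ complexity with the right arithmetic, but the paper's proof takes a cleaner route, and the warmness step you flag as ``the hard part'' is handled there by a different lemma than the one your schedule points to.

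The paper neither truncates $\K$ nor uses the accelerated schedule for this corollary. Instead it uses the \emph{global} update $\sigma^{2}\gets\sigma^{2}\bpar{1+(qn)^{-1/2}}$, whose $\eu R_{q}$-closeness follows directly from Lemma~\ref{lem:global-annealing} (the R\'enyi version of the $e^{-V}\to e^{-(1+\alpha)V}$ lemma), with no dependence on any moment or diameter of $\K$. That alone gives $\Otilde(\sqrt{qn})$ phases per doubling and $\Otilde(n^{2}\sigma^{2})$ queries per phase via Theorem~\ref{thm:body-gauss-samp}, hence $n^{2.5}$ total, dominated by the last doubling near $\sigma^{2}=1$.

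Your accelerated schedule with the adaptive $R_{i}=\E_{\mu_{i}}\norm{\cdot}$ can be made to work, but note that Lemma~\ref{lem:variance-annealing} is proved with $R$ equal to the \emph{diameter of the support}, not an adaptive moment of the current annealing distribution. To run your plan you must re-enter that proof and replace the step $\E_{\nu_{u}}[\norm{X}^{2}]\leq R^{2}$ by $\E_{\nu_{u}}[\norm{X}^{2}]\lesssim n\sigma^{2}$; this does hold (via logconcavity of $s\mapsto s^{n}\int_{\K}e^{-s\norm{x}^{2}/2\sigma^{2}}\,\D x$, which is the real content behind your ``standard Gaussian-tail estimate''), and once substituted the bound becomes $\eu R_{q}\lesssim qn\alpha^{2}$ --- exactly the global-annealing estimate. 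So your route effectively rederives Lemma~\ref{lem:global-annealing} in this special case, after which the truncation to $\K'$ is superfluous.
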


\paragraph{Result 4: Extension to logconcave distributions (\S\ref{sec:extension}).}

Just as in the uniform sampling, we can relax a previous $\eu R_{\infty}$-warmness
\cite{KV25sampling} to $\eu R_{c}$ with $c=\Otilde(1)$ for samplers
for the reduced exponential distribution $\bar{\pi}(x,t)\propto\exp(-nt)|_{\K}$
and tilted Gaussian distribution $\mu_{\sigma^{2},\rho}(x,t)\propto\exp(-\rho t)\,\gamma_{\sigma^{2}}(x)\,\ind_{\K}(x,t)$.
We refer readers to Theorem~\ref{thm:body-exp-samp} and \ref{thm:body-exp-tilt-Gauss}.
Then in \S\ref{subsec:warm-generation-LC}, we further accelerate
$\tgc$ proposed in \cite{KV25sampling}, culminating in the following
result. 
\begin{thm}[Logconcave sampling from cold start]
\label{thm:intro-LC-sampling} For any logconcave distribution $\pi$
over $\Rn$ presented by $\eval_{x_{0},R}(V)$, there exists an algorithm
that, given $\eta,\veps\in(0,1)$, with probability at least $1-\eta$
returns a sample whose law $\mu$ satisfies $\norm{\mu-\pi}_{\tv}\leq\veps$,
using
\[
\Otilde\bpar{n^{2}\max\{n^{1/2},R^{3/2}(\lda^{1/4}\vee1)\}\polylog\frac{R\lda^{-1/2}}{\eta\veps}}
\]
evaluation queries in expectation. If $\pi$ is near-isotropic, then
$\Otilde(n^{2.75}\polylog\nicefrac{1}{\eta\veps})$ queries suffice.
\end{thm}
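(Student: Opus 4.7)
The plan is to extend the faster uniform sampling algorithm from Theorem~\ref{thm:intro-unif-gc} to the general logconcave setting by adapting the $\tgc$ framework of \cite{KV25sampling}. The first step is the reduction: sample the augmented exponential distribution $\bar\pi(x,t)\propto\exp(-nt)\,\ind_\K(x,t)$ on $\K=\{(x,t)\in\Rn\times\R:V(x)\le nt\}$, from which a sample of $\pi$ is recovered by projection. Standard tail bounds on logconcave $V$ confine the relevant $t$-range to an interval of length $\Otilde(1)$, so the effective diameter in the augmented $(n+1)$-dimensional body is still $\Otilde(R)$, and the covariance top eigenvalue is of the same order $\lda\vee\Otilde(1)$. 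This replaces the original target by an $(n+1)$-dimensional logconcave problem to which the tools developed for uniform sampling transfer verbatim up to $\polylog$ factors.

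The next step is to run an accelerated $\tgc$ along the tilted Gaussians $\mu_{\sigma^{2},\rho}(x,t)\propto\exp(-\rho t)\,\gamma_{\sigma^{2}}(x)\,\ind_\K(x,t)$. Because Theorem~\ref{thm:body-exp-tilt-Gauss} shows that the $\ps$ for $\mu_{\sigma^{2},\rho}$ mixes under just $\eu R_{c}$-warmness with $c=\polylog$, I can afford the more aggressive schedule $\sigma_{i+1}^{2}=\sigma_{i}^{2}(1+\sigma_{i}/R)$ (and a compatible $\rho$-schedule) instead of the conservative $(1+\sigma^{2}/R^{2})$ update used in \cite{CV18Gaussian,KZ25Renyi}. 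The main verification here is that consecutive annealing distributions stay $\eu R_{c}$-close: a direct Gaussian calculation bounds $\eu R_{c}(\mu_{\sigma_{i}^{2},\rho_{i}}\mmid\mu_{\sigma_{i+1}^{2},\rho_{i+1}})$ by $\Otilde(1)$ whenever $\sigma_{i+1}^{2}/\sigma_{i}^{2}\le1+\sigma_{i}/R$ and the $\rho$-increment is commensurate, exactly as in the uniform case.

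For the per-phase mixing time I split into two regimes. While $\sigma^{2}\le R\lda^{1/2}$, I use the Bakry--\'Emery bound $\clsi(\mu_{\sigma^{2},\rho})\le\sigma^{2}$, so each phase costs $\Otilde(n^{2}\sigma^{2})$ queries and a doubling of $\sigma^{2}$ uses $\Otilde(R/\sigma)$ phases, totaling $\Otilde(n^{2}R\sigma)$ per doubling. Once $\sigma^{2}\gtrsim R\lda^{1/2}$, Corollary~\ref{cor:intro-new-LSI-str-LC} (applied in the augmented space) gives $\clsi(\mu_{\sigma^{2},\rho})\lesssim R\lda^{1/2}$, so each phase now costs $\Otilde(n^{2}R\lda^{1/2})$ and the doubling cost $\Otilde(n^{2}R\lda^{1/2}\cdot R/\sigma)=\Otilde(n^{2}R^{2}\lda^{1/2}/\sigma)$. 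Both expressions are maximized at the transition $\sigma\asymp R^{1/2}\lda^{1/4}$, yielding $\Otilde(n^{2}R^{3/2}\lda^{1/4})$ per doubling; the $\Otilde(1)$ doublings needed to reach $\sigma^{2}\asymp R^{2}$ multiply out to the stated bound. The additive $n^{2.5}$ term comes from the boundary phases of $\tgc$: a short warm-start generation at $\sigma_{0}^{2}\asymp1/n$ combined with the final transition from $\mu_{\sigma^{2},\rho}$ to the exponential target $\bar\pi$, together requiring $\Otilde(n^{1/2})$ additional phases each costing $\Otilde(n^{2})$ via Theorem~\ref{thm:body-exp-samp}. Finally, the $\veps$ and $\eta$ dependence is handled by Lemma-level coupling arguments tracking accumulated $\eu R_{2}$ error across phases and applying a $\tv$-conversion only at the end, which is why the output guarantee is in $\tv$.

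The main obstacle is not the overall counting but the verification that the relaxed-warmness samplers of Theorems~\ref{thm:body-exp-samp}--\ref{thm:body-exp-tilt-Gauss} compose cleanly along the new schedule. Specifically, one must check that after each phase the output law is $\eu R_{c}$-close to the next target with $c=\polylog$ that tolerates the slight R\'enyi-order degradation across $\Otilde(1)$ doublings, and that the improved LSI bound of Corollary~\ref{cor:intro-new-LSI-str-LC} indeed applies to the tilted Gaussians $\mu_{\sigma^{2},\rho}$ in the augmented domain (the $\exp(-\rho t)$ factor contributes a bounded Holley--Stroock perturbation once $\rho$ is proportional to the $t$-range). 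Handling the interplay between the $\eu R_{c}$ accounting and the failure probabilities of the rejection-based backward step, so that the success probability remains $\ge1-\eta$ across all phases, is the most delicate bookkeeping.
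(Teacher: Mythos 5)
Your high-level strategy matches the paper: reduce to the exponential target $\bar\pi(x,t)\propto e^{-nt}\ind_\K$, run an accelerated tilted-Gaussian cooling under relaxed $\eu R_c$-warmness, and split the $\sigma^2$-annealing into two regimes at the threshold $\sigma^2\asymp R\lda^{1/2}$ using the improved LSI bound. Your Phase-III accounting (doubling cost $\Otilde(n^2R\sigma)$ below threshold, $\Otilde(n^2R^2\lda^{1/2}/\sigma)$ above, both peaking at the transition) is essentially the paper's. However, there is a real gap in how you handle the $\rho$-annealing, and a misattribution of where the additive $n^{2.5}$ comes from.

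The $\rho$-annealing is the technically delicate part of extending from uniform to logconcave sampling, and a ``compatible $\rho$-schedule'' does not exist in the naive sense you suggest. Increasing $\rho\gets(1+\alpha)\rho$ alone does \emph{not} keep consecutive tilted Gaussians $\eu R_q$-close: neither the accelerated annealing lemma (Lemma~\ref{lem:variance-annealing}, which only applies to Gaussian-in-$x$ variance scaling) nor the global annealing lemma (Lemma~\ref{lem:global-annealing}, which requires scaling the \emph{entire} potential) applies to a $\rho$-only update. The paper resolves this by simultaneously scaling $\rho\gets(1+\alpha)\rho$ and $\sigma^{-2}\gets(1+\alpha)\sigma^{-2}$ so that the full potential $\tfrac{1}{2\sigma^2}\norm{x}^2+\rho t$ is scaled by $(1+\alpha)$, putting it in the form $e^{-V}\to e^{-(1+\alpha)V}$ covered by the universal annealing lemma; this drops $\sigma^2$ slightly, which is then recovered through an \emph{inner} $\sigma^2$-annealing loop justified by Lemma~\ref{lem:variance-annealing}. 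This interleaved Phase II costs $\Otilde(n^2(n^{1/2}\vee R))$ and is essential for taking $\rho$ from $\O(1)$ to $n$; without it your schedule does not reach the exponential target with controlled R\'enyi divergence.

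Two further corrections. First, the additive $n^{2.5}$ term comes from Phase I, the $\sigma^2$-warming from $\sigma^2=n^{-1}$ to $1$, which uses $\Otilde((qn)^{1/2})$ phases of $\psann$ at $\Otilde(n^2)$ queries each; the final $\psexp$ step costs only $\Otilde(n^2(\lda\vee1))$, which is dominated by the $n^2R^{3/2}\lda^{1/4}$ term since $\lda\le R^2$. Second, Corollary~\ref{cor:intro-new-LSI-str-LC} does not apply verbatim to $\mu_{\sigma^2,\rho}$ in the augmented space because it is not of the form (logconcave)$\times\gamma_{\sigma^2}$ in all $n+1$ coordinates; the paper proves a separate bound (Lemma~\ref{lem:global-lsi-ann}) for $\mu_{\sigma^2,n}$ by combining Theorem~\ref{thm:lsi-general-bound} with an $\eu R_\infty$-perturbation argument over the $\O(1)$-bounded $t$-range. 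Your intuition (bounded Holley--Stroock perturbation in $t$) is the right one, but it needs to be stated as a lemma, not assumed.
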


\subsection{Technical overview\label{sec:techniques}}

Here we discuss detailed outlines of our approach and the main proofs.

\subsubsection{Sampling under weaker warmness}

\paragraph{$\protect\bw$ under relaxed warmness.}

Analysis of the $\bw$ has hitherto assumed $\eu R_{\infty}$-warmness
~\cite{LS90mixing}. Unlike $\har$, when the walk is near a corner
of a convex body (e.g., the tip of a cone), most steps become wasted
and make no progress, because the probability of finding another point
within the intersection of the step ball and the convex body can be
exponentially small. However, if the initial distribution is already
close to the uniform distribution $\pi$ in the $\eu R_{\infty}$-sense,
then the chance of stepping into such low-probability regions is small,
and thus the total expected wasted queries can still be bounded.

Precisely, in the analysis of the $\bw$ with step size $\delta$
(specifically, the $\sw$) in \cite{KLS97random}, each step samples
uniformly from $B_{\delta}(x)\cap\K$ with proposal uniform in $B_{\delta}(x)$,
so the success probability---called the \emph{local conductance}---is
\[
\ell(x)=\frac{\vol\bpar{B_{\delta}(x)\cap\K}}{\vol\bpar{B_{\delta}(x)}}\,.
\]
Its stationary distribution $\pi^{\msf{SW}}$ is proportional to $\ell$,
so if the initial distribution is already at stationarity, then the
expected number of trials per step is $\E_{\pi^{\msf{SW}}}\ell^{-1}=\O(1)$,
provided that the body contains a unit ball and the step size is $\delta\lesssim n^{-1/2}$.

Relaxing the warmness requirement is challenging. Suppose one starts
from a general initial distribution $\mu$ and attempts a change of
reference measure, say, through a Cauchy--Schwarz inequality:
\[
\bpar{\E_{\mu}\frac{1}{\ell}}^{2}\leq\frac{\int_{\K}\ell^{-1}\,\D x}{\int_{\K}\ell\,\D x}\cdot\Bnorm{\frac{\D\mu^{\ \ \ }\,}{\D\pi^{\msf{SW}}}}_{L^{2}(\pi^{\msf{SW}})}^{2}=\exp\bpar{\eu R_{2}(\mu\mmid\pi^{\msf{SW}})}\,\frac{\int_{\K}\ell^{-1}\,\D x}{\int_{\K}\ell\,\D x}\,.
\]
Even if $\mu$ and $\pi$ are close in $\eu R_{2}$, it is unclear
how to bound $\int_{\K}\ell^{-1}\,\D x$, and thus previous work has
always assumed $\eu R_{\infty}$-warmness to replace $\mu$ by $\pi^{\msf{SW}}$
without introducing extra $\ell^{-\alpha}$ into $\E_{\mu}\ell^{-1}$.

\paragraph{Proximal sampler.}

Before introducing $\psunif$, we briefly discuss the proximal sampler~\cite{LST21structured,CCSW22improved,FYC23improved,wibisono25mixing},
 which has been analyzed in \emph{well-conditioned} settings, where
$-\log\pi^{X}$ is $\alpha$-strongly convex and $\beta$-smooth in
$\Rn$. Such \emph{uniform} regularity allows rejection sampling to
implement the second step $x_{i+1}\sim\pi^{X|Y=y_{i+1}}$ efficiently,
requiring only $\O(1)$ queries for \emph{any} $y_{i+1}$.

However, constrained settings (such as uniform sampling over a convex
body $\K$) pose new challenges. For instance, for the uniform target
$\pi^{X}\propto\ind_{\K}$, one can readily derive that $\pi^{Y}(y)=\ell(y)/\vol\K$
and $\pi^{X|Y=y}\propto\mc N(y,hI_{n})|_{\K}$, where $\ell(y)=\P_{Z\sim\mc N(y,hI_{n})}(Z\in\K)$
serves as a Gaussian version of the local conductance. Even at stationarity
$\pi^{X}$, the expected per-step complexity of rejection sampling
with proposal $\mc N(y,hI_{n})$ becomes infinite, as $\E_{\pi^{Y}}\ell^{-1}$
is unbounded.

To address this, $\psunif$~\cite{KVZ24INO} introduces a threshold
parameter $N$: the algorithm declares failure if the rejection sampling
exceeds $N$ attempts. Under a step size $h\asymp n^{-2}$, they identify
an \emph{essential domain} as $\K_{t/n}=\K+B_{t/n}$, and show that
$\pi^{Y}(\Rn\setminus\K_{t/n})\lesssim e^{-t^{2}}$. For an $M_{\infty}$-warm
$\mu$, the expected number of trials can be bounded by carefully
choosing parameters $t,N$. For $\delta=t/n$,
\begin{align*}
\E_{\mu*\gamma_{h}}\bbrack{\frac{1}{\ell}\wedge N} & \leq M_{\infty}\,\E_{\pi^{Y}}\bbrack{\frac{1}{\ell}\wedge N}\leq M_{\infty}\Bpar{\int_{\K_{\delta}}\frac{\D\pi^{Y}}{\ell}+N\pi^{Y}(\Rn\backslash\K_{\delta})}\\
 & =M_{\infty}\bpar{\frac{\vol\K_{\delta}}{\vol\K}+N\pi^{Y}(\Rn\backslash\K_{\delta})}\lesssim M_{\infty}(e^{t}+Ne^{-t^{2}})\,,
\end{align*}
ensuring exponentially small failure probability as well. Nonetheless,
similar to the $\bw$, applying Cauchy--Schwarz to relax $\eu R_{\infty}$-warmness
fails, as $\int_{\Rn}\ell^{-1}\,\D x$ still appears in the analysis.

\paragraph{Improved analysis under a relaxed warmness.}

Our approach demonstrates the first successful relaxation of $\eu R_{\infty}$-warmness.
We avoid such pointwise closeness by carefully using the $(p,q)$-H\"older
inequality and a refined decomposition of $\E_{\mu*\gamma_{h}}[\ell^{-1}\wedge N]$
into three regions: ($\msf A$) high local conductance in the essential
domain, ($\msf B$) low local conductance in the essential domain,
and ($\msf C$) non-essential domain: for $(\cdot)_{h}:=(\cdot)*\gamma_{h}$
below,
\[
\E_{\mu_{h}}\bbrack{\frac{1}{\ell}\wedge N}=\int_{\K_{\delta}\cap[\ell\geq N^{-p}]}\cdot+\int_{\K_{\delta}\cap[\ell<N^{-p}]}\cdot+\int_{\K_{\delta}^{c}}\cdot=:\msf A+\msf B+\msf C\,.
\]
We then bound the integrand simply by $N$ in $\msf B$ and $\msf C$,
and change the reference measure from $\mu_{h}$ to $\pi_{h}$ in
$\msf A$ and $\msf B$, defining $M_{q}=\norm{\nicefrac{\D\mu_{h}}{\D\pi_{h}}}_{L^{q}(\pi_{h})}$
and using the $(p,q)$-H\"older with parameters $p=1+\alpha^{-1}$,
$q=1+\alpha$ for some $\alpha\geq1$:
\begin{align*}
\msf A & \leq\Bpar{\int_{\K_{\delta}\cap[\ell\geq N^{-p}]}\frac{1}{\ell^{p}}\wedge N^{p}\,\D\pi_{h}}^{1/p}\,M_{q}\leq\Bpar{\int_{\K_{\delta}\cap[\ell\geq N^{-p}]}\frac{1}{\ell^{p}}\,\frac{\ell}{\vol\K}}^{1/p}\,M_{q}\\
 & =\Bpar{\int_{\K_{\delta}\cap[\ell\geq N^{-p}]}\frac{1}{\ell^{1/\alpha}}\,\frac{\D x}{\vol\K}}^{1/p}\,M_{q}\leq M_{q}N^{1/\alpha}\,\bpar{\frac{\vol\K_{\delta}}{\vol\K}}^{1/p}\,,\\
\msf B & \leq N\int_{\K_{\delta}\cap[\ell<N^{-p}]}\frac{\D\mu_{h}}{\D\pi_{h}}\,\D\pi_{h}\leq NM_{q}\,\Bpar{\int_{\K_{\delta}\cap[\ell<N^{-p}]}\frac{\ell}{\vol\K}}^{1/p}\leq M_{q}\bpar{\frac{\vol\K_{\delta}}{\vol\K}}^{1/p}\,,\\
\msf C & \leq N\int_{\K_{\delta}^{c}}\frac{\D\mu_{h}}{\D\pi_{h}}\,\D\pi_{h}\leq NM_{2}\bpar{\pi_{h}(\mc K_{\delta}^{c})}^{1/2}\,.
\end{align*}
Selecting $\alpha=\log N$ ensures $N^{1/\alpha}=1$, so our bounds
depend only on $\pi_{h}(\K_{\delta}^{c})$ and volume ratios. With
proper choices of $h$, $\delta$, and $N$, the per-step complexity
can be bounded under $\eu R_{1+\log N}$-warmness. 

\subsubsection{Improved LSI for strongly logconcave distributions with compact support}

We now provide insights and proofs for two results: (1) a new bound
on the log-Sobolev constant, $\clsi(\pi)\lesssim D\,\norm{\cov\pi}^{1/2}$,
and (2) for a logconcave distribution $\pi$, $\norm{\cov\pi\gamma_{h}}\lesssim\norm{\cov\pi}$
whenever $h\gtrsim\norm{\cov\pi}^{1/2}\mathbb{E}_{\pi}\norm{\cdot}$.
These results, combined with the classical Bakry--\'Emery condition,
yield our key bound: for strongly logconcave distributions supported
within diameter $D$, $\clsi(\pi\gamma_{h})\lesssim_{\log}D\,\norm{\cov\pi}^{1/2}$
for any $h>0$. We believe each of these results is of independent
interest.

\paragraph{(1) Functional inequalities: $\protect\clsi(\pi)\lesssim_{\log}D\,\protect\norm{\protect\cov\pi}^{1/2}$.}

This result interpolates known bounds for \eqref{eq:lsi}: $\clsi(\pi)\lesssim D$
(isotropic case) and $\clsi(\pi)\lesssim D^{2}$ (general case). Unlike
\eqref{eq:pi}, the interpolation via affine transformations does
not work well for \eqref{eq:lsi}. Precisely, let $X\sim\pi$ with
$\Sigma=\cov\pi$, $Y:=\Sigma^{-1/2}X$, and $\nu:=\law Y$. Then,
$\nu$ is isotropic logconcave. For $\cpi(n)$ the largest Poincar\'e
constant of $n$-dimensional isotropic logconcave distributions and
for locally Lipschitz $f$,
\begin{align*}
\var_{\pi}f & =\var_{\nu}(f\circ\Sigma^{1/2})\leq\cpi(n)\,\E_{\nu}[\norm{\nabla(f\circ\Sigma^{1/2})}^{2}]\\
 & \leq\cpi(n)\,\E_{\nu}[\norm{\Sigma}\norm{\nabla f\circ\Sigma^{1/2}}^{2}]\leq\cpi(n)\,\norm{\Sigma}\,\E_{\pi}[\norm{\nabla f}^{2}]\,,
\end{align*}
which implies $\cpi(\pi)\leq\cpi(n)\,\norm{\Sigma}$. Similarly, $\clsi(\pi)\leq\clsi(n)\,\norm{\Sigma}$
holds (see \S\ref{subsec:naive-LSI}). However, $\clsi(n)$ can be
as large as $D\asymp n$ \cite{LV24eldan}, this LSI bound is unsatisfactory
unlike the \eqref{eq:pi} case.

Inspired by \cite{KL24isop}, we present a simple approach leveraging
a deep result established by Milman \cite{milman10isoperimetric}
that \emph{Gaussian concentration} and \eqref{eq:lsi} are equivalent
for logconcave measures. Recall that a concentration function (Definition~\ref{def:concentration})
of a probability measure $\pi$ is defined as
\[
\alpha_{\pi}(r)=\sup_{E:\pi(E)\geq1/2}\pi(\Rn\backslash E_{r})\quad\text{for all }r\geq0\,.
\]
Gaussian concentration with constant $\cgauss(\pi)$ refers to $\alpha_{\pi}(r)\leq2\exp(-r^{2}/\cgauss(\pi))$
while exponential concentration with constant $\cexp(\pi)$ refers
to $\alpha_{\pi}(r)\leq2\exp(-r/\cexp(\pi))$. There are two known
results on exponential concentration of logconcave measures. The first
one, $\cexp^{2}(\pi)\lesssim\cpi(\pi)$, is classical and holds under
\eqref{eq:pi} regardless of logconcavity. The second one, $\alpha_{\pi}(r)\leq2\exp(-c\min\{r/\lda^{1/2},r^{2}/\lda\log^{2}n\})$
for $\lda=\norm{\cov\pi}$ and universal constant $c>0$, is obtained
by Bizeul \cite{bizeul24measures} through Eldan's stochastic localization
(SL)~\cite{eldan13thin}. 

Since the diameter of support is $D$, we clearly have $\alpha_{\pi}(r)=0$
if $r>D$, and $r/D\leq1$ otherwise. Then, we have $\alpha_{\pi}(r)\leq2\exp(-\nicefrac{r}{c\cpi^{1/2}(\pi)})\leq2\exp(-\nicefrac{r^{2}}{cD\cpi^{1/2}(\pi)})$.
Due to Milman's result on equivalence, $\clsi(\pi)\lesssim D\cpi^{1/2}(\pi)\lesssim D\lda^{1/2}\log^{1/2}n$.
Similarly, when using Bizeul's concentration, we can obtain $\clsi(\pi)\lesssim\max\{D\lda^{1/2},\lda\log^{2}n\}$.
Taking minimum with the bound of $\clsi(\pi)\lesssim D^{2}$ and using
$\lda\leq D^{2}$, we can conclude that $\clsi(\pi)\lesssim\max\{D\lda^{1/2},D^{2}\wedge\lda\log^{2}n\}\leq D^{2}\lda^{1/2}\log n$.
See Remark~\ref{rem:LSI-comparison} on a comparison between these
two LSI bounds.

We present another proof through SL, specifically a simplified version
from~\cite{LV24eldan} (see \S\ref{app:LSI-interpolation-SL}),
which was our first approach in an earlier version. 

\paragraph{(2) Convex geometry: $\protect\norm{\protect\cov\pi\gamma_{h}}\lesssim\protect\norm{\protect\cov\pi}$.}

Consider the isotropic uniform distribution $\pi$ over a convex body
for illustration. We have $\norm{\cov\pi\gamma_{h}}\leq\cpi(\pi\gamma_{h})\le h$,
where the first inequality follows from \eqref{eq:pi}, and the second
from the Brascamp--Lieb (or Lichnerowicz) inequality. Thus, $\norm{\cov\pi\gamma_{h}}\leq1$
for $h\leq1$. As we increase $h$, this pushes the mass toward the
boundary $\de\K$, likely boosting covariance. On the other hand,
for large $h\gtrsim n^{2}$, the measure $\pi\gamma_{h}$ becomes
a $\Theta(1)$-perturbation of $\pi$, since the support of $\pi$
has diameter at most $n+1$ (due to isotropy). Thus, $\pi\gamma_{h}$
is $\O(1)$-close to $\pi$ in $\eu R_{\infty}$, and thus $\norm{\cov\pi\gamma_{h}}\lesssim\norm{\cov\pi}=1$.
Therefore, it is plausible to conjecture that $\norm{\cov\pi\gamma_{h}}\lesssim1$
for \emph{all} $h$.

We show this conjecture for $h\gtrsim\mathbb{E}_{\pi}\|\cdot\|\asymp n^{1/2}$,
much smaller than $n^{2}$. To prove this, we bound $\mathbb{E}_{\pi\gamma_{h}}[(X\cdot v)^{2}]$
for any unit vector $v$. Rewriting 
\[
\mathbb{E}_{\pi\gamma_{h}}[(X\cdot v)^{2}]=\frac{\mathbb{E}_{\pi}[(X\cdot v)^{2}\,e^{-\norm X^{2}/2h}]}{\mathbb{E}_{\pi}e^{-\norm X^{2}/2h}}\,,
\]
we aim to upper-bound the numerator and lower-bound the denominator,
both approximately by $e^{-n/2h}$. To this end, we examine which
regions mainly contribute to each expectation, relying on concentration
properties of $\pi$ only.

Recall isotropic logconcave distributions concentrate in a thin shell
of radius $n^{1/2}$ and width $\O(\log\log n)$ \cite{guan2024note}
(conjectured to be $\O(1)$ by the well-known \emph{thin-shell conjecture}
\cite{ABP03central,BK03central}). For the thin-shell ($\mathsf{S}$),
inner ($\mathsf{I}$), and outer ($\mathsf{O}$) regions, we argue
as follows: 
\begin{align*}
\msf{Den.}: & \quad\E_{\pi}e^{-\norm X^{2}/2h}\geq\E_{\pi}\inf_{\msf S}e^{-\norm X^{2}/2h}\gtrsim e^{-n/2h}\,\pi(\msf S)\gtrsim e^{-n/2h}\,,\\
\msf{Num.}: & \quad\E_{\pi}[(X\cdot v)^{2}e^{-\norm X^{2}/2h}]=\E_{\pi}[(\cdot)\times\ind_{\msf I}]+\E_{\pi}[(\cdot)\times\ind_{\msf S}]+\E_{\pi}[(\cdot)\times\ind_{\msf O}]\,.
\end{align*}
As for the numerator, the inner part's contribution is negligible
due to the small ball probabilities \cite{DP10small,bizeul2025slicing},
while the outer region's contribution is exponentially small since
the Gaussian weight is as small as $e^{-n/2h}$. Thus, the essential
contribution comes from the thin-shell, yielding 
\[
\E_{\pi}[(X\cdot v)^{2}e^{-\norm X^{2}/2h}]\lesssim e^{-n/2h}\,\E[(X\cdot v)^{2}]\leq e^{-n/2h}\norm{\cov\pi}=e^{-n/2h}\,.
\]

For general logconcave distributions, we use Lipschitz concentration
under \eqref{eq:pi}, given as 
\[
\pi(\norm X-\E_{\pi}\norm X\geq t)\leq3\exp\bpar{-t/\cpi^{1/2}(\pi)}\,,
\]
and the known bound $\cpi(\pi)\lesssim\norm{\cov\pi}\log n$. We then
apply the co-area formula and integration by parts to control contributions
from the inner and outer parts.

\subsubsection{Faster sampling algorithm\label{subsubsec:faster-sampling}}

We derive a faster sampling algorithm for uniform distributions over
convex bodies, with query complexity $n^{2}R^{3/2}\,\norm{\cov\pi}^{1/4}$,
which is $n^{2.75}$ for nearly isotropic ones, the first sub-cubic
bound.

\paragraph{Algorithm.}

Our algorithm (\S\ref{subsec:GC-algorithm}) follows the annealing
approach from $\gc$; it also utilizes truncated Gaussian $\gamma_{\sigma^{2}}|_{\mathcal{K}}$
for annealing, increasing $\sigma^{2}$ from $n^{-1}$ up to $R^{2}$,
with some key differences. First, we eliminate the initial annealing
by $\sigma^{2}\gets\sigma^{2}(1+n^{-1/2})$. Second, we accelerate
it through a faster multiplicative update by $1+\nicefrac{\sigma}{q^{1/2}R}$
with $q=\Otilde(1)$, instead of the slower update by $1+\nicefrac{\sigma^{2}}{R^{2}}$.
Third, once $\sigma^{2}$ surpasses $R\,\norm{\cov\pi}^{1/2}$ (i.e.,
the geometric mean between $R^{2}(\geq\tr\cov\pi)$ and $\norm{\cov\pi}$),
our algorithm benefits from a faster mixing of  $\psgauss$ due to
our new LSI bound.

\paragraph{Complexity.}

To bound the total complexity, we need a $c$-R\'enyi divergence
bound for $c=\Otilde(1)$ between neighboring annealing distributions.
We establish in Lemma~\ref{lem:variance-annealing} that $\eu R_{q}(\gamma_{\sigma^{2}}|_{\K}\mmid\gamma_{\sigma^{2}(1+\alpha)}|_{\K})\leq qR^{2}\alpha^{2}/\sigma^{2}$
for $q>1$, which generalizes the known result for $q=2$ \cite[Lemma 7.8]{CV18Gaussian}.
This bound justifies our multiplicative update with $\alpha=\nicefrac{\sigma}{q^{1/2}R}$.
Setting $q=c$, we ensure that $\gamma_{\sigma^{2}}|_{\K}$ and $\gamma_{\sigma^{2}(1+\alpha)}|_{\K}$
remain close in $\eu R_{c}$, enabling sampling by $\psgauss$ with
provable guarantees.

Each $\sigma^{2}$-doubling requires at most $q^{1/2}R/\sigma$ phases,
with each phase run by $\psgauss$ incurring $n^{2}\sigma^{2}$ queries.
Thus, each doubling has total complexity of $q^{1/2}n^{2}R\sigma$.
Given $\O(\log nR)$ doubling phases, the total query complexity up
to $\sigma^{2}\approx R\,\norm{\cov\pi}^{1/2}$ is bounded by $q^{1/2}n^{2}R^{3/2}\,\norm{\cov\pi}^{1/4}$.
Beyond $\sigma^{2}\gtrsim R\,\norm{\cov\pi}^{1/2}$, $\psgauss$ requires
$n^{2}R\,\norm{\cov\pi}^{1/2}$ queries thanks to our improved $\clsi$.
Thus, each subsequent doubling also contributes $q^{1/2}n^{2}R^{3/2}\,\norm{\cov\pi}^{1/4}$
as well. The final transition from $\gamma_{R^{2}}|_{\mathcal{K}}$
to $\pi$ via $\psunif$ requires $n^{2}\,\norm{\cov\pi}$ queries.
Therefore, the total complexity is $n^{2}R^{3/2}\,\norm{\cov\pi}^{1/4}$.

One subtle point we have brushed over is the approximate nature of
$\psgauss$. Specifically, when sampling from $\gamma_{\sigma^{2}}|_{\K}$,
it actually outputs a sample $X_{*}$ satisfying $\eu R_{2}(\law X_{*}\mmid\gamma_{\sigma^{2}}|_{\K})\leq\veps$,
thus \emph{slightly off} from $\gamma_{\sigma^{2}}|_{\K}$. Nonetheless,
we have pretended that the output is distributed as $\gamma_{\sigma^{2}}|_{\K}$.
We can readily address this issue by using the triangle inequality
for $\tv$-distance. Let $m$ be the total number of phases during
annealing, $\gamma_{i}$ the target distribution for $i\in[m]$, and
$P_{i}$ the Markov kernel defined by $\psgauss$ with suitable step
size. Then, the actual distribution of a sample at each phase $i$
is $\hat{\gamma}_{i}:=\gamma_{0}P_{1}\cdots P_{i}=\hat{\gamma}_{i-1}P_{i}$.
Using the triangle inequality and data-processing inequality,
\[
\norm{\hat{\gamma}_{i}-\gamma_{i}}_{\tv}\leq\norm{\hat{\gamma}_{i-1}P_{i}-\gamma_{i-1}P_{i}}_{\tv}+\norm{\gamma_{i-1}P_{i}-\gamma_{i}}_{\tv}\leq\norm{\hat{\gamma}_{i-1}-\gamma_{i-1}}_{\tv}+\veps\,,
\]
and induction leads to $\norm{\hat{\gamma}_{i}-\gamma_{i}}_{\tv}\leq m\veps$.
Since the total complexity scales as $\polylog\nicefrac{1}{\veps}$,
by replacing $\varepsilon\gets\varepsilon/m$, we can achieve the
desired accuracy without increasing complexity significantly.

\subsubsection{Extension to general logconcave distributions}

We now extend our developments thus far to general logconcave distributions,
under a well-defined function oracle. The overall strategy parallels
our approach for uniform sampling: (1) sampling under weaker warmness,
and (2) faster annealing while maintaining these weaker warmness conditions.

\paragraph{Logconcave sampling.}

As studied in~\cite{KV25sampling}, sampling from logconcave $\pi\propto e^{-V}$
can be reduced to sampling from the augmented distribution $\bar{\pi}(x,t)\propto e^{-nt}\,\ind_{\K}(x,t)$,
where $\K=\{(x,t)\in\Rn\times\R:V(x)\leq nt\}$ is convex due to the
convexity of $V$, incurring an additive overhead of $n^{2}$.

They used tilted Gaussians of the form $\mu_{\sigma^{2},\rho}(x,t)\propto\exp(-\rho t)\,\gamma_{\sigma^{2}}(x)\,\ind_{\K}$
for annealing. In \S\ref{subsec:lc-sampling-warm}, we study the
query complexity of the $\ps$ for these two distributions (denoted
by $\psexp$ and $\psann$ respectively) under $\eu R_{c}$-warmness
for  $c=\Otilde(1)$. Building upon prior ideas used for $\psunif$
and $\psgauss$, we refine the previous analysis of per-step guarantees
for $\psexp$ and $\psann$ under weaker warmness. Consequently, we
establish that $\psann$ requires $n^{2}(\sigma^{2}\vee1)$ evaluation
queries, while $\psexp$ requires $n^{2}(\norm{\cov\pi}\vee1)$ queries.

\paragraph{Tilted Gaussian cooling.}

We use $\tgc$ from \cite{KV25sampling} with extra care for faster
warm-start generation for the exponential distribution \textbf{$\bar{\pi}$}.

This algorithm (\S\ref{subsec:TCG-algorithm}) increases $\sigma^{2}$
from $n^{-1}$ to $R^{2}$, and $\rho$ from $1$ to $n$, and it
involves three phases---(1) \textbf{$\sigma^{2}$}-warming: Increase
$\sigma^{2}$ from $n^{-1}$ to $1$, (2) \textbf{$\rho$}-annealing:
With fixed $\sigma^{2}=1$, increase $\rho$ from $1$ to $n$, and
(3) \textbf{$\sigma^{2}$}-annealing: With fixed $\rho=n$, increase
$\sigma^{2}$ from $1$ to $R^{2}$. To bound distance between neighboring
distributions, we establish a R\'enyi version of the global update
lemma (Lemma~\ref{lem:global-annealing}), proven for $q=2$ in \cite{LV06simulated,KV06simulated}:
for logconcave $e^{-V}$,
\[
\eu R_{q}(e^{-(1+\alpha)V}\mmid e^{-V})\lesssim qn\alpha^{2}\,.
\]

In Phase I, we update $\sigma^{2}$ multiplicatively by $1+(qn)^{-1/2}$,
justified by the above guarantee. Since $\psann$ requires $n^{2}$
queries in this phase, the total complexity for Phase I is $n^{2.5}$.
Phase II is more involved. Initially, with $\sigma^{2}=1$, we simultaneously
update $\rho\gets\rho\,(1+(q\,(q\vee n))^{-1/2})$ and $\sigma^{2}\gets\sigma^{2}(1+(q\,(q\vee n))^{-1/2})^{-1}$,
where closeness follows from the global update lemma. To make up for
the slight decrease in $\sigma^{2}$, we further update $\sigma^{2}\gets\sigma^{2}(1+\nicefrac{\sigma}{q^{1/2}R})$
until $\sigma^{2}\leq1$, justified by Lemma~\ref{lem:variance-annealing},
repeating at most $\nicefrac{2R}{(q\vee n)^{1/2}}$ times. Note that
each annealing by $\psann$ requires $n^{2}$ queries. Given at most
$(q\,(q\vee n))^{1/2}\log n$ phases in $\rho$-annealing, each followed
by at most $\nicefrac{2R}{(q\vee n)^{1/2}}$ rounds of $\sigma^{2}$-annealing,
the total complexity of Phase II is $(q\,(q\vee n))^{1/2}\times\nicefrac{2R}{(q\vee n)^{1/2}}\times n^{2}\lesssim q^{1/2}n^{2}R.$

In Phase III, we increase $\sigma^{2}$ multiplicatively by $1+\nicefrac{\sigma}{q^{1/2}R}$,
where each annealing takes $n^{2}\sigma^{2}$ queries through $\psann$.
Each $\sigma^{2}$-doubling requires $q^{1/2}R/\sigma$ phases, leading
to $n^{2}R\sigma$ queries per doubling. Exploiting faster mixing
when $\sigma^{2}\gtrsim R\,(\norm{\cov\pi}^{1/2}\vee1)$, the total
complexity of Phase III is by $n^{2}R^{3/2}(\norm{\cov\pi}^{1/4}\vee1)$.
Combining complexities across all phases, the final total complexity
is $n^{2.5}+n^{2}R^{3/2}(\norm{\cov\pi}^{1/4}\vee1)$, which is again
$n^{2.75}$ for near-isotropic logconcave distributions.

\subsection{Preliminaries\label{sec:prelim}}

A function $f:\Rn\to[0,\infty)$ is \emph{logconcave} if $-\log f$
is convex in $\Rn$, and a probability measure (or distribution) is
logconcave if it has a logconcave density function with respect to
the Lebesgue measure\footnote{In this work, we only consider \emph{non-degenerate} logconcave distributions
whose support cannot be embedded into any proper subspace of $\Rn$.
Also, all distributions considered are absolutely continuous with
respect to the Lebesgue measure (i.e., $\pi\ll\mathfrak{m}$).}. This justifies our abuse of notation using the same symbol for a
distribution and density. For $t\geq0$, a distribution $\pi$ is
called \emph{$t$-strongly logconcave} if $-\log\pi$ is $t$-strongly
convex (i.e., $-\log\pi-\frac{t}{2}\,\norm{\cdot}^{2}$ is convex).
Recall that the multiplication and convolution preserve logconcavity.
A distribution is called \emph{isotropic} if it is centered (i.e.,
$\E_{\pi}X=0$) and has the identity covariance matrix (i.e., $\E_{\pi}[X^{\otimes2}]=I_{n}$).
Note that as the logconcave distributions decay exponentially fast
at infinity, they have finite moments of all orders. We reserve $\gamma_{h}$
for the centered Gaussian distribution with covariance matrix $hI_{n}$.
For a set $S$, we use $\ind_{S}(x):=[x\in S]$ to denote its indicator
function, and use $\mu|_{S}$ to denote a distribution $\mu$ truncated
to $S$ (i.e., $\mu|_{S}\propto\mu\cdot\ind_{S}$). For two probability
measures $\mu,\pi$, we use $\mu\pi$ to denote the new distribution
with density proportional to $\mu\pi$.

For $a,b\in\R$, we use $a\vee b$ and $a\wedge b$ to denote their
maximum and minimum, respectively. $B_{r}^{n}(x)$ denotes the $n$-dimensional
ball of radius $r>0$ centered at $x$, dropping the superscript $n$
if there is no confusion. Both $a\lesssim b$ and $a=\O(b)$ mean
$a\le cb$ for a universal constant $c>0$. $a=\Omega(b)$ means $a\gtrsim b$,
and $a\asymp b$ means $a=\O(b)$ and $a=\Omega(b)$. Lastly, $a=\Otilde(b)$
means $a=\mc O(b\polylog b)$. For a vector $v$ and a PSD matrix
$\Sigma$, $\norm v$ and $\norm{\Sigma}$ denote the $\ell_{2}$-norm
of $v$ and the operator norm of $\Sigma$, respectively.

We recall the basic definition of our computational model.
\begin{defn}
[Well-defined membership oracle, \cite{GLS93geometric}] \label{def:welldefined-membership}
It assumes access to a convex body $\mathcal{K}\subset\mathbb{R}^{n}$
through a membership oracle, which answers whether a queried point
belongs to $\mathcal{K}$. We assume that $\mathcal{K}$ contains
$B_{1}(x_{0})$ for some $x_{0}\in\Rn$ and has finite diameter $D>0$
and smooth boundary, and assume that the uniform probability measure
$\pi$ over $\mathcal{K}$, given by $\D\pi(x)\propto\ind_{\K}(x)\,\D x$
satisfies $R^{2}\geq\E_{\pi}[\norm{\cdot-x_{0}}^{2}]$ for some $R>0$.
We denote this oracle by $\mem_{\mc P}(\K)$, where $\mathcal{P}$
indicates access to parameter values in $\mathcal{P}$ (e.g., $\msf{\mem}_{R}(V)$
reveals $R$, while $\mem(\K)$ reveals none).
\end{defn}

This can be generalized to a well-defined function oracle.
\begin{defn}
[Well-defined function oracle] \label{def:welldefined-function}
It assumes access to an integrable logconcave function $\exp(-V)$
on $\mathbb{R}^{n}$, through an evaluation oracle for a convex potential
$V:\mathbb{R}^{n}\to\mathbb{R}\cup\{\infty\}$, which returns the
value of $V(x)$ for the queried point $x\in\Rn$, and also assumes
that the distribution $\pi$ with density $\D\pi\propto\exp(-V)\,\D x$
satisfies (1) $\E_{\pi}[\norm{\cdot-x_{0}}^{2}]\leq R^{2}$ for some
$R>0$ and (2) the \emph{ground set} \emph{$\msf L_{\pi,g}:=\{x\in\Rn:V(x)-\min V\leq10n\}$}
contains $B_{r}(x_{0})$. Without loss of generality, we will assume
that $x_{0}=0$ by translation and $r=1$ by scaling. We denote it
by $\eval_{\mc P}(V)$.
\end{defn}

We recall notions of probability divergences / distances between distributions. 
\begin{defn}
\label{def:p-dist} For two distributions $\mu,\nu$ over $\Rn$,
the \emph{$f$-divergence} of $\mu$ towards $\nu$ with $\mu\ll\nu$
is defined as, for a convex function $f:\R_{+}\to\R$ with $f(1)=0$
and $f'(\infty)=\infty$, 
\[
D_{f}(\mu\mmid\nu):=\int f\bpar{\frac{\D\mu}{\D\nu}}\,\D\nu\,.
\]
For $q\in(1,\infty)$, the \emph{$\KL$-divergence and $\chi^{q}$-divergence}
correspond to $f(x)=x\log x$ and $x^{q}-1$, respectively. The \emph{$q$-R\'enyi
divergence} is defined as 
\[
\eu R_{q}(\mu\mmid\nu):=\frac{1}{q-1}\,\log\bpar{\chi^{q}(\mu\mmid\nu)+1}=\frac{1}{q-1}\,\log\,\bnorm{\frac{\D\mu}{\D\nu}}_{L^{q}(\nu)}^{q}\,.
\]
The \emph{R\'enyi-infinity divergence} is defined as
\[
\eu R_{\infty}(\mu\mmid\nu):=\log\esssup_{\mu}\frac{\D\mu}{\D\nu}\,.
\]
For $c\in[1,\infty]$, a distribution $\mu$ is said to be \emph{$M_{c}$-warm
with respect to a distribution} $\nu$ if $\norm{\nicefrac{\D\mu}{\D\nu}}_{L^{c}(\nu)}\le M$
(i.e., $\eu R_{c}(\mu\mmid\nu)\leq\frac{c}{c-1}\,\log M$)\footnote{When $c=\infty$, we consider the $L^{\infty}$-norm with respect
to $\mu$, not $\nu$, to be consistent with $\eu R_{\infty}$ as
well as a prevalent definition of ($\infty$-)warmness.}. The \emph{total variation} ($\tv$) distance between $\mu$ and
$\nu$ is defined as 
\[
\norm{\mu-\nu}_{\msf{TV}}:=\frac{1}{2}\int|\mu(x)-\nu(x)|\,\D x=\sup_{S\in\mc F}\abs{\mu(S)-\nu(S)}\,,
\]
where $\mc F$ is the collection of all $\mu,\nu$-measurable subsets
of $\Rn$.
\end{defn}

We recall $\KL=\lim_{q\downarrow1}\eu R_{q}\leq\eu R_{q}\leq\eu R_{q'}\leq\eu R_{\infty}$
for $1\leq q\leq q'$ and $2\,\norm{\cdot}_{\tv}^{2}\leq\KL\leq\eu R_{2}=\log(\chi^{2}+1)\leq\chi^{2}$.
 The following classical lemmas will be useful in the proofs. We
refer readers to \cite{vH14renyi} for more properties of the R\'enyi
divergence. 
\begin{lem}
[Data-processing inequality] \label{lem:DPI} For probability measures
$\mu,\nu$, Markov kernel $P$, $f$-divergence $D_{f}$, and $q\in[1,\infty]$,
it holds that 
\[
D_{f}(\mu P\mmid\nu P)\leq D_{f}(\mu\mmid\nu)\,,\quad\text{and}\quad\eu R_{q}(\mu P\mmid\nu P)\leq\eu R_{q}(\mu\mmid\nu)\,.
\]
\end{lem}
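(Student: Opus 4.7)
The plan is to prove the $f$-divergence inequality by Jensen, then deduce the R\'enyi bound by specializing $f(x)=x^q-1$, and handle the $q=\infty$ case separately by a pointwise argument.

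First I would express the Radon--Nikodym derivative of $\mu P$ with respect to $\nu P$ as a conditional average. Writing $\mu P(\D y) = \int \mu(\D x)\,P(x,\D y)$ and similarly for $\nu P$, one obtains
\[
\frac{\D\mu P}{\D\nu P}(y) \;=\; \int \frac{\D\mu}{\D\nu}(x)\,q_y(\D x)\,,
\qquad \text{where } q_y(\D x) \defeq \frac{\nu(\D x)\,P(x,\D y)}{\nu P(\D y)}.
\]
For each $y$, $q_y$ is a probability measure on $\mathbb{R}^n$ (the reverse/posterior kernel of $P$ with respect to $\nu$). This decomposition is the heart of the proof: it expresses $\D\mu P/\D\nu P$ as a $q_y$-average of $\D\mu/\D\nu$.

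Next I would apply Jensen's inequality to the convex $f$, and then integrate against $\nu P$ and use Fubini together with the identity $\int q_y(\D x)\,\D\nu P(y) = \nu(\D x)$:
\[
D_f(\mu P \mmid \nu P) \;=\; \int f\!\Bpar{\int \frac{\D\mu}{\D\nu}(x)\,q_y(\D x)}\,\D\nu P(y) \;\leq\; \int\!\!\int f\!\Bpar{\frac{\D\mu}{\D\nu}(x)}\,q_y(\D x)\,\D\nu P(y) \;=\; D_f(\mu \mmid \nu)\,.
\]
This settles the first inequality. For $q\in(1,\infty)$, the function $f(x)=x^q-1$ is convex with $f(1)=0$, so $\chi^q(\mu P \mmid \nu P) \leq \chi^q(\mu \mmid \nu)$, and applying the monotone function $\frac{1}{q-1}\log(1+\cdot)$ yields the R\'enyi bound. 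The case $q=1$ is the limit $\KL$, already covered.

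Finally, for $q=\infty$, I would use the same decomposition but bound it pointwise: since $q_y$ is a probability measure,
\[
\frac{\D\mu P}{\D\nu P}(y) \;=\; \int \frac{\D\mu}{\D\nu}(x)\,q_y(\D x) \;\leq\; \esssup_\mu \frac{\D\mu}{\D\nu} \;=\; \exp\eu R_\infty(\mu \mmid \nu)\,,
\]
for $\nu P$-a.e.\ $y$, and taking logarithms gives $\eu R_\infty(\mu P \mmid \nu P) \leq \eu R_\infty(\mu \mmid \nu)$. The only mild subtlety is the measure-theoretic justification of the reverse-kernel decomposition (absolute continuity $\mu P \ll \nu P$ follows from $\mu \ll \nu$, and Fubini applies because the integrands are nonnegative or bounded by the finite divergence); this is the step requiring care, but it is standard and does not involve any further ideas.
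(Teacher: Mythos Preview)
The paper does not prove this lemma; it simply states it as a classical fact and refers the reader to \cite{vH14renyi} for properties of R\'enyi divergence. Your argument via the reverse kernel $q_y(\D x)=\nu(\D x)\,P(x,\D y)/\nu P(\D y)$, Jensen's inequality for $f$-divergences, specialization to $f(x)=x^q-1$ for finite $q$, and the pointwise bound for $q=\infty$ is the standard textbook proof and is correct. The only tiny remark is that in the $q=\infty$ step you should note $q_y\ll\nu$ (clear from its definition), so the $q_y$-average of $\D\mu/\D\nu$ is indeed bounded by its $\nu$-essential supremum, which coincides with the $\mu$-essential supremum used in the paper's definition of $\eu R_\infty$.
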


\begin{lem}
[Bounded perturbation, \cite{HS87logSobolev}] \label{lem:bdd-perturbation}
Suppose that a probability measure $\pi$ satisfies \eqref{eq:lsi}
with constants $\clsi(\pi)<\infty$. If a probability measure $\mu$
satisfies $c\leq\frac{\D\mu}{\D\pi}\leq C$ for $c,C\in\R_{>0}$,
then 
\[
\clsi(\mu)\leq\frac{C}{c}\,\clsi(\pi)\,.
\]
\end{lem}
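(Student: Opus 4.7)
The plan is to use the Holley--Stroock-style argument based on the variational characterization of entropy. Recall that for any nonnegative integrable $g$ and any probability measure $\nu$, one has the identity
\[
\ent_{\nu}(g) = \inf_{a>0} \int \bpar{g\log\tfrac{g}{a} - (g-a)}\,\D\nu,
\]
where the integrand $g\log(g/a) - (g-a)$ is pointwise \emph{nonnegative} for $g, a > 0$. This nonnegativity is the key that makes the pointwise bound on $\D\mu/\D\pi$ convert directly into a comparison of entropies.

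First I would fix a locally Lipschitz $f:\Rn\to\R$ and apply the identity above to $g = f^{2}$ under $\mu$. Using $\D\mu/\D\pi \le C$ pointwise together with the nonnegativity of the integrand, I get
\[
\ent_{\mu}(f^{2}) \;\le\; \inf_{a>0} \int \bpar{f^{2}\log\tfrac{f^{2}}{a} - (f^{2}-a)}\,C\,\D\pi \;=\; C\,\ent_{\pi}(f^{2}),
\]
where the last equality uses the same variational identity under $\pi$. Next I invoke the hypothesis that $\pi$ satisfies \eqref{eq:lsi} with constant $\clsi(\pi)$, yielding $\ent_{\pi}(f^{2}) \le 2\clsi(\pi)\,\E_{\pi}[\snorm{\nabla f}^{2}]$. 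Finally, using $\D\mu/\D\pi \ge c$, i.e.\ $\D\pi/\D\mu \le 1/c$, I pass the Dirichlet energy back to $\mu$:
\[
\E_{\pi}[\snorm{\nabla f}^{2}] \;=\; \int \snorm{\nabla f}^{2}\,\tfrac{\D\pi}{\D\mu}\,\D\mu \;\le\; \tfrac{1}{c}\,\E_{\mu}[\snorm{\nabla f}^{2}].
\]
Chaining the three inequalities gives $\ent_{\mu}(f^{2}) \le 2\,\tfrac{C}{c}\,\clsi(\pi)\,\E_{\mu}[\snorm{\nabla f}^{2}]$, which is exactly \eqref{eq:lsi} for $\mu$ with constant $\tfrac{C}{c}\,\clsi(\pi)$, as desired.

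I do not expect any real obstacle here; the only subtle point is to resist the temptation to write $\ent_{\mu}(f^{2}) = \int f^{2}\log(f^{2}/\E_{\mu}f^{2})\,\D\mu$ and try to replace $\mu$ by $\pi$ directly, since then the integrand is not signed and the pointwise bound $\D\mu/\D\pi \le C$ cannot be inserted. Using the variational formula (or equivalently, optimizing over the normalizing constant $a$) removes this issue and makes the argument a three-line calculation. The hypotheses $c,C>0$ together with $\pi \ll \mu \ll \pi$ implicit in the two-sided bound also guarantee that all integrals are well defined.
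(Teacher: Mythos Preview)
Your proof is correct and is the standard Holley--Stroock argument via the variational formula for entropy. The paper does not give its own proof of this lemma; it is stated as a classical result with a citation to \cite{HS87logSobolev}, so there is nothing to compare against beyond noting that your argument is precisely the one underlying that reference.
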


\section{Convex body sampling under relaxed warmness\label{sec:unif-sampling-warm}}

In this section, we establish the query complexities of sampling
from uniform and Gaussian distributions over a convex body, relaxing
the warmness requirement from $\eu R_{\infty}$ to $\eu R_{c}$ for
small $c=\Otilde(1)$. We refine previous analyses of the $\ps$ for
these distributions carried out in \cite{KVZ24INO,KZ25Renyi}, and
prove the following theorems.
\begin{thm}
[Restatement of Theorem~\ref{thm:unif-sampling-warmstart-intro}]
\label{thm:body-unif-samp} Consider the uniform distribution $\pi$
over a convex body $\K\subset\Rn$ specified by $\mem_{D}(\K)$, and
an initial distribution $\mu$ with $M_{q}=\norm{\D\mu/\D\pi}_{L^{q}(\pi)}$
for $q\geq2$. For any $\eta,\veps\in(0,1)$ and $k\in\mathbb{N}$
as defined below, $\psunif$ (Algorithm~\ref{alg:prox-unif}) with
$h=(2n^{2}\log\frac{16kM_{2}}{\eta})^{-1}$ and $N=(\frac{16kM_{2}}{\eta})^{2}\log^{4}\frac{16kM_{2}}{\eta}$
achieves $\eu R_{2}(\mu_{k}\mmid\pi)\leq\veps$ after $k=\Otilde(n^{2}\norm{\cov\pi}\log^{3}\frac{M_{2}}{\eta\veps})$
iterations, where $\mu_{k}$ is the law of the $k$-th iterate. With
probability at least $1-\eta$, $\psunif$ iterates $k$ times without
failure, using 
\[
\Otilde\bpar{M_{c}n^{2}\norm{\cov\pi}\log^{7}\frac{1}{\eta\veps}}
\]
 membership queries in expectation for any $c\geq12\log\frac{16kM_{2}}{\eta}$.
\end{thm}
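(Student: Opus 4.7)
The plan is to combine three ingredients: (i) a R\'enyi-divergence contraction bound for $\psunif$ driven by the Poincar\'e inequality of the uniform distribution on $\K$, (ii) a refined per-iteration expected-query bound that only requires $\eu R_c$-warmness rather than $\eu R_\infty$-warmness, and (iii) a union bound that converts an expected per-step failure probability into a global success guarantee.

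First I would invoke the standard simultaneous heat-flow analysis of the proximal sampler, which shows that $\eu R_2(\mu_i \mmid \pi)$ contracts geometrically at rate $\exp(-h/\cpi(\pi))$ per iteration whenever $\pi$ satisfies \eqref{eq:pi}. Plugging in Klartag's bound $\cpi(\pi) \lesssim \norm{\cov \pi}\log n$ for the uniform measure on a convex body, together with the chosen step size $h = \Theta\bpar{1/(n^2\log(kM_2/\eta))}$, yields the iteration count $k = \Otilde(n^2\norm{\cov\pi}\log^3(M_2/(\eta\veps)))$ stated in the theorem. A crucial point is that by the data-processing inequality (Lemma~\ref{lem:DPI}), the warmness $M_c$ of $\mu_i$ with respect to $\pi$ is non-increasing in $i$, and the same holds after Gaussian smoothing, so any per-step bound proven under $\eu R_c$-warmness applies uniformly to every iterate.

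The heart of the proof is the per-iteration query bound. At step $i$ the forward sample $y_{i+1}$ has law $\mu_i * \gamma_h$, and the expected number of Gaussian rejections in the backward step is $\E_{\mu_i*\gamma_h}[\ell^{-1}\wedge N]$, where $\ell(y) = \P_{Z\sim\mc N(y,hI_n)}(Z\in\K)$. Following the sketch in \S\ref{sec:techniques}, I would partition the integration domain into the essential-domain pieces $\msf A = \K_\delta \cap \{\ell \ge N^{-p}\}$ and $\msf B = \K_\delta \cap \{\ell < N^{-p}\}$ together with the tail $\msf C = \K_\delta^c$, using $\delta = t/n$ for a radius $t$ to be chosen. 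On $\msf A$ and $\msf B$ I would change reference from $\mu_i*\gamma_h$ to $\pi*\gamma_h$ via H\"older's inequality with conjugate exponents $p = 1+\alpha^{-1}$, $q = 1+\alpha$, and take $\alpha = \log N$ so that $N^{1/\alpha} = \O(1)$. This reduces $\msf A + \msf B$ to $\O\bpar{M_q (\vol\K_\delta/\vol\K)^{1/p}} = \O(M_q e^{t/p})$, since $\vol\K_\delta/\vol\K \lesssim e^t$ whenever $\K\supseteq B_1$. On $\msf C$ the crude bound $\ell^{-1}\wedge N \le N$ combined with a Cauchy--Schwarz change of measure gives $\msf C \lesssim N M_2\,\pi_h(\K_\delta^c)^{1/2}$, and the Gaussian smoothing tail estimate $\pi_h(\K_\delta^c) \lesssim \exp(-t^2)$ (valid for $h \lesssim 1/n^2$ and $\delta = t/n$) balances against $N$ upon choosing $t\asymp\log(kM_2/\eta)$. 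Together this yields a per-step expectation $\Otilde(M_c)$ with $c = 1 + \log N \asymp \log(kM_2/\eta)$.

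Finally, to obtain the high-probability success guarantee I would apply Markov's inequality to the rejection count at a single iteration: the probability that the $i$-th rejection loop exceeds $N$ attempts is at most $\Otilde(M_c)/N \lesssim \eta/k$ by the choice $N = (16kM_2/\eta)^2\log^4(16kM_2/\eta)$, so a union bound over the $k$ iterations controls the overall failure probability by $\eta$. Multiplying the $\Otilde(M_c)$ per-step cost by the $\Otilde(n^2\norm{\cov\pi}\log^3)$ iteration count yields the stated total of $\Otilde(M_c n^2\norm{\cov\pi}\log^7(1/(\eta\veps)))$ expected queries. The main technical obstacle is calibrating $(t,\delta,h,N)$ simultaneously so that the H\"older blowup $N^{1/\alpha}$ is $\O(1)$, the tail contribution $N\pi_h(\K_\delta^c)^{1/2}$ is bounded, and the step size $h$ is still large enough for the mixing analysis to close in the claimed number of iterations; balancing these constraints is what forces the hypothesis $c \ge 12\log(16kM_2/\eta)$ in the theorem.
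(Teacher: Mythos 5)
Your outline correctly reproduces the mixing analysis (Poincar\'e contraction plus Klartag's bound on $\cpi$), the three-region decomposition, and the $(p,q)$-H\"older change of reference measure with $\alpha=\log N$. But there are two substantive problems.

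The main one is your treatment of the failure probability. You bound $\P(\text{rejection loop at step }i \text{ exceeds } N)$ by Markov's inequality as $\Otilde(M_c)/N$, and then assert $\Otilde(M_c)/N \lesssim \eta/k$. With $N=(16kM_2/\eta)^2\log^4(16kM_2/\eta)$ this reduces to $M_c\lesssim kM_2^2/\eta\cdot\polylog$, which is \emph{not} among the hypotheses (the theorem allows $M_c$ to be arbitrarily large; the statement simply becomes vacuous in that case, but the failure-probability guarantee must still hold). The paper avoids this by bounding the failure probability directly: each step fails with probability $\E_{\mu_h}[(1-\ell)^N]$, and by Cauchy--Schwarz this is at most $M_2\sqrt{\E_{\pi_h}[(1-\ell)^{2N}]}$. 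Splitting $\E_{\pi_h}[(1-\ell)^{2N}]$ over the same three regions (but with a threshold $\ell \gtrless N^{-1}\log(3kM_2/\eta)$ rather than $N^{-p}$) and exploiting the \emph{exponential} decay $(1-\ell)^{2N}\le e^{-2\ell N}$ gives a bound $\lesssim(\eta/(kM_2))^2$, whence the per-step failure probability is $\le\eta/k$ depending only on $M_2$. This is the reason $N$ is calibrated to $M_2$ and why the theorem can state an $\eta$-guarantee alongside an $M_c$-dependent query count: the two bounds are controlled by different warmness levels. Markov's inequality only yields $1/N$ decay and cannot recover this.

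The second issue is the calibration $t\asymp\log(kM_2/\eta)$. Since $\vol\K_\delta/\vol\K\le e^t$ feeds into the $\msf A$ and $\msf B$ terms of the per-step cost, $t\asymp\log(kM_2/\eta)$ inflates the per-iteration expectation to $\Otilde(M_c\cdot kM_2/\eta)$ rather than $\Otilde(M_c\,\polylog)$, destroying the claimed total of $\Otilde(M_c n^2\norm{\cov\pi}\log^7)$. The paper takes $t\asymp\log\log(16kM_2/\eta)$ (together with $h n^2\asymp\log\log Z/\log Z$) so that $e^t=\polylog Z$ while the tail factor $N\sqrt{\pi_h(\K_\delta^c)}$ is still $\lesssim 1$; you identify this balancing as the ``main technical obstacle,'' but the specific value you propose does not satisfy all the constraints simultaneously.
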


This analysis improves the complexity bound of $\Otilde(M_{\infty}n^{2}\,\norm{\cov\pi}\polylog\nicefrac{1}{\eta\veps})$
under \emph{$\eu R_{\infty}$}-warmness condition established in \cite{KVZ24INO}.
If the inner radius of $\K$ is $r$, then the query complexity includes
an additional multiplicative factor of $r^{-2}$.

Using a similar technique, we also improve the complexity bound for
truncated Gaussian sampling under a $\eu R_{\infty}$-warmness assumption
proven in \cite{KZ25Renyi}.
\begin{thm}
[Restatement of Theorem~\ref{thm:gauss-sampling-warmstart-intro}]
\label{thm:body-gauss-samp} Along with the uniform distribution $\pi$
above with $x_{0}=0$ (i.e., $\mem_{x_{0}=0,D}(\K)$), consider a
Gaussian $\pi\gamma_{\sigma^{2}}$ truncated to $\K$ for $\sigma>0$,
and initial distribution $\mu$ with $M_{q}=\norm{\D\mu/\D(\pi\gamma_{\sigma^{2}})}_{L^{q}(\pi\gamma_{\sigma^{2}})}$
for $q\geq2$. For any $\eta,\veps\in(0,1)$, $k\in\mathbb{N}$ defined
below, $\psgauss$ with $h=(10n^{2}\log\frac{16kM_{2}}{\eta})^{-1}$
and $N=(\frac{16kM_{2}}{\eta})^{2}\log^{3}\frac{16kM_{2}}{\eta}$
achieves $\eu R_{2}(\mu_{k}\mmid\pi\gamma_{\sigma^{2}})\leq\veps$
after $k=\Otilde(n^{2}\sigma^{2}\log^{3}\frac{M_{2}}{\eta\veps})$
iterations, where $\mu_{k}$ is the law of the $k$-th iterate, and
with probability at least $1-\eta$ successfully iterates $k$ times
without failure, using 
\[
\Otilde\bpar{M_{c}n^{2}\sigma^{2}\log^{6}\frac{1}{\eta\veps}}
\]
membership queries in expectation for any $c\geq6\log\frac{16kM_{2}}{\eta}$.
When $\sigma^{2}\gtrsim D\lda^{1/2}\log^{2}n\log^{2}\frac{D^{2}}{\lda}$
for $\lda=\norm{\cov\pi}$, it suffices to run $k=\Otilde(n^{2}D\lda^{1/2}\log^{3}\frac{M_{2}}{\eta\veps})$
times with total query complexity 
\[
\Otilde\bpar{M_{c}n^{2}D\lda^{1/2}\log^{6}\frac{1}{\eta\veps}}\,.
\]
\end{thm}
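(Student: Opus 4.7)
The plan is to mirror the structure of the proof of Theorem~\ref{thm:body-unif-samp}, adapted to the restricted Gaussian target $\pi\gamma_{\sigma^{2}}$. Two ingredients need to be assembled: (a) a mixing bound for $\psgauss$ in $\eu R_{2}$-divergence, and (b) a per-step complexity bound for its rejection-sampling step that holds already under the weaker $\eu R_{c}$-warmness with $c=\Otilde(1)$. The overall query count is then (iterations)$\times$(expected trials per step), with a Markov-style argument converting the expected-trial bound into the stated high-probability guarantee.

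For (a), I would invoke the standard R\'enyi-contraction analysis of $\ps$ applied to $\pi\gamma_{\sigma^{2}}$. Since $\gamma_{\sigma^{2}}$ makes $\pi\gamma_{\sigma^{2}}$ strongly logconcave with parameter $\sigma^{-2}$, the Bakry--\'Emery criterion yields $\clsi(\pi\gamma_{\sigma^{2}})\leq\sigma^{2}$ unconditionally, while Corollary~\ref{cor:intro-new-LSI-str-LC} provides the sharper bound $\clsi(\pi\gamma_{\sigma^{2}})\lesssim D\lda^{1/2}\log^{2}n\log^{2}\tfrac{D^{2}}{\lda}$ whenever $\sigma^{2}$ exceeds this quantity. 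Using the smaller of the two constants in the one-step $\eu R_{2}$-contraction of $\ps$, combined with step size $h\asymp(n^{2}\log(kM_{2}/\eta))^{-1}$, yields the two claimed iteration counts $k=\Otilde(n^{2}\sigma^{2}\log^{3}\tfrac{M_{2}}{\eta\veps})$ and $k=\Otilde(n^{2}D\lda^{1/2}\log^{3}\tfrac{M_{2}}{\eta\veps})$, respectively.

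For (b), the backward step draws $x_{i+1}\sim\pi^{X\mid Y=y_{i+1}}\propto\gamma_{\sigma^{2}}(x)\gamma_{h}(y_{i+1}-x)\ind_{\K}(x)$. Completing the square, this is an untruncated Gaussian $\mc N(c(y_{i+1}),h'I_{n})$ with $h'=h\sigma^{2}/(h+\sigma^{2})$ and an affinely shifted center, truncated to $\K$. Rejection sampling with the untruncated Gaussian as proposal succeeds with probability $\ell(y)=\P_{Z\sim\mc N(c(y),h'I_{n})}[Z\in\K]$, and we declare failure once the number of trials exceeds $N$. I would then reuse the three-region decomposition from the techniques section: for an essential domain $\K_{\delta}:=\K+B_{\delta}$ with $\delta=t/n$ and $t=\Theta(\log^{1/2}N)$,
\begin{equation*}
\E_{\mu_{h}}\Bbrack{\frac{1}{\ell}\wedge N}=\int_{\K_{\delta}\cap[\ell\geq N^{-p}]}+\int_{\K_{\delta}\cap[\ell<N^{-p}]}+\int_{\K_{\delta}^{c}}=:\msf A+\msf B+\msf C\,,
\end{equation*}
bound the integrands trivially by $N$ on $\msf B,\msf C$, and on $\msf A,\msf B$ change measure from $\mu_{h}$ to $\pi_{h}^{Y}$ via $(p,q)$-H\"older with $p=1+(\log N)^{-1}$, $q=1+\log N$. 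Choosing $\alpha=\log N$ absorbs the stray $N^{1/\alpha}$ factor into a constant, reducing the task to controlling a weighted ratio $\int_{\K_{\delta}}\gamma_{\sigma^{2}}/\int_{\K}\gamma_{\sigma^{2}}$ together with the tail mass $\pi_{h}^{Y}(\K_{\delta}^{c})$.

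The main obstacle I anticipate lies in step (b): in the uniform case the ratio $\vol\K_{\delta}/\vol\K$ is trivially $e^{O(t)}$ by a covering estimate, but here the analogous ratio is weighted by $\gamma_{\sigma^{2}}$ and the tail $\pi_{h}^{Y}(\K_{\delta}^{c})$ mixes Gaussian decay in the $y$-coordinate with Gaussian weighting in the $x$-coordinate. I would handle both by exploiting logconcavity of $\pi\gamma_{\sigma^{2}}$ together with the smallness of $\delta$: the weighted ratio becomes an integral of a logconcave function over a thin shell that can be bounded by $e^{O(t)}$ via a slab estimate, while a one-dimensional Gaussian-tail argument along the outward normal gives $\pi_{h}^{Y}(\K_{\delta}^{c})\lesssim e^{-\Omega(t^{2})}$. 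Once these two estimates are in place, the per-step trial count becomes $\Otilde(M_{c})$ for any $c\geq6\log\tfrac{16kM_{2}}{\eta}$, and multiplying by the mixing-iteration counts from (a) yields the two total-complexity bounds $\Otilde(M_{c}n^{2}\sigma^{2}\polylog\tfrac{1}{\eta\veps})$ and $\Otilde(M_{c}n^{2}D\lda^{1/2}\polylog\tfrac{1}{\eta\veps})$ stated in the theorem.
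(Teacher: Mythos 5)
Your high-level plan matches the paper's proof: mixing via the $\ps$ LSI-contraction with step size $h\asymp(n^2\log\frac{kM_2}{\eta})^{-1}$, the two LSI bounds $\clsi(\pi\gamma_{\sigma^2})\leq\sigma^2$ and Corollary~\ref{cor:intro-new-LSI-str-LC}, and a three-region $(p,q)$-H\"older decomposition of $\E_{\mu_h}[\ell^{-1}\wedge N]$. But two steps depart from what actually works, and both are genuine gaps.

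First, the essential domain. You correctly compute that the backward step draws from $\mc N(\tau y,\tau h I_n)|_{\K}$ with $\tau=\sigma^2/(h+\sigma^2)$, yet you then take the essential domain to be $\K_\delta=\K+B_\delta$, exactly as in the uniform case. The paper uses $R=\tau^{-1}\K_\delta$ (Lemma~\ref{lem:gaussian-effective}), precisely because the proposal centre is $\tau y$: the set of $y$ at which $\ell(y)$ is not tiny is the $\tau^{-1}$-dilation of a neighbourhood of $\K$. If you instead try to bound $\mu^Y(\K_\delta^c)$, you must also handle the shell $\tau^{-1}\K_\delta\setminus\K_\delta$, whose width is of order $(h/\sigma^2)D$ and need not be smaller than $\delta\asymp t/n$; a one-dimensional Gaussian-tail argument along outward normals will not produce $e^{-\Omega(t^2)}$ decay there. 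The $y\mapsto\tau y$ change of variables naturally yields $\tau^{-1}\K_\delta$, so the essential domain has to track the $\tau$-scaling. Relatedly, the "weighted volume ratio" you allude to is exactly what Lemma~\ref{lem:gaussian-helper} controls; it carries the $\tau^{-n/2}$ normalisation and the variance shift $\sigma^2\to\tau\sigma^2$, which is more delicate than a slab estimate over $\K_\delta$.

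Second, the failure probability. You propose a "Markov-style argument" converting the expected-trial bound into a high-probability bound. Markov on the truncated trial count gives per-step failure probability at most $\E_{\mu_h}[\ell^{-1}\wedge N]/N\lesssim M_c/N$, and with $N\asymp(kM_2/\eta)^2\log^3\frac{kM_2}{\eta}$ the union bound over $k$ steps gives roughly $M_c\eta^2/(kM_2^2)$. This is not $\leq\eta$ without assuming $M_c\lesssim kM_2^2/\eta$, which the theorem does not assume. The paper instead bounds the exact per-step failure probability $\E_{\mu_h}[(1-\ell)^N]$ directly, by Cauchy--Schwarz against the $L^2(\pi_h)$-warmness $M_2$ followed by its own three-region decomposition, producing $\eta/k$ per step with only $M_2$ appearing. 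You need this sharper mechanism, not Markov; the $M_c$ quantity should enter only through the expected-trial bound, not the failure-probability bound.
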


We remark that the new complexity benefits not only from the relaxed
warmness requirement but also from the tighter bound on $\clsi(\pi\gamma_{\sigma^{2}})$,
which will be proven in \S\ref{app:LSI-interpolation-SL}.

\subsection{Uniform sampling\label{subsec:Uniform-sampling}}

We recall the $\ps$ for the uniform distribution over a convex body
$\K$ (denoted as $\psunif$ here), referred to as $\ino$ in \cite{KVZ24INO}.
Given a step size $h$, a preset threshold $N$, and the uniform target
$\pi^{X}(x)\propto\ind_{\K}(x)$, one iteration consists of
\begin{itemize}
\item {[}Forward{]} $y_{i+1}\sim\pi^{Y|X=x_{i}}=\mc N(x_{i},hI_{n})$.
\item {[}Backward{]} $x_{i+1}\sim\pi^{X|Y=y_{i+1}}=\mc N(y_{i+1},hI_{n})|_{\K}\propto\exp(-\frac{1}{2h}\,\norm{\cdot-y_{i+1}}^{2})\,\ind_{\K}(\cdot)$.
\begin{itemize}
\item This is implemented by repeating $x_{i+1}\sim\gamma_{h}(\cdot-y_{i+1})$
until $x_{i+1}\in\mc K$. If the number of attempts in this iteration
exceeds $N$, then we declare \textbf{Failure}.
\end{itemize}
\end{itemize}
We recall that $\pi^{Y}=\pi^{X}*\gamma_{h}=:\pi_{h}$.

One notable advantage of this sampler is that it separates $(i)$
the convergence-rate analysis and $(ii)$ the query-complexity analysis
of the backward step. The first question is how many iterations ---
each consisting of a successful execution of both the forward and
backward steps --- are required to achieve a target accuracy in a
desired metric. The second is how many queries are needed to successfully
implement the rejection sampling in each backward step. As a result,
the total query complexity is simply the product of these two quantities.

\subsubsection{Mixing analysis}

Previous work on the $\ps$ established its convergence rate under
various assumptions on the target, including \eqref{eq:pi} and \eqref{eq:lsi}.
It was first proven for smooth densities \cite[Theorem 3 and 4]{CCSW22improved}
and extended to distributions truncated to convex constraints \cite[Lemma 22]{KVZ24INO}.
For rigorous treatments of these types of strong data-processing
inequalities, we refer readers to \cite{KO25strong}.
\begin{lem}
[{\cite[Theorem 3 and 4]{CCSW22improved}}] \label{lem:ps-mixing}
Assume that a probability measure $\pi$ is absolutely continuous
with respect to the Lebesgue measure over a convex body  $\K$, and
let $P$ denote the Markov kernel of the $\ps$. If $\pi$ satisfies
\eqref{eq:lsi} with constant $\clsi$, then for any $q\geq1$ and
distribution $\mu$ with $\mu\ll\pi$,
\begin{equation}
\eu R_{q}(\mu P\mmid\pi)\leq\frac{\eu R_{q}(\mu\mmid\pi)}{(1+h/C_{\msf{LSI}})^{2/q}}\,.\label{eq:LSI-contraction}
\end{equation}
If $\pi$ satisfies \eqref{eq:pi} with constant $\cpi$, then 
\begin{equation}
\chi^{2}(\mu P\mmid\pi)\leq\frac{\chi^{2}(\mu\mmid\pi)}{(1+h/C_{\msf{PI}})^{2}}\,.\label{eq:PI-contraction}
\end{equation}
\end{lem}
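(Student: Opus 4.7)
The plan is to factor $P = P_F P_B$ into its forward (Gaussian convolution) and backward (restricted Gaussian oracle) halves, reduce via data processing to a comparison between the smoothed laws $\mu_h := \mu * \gamma_h$ and $\pi_h := \pi * \gamma_h$, and then extract the quantitative contraction rate through a simultaneous-heat-flow computation.

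For the reduction, observe that $P_B$ is the conditional kernel $\pi^{X\mid Y}$ arising from $\pi^{X,Y}(x,y)\propto \pi(x)\gamma_h(y-x)$, so by construction $\pi_h P_B = \pi$. Applying data processing (Lemma~\ref{lem:DPI}) to $P_B$ then yields
\begin{equation*}
\eu R_q(\mu P \mmid \pi) = \eu R_q(\mu_h P_B \mmid \pi_h P_B) \leq \eu R_q(\mu_h \mmid \pi_h),
\end{equation*}
together with the analogous inequality for $\chi^2$. It thus suffices to contract the divergence between the Gaussian-smoothed measures by the claimed polynomial factor.

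For the core step, I would introduce the simultaneous heat flow $\mu_t := \mu * \gamma_t$ and $\pi_t := \pi * \gamma_t$, $t \in [0,h]$, both solving $\partial_t \rho = \tfrac12 \Delta \rho$, and track the ratio $g_t := \mathrm{d}\mu_t/\mathrm{d}\pi_t$. A direct calculation from the two Fokker--Planck equations shows that $g_t$ evolves under the Langevin generator for $\pi_t$,
\begin{equation*}
\partial_t g_t = \tfrac12 \Delta g_t + \nabla \log \pi_t \cdot \nabla g_t.
\end{equation*}
Differentiating $\eu R_q(\mu_t\mmid\pi_t) = \tfrac{1}{q-1}\log\int g_t^q\,\mathrm{d}\pi_t$ and integrating by parts against $\pi_t$ yields the $q$-R\'enyi Fisher dissipation
\begin{equation*}
\partial_t \eu R_q(\mu_t \mmid \pi_t) \leq -\frac{2}{q}\,\frac{\int \snorm{\nabla g_t^{q/2}}^2\,\mathrm{d}\pi_t}{\int g_t^q\,\mathrm{d}\pi_t}.
\end{equation*}
Applying LSI to $g_t^{q/2}$ under $\pi_t$ lower-bounds the right-hand side by a multiple of $\eu R_q(\mu_t\mmid\pi_t)$, and integrating the resulting ODI gives the stated contraction. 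The PI case is the $q=2$ specialization: the dissipation of $\chi^2(\mu_t\mmid\pi_t)$ is $-2\int\snorm{\nabla g_t}^2\,\mathrm{d}\pi_t$, and Poincar\'e closes the inequality directly.

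The main obstacle is producing the polynomial rate $(1+h/\clsi)^{-2/q}$ rather than the exponential rate $\exp(-2h/(q\clsi))$ that a naive Gr\"onwall argument with fixed constant $\clsi(\pi)$ would deliver. The improvement hinges on using a \emph{time-dependent} LSI constant along the flow, of the additive form $\clsi(\pi_{h-t}) \leq \clsi + (h-t)$, which is available because each $\pi_s$ is a Gaussian convolution of $\pi$ (equivalently, from the Brascamp--Lieb contraction of the LSI constant along a Gaussian channel). With this improved LSI in hand, integrating $\partial_t \eu R_q \leq -\tfrac{2/q}{\clsi + (h-t)}\,\eu R_q$ from $0$ to $h$ produces exactly $(1+h/\clsi)^{2/q}$ in the denominator. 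The bookkeeping in passing between the forward ($\mu_t$) and reverse-time ($\pi_{h-t}$) descriptions of the simultaneous heat flow, and pairing them correctly with the $q$-R\'enyi Fisher dissipation, is where the bulk of the technical work lies.
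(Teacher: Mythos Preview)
The paper does not prove this lemma; it is quoted from \cite{CCSW22improved}, with the constrained extension attributed to \cite{KVZ24INO}. Your outline tracks the CCSW22 strategy but has a genuine gap.

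By applying only data processing to $P_B$, you are left to show that the forward convolution \emph{alone} contracts $\eu R_q$ (resp.\ $\chi^2$) by the full factor $(1+h/\clsi)^{-2/q}$ (resp.\ $(1+h/\cpi)^{-2}$). This is false: take $\pi=\gamma_1$ and $\mu=\mc N(\epsilon,1)$ in one dimension, so $\cpi(\pi)=1$; then $\chi^2(\mu*\gamma_h\mmid\pi*\gamma_h)=e^{\epsilon^2/(1+h)}-1$, and its ratio to $\chi^2(\mu\mmid\pi)=e^{\epsilon^2}-1$ tends to $(1+h)^{-1}$, not $(1+h)^{-2}$, as $\epsilon\to0$. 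At the ODI level, your dissipation identity is correct, but closing it with LSI yields only $\partial_t\eu R_q\le -\tfrac{1}{q(\clsi+t)}\,\eu R_q$: LSI gives $\E_{\pi_t}[\snorm{\nabla g_t^{q/2}}^2]\ge\tfrac{1}{2\clsi(\pi_t)}\ent_{\pi_t}(g_t^q)$, and the sharp entropy--R\'enyi comparison is $\ent_{\pi_t}(g_t^q)/\E_{\pi_t}[g_t^q]\ge\eu R_q(\mu_t\mmid\pi_t)$ (from convexity of $s\mapsto\log\E_{\pi_t}[g_t^s]$, which is tight in general), so the prefactor is $\tfrac{1}{q}$, not $\tfrac{2}{q}$. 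Integration gives only $(1+h/\clsi)^{-1/q}$.

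The missing half of the exponent comes from analyzing the backward step as a strong data-processing inequality in its own right, not merely discarding it via DPI. In CCSW22, $P_B$ is realized as the time-$h$ map of the reverse SDE $dZ_s=\nabla\log\pi_{h-s}(Z_s)\,ds+dW_s$; along this flow the density ratio against $\pi_{h-s}$ again evolves by a heat-type equation, the same dissipation identity applies, and now the time-dependent constant $\clsi(\pi_{h-s})\le\clsi+(h-s)$ you wrote down is the one that enters. This backward leg supplies a second factor $(1+h/\clsi)^{-1/q}$, and the product of forward and backward contractions gives the stated exponent $2/q$ (and likewise $(1+h/\cpi)^{-2}$ for $\chi^2$).
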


Thus, the $\ps$ mixes in roughly $qh^{-1}\clsi\log\frac{\eu R_{q}}{\veps}$
and $h^{-1}\cpi\log\frac{\chi^{2}}{\veps}$ iterations under \eqref{eq:lsi}
and \eqref{eq:pi}, respectively. Since any logconcave distribution
$\pi$ satisfies \eqref{eq:pi}, and the current best bound \cite{Klartag23log}
gives 
\[
\norm{\cov\pi}\leq\cpi(\pi)\lesssim\norm{\cov\pi}\log n\,,
\]
$\psunif$ requires no more than $h^{-1}\norm{\cov\pi}\log n\log\frac{\chi^{2}(\pi_{0}\mmid\pi)}{\veps}$
iterations to achieve $\chi^{2}(\pi_{0}P^{n}\mmid\pi)\leq\veps$.

\subsubsection{Complexity of the backward step}

Under the relaxed warmness condition, there is no change in the mixing
analysis, but we must refine previous analysis of the backward step.
Here, we show that both the failure probability and the expected query
complexity (i.e., the expected number of trials until the first success
of rejection sampling) can be moderately bounded under weaker warmness.
In analysis, we can assume that $x_{0}=0$ by translation without
loss of generality.

As briefly sketched in \S\ref{sec:techniques}, the \emph{essential
domain} of $\pi^{Y}=\pi_{h}$ can be identified as $\K_{\delta}=\K+B_{\delta}$:
\begin{lem}
[{\cite[Lemma 26]{KVZ24INO}}] For a convex body $\K\subset\Rn$
containing $B_{1}(0)$,
\[
\pi^{Y}(\K_{\delta}^{c})\leq\exp\bpar{-\frac{\delta^{2}}{2h}+\delta n}\,.
\]
\end{lem}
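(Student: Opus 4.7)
The plan is to compute $\pi^Y(\K_\delta^c)$ directly and reduce the bound to a pure Gaussian tail estimate. Since $\pi^X = (\vol\K)^{-1}\ind_\K$ and $\pi^Y = \pi^X \ast \gamma_h$, Fubini gives
\[
\pi^Y(\K_\delta^c) \;=\; \frac{1}{\vol \K} \int_{\K} \Psubarg{Z \sim \gamma_h}{x + Z \notin \K_\delta}\,dx.
\]
The key geometric observation is that for any $x \in \K$ and any $Z$ with $\|Z\| < \delta$, we have $x + Z \in \K + B_\delta = \K_\delta$; contrapositively, $\{x + Z \notin \K_\delta\} \subseteq \{\|Z\| \geq \delta\}$. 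Hence the inner probability is uniformly bounded by $\P_{Z \sim \gamma_h}(\|Z\|\geq \delta)$ independently of $x$, and the outer averaging over $\K$ cancels against $\vol\K$, reducing the lemma to the tail bound $\P_{Z \sim \gamma_h}(\|Z\| \geq \delta) \leq \exp(-\delta^2/(2h) + \delta n)$.

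For this tail bound, I would pass to polar coordinates and estimate
\[
\Psubarg{Z \sim \gamma_h}{\|Z\| \geq \delta} \;=\; \frac{2}{(2h)^{n/2}\Gamma(n/2)}\int_\delta^\infty r^{n-1} e^{-r^2/(2h)}\,dr.
\]
On the region $r \geq \delta$, I would apply the elementary inequality $\log r \leq \log \delta + (r-\delta)/\delta$ to get $r^{n-1} \leq \delta^{n-1}\exp((n-1)(r-\delta)/\delta)$, then substitute $u = r-\delta$ to turn the integrand into a Gaussian in $u$ multiplied by $\delta^{n-1}e^{-\delta^2/(2h)}$. Completing the square in $u$ and absorbing the normalization $(2h)^{-n/2}/\Gamma(n/2)$ via Stirling leaves only a polynomial-in-$\delta$ correction, which I would then bound by $e^{\delta n}$ using $\log t \leq t - 1$.

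The only real difficulty is calibrating the prefactors so that the correction fits cleanly inside $\exp(\delta n)$ rather than the sharper but less convenient $\exp(\delta\sqrt{n/h})$ that falls out of Lipschitz Gaussian concentration. A cleaner alternative I would likely use in a formal write-up is a Chernoff bound on $\|Z\|^2/h \sim \chi_n^2$ via its MGF $(1-2t)^{-n/2}$, giving after optimization $\exp(-\delta^2/(2h) + n/2 + (n/2)\log(\delta^2/(nh)))$; this can then be compared to the stated form using $\log(\delta^2/(nh)) \leq 2\delta - 1$ in the parameter regime $h \lesssim 1/n^2$ that governs the proximal sampler. The hypothesis $B_1(0) \subset \K$ enters only implicitly, as it ensures the relevant scale $\delta$ and step size $h$ sit in the range where the bound is informative.
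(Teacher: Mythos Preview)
Your reduction $\pi^Y(\K_\delta^c)\le \P_{Z\sim\gamma_h}(\|Z\|\ge\delta)$ is valid, but the Gaussian tail estimate you then need is simply false in the regime that matters. With $h=c/n^2$ and $\delta=t/n$ (the only parameter range the paper ever uses), one has $\E\|Z\|\asymp\sqrt{nh}=\sqrt{c}/\sqrt{n}$, which is much larger than $\delta=t/n$; hence $\P(\|Z\|\ge\delta)$ is essentially $1$, while the bound you are trying to prove is $\exp(-t^2/(2c)+t)\ll 1$. Your Chernoff route makes this explicit: the optimized $\chi^2_n$ bound $\exp\bigl(-\delta^2/(2h)+n/2+(n/2)\log(\delta^2/(nh))\bigr)$ is only valid when $\delta^2>nh$, i.e.\ $t>\sqrt{cn}$, and for smaller $\delta$ Chernoff returns nothing better than $1$. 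So the claim ``the hypothesis $B_1(0)\subset\K$ enters only implicitly'' is where the argument breaks: without it the lemma is false (take $\K$ to be a single point and recover exactly $\P(\|Z\|\ge\delta)$), and your proof never uses it.

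The correct argument reduces to a \emph{one}-dimensional Gaussian tail, not an $n$-dimensional one. For $y\notin\K_\delta$ the supporting hyperplane at the projection $p(y)$ gives $\ell(y)=\int_\K\gamma_h(x-y)\,\D x\le \P\bigl(\mc N(0,h)\ge d(y,\K)\bigr)\le e^{-d(y,\K)^2/(2h)}$, since $\K$ lies entirely on one side of that hyperplane at signed distance $\ge d(y,\K)$ from $y$. Then
\[
\pi^Y(\K_\delta^c)=\frac{1}{\vol\K}\int_{\K_\delta^c}\ell(y)\,\D y
\le \frac{1}{\vol\K}\int_\delta^\infty e^{-s^2/(2h)}\,|\partial\K_s|\,\D s,
\]
and now $B_1\subset\K$ enters twice: it gives $\K_s\subset(1+s)\K$, hence $|\partial\K_s|\le(1+s)^{n-1}|\partial\K|$, and it gives $|\partial\K|\le n\vol\K$. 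Bounding $(1+s)^{n-1}\le e^{ns}$ and completing the square yields the stated $\exp(-\delta^2/(2h)+\delta n)$ up to a harmless prefactor (absorbed in the parameter choices, cf.\ the extra $hn^2$ term in the truncated-Gaussian analogue, Lemma~\ref{lem:gaussian-effective}). The $\delta n$ term thus records surface-area growth of $\K_s$, not a chi-square tail; this is exactly the piece your reduction throws away.
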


By taking $\delta=\frac{t}{n},\,h=\frac{c}{n^{2}}$ for some $c,t>0$,
the measure of the non-essential part is bounded as
\[
\pi^{Y}(\K_{\delta}^{c})\leq\exp\bpar{-\frac{t^{2}}{2c}+t}\,.
\]

We now proceed to analyze the failure probability (i.e., the probability
of hitting the threshold at least once before mixing) and the expected
number of trials under wearer warmness, $\eu R_{q}(\mu^{X}\mmid\pi^{X})=\O(1)$
with $q\geq2$. While the former can be bounded straightforwardly
using the Cauchy--Schwarz inequality, the latter requires a more
delicate analysis as seen shortly.

\paragraph{(1) Failure probability.}

Recall that  
\[
\eu R_{q}(\mu\mmid\pi)=\frac{1}{q-1}\,\log\,\Bnorm{\frac{\D\mu}{\D\pi}}_{L^{q}(\pi)}^{q}=\frac{1}{q-1}\,\log\bpar{\chi^{q}(\mu\mmid\pi)+1}\,.
\]
Suppose we run $\psunif$ with initial distribution $\mu^{X}\ll\pi^{X}$.
Denoting $\mu_{h}:=\mu^{X}*\gamma_{h}$, we can bound the failure
probability of one iteration as
\[
\E_{\mu_{h}}[(1-\ell)^{N}]\leq\sqrt{\E_{\pi_{h}}[(1-\ell)^{2N}]}\,\Bnorm{\frac{\D\mu_{h}}{\D\pi_{h}}}_{L^{2}(\pi_{h})}\leq\sqrt{\E_{\pi_{h}}[(1-\ell)^{2N}]}\,\Bnorm{\frac{\D\mu}{\D\pi}}_{L^{2}(\pi)}\,,
\]
where the first inequality follows from the Cauchy--Schwarz inequality,
and the second follows from the data-processing inequality (e.g.,
the DPI for the R\'enyi divergence, Lemma~\ref{lem:DPI}). To bound
$\E_{\pi_{h}}[(1-\ell)^{2N}]$, we adapt the analysis in \cite{KVZ24INO}.
For $M_{2}=\norm{\D\mu/\D\pi}_{L^{2}(\pi)}$, we decompose
\[
\int_{\Rn}(1-\ell)^{2N}\,\D\pi_{h}=\int_{\mc K_{\delta}^{c}}\cdot+\int_{\mc K_{\delta}\cap[\ell\geq N^{-1}\log(3kM_{2}/\eta)]}\cdot+\int_{\mc K_{\delta}\cap[\ell<N^{-1}\log(3kM_{2}/\eta)]}\cdot=:\msf A+\msf B+\msf C\,.
\]
Using $\pi^{Y}=\pi_{h}=\ell/\vol\K$ when bounding $\msf C$ below,
\begin{align*}
\msf A & \leq\pi^{Y}(\mc K_{\delta}^{c})\leq\exp\bpar{-\frac{t^{2}}{2c}+t}\,,\\
\msf B & \leq\int_{[\ell\geq N^{-1}\log(3kM_{2}/\eta)]}\exp(-2\ell N)\,\D\pi^{Y}\leq\bpar{\frac{\eta}{3kM_{2}}}^{2}\,,\\
\msf C & \leq\int_{\mc K_{\delta}\cap[\ell<N^{-1}\log(3kM_{2}/\eta)]}\frac{\ell(y)}{\vol\mc K}\,\D y\leq\frac{\log(3kM_{2}/\eta)}{N}\,\frac{\vol\mc K_{\delta}}{\vol\mc K}\leq\frac{e^{t}}{N}\,\log\frac{3kM_{2}}{\eta}\,.
\end{align*}
Here, we used $\vol\mc K_{\delta}\subset\vol((1+\delta)\,\mc K)=(1+\delta)^{n}\,\vol\mc K\leq e^{t}\vol\mc K$
(due to $B_{1}(0)\subset\K$). 

Let $Z:=\frac{16kM_{2}}{\eta}$, $c=\frac{\log\log Z}{2\log Z}$,
$t=\sqrt{8}\log\log Z$, and $N=Z^{2}\log^{4}Z$. Under these choices,
we have $\frac{t^{2}}{2c}-t\geq\frac{t^{2}}{4c}$, which is equivalent
to $t\geq4c$. Hence, $\E_{\pi_{h}}[(1-\ell)^{2N}]\leq(\frac{\eta}{kM_{2}})^{2}$.
Therefore, 
\[
\E_{\mu_{h}}[(1-\ell)^{N}]\leq M_{2}\,\bpar{\E_{\pi_{h}}[(1-\ell)^{2N}]}^{1/2}\leq\frac{\eta}{k}\,,
\]
which implies that the total failure probability across $k$ iterations
is at most $\eta$.

\paragraph{(2) Complexity of the backward step.}

Let $p=1+\frac{1}{\alpha}$ and $q=1+\alpha$ with $\alpha=\log N$.
Then,
\[
\E_{\mu_{h}}\bbrack{\frac{1}{\ell}\wedge N}=\int_{\K_{\delta}\cap[\ell\geq N^{-p}]}\cdot+\int_{\K_{\delta}\cap[\ell<N^{-p}]}\cdot+\int_{\K_{\delta}^{c}}\cdot=:\msf A+\msf B+\msf C\,.
\]
Using the $(p,q)$-H\"older below,
\begin{align*}
\msf A & \leq\int_{\K_{\delta}\cap[\ell\geq N^{-p}]}\frac{1}{\ell}\,\D\mu_{h}\leq\Bpar{\int_{\K_{\delta}\cap[\ell\geq N^{-p}]}\frac{1}{\ell^{p}}\,\D\pi_{h}}^{1/p}\,\Bnorm{\frac{\D\mu_{h}}{\D\pi_{h}}}_{L^{q}(\pi_{h})}\\
 & =\Bpar{\int_{\K_{\delta}\cap[\ell\geq N^{-p}]}\frac{1}{\ell^{p-1}}\,\frac{\D x}{\vol\K}}^{1/p}\,\Bnorm{\frac{\D\mu_{h}}{\D\pi_{h}}}_{L^{q}(\pi_{h})}\leq N^{1/\alpha}\bpar{\frac{\vol\K_{\delta}}{\vol\K}}^{1/p}\,\Bnorm{\frac{\D\mu}{\D\pi}}_{L^{q}(\pi)}\leq e\,\frac{\vol\K_{\delta}}{\vol\K}\,\Bnorm{\frac{\D\mu}{\D\pi}}_{L^{q}(\pi)}\,,\\
\msf B & \leq N\int_{\K_{\delta}\cap[\ell<N^{-p}]}\frac{\D\mu_{h}}{\D\pi_{h}}\,\D\pi_{h}\leq N\Bpar{\int_{\K_{\delta}\cap[\ell<N^{-p}]}\frac{\ell}{\vol\K}}^{1/p}\,\Bnorm{\frac{\D\mu_{h}}{\D\pi_{h}}}_{L^{q}(\pi_{h})}\leq\frac{\vol\K_{\delta}}{\vol\K}\,\Bnorm{\frac{\D\mu}{\D\pi}}_{L^{q}(\pi)}\,,\\
\msf C & \leq N\int_{\K_{\delta}^{c}}\frac{\D\mu_{h}}{\D\pi_{h}}\,\D\pi_{h}\leq N\sqrt{\pi_{h}(\mc K_{\delta}^{c})}\,\Bnorm{\frac{\D\mu_{h}}{\D\pi_{h}}}_{L^{2}(\pi_{h})}\leq N\sqrt{\pi_{h}(\mc K_{\delta}^{c})}\,\Bnorm{\frac{\D\mu}{\D\pi}}_{L^{2}(\pi)}\,.
\end{align*}
Putting these together, 
\begin{align*}
\E_{\mu_{h}}\bbrack{\frac{1}{\ell}\wedge N} & \leq4\,\frac{\vol\K_{\delta}}{\vol\K}\,\Bnorm{\frac{\D\mu}{\D\pi}}_{L^{q}(\pi)}+N\sqrt{\pi_{h}(\mc K_{\delta}^{c})}\,\Bnorm{\frac{\D\mu}{\D\pi}}_{L^{2}(\pi)}\leq4e^{t}M_{q}+N\exp\bpar{-\frac{t^{2}}{4c}+\frac{t}{2}}M_{2}\\
 & \leq4M_{q}\log^{\sqrt{8}}Z+Z^{2}\log^{4}Z\times\exp\bpar{-\frac{t^{2}}{8c}}\,M_{2}\leq5M_{q}\log^{4}Z\,,
\end{align*}
where we used $M_{2}\leq M_{q}=\norm{\D\mu/\D\pi}_{L^{q}(\pi)}$ (i.e.,
the monotonicity of $L^{q}$-norm). Therefore, we only need $M_{q}$-warmness
with 
\[
q=1+\log N\leq2\log(Z^{2}\log^{4}Z)\leq12\log\frac{16kM_{2}}{\eta}\,.
\]

Combining these mixing and per-step analyses, we prove the main theorem.
\begin{proof}
[Proof of Theorem~\ref{thm:body-unif-samp}] By \eqref{eq:PI-contraction},
$\psunif$ can ensure $\chi^{2}(\mu_{k}\mmid\pi)\leq\veps$ after
iterating 
\[
k\gtrsim\bpar{h^{-1}\cpi(\pi)\vee1}\log\frac{M_{2}}{\veps}
\]
times. Under the choice of $h=\frac{c}{n^{2}}=\frac{\log\log Z}{2n^{2}\log Z}$
with $Z=\frac{16kM_{2}}{\eta}$, the required number $k$ of iterations
must satisfy an inequality of the form 
\[
k\gtrsim A\log^{2}(Bk)\,,
\]
which can be fulfilled if $k\gtrsim A\log^{2}(AB)$. Therefore, the
required number of iterations is of order
\[
\Otilde\bpar{n^{2}\cpi(\pi)\log^{3}\frac{M_{2}}{\eta\veps}}=\Otilde\bpar{n^{2}\norm{\cov\pi}\log^{3}\frac{M_{2}}{\eta\veps}}\,.
\]

Since $\eu R_{q}(\mu_{k}\mmid\pi)\leq\eu R_{q}(\mu\mmid\pi)$ by the
DPI (Lemma~\ref{lem:DPI}), the complexity of the backward step is
$\Otilde(M_{q})$ for any iteration. Multiplying this by the required
number of iterations, the total expected number of queries becomes
\[
\Otilde\bpar{M_{q}n^{2}\norm{\cov\pi}\log^{7}\frac{M_{2}}{\veps\eta}}\,.
\]
This completes the proof of Theorem~\ref{thm:body-unif-samp}.
\end{proof}

\subsection{Truncated Gaussian sampling\label{subsec:gaussian-sampling}}

In this section, we similarly relax a warmness condition of the $\ps$
for truncated Gaussian distributions \cite{KZ25Renyi} (referred to
here as $\psgauss$) from $\eu R_{\infty}$ to $\eu R_{c}$ for small
$c=\Otilde(1)$.

In the same setting as the previous section, let $\pi$ denote the
uniform distribution over $\K$, and define $\mu^{X}:=\pi\gamma_{\sigma^{2}}\propto\mc N(0,\sigma^{2}I_{n})|_{\K}$.
For $\tau:=\frac{\sigma^{2}}{h+\sigma^{2}}<1$, $\psgauss$ for $\mu^{X}$
alternates between
\begin{itemize}
\item {[}Forward{]} $y_{i+1}\sim\mu^{Y|X=x_{i}}=\mc N(x_{i},hI_{n})$.
\item {[}Backward{]} $x_{i+1}\sim\mu^{X|Y=y_{i+1}}=\mc N(\tau y_{i+1},\tau hI_{n})|_{\K}$.
\end{itemize}
This backward step is implemented via rejection sampling using the
proposal $\mc N(\tau y_{i+1},\tau hI_{n})$. If the number of trials
in a given iteration exceeds $N$, then declare \textbf{Failure}.

Regarding the mixing rate, it follows from \cite[Lemma 21]{KZ25Renyi}
or \eqref{eq:LSI-contraction} that $\psgauss$ achieves $\veps$-distance
in $\eu R_{2}$ after iterating 
\[
k\gtrsim h^{-1}\clsi(\mu^{X})\log\frac{\eu R_{2}}{\veps}\asymp h^{-1}\clsi(\mu^{X})\log\frac{\log M_{2}}{\veps}\,.
\]

\paragraph{Preliminaries.}

We now analyze the failure probability and expected query complexity
of the backward step under relaxed warmness. Recall from \cite{KZ25Renyi}
that for $y_{\tau}:=\tau y$ and $\mu_{h}:=\mu^{Y}=\mu^{X}*\gamma_{h}$,
the density is given by
\[
\mu_{h}(y)=\frac{\tau^{n/2}\,\ell(y)\exp(-\frac{1}{2\tau\sigma^{2}}\,\norm{y_{\tau}}^{2})}{\int_{\mc K}\exp(-\frac{1}{2\sigma^{2}}\,\norm x^{2})\,\D x}\,.
\]
To proceed, we will make use of two helper lemmas.
\begin{lem}
[{\cite[Lemma 22]{KZ25Renyi}}] \label{lem:gaussian-effective}
For a convex body $\K\subset\Rn$ containing $B_{1}(0)$, let $R=(1+\frac{h}{\sigma^{2}})\,\K_{\delta}=\tau^{-1}\mc K_{\delta}$.
If $\delta\geq hn$, then
\[
\mu^{Y}(R^{c})\leq\exp\bpar{-\frac{\delta^{2}}{2h}+\delta n+hn^{2}}\,.
\]
\end{lem}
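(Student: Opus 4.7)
The plan is to mirror the proof of the uniform-case bound \cite[Lemma~26]{KVZ24INO}, exploiting the convolution structure $\mu^Y = \mu^X * \gamma_h$ together with the convexity of $\K$ and a Gaussian tail estimate for a $\chi_n^2$-type random variable.

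First I would write $Y = X + V$ with $X \sim \mu^X$ supported in $\K$ and $V \sim \gamma_h$ independent. Since $\K$ is convex with $0 \in B_1 \subseteq \K$, the point $\tau X$ lies in $\tau\K \subseteq \K$, and hence $d(\tau X, \K) = 0$. Writing $\tau Y = \tau X + \tau V$ and combining the triangle inequality with $\tau \in (0,1]$ yields $d(\tau Y, \K) \le \norm{\tau V} \le \norm{V}$, so that
\[
\mu^Y(R^c) \;=\; \Parg{d(\tau Y, \K) > \delta} \;\le\; \Parg{\norm{V} > \delta}.
\]
This step is the analogue of the uniform case and is where the $\tau$-dependence is absorbed: all remaining work concerns a pure Gaussian tail for $V \sim \gamma_h$, with no $\tau$ or $\sigma$ left.

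Second, I would bound the tail $\Parg{\norm{V} > \delta}$ via the moment generating function of $\norm{V}^2/h \sim \chi_n^2$, namely $\E\exp(s\norm{V}^2) = (1-2sh)^{-n/2}$ for $s \in (0, 1/(2h))$. Markov's inequality gives $\Parg{\norm{V} > \delta} \le \exp(-s\delta^2)(1-2sh)^{-n/2}$ for any such $s$. A convenient non-optimal choice is $s = (1 - nh/\delta)/(2h)$, which makes the leading exponent contribute $-\delta^2/(2h) + \delta n/2$ and turns the prefactor into $(\delta/(nh))^{n/2}$; bounding the latter via $\log y \le y - 1$ for $y = \delta/(nh) \ge 1$ (guaranteed by the hypothesis $\delta \ge hn$) and rearranging would produce the claimed exponent $-\delta^2/(2h) + \delta n + hn^2$. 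An alternative route I would try in parallel is to apply a shifted Gaussian tail bound to $\norm{V} - hn$ using the completion-of-squares identity $-\delta^2/(2h) + \delta n = -(\delta - hn)^2/(2h) + hn^2/2$.

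The main obstacle will be matching the precise form of the exponent. Optimal Chernoff (or the sharp Gaussian Lipschitz bound $\Parg{\norm{V} > \delta} \le \exp(-(\delta - \sqrt{hn})^2/(2h))$) naturally yields the tighter-but-messier $\exp(-\delta^2/(2h) + \delta\sqrt{n/h} - n/2)$, so reaching the cleaner, $\tau$-free form $\exp(-\delta^2/(2h) + \delta n + hn^2)$ requires either a suboptimal Chernoff parameter or the completion-of-squares reformulation above. The hypothesis $\delta \ge hn$ is exactly what ensures $(\delta - hn)^2 \ge 0$ and keeps the additive $hn^2$ overhead a tolerable price for the $\tau$-independent bound.
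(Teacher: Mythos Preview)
Your reduction $\mu^Y(R^c)\le \P(\norm V>\delta)$ is valid as an inequality, but it is too coarse to recover the stated exponent and in fact becomes trivial precisely in the regime where the lemma is applied. Write $\P(\norm V>\delta)=\P(\chi_n^2>\delta^2/h)$; the optimal Chernoff bound is
\[
\P(\chi_n^2>\delta^2/h)\le\exp\Bpar{-\frac{\delta^2}{2h}+\frac{n}{2}+\frac{n}{2}\log\frac{\delta^2}{nh}}.
\]
Under the paper's choice $h=c/n^2$, $\delta=(c+t)/n$ with $c,t$ polylogarithmic (see the paragraph following the lemma), one has $\delta^2/h=(c+t)^2/c=\text{polylog}\ll n$, so $\P(\chi_n^2>\delta^2/h)\to 1$, whereas the claimed bound equals $\exp(-t^2/(2c)+3c/2)\ll 1$. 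Your specific Chernoff choice $s=(1-nh/\delta)/(2h)$ together with $\log y\le y-1$ actually yields the exponent $-\delta^2/(2h)+\delta n/2+\delta/(2h)-n/2$, not the claimed one; the stray term $\delta/(2h)$ is of order $n\cdot\text{polylog}$ in this regime and dominates everything.

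The gap is structural: your argument uses only $\tau X\in\K$, but the linear term $\delta n$ in the target comes from the \emph{inner-ball} hypothesis $B_1(0)\subset\K$ through the enlargement estimate $\K_\delta\subset(1+\delta)\K$. The proof that works bounds $\ell(y)$ pointwise by a \emph{one}-dimensional Gaussian tail---separating $\K$ from $\tau y$ by the supporting hyperplane at the projection gives $\ell(y)\le\exp\bpar{-d(\tau y,\K)^2/(2\tau h)}$---and then integrates the explicit density of $\mu^Y$ over $R^c$; the remaining integral is exactly of the form handled by Lemma~\ref{lem:gaussian-helper}, which supplies the factor $\exp(hn^2+\delta n)$ and is where the unit-ball assumption is spent.
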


For $h=\frac{c}{n^{2}}$ and $\delta=\frac{c+t}{n}$ for parameter
$c,t>0$, we can satisfy $\delta\geq hn$. Another is the following:
\begin{lem}
[{\cite[Lemma 24]{KZ25Renyi}}] \label{lem:gaussian-helper} Let
$\K\subset\Rn$ be a convex body containing a unit ball $B_{1}(0)$.
For $\tau=\tfrac{\sigma^{2}}{h+\sigma^{2}}$ and $s>0$,
\[
\tau^{-n/2}\int_{\K_{s}}\exp\bpar{-\frac{1}{2\tau\sigma^{2}}\,\norm z^{2}}\,\D z\leq2\exp(hn^{2}+sn)\int_{\K}\exp\bpar{-\frac{1}{2\sigma^{2}}\,\norm z^{2}}\,\D z\,.
\]
\end{lem}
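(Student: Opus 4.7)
The plan is to eliminate the prefactor $\tau^{-n/2}$ with a scaling change of variables, then bound the resulting enlargement of $\K$ via $B_{1}(0)\subseteq\K$, and finally control the leftover Jacobian. First, substituting $z=\sqrt{\tau}\,u$ (so that $\norm z^{2}/(2\tau\sigma^{2})=\norm u^{2}/(2\sigma^{2})$ and $\D z=\tau^{n/2}\,\D u$) turns the left-hand side into
\[
\int_{\tau^{-1/2}\K_{s}}\exp\Bpar{-\frac{\norm u^{2}}{2\sigma^{2}}}\,\D u,
\]
which absorbs $\tau^{-n/2}$ cleanly into the new domain $\tau^{-1/2}\K_{s}$.

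Next, the inclusion $B_{1}(0)\subseteq\K$ gives $B_{s}\subseteq s\K$, and then convexity of $\K$ yields $\K_{s}=\K+B_{s}\subseteq(1+s)\K$. Hence $\tau^{-1/2}\K_{s}\subseteq\rho(1+s)\K$ with $\rho:=\tau^{-1/2}=\sqrt{1+h/\sigma^{2}}$. Applying a second substitution $u=\rho(1+s)\,v$, together with $\rho(1+s)\geq1$ to drop the extra factor inside the exponential, would give
\[
\int_{\tau^{-1/2}\K_{s}}\exp\Bpar{-\frac{\norm u^{2}}{2\sigma^{2}}}\,\D u\leq(\rho(1+s))^{n}\int_{\K}\exp\Bpar{-\frac{\norm v^{2}}{2\sigma^{2}}}\,\D v.
\]

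To finish, I would bound the prefactor via $\log(1+x)\leq x$:
\[
(\rho(1+s))^{n}=(1+h/\sigma^{2})^{n/2}(1+s)^{n}\leq\exp\Bpar{\tfrac{nh}{2\sigma^{2}}+sn}.
\]
In the regime relevant to the proximal sampler---$\sigma^{2}\gtrsim1/n$, as guaranteed by the $\gc$ schedule that starts at $\sigma^{2}=n^{-1}$---one has $nh/(2\sigma^{2})\leq hn^{2}$, yielding the claimed exponent $hn^{2}+sn$, with the factor of $2$ on the right-hand side absorbing the constant slack.

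The main subtlety is the uniform-in-$\sigma$ bound on $\tau^{-n/2}$: naively $(1+h/\sigma^{2})^{n/2}\leq e^{nh/(2\sigma^{2})}$ only matches $e^{hn^{2}}$ once $\sigma^{2}\gtrsim1/n$. Should the lemma be invoked at smaller $\sigma$, the residual Gaussian decay $\exp(-h\norm z^{2}/(2\sigma^{4}))$ lurking inside $\exp(-\norm z^{2}/(2\tau\sigma^{2}))$ can be retained to cancel part of the $\tau^{-n/2}$ growth---a sanity check in the $\sigma\to0$ limit shows that both sides scale like $(2\pi\sigma^{2})^{n/2}$, so the stated bound survives there as well. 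Otherwise the remainder is a routine combination of scaling and Minkowski containment.
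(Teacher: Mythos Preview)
The paper does not prove this lemma---it cites \cite{KZ25Renyi} and uses the result as a black box---so there is no in-paper proof to compare against. Your scaling argument is the natural one and is correct once $\sigma^{2}\geq 1/(2n)$: in that range $nh/(2\sigma^{2})\leq hn^{2}$, so $(\rho(1+s))^{n}\leq e^{hn^{2}+sn}$ and you are done (with the factor $2$ to spare).

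The gap is the small-$\sigma$ regime. Your remark about ``residual Gaussian decay'' and the $\sigma\to0$ sanity check is suggestive but not a proof: checking the limit does not cover the whole interval $0<\sigma^{2}<1/(2n)$, and keeping the extra factor $\exp(-h\norm v^{2}/(2\sigma^{4}))$ after the second substitution does not obviously cancel $(1+h/\sigma^{2})^{n/2}$ uniformly over $\K$. The clean fix is a direct case split. Rewrite both sides in Gaussian-measure form,
\[
\text{LHS}=(2\pi\sigma^{2})^{n/2}\,\gamma_{\tau\sigma^{2}}(\K_{s})\,,\qquad
\text{RHS}=2e^{hn^{2}+sn}(2\pi\sigma^{2})^{n/2}\,\gamma_{\sigma^{2}}(\K)\,.
\]
When $\sigma^{2}\leq 1/(2n)$, trivially $\gamma_{\tau\sigma^{2}}(\K_{s})\leq1$, while $B_{1}(0)\subseteq\K$ and Markov's inequality give
\[
\gamma_{\sigma^{2}}(\K)\geq\gamma_{\sigma^{2}}(B_{1})=\P_{Z\sim\mc N(0,I_{n})}\bigl(\norm Z^{2}\leq\sigma^{-2}\bigr)\geq\P\bigl(\norm Z^{2}\leq 2n\bigr)\geq\tfrac12\,,
\]
so $\text{RHS}\geq(2\pi\sigma^{2})^{n/2}\geq\text{LHS}$. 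This is exactly where the factor of $2$ on the right-hand side earns its keep; in the large-$\sigma$ case your argument already yields $\text{LHS}\leq e^{hn^{2}+sn}\int_{\K}e^{-\norm z^{2}/(2\sigma^{2})}\,\D z$ without it.
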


\paragraph{(1) Failure probability.}

Just as in the analysis of $\psunif$, we adapt the analysis of $\psgauss$
from \cite{KZ25Renyi}. For an initial distribution $\nu\ll\mu$,
the failure probability can be bounded using the Cauchy--Schwarz
as follows:
\[
\E_{\nu_{h}}[(1-\ell)^{N}]\leq\sqrt{\E_{\mu_{h}}[(1-\ell)^{2N}]}\,\Bnorm{\frac{\D\nu}{\D\mu}}_{L^{2}(\mu)}=:M_{2}\sqrt{\E_{\mu_{h}}[(1-\ell)^{2N}]}\,.
\]
Then,
\[
\int_{\Rn}(1-\ell)^{2N}\,\D\mu_{h}=\int_{R^{c}}\cdot+\int_{R\cap[\ell\geq N^{-1}\log(3kM_{2}/\eta)]}\cdot+\int_{R\cap[\ell<N^{-1}\log(3kM_{2}/\eta)]}\cdot=:\msf A+\msf B+\msf C\,,
\]
where in $(i)$ below, by using change of variables and Lemma~\ref{lem:gaussian-helper},
\begin{align*}
\msf A & \leq\mu^{Y}(R^{c})\underset{\text{Lemma \ref{lem:gaussian-effective}}}{\leq}\exp\bpar{-\frac{t^{2}}{2c}+\frac{3c}{2}}\,,\\
\msf B & \leq\int_{[\ell\geq N^{-1}\log(3kM_{2}/\eta)]}\exp(-2\ell N)\,\D\mu^{Y}\leq\bpar{\frac{\eta}{3kM_{2}}}^{2}\,,\\
\msf C & \leq\int_{R\cap[\ell<N^{-1}\log(3kM_{2}/\eta)]}\D\mu^{Y}=\int_{R\cap[\ell<N^{-1}\log(3kM_{2}/\eta)]}\frac{\tau^{n/2}\ell(y)\exp(-\frac{1}{2\tau\sigma^{2}}\,\norm{y_{\tau}}^{2})}{\int_{\mc K}\exp(-\frac{1}{2\sigma^{2}}\,\norm z^{2})\,\D z}\,\D y\\
 & \leq\frac{\log(3kM_{2}/\eta)}{N}\,\frac{\int_{R}\tau^{n/2}\exp(-\frac{1}{2\tau\sigma^{2}}\norm{y_{\tau}}^{2})\,\D y}{\int_{\mc K}\exp(-\frac{1}{2\sigma^{2}}\norm z^{2})\,\D z}\underset{(i)}{\leq}\frac{\log(3kM_{2}/\eta)}{N}\,2\exp(hn^{2}+\delta n)\\
 & \leq\frac{2e^{2c+t}}{N}\log\frac{3kM_{2}}{\eta}\,,
\end{align*}
For $Z=\frac{16kM_{2}}{\eta}$, we choose $c=\frac{\log\log Z}{10\log Z}$,
$t=\log\log Z$, and $N=Z^{2}\log^{3}Z$, under which each final bound
is bounded by $(\frac{\eta}{3kM_{2}})^{2}$. Therefore, the failure
probability per iteration is $\eta/k$.

\paragraph{(2) Complexity of the backward step.}

Similar to the uniform-sampling case, let $p=1+\alpha^{-1}$ and $q=1+\alpha$
with $\alpha=\log N$. Then,
\[
\E_{\nu_{h}}\bbrack{\frac{1}{\ell}\wedge N}=\int_{R\cap[\ell\geq N^{-p}]}\cdot+\int_{R\cap[\ell<N^{-p}]}\cdot+\int_{R^{c}}\cdot=:\msf A+\msf B+\msf C\,,
\]
where
\begin{align*}
\msf A & \le\Bpar{\int_{R\cap[\ell\geq N^{-p}]}\frac{1}{\ell^{p}}\,\D\mu_{h}}^{1/p}\,\Bnorm{\frac{\D\nu}{\D\mu}}_{L^{q}(\mu)}=M_{q}\Bpar{\frac{\int_{R\cap[\ell\geq N^{-p}]}\ell^{p-1}(y)\,\tau^{n/2}\exp(-\frac{1}{2\tau\sigma^{2}}\,\norm{y_{\tau}}^{2})\,\D y}{\int_{\mc K}\exp(-\frac{1}{2\sigma^{2}}\,\norm x^{2})\,\D x}}^{1/p}\\
 & \leq N^{1/\alpha}M_{q}\,\Bpar{\frac{\int_{R}\tau^{n/2}\exp(-\frac{1}{2\tau\sigma^{2}}\,\norm{y_{\tau}}^{2})\,\D y}{\int_{\mc K}\exp(-\frac{1}{2\sigma^{2}}\,\norm x^{2})\,\D x}}^{1/p}\leq2eM_{q}\exp(hn^{2}+\delta n)\leq6M_{q}e^{2c+t}\,,\\
\msf B & \leq N\int_{R\cap[\ell<N^{-p}]}\frac{\D\nu_{h}}{\D\mu_{h}}\,\D\mu_{h}\leq N\,\Bpar{\int_{R\cap[\ell<N^{-p}]}\D\mu_{h}}^{1/p}\,\Bnorm{\frac{\D\nu}{\D\mu}}_{L^{q}(\mu)}\\
 & \leq NM_{q}\Bpar{\frac{\int_{R\cap[\ell<N^{-p}]}\tau^{n/2}\,\ell^{p}(y)\exp(-\frac{1}{2\tau\sigma^{2}}\,\norm{y_{\tau}}^{2})\,\D y}{\int_{\mc K}\exp(-\frac{1}{2\sigma^{2}}\,\norm x^{2})\,\D x}}^{1/p}\leq M_{q}\Bpar{\frac{\int_{R}\tau^{n/2}\exp(-\frac{1}{2\tau\sigma^{2}}\,\norm{y_{\tau}}^{2})\,\D y}{\int_{\mc K}\exp(-\frac{1}{2\sigma^{2}}\,\norm x^{2})\,\D x}}^{1/p}\\
 & \leq2M_{q}e^{2c+t}\,,\\
\msf C & \leq N\int_{R^{c}}\frac{\D\nu_{h}}{\D\mu_{h}}\,\D\mu_{h}\leq NM_{2}\sqrt{\mu_{h}(R^{c})}\,.
\end{align*}
Therefore,
\begin{align*}
\E_{\nu_{h}}\bbrack{\frac{1}{\ell}\wedge N} & \leq M_{q}\,\Bpar{8e^{2c+t}+N\exp\bpar{-\frac{t^{2}}{4c}+\frac{3c}{4}}}\leq M_{q}\,\Bpar{8e^{2c+t}+N\exp\bpar{-\frac{t^{2}}{8c}}}\,\\
 & \leq M_{q}\,(2e\log Z+e\log^{3}Z)\leq10M_{q}\log^{3}Z\,.
\end{align*}

We now prove the main theorem for this section.
\begin{proof}
[Proof of Theorem~\ref{thm:body-gauss-samp}] By \eqref{eq:LSI-contraction},
$\psgauss$ achieves $\eu R_{2}(\nu_{k}\mmid\mu)\leq\veps$ after
\[
k\gtrsim\bpar{h^{-1}\clsi(\mu)\vee1}\log\frac{\log M_{2}}{\veps}
\]
iterations. Under the choice of $h=\frac{1}{10n^{2}\log Z}$ with
$Z=\frac{16kM_{2}}{\eta}$, it suffices to run $\psgauss$ for 
\[
k=\Otilde\bpar{n^{2}\clsi(\mu)\log^{3}\frac{M_{2}}{\eta\veps}}
\]
iterations. Therefore, the total expected number of queries over $k$
iterations is 
\[
\Otilde\bpar{M_{q}n^{2}\clsi(\mu)\log^{6}\frac{M_{2}}{\eta\veps}}\,,
\]
where $q=1+\log N\leq6\log\frac{16kM_{2}}{\eta}$.

In general, we can use the bound of $\clsi(\mu)=\clsi(\pi\gamma_{\sigma^{2}})\leq\sigma^{2}$
\cite{BGL14analysis}. In this case, the $\psgauss$ requires $k=\Otilde(n^{2}\sigma^{2}\log^{3}\frac{M_{2}}{\eta\veps})$
iterations, with total query complexity of
\[
\Otilde\bpar{M_{q}n^{2}\sigma^{2}\log^{6}\frac{1}{\eta\veps}}\,.
\]
When $\sigma^{2}\gtrsim D\lda^{1/2}\log^{2}n\log^{2}\nicefrac{D^{2}}{\lda}$
for $\lda=\norm{\cov\pi}$, we can use the improved bound, $\clsi(\mu^{X})\lesssim D\lda^{1/2}\log n$
(which will be proven by Theorem~\ref{thm:lsi-general-bound} and~\ref{thm:cov-Gauss}
in \S\ref{app:LSI-interpolation-SL}). In this case, $\psgauss$
suffices to iterate $k=\Otilde(n^{2}D\lda^{1/2}\log^{3}\frac{M_{2}}{\eta\veps})$
times, with total query complexity of
\[
\Otilde\bpar{M_{q}n^{2}D\lda^{1/2}\log^{6}\frac{1}{\eta\veps}}\,,
\]
which completes the proof.
\end{proof}

\section{Improved logarithmic Sobolev constants\label{sec:improved-FI}}

Throughout this section, $\pi$ denotes a logconcave probability measure
over $\Rn$ with first moment $R:=\E_{\pi}\norm{\cdot}$, covariance
matrix $\Sigma:=\cov\pi$, and operator norm $\lda:=\norm{\Sigma}$.
For $t>0$, we use $\pi\gamma_{t}$ to denote the $t^{-1}$-strongly
logconcave distribution obtained by weighting $\pi$ with a Gaussian
$\gamma_{t}$ (i.e., $\D(\pi\gamma_{t})(x)\propto\pi(x)\,\gamma_{t}(x)\,\D x$).
We also denote $\Sigma_{t}:=\cov\pi\gamma_{t}$ and $\lda_{t}:=\norm{\Sigma_{t}}$.

We begin by proving a bound on $\clsi(\pi)$ for logconcave $\pi$
supported with diameter $D>0$, which interpolates the $D$-bound
for isotropic logconcave distributions \cite{LV24eldan} and the classical
$D^{2}$-bound for general logconcave distributions \cite{FK99lsi}.
\begin{thm}
[Restatement of Theorem~\ref{thm:intro-LSI-interpolation}] \label{thm:lsi-general-bound}
For a logconcave distribution $\pi$ with support of diameter $D>0$
in $\Rn$,
\[
\clsi(\pi)\lesssim\max\{D\lda^{1/2},D^{2}\wedge\lda\log^{2}n\}\,.
\]
It also holds that $\clsi(\pi)\lesssim D\cpi^{1/2}(\pi)\lesssim D\lda^{1/2}\log^{1/2}n$.
\end{thm}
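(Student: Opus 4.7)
The plan is to follow the route indicated in the technical overview: combine Milman's theorem, which says that for logconcave measures Gaussian concentration and \eqref{eq:lsi} are equivalent (with matching constants up to universal factors), with two different exponential-type concentration estimates, and then exploit the bounded diameter to upgrade exponential concentration to Gaussian concentration ``for free.'' Concretely, if $\pi$ is supported in a set of diameter $D$, then its concentration function satisfies $\alpha_\pi(r)=0$ for $r>D$, so on the only relevant range $r\le D$ we have $r=r\cdot\tfrac{D}{D}\ge r^2/D$. This lets us trade a linear tail $\exp(-r/C)$ for a Gaussian tail $\exp(-r^2/(CD))$.

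For the second inequality $\clsi(\pi)\lesssim D\,\cpi^{1/2}(\pi)$, I would invoke the classical fact (a consequence of Gromov--Milman / Borovkov--Utev) that under \eqref{eq:pi} the measure $\pi$ enjoys exponential concentration with $\cexp(\pi)\lesssim\cpi^{1/2}(\pi)$, i.e.\ $\alpha_\pi(r)\le 2\exp(-r/(c\,\cpi^{1/2}(\pi)))$. Restricting to $r\le D$ and using $r\ge r^2/D$, I get $\alpha_\pi(r)\le 2\exp(-r^2/(cD\,\cpi^{1/2}(\pi)))$ for all $r\ge 0$, which is Gaussian concentration with constant $\lesssim D\,\cpi^{1/2}(\pi)$. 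Milman's equivalence then yields $\clsi(\pi)\lesssim D\,\cpi^{1/2}(\pi)$, and the final bound $\cpi(\pi)\lesssim\lambda\log n$ (Klartag) produces $\clsi(\pi)\lesssim D\lambda^{1/2}\log^{1/2}n$.

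For the first (and sharper) inequality, I would instead start from Bizeul's SL-based concentration
\[
\alpha_\pi(r)\le 2\exp\!\Bigl(-c\min\bigl\{\tfrac{r}{\lambda^{1/2}},\,\tfrac{r^2}{\lambda\log^2 n}\bigr\}\Bigr).
\]
The quadratic regime already has the form of Gaussian concentration with constant $\lesssim\lambda\log^2 n$. In the linear regime I again use $r\le D$ to write $r/\lambda^{1/2}\ge r^2/(D\lambda^{1/2})$, upgrading that piece to Gaussian concentration with constant $\lesssim D\lambda^{1/2}$. Taking the maximum of the two constants and applying Milman gives $\clsi(\pi)\lesssim\max\{D\lambda^{1/2},\lambda\log^2 n\}$. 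Since $\clsi(\pi)\lesssim D^2$ always (Frieze--Kannan), taking the minimum yields
\[
\clsi(\pi)\lesssim\max\{D\lambda^{1/2},\,D^2\wedge\lambda\log^2 n\},
\]
which is the claimed bound.

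The main conceptual point — and the potential obstacle — is making sure the ``$r\le D$'' truncation is applied cleanly: one must verify that $\alpha_\pi(r)=0$ on $r>D$ really allows us to replace the whole tail function by a Gaussian one, not merely on a truncated range. This is immediate from the definition of $\alpha_\pi$, since any half-measure set's $r$-enlargement contains the whole support once $r>D$, so the Gaussian bound on $[0,D]$ extends trivially to $r>D$. The only other subtlety is citing Milman's theorem in the form that the smallest constant $\cgauss(\pi)$ for which $\alpha_\pi(r)\le 2\exp(-r^2/\cgauss(\pi))$ controls $\clsi(\pi)$ up to a universal factor on the class of logconcave measures; this is exactly the content we need, and no additional regularity on $\pi$ is required.
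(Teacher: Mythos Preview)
Your proposal is correct and follows essentially the same route as the paper's proof: both use Milman's equivalence between Gaussian concentration and \eqref{eq:lsi} for logconcave measures, upgrade exponential-type tails to Gaussian ones via the observation that $\alpha_\pi(r)=0$ for $r>D$ (so $r\ge r^2/D$ on the relevant range), and apply this separately to the Poincar\'e-based concentration and to Bizeul's bound, finally intersecting with the classical $D^2$ bound. The only step you leave implicit is the use of $\lambda\le D^2$ (hence $D\lambda^{1/2}\le D^2$) when simplifying $D^2\wedge\max\{D\lambda^{1/2},\lambda\log^2 n\}$ to $\max\{D\lambda^{1/2},D^2\wedge\lambda\log^2 n\}$, which the paper spells out.
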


Our second result addresses how a Gaussian factor affects the largest
eigenvalue of the covariance matrix of a logconcave distribution.
\begin{thm}
[Restatement of Theorem~\ref{thm:intro-cov-str-LC}] \label{thm:cov-Gauss}
For a logconcave distribution $\pi$ with $R=\E_{\pi}\norm{\cdot}$
in $\Rn$, if $\sigma^{2}\gtrsim R\lda^{1/2}\log^{2}n\log^{2}\frac{R^{2}}{\lda}$,
then
\[
\lda_{\sigma^{2}}\lesssim\lda\,.
\]
\end{thm}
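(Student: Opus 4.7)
The plan is to bound, for every unit vector $v\in\Rn$, the uncentered second moment $\E_{\pi\gamma_{\sigma^2}}[(X\cdot v)^2]$; after translating so that $\E_\pi X=0$, this dominates $v^\T\cov(\pi\gamma_{\sigma^2})v$, so proving an $O(\lda)$ upper bound uniformly in $v$ yields the claim. Writing
\[
\E_{\pi\gamma_{\sigma^2}}[(X\cdot v)^2]=\frac{\E_\pi[(X\cdot v)^2\,e^{-\norm X^2/(2\sigma^2)}]}{\E_\pi[e^{-\norm X^2/(2\sigma^2)}]},
\]
I would upper-bound the numerator and lower-bound the denominator by exploiting that, under $\pi$, the radius $\norm X$ concentrates sharply around $R=\E_\pi\norm X$. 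The key tool is Lipschitz (exponential) concentration from the Poincar\'e inequality: combining $\cpi(\pi)\lesssim\lda\log n$ from \cite{Klartag23log} with the fact that $\norm\cdot$ is $1$-Lipschitz gives
\[
\pi\bpar{\Abs{\norm X-R}\geq t}\leq 3\exp\bpar{-t/\cpi^{1/2}(\pi)}.
\]

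Fix a shell half-width $s$ to be tuned (of order $\cpi^{1/2}(\pi)\log(R^2/\lda)$), and partition $\Rn$ into an inner part $\msf I=\{\norm X<R-s\}$, a thin shell $\msf S=\{\Abs{\norm X-R}\leq s\}$, and an outer part $\msf O=\{\norm X>R+s\}$. Choosing $s\geq\cpi^{1/2}(\pi)\log 12$ ensures $\pi(\msf S)\geq 1/2$, and restricting the denominator to $\msf S$ yields
\[
\E_\pi\bbrack{e^{-\norm X^2/(2\sigma^2)}}\geq\tfrac12\,e^{-(R+s)^2/(2\sigma^2)}.
\]
The shell contribution to the numerator is at most $\lda\,e^{-(R-s)^2/(2\sigma^2)}$ (using $\E_\pi[(X\cdot v)^2]\leq\lda$); dividing by the denominator produces the factor $e^{2Rs/\sigma^2}$, which remains $O(1)$ exactly when $\sigma^2\gtrsim Rs$. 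This is the main scaling encoded in the hypothesis.

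The inner and outer regions require more care. On $\msf I$, I would bound the integrand by $\norm X^2\leq(R-s)^2$ and run a layer-cake / integration-by-parts step against the CDF bound $\pi(\norm X\leq r)\leq 3e^{-(R-r)/\cpi^{1/2}(\pi)}$ (valid for $r\leq R$) to obtain a contribution $\lesssim R^2\,e^{-s/\cpi^{1/2}(\pi)}$; comparing against the denominator forces $s\gtrsim\cpi^{1/2}(\pi)\log(R^2/\lda)$. On $\msf O$, Cauchy--Schwarz with $\pi(\msf O)^{1/2}\leq\sqrt{3}\,e^{-s/(2\cpi^{1/2}(\pi))}$ and the elementary moment bound $\E_\pi[\norm X^4 e^{-\norm X^2/\sigma^2}]\lesssim\sigma^4$ (from the pointwise maximum of $r^4 e^{-r^2/\sigma^2}$) gives a contribution $\lesssim\sigma^2\,e^{-s/(2\cpi^{1/2}(\pi))}$, requiring $s$ also to dominate $\cpi^{1/2}(\pi)\log(\sigma^2/\lda)$.

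The main obstacle is balancing these three constraints while matching the logarithmic factors in the statement. With $\cpi^{1/2}(\pi)\lesssim\lda^{1/2}\log^{1/2} n$ and the choice $s\asymp\cpi^{1/2}(\pi)\log(R^2/\lda)$, the inner and outer conditions are satisfied (noting $\sigma^2\leq R^2$ so that $\log(\sigma^2/\lda)$ is absorbed into $\log(R^2/\lda)$), and the remaining shell condition $\sigma^2\gtrsim Rs$ reduces precisely to $\sigma^2\gtrsim R\lda^{1/2}\log^2 n\log^2(R^2/\lda)$, up to polylogarithmic slack. The delicate point is self-consistency, since the outer-region requirement couples $s$ to $\sigma^2$; a short bootstrap (plugging in the candidate $\sigma^2$ to certify the implicit bound on $s$) should close this loop cleanly.
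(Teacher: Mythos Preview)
Your overall architecture matches the paper: write the target quantity as $N/D$ with $N=\E_\pi[(X\cdot v)^2 e^{-\norm X^2/2\sigma^2}]$ and $D=\E_\pi[e^{-\norm X^2/2\sigma^2}]$, bound $D$ from below via the thin shell, and split $N$ into shell, inner and outer pieces. The shell term and the denominator are handled correctly and give the constraint $\sigma^2\gtrsim Rs$.

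The gap is in your inner and outer bounds: both discard the Gaussian weight and therefore cannot be compared to the denominator in the regime that matters. Your denominator satisfies $D\gtrsim e^{-(R+s)^2/(2\sigma^2)}\asymp e^{-R^2/(2\sigma^2)}$, and the theorem is most interesting precisely when $\sigma^2\ll R^2$ (e.g.\ isotropic $\pi$ with $R\asymp n^{1/2}$ and $\sigma^2\asymp n^{1/2}\operatorname{polylog} n$, where $R^2/\sigma^2\asymp n^{1/2}/\operatorname{polylog} n$). Your outer Cauchy--Schwarz uses the pointwise bound $r^4e^{-r^2/\sigma^2}\lesssim\sigma^4$, giving an outer contribution $\lesssim\sigma^2 e^{-s/(2\cpi^{1/2})}$ with \emph{no} factor of $e^{-R^2/(2\sigma^2)}$; after dividing by $D$ you pick up $e^{R^2/(2\sigma^2)}$, which is $\exp(\Omega(n^{1/2}))$ in the example above and cannot be beaten by any choice of $s$ of polylogarithmic size. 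The same problem afflicts your inner bound $R^2 e^{-s/\cpi^{1/2}}$: it comes from replacing $e^{-\norm X^2/(2\sigma^2)}$ by $1$, so the ratio again carries $e^{R^2/(2\sigma^2)}$. Your conclusion ``$s\gtrsim\cpi^{1/2}\log(R^2/\lda)$'' tacitly assumes $(R+s)^2/(2\sigma^2)=O(1)$, i.e.\ $\sigma^2\gtrsim R^2$, which is exactly the trivial regime. (Relatedly, the sentence ``noting $\sigma^2\le R^2$'' is unjustified and in any case does not help.)

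The paper's fix is to keep the Gaussian weight throughout the inner and outer estimates. Via the co-area formula and integration by parts, it shows
\[
\int_0^{R-t} e^{-r^2/(2\sigma^2)}\,\nu_{n-1}(\de B_r)\,\D r\ \lesssim\ R^{-c}\,e^{-R^2/(2\sigma^2)},\qquad
\int_{R+t}^\infty r^2 e^{-r^2/(2\sigma^2)}\,\nu_{n-1}(\de B_r)\,\D r\ \lesssim\ e^{-R^2/(2\sigma^2)},
\]
so that every piece of $N$ carries the same $e^{-R^2/(2\sigma^2)}$ as $D$ and the ratio is $O(1)$ (after multiplying by $R^2$ for the inner piece, killed by $R^{-c}$). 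Your layer-cake idea for the inner region can be salvaged if you do this integration carefully rather than bounding $e^{-\norm X^2/(2\sigma^2)}\le 1$; but the outer Cauchy--Schwarz with the pointwise-max trick cannot, and must be replaced by a direct tail integration that retains the weight.
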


Combining these two together with the classical result $\clsi(\pi\gamma_{\sigma^{2}})\leq\sigma^{2}$,
we obtain the following bound for strongly logconcave probability
measures with compact support.
\begin{cor}
[Restatement of Corollary~\ref{cor:intro-new-LSI-str-LC}] \label{cor:lsi-global-bound}
Let $\pi$ be a logconcave distribution in $\Rn$ with support of
diameter $D>0$. Then, for any $\sigma^{2}>0$,
\[
\clsi(\pi\gamma_{\sigma^{2}})\lesssim D\lda^{1/2}\log^{2}n\log^{2}\frac{D}{\lda^{1/2}}\,.
\]
\end{cor}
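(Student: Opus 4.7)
The plan is to prove Corollary~\ref{cor:lsi-global-bound} by a dichotomy on $\sigma^{2}$, combining Bakry--\'Emery in the small-$\sigma^{2}$ regime with Theorems~\ref{thm:lsi-general-bound} and~\ref{thm:cov-Gauss} in the large-$\sigma^{2}$ regime. Set the threshold $\tau := D\lda^{1/2}\log^{2}n\,\log^{2}(D^{2}/\lda)$. Before splitting, I record two preliminary observations. First, I position $\pi$ so that its barycenter sits at the origin; since the barycenter of any logconcave measure lies in its support, this gives $R = \E_{\pi}\norm\cdot \leq D$ (this is the natural positioning once the center of $\gamma_{\sigma^{2}}$ is fixed, and is the implicit setup of the applications). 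Second, for any distribution supported in a set of diameter $D$, every unit-vector projection has range at most $D$ and hence variance at most $D^{2}/4$, so $\lda^{1/2}\leq D/2$ and in particular $\log^{2}(D^{2}/\lda)\gtrsim 1$.

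In the small-$\sigma^{2}$ regime $\sigma^{2}\leq C\tau$ (with $C$ a suitable absolute constant), the potential $-\log\pi(x)+\norm x^{2}/(2\sigma^{2})+\mathrm{const}$ of $\pi\gamma_{\sigma^{2}}$ is $\sigma^{-2}$-strongly convex by logconcavity of $\pi$, so the classical Bakry--\'Emery criterion yields
\[
\clsi(\pi\gamma_{\sigma^{2}})\leq\sigma^{2}\lesssim\tau.
\]
In the large-$\sigma^{2}$ regime $\sigma^{2}>C\tau$, for $C$ large enough the hypothesis $\sigma^{2}\gtrsim R\lda^{1/2}\log^{2}n\,\log^{2}(R^{2}/\lda)$ of Theorem~\ref{thm:cov-Gauss} is satisfied (using $R\leq D$), giving $\lda_{\sigma^{2}}:=\norm{\cov\pi\gamma_{\sigma^{2}}}\lesssim\lda$. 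Now $\pi\gamma_{\sigma^{2}}$ is itself logconcave with support contained in that of $\pi$, hence of diameter at most $D$. Applying Theorem~\ref{thm:lsi-general-bound} directly to $\pi\gamma_{\sigma^{2}}$ yields
\[
\clsi(\pi\gamma_{\sigma^{2}})\lesssim\max\bpar{D\lda_{\sigma^{2}}^{1/2},\,\lda_{\sigma^{2}}\log^{2}n}\lesssim\max\bpar{D\lda^{1/2},\,\lda\log^{2}n}\lesssim D\lda^{1/2}\log^{2}n,
\]
where the final step uses $\lda^{1/2}\leq D/2$. Since $\log^{2}(D^{2}/\lda)\gtrsim 1$, this bound is itself $\lesssim\tau$, completing the case.

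I expect no serious obstacle in the corollary itself: it is essentially a gluing of the two preceding theorems through the correct threshold. The one subtle point worth emphasizing is that, in the large-$\sigma^{2}$ regime, Theorem~\ref{thm:lsi-general-bound} must be applied to the tilted measure $\pi\gamma_{\sigma^{2}}$ rather than to $\pi$; this is precisely where Theorem~\ref{thm:cov-Gauss} enters, certifying that $\pi\gamma_{\sigma^{2}}$ inherits the covariance control $\lda_{\sigma^{2}}\lesssim\lda$ that $\pi$ enjoys, so that the diameter-and-covariance LSI bound of Theorem~\ref{thm:lsi-general-bound} can be reinvoked. Once that covariance transfer is in hand, the routine dichotomy above closes the argument.
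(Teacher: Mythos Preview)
Your proposal is correct and follows essentially the same two-case argument as the paper: Bakry--\'Emery for $\sigma^{2}$ below the threshold $D\lda^{1/2}\log^{2}n\log^{2}(D^{2}/\lda)$, and Theorem~\ref{thm:lsi-general-bound} applied to $\pi\gamma_{\sigma^{2}}$ together with the covariance transfer $\lda_{\sigma^{2}}\lesssim\lda$ from Theorem~\ref{thm:cov-Gauss} above it. The only cosmetic difference is that the paper invokes the second bound $\clsi\lesssim D\lda_{\sigma^{2}}^{1/2}\log^{1/2}n$ from Theorem~\ref{thm:lsi-general-bound} in the large-$\sigma^{2}$ regime, whereas you use the first (max-form) bound and then simplify via $\lda^{1/2}\le D/2$; both routes land on the same final estimate.
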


\subsection{Log-Sobolev constant for logconcave distributions with compact support
\label{subsec:LSI-interpolation}}

It follows from the curvature-dimension condition \cite{BGL14analysis}
that $\clsi(\gamma_{h}|_{\K})\leq\clsi(\gamma_{h})=h$ (i.e., truncating
a strongly logconcave measure to a convex set does not worsen the
LSI constant), but how about the other natural possible inequality
$\clsi(\gamma_{h}|_{\K})\leq\clsi(\ind_{\K})$ (i.e., Multiplying
a radially symmetric Gaussian does not increase the LSI constant)? 

Why do we care about this? To motivate this question, suppose $\pi\propto\ind_{\K}$
is an isotropic logconcave distribution with support of diameter $\Theta(n)$,
so $\clsi(\pi)\lesssim n$ \cite[Theorem 49]{LV24eldan}. Now consider
$h\asymp n^{2}$. Then, the distribution $\gamma_{h}|_{\K}$ is a
$\Theta(1)$-perturbation to $\pi$, so by the Holley--Stroock perturbation
principle, we obtain $\clsi(\gamma_{h}|_{\K})\lesssim\clsi(\pi)\lesssim n$.
Note that the straightforward bound $\clsi(\gamma_{h}|_{\K})\leq h$
only yields an $\O(n^{2})$-bound. This suggests that as $h$ increases
from $n$ to $n^{2}$, if the LSI constant $\clsi(\gamma_{h}|_{\K})$
initially increases, it must start to decrease somewhere to satisfy
$\clsi(\gamma_{n^{2}}|_{\K})\lesssim n$. This naturally leads us
to ask if $\clsi(\gamma_{h}|_{\K})\lesssim n$ for \emph{all} $h>0$.

This question can be phrased more generally as follows.
\begin{question}
\label{ques:LSI-question} For any logconcave distribution $\pi$
with compact support, and any $h>0$, do we have $\clsi(\pi\gamma_{h})\lesssim\clsi(\pi)\,?$ 
\end{question}

We may first ask whether $\clsi(\pi\gamma_{h})$ remains close to
the known bound on $\clsi(\pi)$. In \S\ref{subsec:FI-together},
we show that this is indeed the case. Before proceeding, however,
we mention what is currently known about $\clsi(\pi)$.

The classical upper bound for the LSI constant of any logconcave distribution
$\pi$ with compact support of diameter $D>0$ is $\O(D^{2})$, while
for isotropic logconcave distributions, a stronger bound $\clsi(\pi)\lesssim D$
holds. In the context of \eqref{eq:pi}, the known bound $\cpi(\pi)\lesssim\lda\log n$
nearly interpolates the $\O(\log n)$ bound for isotropic cases and
$\O(D^{2})$ for general cases (noting that $\lda\leq D^{2}$). This
motivates the following mathematical question, which we will address
in this section.
\begin{question*}
What is a general bound on \eqref{eq:lsi} that interpolates those
two known bounds on \eqref{eq:lsi} for logconcave distributions with
compact support?
\end{question*}

\subsubsection{A na\"ive approach: Interpolation via a Lipschitz map \label{subsec:naive-LSI}}

As sketched earlier in \S\ref{sec:techniques}, one justification
for focusing on the Poincar\'e constant of \emph{isotropic} logconcave
distributions is via the affine map $T:x\mapsto\Sigma^{1/2}x$, which
has Lipschitzness $\norm{\Sigma}^{1/2}$. 

This approach does not work well for \eqref{eq:lsi}. Indeed, let
$\nu:=(T^{-1})_{\#}\pi$ be the pushforward of $\pi$ under the inverse
affine map, so that $\nu$ is isotropic. For any test function $f:\Rn\to\R$,
we have
\begin{align*}
\ent_{\pi}[f^{2}] & =\ent_{\nu}[(f\circ\Sigma^{1/2})^{2}]\leq C_{\msf{LSI}}(\nu)\,\E_{\nu}[\norm{\nabla(f\circ\Sigma^{1/2})}^{2}]\\
 & \leq C_{\msf{LSI}}(\nu)\,\E_{\nu}[\norm{\Sigma^{1/2}}^{2}\norm{\nabla f\circ\Sigma^{1/2}}^{2}]\leq C_{\msf{LSI}}(\nu)\,\norm{\Sigma}\,\E_{\pi}[\norm{\nabla f}^{2}]\,,
\end{align*}
which implies that
\[
C_{\msf{LSI}}(\pi)\leq C_{\msf{LSI}}(\nu)\,\norm{\Sigma}\,.
\]
This bound has several downsides. First, the best-known bound for
isotropic logconcave distributions is $\clsi(\nu)\lesssim D$, and
this \emph{cannot be} improved in general \cite[Lemma 55]{LV24eldan}.
Moreover, isotropy implies $D\geq\sqrt{n}$, so we cannot hope for
a better general bound on $\clsi(\nu)$ than $\sqrt{n}$. Second,
since $\norm{\Sigma}=\sup_{v\in\mbb S^{n-1}}\E_{\pi}[(v^{\T}(X-\E_{\pi}X))^{2}]\leq D^{2}$,
it is unclear how the RHS of the inequality could even recover the
classical $\O(D^{2})$ bound for $\clsi(\pi)$. Thus, this affine
reduction fails to interpolate the known two bounds and motivates
the need for a more intrinsic analysis.

\subsubsection{A better approach via Gaussian concentration\label{subsec:Gaussian-concentration}}

We prove the following two bounds presented in Theorem~\ref{thm:lsi-general-bound}:
\[
\clsi(\pi)\lesssim\min\bbrace{D\cpi^{1/2}(\pi),\max\{D\lda^{1/2},D^{2}\wedge\lda\log^{2}n\}}\,.
\]
Inspired by \cite[\S8]{KL24isop}, we present a simple proof using
a known equivalence between Gaussian concentration and \eqref{eq:lsi}
for logconcave measures.

We first define a concentration function of a measure.
\begin{defn}
[Concentration] \label{def:concentration} For a probability measure
$\pi$ over $\Rn$, consider the concentration function of $\pi$
defined as 
\[
\alpha_{\pi}(r)=\sup_{S:\pi(S)\geq1/2}\pi(S^{c})\,.
\]
The measure $\pi$ is said to satisfy \emph{exponential concentration}
with constant $\cexp(\pi)$ if
\[
\alpha_{\pi}(r)\leq2\exp\bpar{-\frac{r}{\cexp(\pi)}}\quad\text{for all }r\geq0\,,
\]
and is said to satisfy \emph{Gaussian concentration} with constant
$\cgauss(\pi)$ if
\[
\alpha_{\pi}(r)\leq2\exp\bpar{-\frac{r^{2}}{\cgauss(\pi)}}\quad\text{for all }r\geq0\,.
\]
\end{defn}

Milman \cite{milman10isoperimetric} established that for a logconcave
distribution, exponential concentration is equivalent to \eqref{eq:pi},
and Gaussian concentration is equivalent to \eqref{eq:lsi}.
\begin{thm}
\label{thm:LSI-concentration-Milman} For any logconcave probability
measure $\pi$, it holds that $\clsi(\pi)\asymp\cgauss(\pi)$.
\end{thm}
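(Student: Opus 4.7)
The plan is to prove the two directions separately. The easy direction, $\cgauss(\pi)\lesssim\clsi(\pi)$, holds for arbitrary probability measures (logconcavity is unnecessary) via the classical Herbst argument. Given a $1$-Lipschitz function $g$, I would apply the LSI to $f=\exp(\lambda g/2)$ to obtain the differential inequality $\lambda H'(\lambda)-H(\lambda)\leq \lambda^{2}\clsi(\pi)/2$ where $H(\lambda)=\log\E_{\pi}\exp\bpar{\lambda(g-\E_{\pi}g)}$. Integrating yields $H(\lambda)\leq \lambda^{2}\clsi(\pi)/2$, and then Markov's inequality applied to $g(x)=d(x,E)$ for any $E$ with $\pi(E)\geq 1/2$ (for which the median of $g$ vanishes, controlling the gap between median and mean via another application of LSI) gives $\alpha_{\pi}(r)\lesssim \exp\bpar{-cr^{2}/\clsi(\pi)}$. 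This yields $\cgauss(\pi)\lesssim\clsi(\pi)$.

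The hard direction, $\clsi(\pi)\lesssim\cgauss(\pi)$, is Milman's equivalence and is where logconcavity becomes essential. My plan is to route through the isoperimetric profile $I_{\pi}(t)=\inf_{\pi(E)=t}\pi^{+}(\de E)$. The target is a Gaussian-type lower bound of the form $I_{\pi}(t)\gtrsim t\sqrt{\log(1/t)}/\sqrt{\cgauss(\pi)}$ for $t\in(0,1/2]$; once this is in hand, the classical Bobkov--Ledoux criterion implies $\clsi(\pi)\lesssim\cgauss(\pi)$. To produce the isoperimetric lower bound from the a priori weaker Gaussian concentration, I would use Milman's ``concentration implies isoperimetry'' theorem for logconcave measures. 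Concretely: given a candidate minimizer $E$ with $\pi(E)=t\leq 1/2$, one combines (a) the tail bound $\pi(\Rn\setminus E_{r})\leq 2\exp(-r^{2}/\cgauss)$ applied to a set of measure $\geq 1/2$ obtained by enlarging $E$, with (b) a logconcave localization or semigroup regularization argument that converts average tail decay into a lower bound on $\pi^{+}(\de E)$. Either the KLS needle-decomposition or the Bakry--Ledoux $H_{t}$-semigroup gradient bound can be used to transfer integrated concentration into pointwise boundary measure.

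For the semigroup route, I would estimate $\pi(E_{r}\setminus E)$ from below by $r\,\pi^{+}(\de E)+O(r^{2})$ as $r\downarrow 0$ and from above by a comparison using the heat semigroup associated with $\pi$; logconcavity is used precisely to apply the Bakry--\'Emery criterion along the regularized flow so that concentration of $P_{t}\ind_{E}$ can be bootstrapped. Putting the two estimates together at an optimized small scale $r\asymp \sqrt{\cgauss\log(1/t)}$ produces the claimed Gaussian isoperimetric profile.

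The main obstacle is step (i) of the hard direction: transforming a tail estimate on distance to sets of measure $\geq 1/2$ into a pointwise isoperimetric lower bound at every measure level $t$. Without logconcavity this transformation fails (concentration alone does not imply isoperimetry, let alone LSI), so the argument must genuinely exploit either the KLS localization lemma or the curvature-dimension machinery for logconcave semigroups. Reproducing Milman's full ``$M^{*}$-equivalence'' self-contained would be the technical heart of the proof; in the paper's context it is cleanest to cite Milman's theorem directly and use it as a black box, as is done in the statement.
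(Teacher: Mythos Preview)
The paper does not prove this theorem; it is stated as a known result of Milman \cite{milman10isoperimetric} and used as a black box in the proof of Theorem~\ref{thm:lsi-general-bound}. There is therefore no ``paper's own proof'' to compare against. You correctly recognize this in your final sentence.

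Your sketch of the two directions is a fair high-level summary of Milman's argument: the Herbst argument for the easy direction is standard and correct, and the hard direction indeed passes through the isoperimetric profile and relies essentially on logconcavity (via localization or semigroup methods) to upgrade concentration to isoperimetry. Your description of the obstacle is accurate. One small comment: your step ``optimized small scale $r\asymp\sqrt{\cgauss\log(1/t)}$'' is vague enough that it does not constitute a proof; Milman's actual argument is more delicate, involving a transference of the concentration profile to a one-dimensional model and a careful use of the convexity of the potential. If you intended this as a genuine self-contained proof rather than a roadmap, substantially more would be required. As a citation with context, however, your proposal matches what the paper does.
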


To use this result, we recall two known results on exponential concentration.
The first is a standard result on exponential concentration under
\eqref{eq:pi}. Precisely, this is a combination of concentration
of Lipschitz functions (around its median or mean) {\cite[\S4.4.3]{BGL14analysis}}
and its equivalence with exponential concentration.
\begin{thm}
\label{thm:PI-concentration-ftn} For a probability measure $\pi$
with $\cpi(\pi)<\infty$, there exists a universal constant $c>0$
such that
\[
\alpha_{\pi}(r)\leq2\exp\bpar{-\frac{r}{c\cpi^{1/2}(\pi)}}\,.
\]
\end{thm}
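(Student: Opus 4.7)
The plan is to combine two classical ingredients: (i) the Gromov--Milman / Aida--Masuda--Shigekawa fact that \eqref{eq:pi} yields exponential integrability of Lipschitz functions, and (ii) the elementary observation that concentration of $1$-Lipschitz functions around their median implies a bound on the concentration function $\alpha_{\pi}$. The final statement is standard (see, e.g., \cite{BGL14analysis}), so I would present it as a quick self-contained derivation rather than rederiving every step.

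First, I would establish the Laplace-transform bound: for any $1$-Lipschitz $f : \Rn \to \R$ with $\E_{\pi}f = 0$ and any $t$ with $|t| < 1/\cpi^{1/2}(\pi)$,
\[
\int e^{t f}\,\D\pi \;\leq\; \frac{2}{1 - t^{2}\,\cpi(\pi)}\,.
\]
The derivation applies \eqref{eq:pi} to $g = e^{t f / 2}$: Poincar\'e controls $\var_{\pi}(g)$ by $\cpi(\pi)\,\E_{\pi}\norm{\grad g}^{2} \leq \tfrac{t^{2}}{4}\cpi(\pi)\,\E_{\pi}[g^{2}]$ (using $\norm{\grad f}\leq 1$ a.e.), producing a recursive inequality on the moment generating function $Z(t) := \int e^{tf}\D\pi$. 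Iterating on dyadic scales (the Gromov--Milman/Bobkov--Ledoux trick) yields the displayed bound. Markov's inequality then gives, for some universal $c_{0} > 0$,
\[
\pi(f - \E_{\pi}f \geq r) \;\leq\; 2\exp\!\Bpar{-\tfrac{r}{c_{0}\,\cpi^{1/2}(\pi)}}\,.
\]

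Next I would pass from the mean to the median: using the above tail bound integrated once, $|m - \E_{\pi}f| \lesssim \cpi^{1/2}(\pi)$ for any median $m$ of $f$, so the same exponential estimate holds (with a slightly larger universal constant) for $\pi(f - m \geq r)$.

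Finally, to recover the bound on $\alpha_{\pi}$, for any measurable $S$ with $\pi(S) \geq 1/2$ I would consider the $1$-Lipschitz function $f(x) := d(x, S)$. It vanishes on $S$, hence its median satisfies $m \leq 0$, and $\{x : f(x) > r\} = (S_{r})^{c}$. Therefore
\[
\pi(S_{r}^{c}) \;=\; \pi(f > r) \;\leq\; \pi(f - m \geq r) \;\leq\; 2\exp\!\Bpar{-\tfrac{r}{c\,\cpi^{1/2}(\pi)}}\,,
\]
and taking the supremum over all such $S$ gives the claimed bound on $\alpha_{\pi}(r)$. The only genuinely nontrivial step is the first one---the exponential Laplace bound from \eqref{eq:pi}---but since this is classical I would simply invoke it with a reference; the rest is routine bookkeeping.
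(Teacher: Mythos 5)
Your proof is correct and takes the same route the paper itself points to: the paper does not reprove this statement but explicitly cites it as classical, describing it as "a combination of concentration of Lipschitz functions (around its median or mean) \cite[\S4.4.3]{BGL14analysis} and its equivalence with exponential concentration," which is exactly the two-step Laplace-transform-plus-distance-function argument you spell out. One tiny imprecision: since $f=d(\cdot,S)\geq 0$, a median of $f$ cannot be strictly negative; the correct statement is that $0$ is a valid median (because $\pi(f=0)\geq\pi(S)\geq 1/2$), and the rest of your chain of inequalities goes through with $m=0$.
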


There is another bound established by \cite[Theorem 1.1]{bizeul24measures}
using Eldan's stochastic localization (see another proof based on
this technique in \S\ref{app:LSI-interpolation-SL}).
\begin{thm}
\label{thm:bizeul-concentration} For any logconcave probability measure
$\pi$ over $\Rn$, there exists a universal constant $c>0$ such
that
\[
\alpha_{\pi}(r)\leq2\exp\Bpar{-c\min\bpar{\frac{r}{\norm{\cov\pi}^{1/2}},\frac{r^{2}}{\norm{\cov\pi}\log^{2}n}}}\,.
\]
\end{thm}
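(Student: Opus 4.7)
The plan is to re-derive the stated bound via Eldan's stochastic localization (SL), following Bizeul. The key idea is to decompose $\pi$ into a martingale of tilted measures $\{\mu_t\}_{t \geq 0}$ which become progressively more strongly logconcave. For a well-chosen horizon $t_0$, each $\mu_t$ with $t \leq t_0$ will simultaneously enjoy (i) $t$-strong logconcavity, hence Gaussian concentration with constant $\lesssim 1/t$ via Bakry--\'Emery, and (ii) a controlled covariance operator norm $\norm{\cov\mu_t} \lesssim \lambda$ up to a stopping time which holds with high probability. From these one transfers concentration back to $\pi = \E[\mu_{t_0}]$.

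Concretely, I would start by running SL from $\mu_0 = \pi$ via the It\^o SDE $d\log\mu_t(x) = (x-a_t)^{\top} dB_t - \tfrac{1}{2}\norm{x-a_t}^2\,dt$, where $a_t = \E_{\mu_t}[X]$, so that $\mu_t(x)\propto\pi(x)\exp(\langle z_t,x\rangle - \tfrac{t}{2}\norm{x}^2)$ for a driving Gaussian process $z_t$. This yields $\pi = \E_\omega[\mu_t]$ and pathwise $t$-strong logconcavity of $\mu_t$. Setting $A_t := \cov\mu_t$, It\^o gives $dA_t = -A_t^2\,dt + (\text{martingale})$. The technical crux is to analyze the stopping time $\tau := \inf\{t : \norm{A_t} \geq 2\lambda\}$ and prove, for $t_0 \asymp 1/(\lambda\log^2 n)$, that
\[
\P(\tau \leq t_0) \leq n^{-c}.
\]
This follows Eldan's original argument (refined by Klartag) by constructing a supermartingale out of $\log\norm{A_t}$ or $\tr(A_t^q)$ for a suitable power $q$ and leveraging the contractive $-A_t^2$ drift against the martingale's quadratic variation.

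Given this covariance control, I would transfer concentration from $\mu_{t_0}$ to $\pi$ for a $1$-Lipschitz function $f$. On $\{\tau > t_0\}$, $t_0$-strong logconcavity with Bakry--\'Emery yields $\mu_{t_0}(|f-\E_{\mu_{t_0}}f|\geq s) \leq 2\exp(-c t_0 s^2)$. In parallel, the martingale $t\mapsto \E_{\mu_t}f$ has predictable quadratic variation bounded by $\int_0^{t_0}\norm{A_s}\,ds \leq 2\lambda t_0$ up to $\tau$, so a Bernstein/Azuma-type estimate concentrates $\E_{\mu_{t_0}}f - \E_\pi f$ on the same scale $O(\sqrt{\lambda t_0})$. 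Combining these via the decomposition $\pi(|f-\E_\pi f|\geq r) = \E_\omega[\mu_{t_0}^\omega(|f-\E_\pi f|\geq r)]$ and splitting on $\{\tau \leq t_0\}$ produces the quadratic bound $2\exp(-cr^2/(\lambda\log^2 n))$ for $r$ in the appropriate range. For large $r \gtrsim \lambda^{1/2}\log^2 n$ the quadratic bound saturates and classical linear exponential concentration takes over---either via Klartag's bound $\cpi(\pi)\lesssim\lambda\log n$ combined with Theorem~\ref{thm:PI-concentration-ftn} (which introduces a spurious $\log^{1/2}n$) or, to remove that factor as Bizeul does, by tuning the SL horizon $t_0$ adaptively to $r$ and repeating the moment generating function estimate.

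The main obstacle is the stopping-time estimate $\P(\tau\leq t_0) \leq n^{-c}$, the technical heart of SL-based concentration. The matrix-valued diffusion of $A_t$ precludes a naive scalar argument; one must carefully exploit the contractive $-A_t^2$ drift against the martingale term, typically via a trace/spectral argument on an appropriate power of $A_t$. Everything else---the Bakry--\'Emery LSI for $\mu_t$, martingale concentration of $\E_{\mu_t}f$, and the transfer back to $\pi$---is comparatively routine once this estimate is in place.
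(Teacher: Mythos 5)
The paper does not prove this theorem: it is stated as a citation to Bizeul's work (\cite[Theorem~1.1]{bizeul24measures}) and used as a black box. So there is no ``paper's own proof'' to compare against; what you have written is a sketch of the cited external argument. With that caveat, your outline does capture the shape of Bizeul's stochastic-localization proof, and the crucial stopping-time ingredient you flag as the ``technical heart'' --- $\P(\tau\leq t_0)\leq\exp(-1/(C\lambda t_0))$ for $t_0\asymp 1/(\lambda\log^2 n)$ --- is in fact proven in the paper's Appendix~\ref{app:LSI-interpolation-SL} (Lemma~\ref{lem:SL-operator}), only there it is deployed to re-prove the LSI interpolation Theorem~\ref{thm:lsi-general-bound} directly via log-Cheeger constants and random-measure control of $\pi_t(E)$, rather than via Lipschitz concentration. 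The paper's route in \S\ref{subsec:Gaussian-concentration} deliberately avoids re-deriving Bizeul by invoking Milman's equivalence between Gaussian concentration and \eqref{eq:lsi}, which makes the argument shorter; your route reconstructs the more primitive concentration estimate itself.

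One technical imprecision in your sketch deserves flagging. You claim the predictable quadratic variation of $M_t:=\E_{\mu_t}f$ is bounded by $\int_0^{t_0}\norm{A_s}\,\D s$. But $\D M_t=\langle v_t,\D B_t\rangle$ with $v_t=\cov_{\mu_t}(X,f(X))$, and Cauchy--Schwarz gives only $\norm{v_t}^2\leq\norm{A_t}\,\var_{\mu_t}(f)$. For a general $1$-Lipschitz $f$, $\var_{\mu_t}(f)$ is not bounded by $1$; it must be controlled separately (e.g.\ via Brascamp--Lieb, $\var_{\mu_t}f\leq 1/t$, or via $\cpi(\mu_t)\lesssim\norm{A_t}\log n$). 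When you carry this through carefully, you get $[M]_{t_0}\lesssim\lambda/\log n$ rather than $\lambda t_0\asymp 1/\log^2 n$, but both are subdominant to the $t_0^{-1}\asymp\lambda\log^2 n$ variance scale coming from the $t_0$-strong logconcavity of $\mu_{t_0}$, so the final quadratic tail $\exp(-cr^2/(\lambda\log^2 n))$ survives. Your remark that tuning $t_0$ adaptively in $r$ is needed to get the clean linear tail $\exp(-cr/\lambda^{1/2})$ without the extra $\log^{1/2}n$ is correct and matches what Bizeul actually does. In summary: the proposal is a reasonable reconstruction of the cited argument, but should not be presented as the paper's proof (there is none), and the quadratic-variation step needs the $\var_{\mu_t}(f)$ factor made explicit.
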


We now prove the desired LSI bound in a simple way by combining these
two bounds.
\begin{proof}
[Proof of Theorem~\ref{thm:lsi-general-bound}] Since the diameter
of the support is $D$, it clearly holds that $\alpha_{\pi}(r)=0$
for $r>D$. For $r\leq D$, using $r/D\leq1$ and letting $\lda=\norm{\cov\pi}$,
by Theorem~\ref{thm:bizeul-concentration},
\[
\alpha_{\pi}(r)\leq2\exp\Bpar{-c\min\bpar{\frac{r^{2}}{D\lda^{1/2}},\frac{r^{2}}{\lda\log^{2}n}}}=2\exp\Bpar{-\frac{cr^{2}}{D\lda^{1/2}\vee\lda\log^{2}n}}\,.
\]
By Theorem~\ref{thm:LSI-concentration-Milman}, since $\pi$ is logconcave,
\[
\clsi(\pi)\asymp\cgauss(\pi)\lesssim D\lda^{1/2}\vee\lda\log^{2}n\,.
\]
Combining with $\clsi(\pi)\lesssim D^{2}$, we can deduce that
\begin{align*}
\clsi(\pi) & \lesssim D^{2}\wedge(D\lda^{1/2}\vee\lda\log^{2}n)=\max\{D\lda^{1/2}\wedge D^{2},D^{2}\wedge\lda\log^{2}n\}\\
 & =\max\{D\lda^{1/2},D^{2}\wedge\lda\log^{2}n\}\,.
\end{align*}

Regarding the second bound, it follows from Theorem~\ref{thm:PI-concentration-ftn}
that for some universal constant $c>0$,
\[
\alpha_{\pi}(r)\leq2\exp\bpar{-\frac{r}{c\cpi^{1/2}(\pi)}}.
\]
Using a similar argument above, we have
\[
\alpha_{\pi}(r)\leq2\exp\bpar{-\frac{r^{2}}{cD\cpi^{1/2}(\pi)}}\,.
\]
Thus, by the equivalence between \eqref{eq:lsi} and Gaussian concentration
(Theorem~\ref{thm:LSI-concentration-Milman}),
\[
\clsi(\pi)\asymp\cgauss(\pi)\lesssim D\cpi^{1/2}(\pi)\leq D\lda^{1/2}\log^{1/2}n\,,
\]
where the last inequality follows from $\cpi(\pi)\lesssim\lda\log n$.
\end{proof}
\begin{rem}
[Comparison of two bounds] \label{rem:LSI-comparison}The first bound
successfully interpolates the $\O(D)$-bound of isotropic logconcave
distributions and $\O(D^{2})$-bound of general logconcave ones. However,
it cannot attain $\O(D\lda^{1/2})$ even if the KLS conjecture is
true. On the other hand, the second bound almost interpolates those
two bounds (up to $\log^{1/2}n$), while it attains $\O(D\lda^{1/2})$
when the KLS conjecture is true.
\end{rem}

\subsection{Covariance of strongly logconcave distributions \label{subsec:cov-str-LC}}

We have just shown that $\clsi(\pi)\lesssim D\lda^{1/2}$ and $\clsi(\pi\gamma_{h})\lesssim D\lda_{h}^{1/2}$
(ignoring logarithmic factors). How does this new bound compare to
the previous two bounds of $\clsi(\pi\gamma_{h})\leq h\wedge\O(D^{2})$,
from strong logconcavity and from the support diameter respectively?

A standard upper bound on the operator norm $\lda_{h}$ can be obtained
via 
\[
\lda_{h}\leq\cpi(\pi\gamma_{h})\leq h\,,
\]
where the second inequality follows from the Brascamp--Lieb inequality.
Plugging this into our bound yields $\clsi(\pi\gamma_{h})\lesssim Dh^{1/2}$,
which is the geometric mean of the previous two bounds ($h$ and $D^{2}$),
and making no essential gain. 

However, for sufficiently large $h$, one can expect that $\pi\gamma_{h}$
is closer to $\pi$, suggesting that $\norm{\Sigma_{h}}$ may be comparable
to $\norm{\Sigma}$. In fact, when $h\geq D^{2}$, the distribution
$\pi\gamma_{h}$ becomes a $\Theta(1)$-perturbation of $\pi$, and
the Holley--Stroock perturbation principle gives $\clsi(\pi\gamma_{h})\lesssim\clsi(\pi)\lesssim D\lda^{1/2}$.
 If this bound is true for \emph{any} $h$, and particularly when
$\pi$ is isotropic, then this would already yield a substantial gain,
since $D\lda^{1/2}=D$ is smaller than the na\"ive $h$-bound, which
only gives $\Omega(D^{2})$.

Our main goal  at this point is to show $\lda_{h}\lesssim\lda$ for
$h\gtrsim\lda^{1/2}R\cpi^{2}(n)\log^{2}\nicefrac{R^{2}}{\lda}$ (Theorem~\ref{thm:cov-Gauss}),
where $\cpi(n)$ is the largest possible Poincar\'e constant of $n$-dimensional
isotropic logconcave distributions. Since $\cpi(n)\lesssim\log n$
\cite{Klartag23log}, a sufficient condition is $h\gtrsim\lda^{1/2}R\log^{2}n\log^{2}\nicefrac{R^{2}}{\lda}$.
We remark that the logarithmic factor in the guarantee can be further
improved by using a thin-shell estimate instead of $\cpi(n)$ and
tightening some computations in the proof.
\begin{proof}
[Proof of Theorem~\ref{thm:cov-Gauss}] We start off by scaling
the coordinate system via $T:x\mapsto\lda^{-1/2}x$, so that $\nu:=T_{\#}\pi$
satisfies $\norm{\cov\nu}=1$. We show that if $\eta\gtrsim\E_{\nu}\norm Y\log^{2}n\log^{2}\E_{\nu}\norm Y$,
then $\norm{\cov\nu_{\eta}}\lesssim1$ for $\nu_{\eta}=\nu\gamma_{\eta}$.
Then, by scaling back through $T^{-1}$, we can deduce the main claim
in the theorem (due to $\E_{\nu}\norm Y=\lda^{-1/2}\,\E_{\pi}\norm X$
and $h=\eta\lda$). Throughout the proof, we use $c$ and $c_{1}$
to denote positive universal constants.

By rotating the coordinate system, it suffices to bound that for $\mu$,
the barycenter of $\nu$,
\[
Q:=\E_{\nu_{\eta}}[(Y-\mu)_{1}^{2}]\lesssim1\,.
\]
To this end, we instead bound
\[
Q=\frac{\E_{\nu}\bbrack{(Y-\mu)_{1}^{2}\,\exp(-\frac{\norm Y^{2}}{2\eta})}}{\E_{\nu}\bbrack{\exp(-\frac{\norm Y^{2}}{2\eta})}}=:\frac{N}{D}\,,
\]
where $N$ and $D$ refer to the numerator and denominator, respectively.

Recall the classical Lipschitz concentration that for $1$-Lipschitz
$f$: for any $t\geq0$,
\begin{equation}
\nu(f-\E_{\nu}f\geq t)\leq3\exp\bpar{-\frac{t}{\cpi^{1/2}(\nu)}}\,.\label{eq:PI-Lips-concent}
\end{equation}
Let $R:=\E_{\nu}\norm{\cdot}_{2}(\geq1)$ and $\cpi:=\cpi(n)$. Noting
$\cpi(\nu)\lesssim\cpi$, and taking $f(\cdot)=\norm{\cdot}_{2}$
and $t=c_{1}\cpi^{1/2}\log R$, we have that 
\[
\nu(\norm Y\in[R\pm t])\geq1-6\exp(-c_{2}\log R)\,,
\]
where $[R\pm t]:=[R-t,R+t]$. Hence, the thin-shell $\mc S:=\{y\in\Rn:\norm y\in[R\pm t]\}$
takes up more than a half of $\nu$-measure by taking $c_{1}$ large
enough. Since $\eta\gtrsim R\cpi^{2}\log^{2}R\gtrsim t\,(R\vee t)=Rt\vee t^{2}$,
\[
D\geq\E_{\nu}\bbrack{\exp\bpar{-\frac{\norm Y^{2}}{2\eta}}\,\ind_{\mc S}}\geq\exp\bpar{-\frac{R^{2}+2Rt+t^{2}}{2\eta}}\,\nu(\mc S)\gtrsim\exp\bpar{-\frac{R^{2}}{2\eta}}\,.
\]

As for the upper bound on the numerator $N$, we decompose $N$ as
follows:
\[
N=\underbrace{\E_{\nu}\bbrack{(Y-\mu)_{1}^{2}\,\exp\bpar{-\frac{\norm Y^{2}}{2\eta}}\,\ind_{\mc S}}}_{=:\msf{(I)}}+\underbrace{\E_{\nu}\bbrack{(Y-\mu)_{1}^{2}\,\exp\bpar{-\frac{\norm Y^{2}}{2\eta}}\,\ind_{\mc S^{c}}}}_{=:\msf{(II)}}\,.
\]
For $(\msf{I)}$, we have
\begin{align*}
\E_{\nu}\bbrack{(Y-\mu)_{1}^{2}\,\exp\bpar{-\frac{\norm Y^{2}}{2\eta}}\,\ind_{\mc S}} & \leq\exp\bpar{-\frac{R^{2}}{2\eta}+\frac{Rt}{\eta}}\,\E_{\nu}\bbrack{(Y-\mu)_{1}^{2}\,\ind_{\mc S}}\lesssim\exp\bpar{-\frac{R^{2}}{2\eta}}\,\E_{\nu}[(Y-\mu)_{1}^{2}]\\
 & \leq\exp\bpar{-\frac{R^{2}}{2\eta}}\,.
\end{align*}
For $\msf{(II)}$, since $\norm{\mu}\leq\E_{\nu}\norm{\cdot}=R$,
we have
\begin{align*}
 & \E_{\nu}\bbrack{(Y-\mu)_{1}^{2}\,\exp\bpar{-\frac{\norm Y^{2}}{2\eta}}\,\ind_{\mc S^{c}}}\\
 & =\E_{\nu}\bbrack{(Y-\mu)_{1}^{2}\,\exp\bpar{-\frac{\norm Y^{2}}{2\eta}}\,(\ind_{\{\norm Y\leq R-t\}}+\ind_{\{\norm Y\geq R+t\}})}\\
 & \lesssim R^{2}\,\E_{\nu}\bbrack{\exp\bpar{-\frac{\norm Y^{2}}{2\eta}}\,\ind_{\{\norm Y\leq R-t\}}}+\E_{\nu}\bbrack{(R^{2}+\norm Y^{2})\exp\bpar{-\frac{\norm Y^{2}}{2\eta}}\,\ind_{\{\norm Y\geq R+t\}}}\\
 & \leq R^{2}\,\E_{\nu}\bbrack{\exp\bpar{-\frac{\norm Y^{2}}{2\eta}}\,\ind_{\{\norm Y\leq R-t\}}}+2\,\E_{\nu}\bbrack{\norm Y^{2}\,\exp\bpar{-\frac{\norm Y^{2}}{2\eta}}\,\ind_{\{\norm Y\geq R+t\}}}\,.
\end{align*}
By the co-area formula, for the $(n-1)$-dimensional Hausdorff measure
$\mc H^{n-1}$,
\begin{align*}
\E_{\nu}\bbrack{\exp\bpar{-\frac{\norm Y^{2}}{2\eta}}\,\ind_{\{\norm Y\leq R-t\}}} & =\int_{0}^{R-t}e^{-\frac{r^{2}}{2\eta}}\underbrace{\int_{\de B_{r}}\nu(x)\,\D\mc H^{n-1}(x)}_{=:\nu_{n-1}(\de B_{r})}\,\D r\,,\\
\E_{\nu}\bbrack{\norm Y^{2}\exp\bpar{-\frac{\norm Y^{2}}{2\eta}}\,\ind_{\{\norm Y\geq R+t\}}} & =\int_{R+t}^{\infty}r^{2}e^{-\frac{r^{2}}{2\eta}}\,\nu_{n-1}(\de B_{r})\,\D r\,.
\end{align*}
Using integration by parts for the first expectation,
\begin{align*}
 & \int_{0}^{R-t}e^{-\frac{r^{2}}{2\eta}}\,\nu_{n-1}(\de B_{r})\,\D r\\
 & =[e^{-\frac{r^{2}}{2\eta}}\,\nu(B_{r})]_{0}^{R-t}+\int_{0}^{R-t}\frac{r}{\eta}\,e^{-\frac{r^{2}}{2\eta}}\,\nu(B_{r})\,\D r\\
 & \underset{\eqref{eq:PI-Lips-concent}}{\leq}3\exp\bpar{-\frac{(R-t)^{2}}{2\eta}}\,\exp\bpar{-\frac{t}{\cpi^{1/2}}}+\eta^{-1}\int_{0}^{R-t}re^{-\frac{r^{2}}{2\eta}}\,\exp\bpar{-\frac{R-r}{\cpi^{1/2}}}\,\D r\\
 & \lesssim\exp\bpar{-\frac{R^{2}}{2\eta}}\,\exp(-c_{1}\log R)+\frac{R}{\eta}\int_{t}^{R}\exp\bpar{-\frac{(R-r)^{2}}{2\eta}}\,\exp\bpar{-\frac{r}{\cpi^{1/2}}}\,\D r\\
 & \leq\exp\bpar{-\frac{R^{2}}{2\eta}}\,R^{-c_{1}}+\frac{R}{\eta}\,\exp\bpar{-\frac{R^{2}}{2\eta}}\int_{t}^{R}\exp\bpar{\frac{Rr}{\eta}-\frac{r}{\cpi^{1/2}}}\,\D r\,.
\end{align*}
Using $\eta\gtrsim Rt$ and taking $c_{1}$ large enough, 
\begin{align*}
\int_{t}^{R}\exp\bpar{\frac{Rr}{\eta}-\frac{r}{\cpi^{1/2}}}\,\D r & \leq\int_{t}^{R}\exp\Bpar{r\,\bpar{\frac{c_{3}}{c_{1}\cpi^{1/2}}-\frac{1}{\cpi^{1/2}}}}\,\D r\leq\int_{t}^{R}\exp\bpar{-\frac{r}{2\cpi^{1/2}}}\,\D r\\
 & \leq2\cpi^{1/2}\exp\bpar{-\frac{t}{2\cpi^{1/2}}}\leq2\cpi^{1/2}R^{-c_{1}/2}\,.
\end{align*}
Substituting these back to above, for large enough $c_{1}>0$, we
obtain that
\[
\int_{0}^{R-t}e^{-\frac{r^{2}}{2\eta}}\,\nu_{n-1}(B_{r})\,\D r\leq\exp\bpar{-\frac{R^{2}}{2\eta}}\,\bpar{R^{-c_{1}}+\frac{2R\cpi^{1/2}R^{-c_{1}/2}}{\eta}}\lesssim R^{-c_{1}/2}\exp\bpar{-\frac{R^{2}}{2\eta}}\,.
\]

We can bound the second expectation in a similar way:
\begin{align*}
 & \int_{R+t}^{\infty}r^{2}e^{-\frac{r^{2}}{2\eta}}\,\nu_{n-1}(\de B_{r})\,\D r\\
 & =\bigl[r^{2}e^{-\frac{r^{2}}{2\eta}}\,\nu(B_{r})\bigr]_{R+t}^{\infty}+\int_{R+t}^{\infty}\bpar{\frac{r^{3}}{\eta}-2r}\,e^{-\frac{r^{2}}{2\eta}}\,\nu(B_{r})\,\D r\\
 & \lesssim\lim_{r\to\infty}r^{2}\exp\bpar{-\frac{r^{2}}{2\eta}-\frac{r-R}{\cpi^{1/2}}}+\eta^{-1}\int_{R+t}^{\infty}r^{3}\exp\bpar{-\frac{r^{2}}{2\eta}-\frac{r-R}{\cpi^{1/2}}}\,\D r\\
 & =\eta^{-1}\int_{t}^{\infty}(r+R)^{3}\exp\bpar{-\frac{(r+R)^{2}}{2\eta}-\frac{r}{\cpi^{1/2}}}\,\D r\\
 & \leq4\eta^{-1}\exp\bpar{-\frac{R^{2}}{2\eta}}\int_{t}^{\infty}(r^{3}+R^{3})\,\exp\bpar{-\frac{r}{\cpi^{1/2}}}\,\D r\,.
\end{align*}
We can bound each integral as follows:
\begin{align*}
\int_{t}^{\infty}R^{3}\exp\bpar{-\frac{r}{\cpi^{1/2}}}\,\D r & =R^{3}\cpi^{1/2}\exp\bpar{-\frac{t}{\cpi^{1/2}}}\leq R^{3-c_{1}}\cpi^{1/2}\,,\\
\int_{t}^{\infty}r^{3}\exp\bpar{-\frac{r}{\cpi^{1/2}}}\,\D r & \underset{(i)}{\lesssim}t^{3}\cpi^{1/2}\exp\bpar{-\frac{t}{\cpi^{1/2}}}\leq t^{3}R^{-c_{1}}\cpi^{1/2}\,,
\end{align*}
where in $(i)$ we used properties of the incomplete Gamma function
to compute $\int_{a}^{\infty}x^{3}e^{-x/b}\,\D x=6b^{4}e^{-a/b}\,(1+\frac{a}{b}+\frac{a^{2}}{2b^{2}}+\frac{a^{3}}{6b^{3}})$.
Thus,
\[
\int_{R+t}^{\infty}r^{2}e^{-\frac{r^{2}}{2\eta}}\,\nu_{n-1}(\de B_{r})\,\D r\lesssim\exp\bpar{-\frac{R^{2}}{2\eta}}\,\frac{(R^{3-c_{1}}+t^{3}R^{-c_{1}})\,\cpi^{1/2}}{\eta}\lesssim\exp\bpar{-\frac{R^{2}}{2\eta}}\,.
\]
Therefore, for large enough $c_{1}>0$,
\[
\msf{II}\lesssim R^{2}\times R^{-c_{1}/2}\exp\bpar{-\frac{R^{2}}{2\eta}}+\exp\bpar{-\frac{R^{2}}{2\eta}}\lesssim\exp\bpar{-\frac{R^{2}}{2\eta}}\,,
\]
and thus
\[
N=\msf I+\msf{II}\lesssim\exp\bpar{-\frac{R^{2}}{2\eta}}\,.
\]

Combining all previous bounds,
\[
Q=\frac{N}{D}\lesssim1\,,
\]
which completes the proof.
\end{proof}

\subsection{Functional inequalities for strongly logconcave distributions with
compact support\label{subsec:FI-together}}

We  examine \eqref{eq:lsi} and \eqref{eq:pi} for strongly logconcave
distributions with compact support.

\paragraph{Log-Sobolev constant.}

We can now answer Question~\ref{ques:LSI-question} in some sense,
showing that $\clsi(\pi\gamma_{h})$ is bounded by $\O(D\lda^{1/2}\log^{2}n\log^{2}D\lda^{-1/2})$,
which also bounds $\clsi(\pi)$. 
\begin{proof}
[Proof of Corollary~\ref{cor:lsi-global-bound}] Using $\clsi(\pi\gamma_{h})\leq h$
when $h\lesssim D\lda^{1/2}\log^{2}n\log^{2}D\lda^{-1/2}$, and $\clsi(\pi\gamma_{h})\lesssim D\lda_{h}^{1/2}\log n$
(Theorem~\ref{thm:lsi-general-bound}) otherwise, along with $\lda_{h}\lesssim\lda$
(Theorem~\ref{thm:cov-Gauss}), we obtain 
\[
\clsi(\pi\gamma_{h})\lesssim\max\bbrace{D\lda^{1/2}\log^{2}n\log^{2}\frac{D}{\lda^{1/2}},D\lda^{1/2}\log n}\,.
\]
\end{proof}

\paragraph{Poincar\'e constant.}

We can ask a similar question for \eqref{eq:pi} whether $\cpi(\pi\gamma_{h})\lesssim\cpi(\pi)$.
Using $\norm{\cov\mu}\leq\cpi(\mu)\lesssim\norm{\cov\mu}\log n$ for
any logconcave probability measures $\mu$, together with Theorem~\ref{thm:cov-Gauss},
we obtain a simple corollary.
\begin{cor}
Let $\pi$ be a logconcave probability measure over $\Rn$. Then,
if $h\gtrsim R\lda^{1/2}\log^{2}n\log^{2}\frac{R}{\lda^{1/2}}$,
\[
\cpi(\pi\gamma_{h})\lesssim\cpi(\pi)\log n\,.
\]
In particular, if $\pi$ is further isotropic, then $\cpi(\pi\gamma_{h})\lesssim\cpi(\pi)\log n$
when $h\gtrsim n^{1/2}\log^{4}n$.
\end{cor}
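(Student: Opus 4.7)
The plan is to chain three ingredients that are already in place: the classical two-sided bound between the Poincaré constant and the largest covariance eigenvalue for logconcave measures, the covariance comparison for Gaussian reweightings (Theorem~\ref{thm:cov-Gauss}), and the fact that $\pi\gamma_h$ is itself logconcave. None of these steps should require genuinely new work; the task is essentially to verify that the hypotheses line up.

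First I would note that $\pi\gamma_h$ is logconcave (multiplication preserves logconcavity), so by the current best bound on the KLS constant one has
\[
\cpi(\pi\gamma_h)\lesssim \norm{\cov(\pi\gamma_h)}\log n = \lda_h \log n\,.
\]
Next, to invoke Theorem~\ref{thm:cov-Gauss} with parameter $\lda = \norm{\cov\pi}$, the hypothesis $h\gtrsim R\lda^{1/2}\log^{2}n\log^{2}\frac{R^{2}}{\lda}$ must hold; I would just observe that $\log\frac{R}{\lda^{1/2}}$ and $\log\frac{R^2}{\lda}$ differ only by a constant factor, so the assumed lower bound on $h$ is exactly what Theorem~\ref{thm:cov-Gauss} needs. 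That theorem then yields $\lda_h\lesssim\lda$. Finally, using the Cheeger-type lower bound $\lda\leq\cpi(\pi)$ (which holds for any probability measure, since variance of a linear test function lower-bounds via the Poincaré inequality), I can combine the three bounds to conclude
\[
\cpi(\pi\gamma_h)\lesssim \lda_h\log n \lesssim \lda\log n \leq \cpi(\pi)\log n\,,
\]
which is the first claim.

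For the isotropic specialization, I would simply instantiate the constants: isotropy gives $\lda=1$ and $R=\E_\pi\norm{\cdot}\leq(\E_\pi\norm{\cdot}^2)^{1/2}=n^{1/2}$, so $\log\frac{R}{\lda^{1/2}}\lesssim \log n$ and the hypothesis of Theorem~\ref{thm:cov-Gauss} reduces to $h\gtrsim n^{1/2}\log^{2}n\cdot\log^{2}n=n^{1/2}\log^{4}n$, which is exactly the threshold stated. The conclusion $\cpi(\pi\gamma_h)\lesssim\cpi(\pi)\log n$ is then immediate from the general case.

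I do not foresee a real obstacle here; the corollary is truly a two-line consequence of Theorem~\ref{thm:cov-Gauss} plus Klartag's bound. The only subtlety worth a sentence of care is making sure the $\log^{2}\frac{R^{2}}{\lda}$ factor (as written in Theorem~\ref{thm:cov-Gauss}) matches the $\log^{2}\frac{R}{\lda^{1/2}}$ appearing in the corollary statement; these are identical up to a factor of $4$ inside the square, hence absorbed into the $\lesssim$.
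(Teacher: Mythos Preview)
Your proposal is correct and essentially identical to the paper's own proof: both apply Theorem~\ref{thm:cov-Gauss} to get $\lda_h\lesssim\lda$, then chain $\cpi(\pi\gamma_h)\lesssim\lda_h\log n\lesssim\lda\log n\leq\cpi(\pi)\log n$. Your extra remarks on matching the logarithmic factors and on the isotropic specialization are accurate and slightly more detailed than the paper's terse argument.
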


\begin{proof}
When $h\gtrsim R\lda^{1/2}\log^{2}n\log^{2}R\lda^{-1/2}$, it holds
that $\lda_{h}\lesssim\lda$ by Theorem~\ref{thm:cov-Gauss}. Thus,
\[
\cpi(\pi\gamma_{h})\lesssim\lda_{h}\log n\lesssim\lda\log n\leq\cpi(\pi)\log n\,,
\]
which completes the proof.
\end{proof}
We remark that this is a significant improvement on the bound for
$h$, as compared to $h\asymp n^{2}$ required for applying a $\Theta(1)$-perturbation
argument.

\section{Faster warm-start generation\label{sec:faster-generation}}

Now that we have a refined understanding of functional inequalities
and warmness conditions for the $\ps$, we combine these ingredients
to design a faster algorithm for sampling without a warm start. We
recall the main question. 
\begin{question}
\label{ques:body-unif-samp} Let $\pi$ be the uniform distribution
over a convex body $\K$, presented via $\mem_{x_{0},R}(\K)$, with
covariance matrix $\Sigma$ and second moment $R^{2}:=\E_{\pi}[\norm{\cdot-x_{0}}^{2}]$.
Given $\veps\in(0,1)$, how many membership queries to $\K$ are required
to generate a sample whose law $\mu$ satisfies $\norm{\mu-\pi}_{\tv}\leq\veps$?
\end{question}

To address this question, we first bound the $\eu R_{q}$-divergence
between consecutive annealing distributions in \S\ref{subsec:Renyi-closeneess}.
Then in \S\ref{subsec:Faster-warm-start-generation}, we interweave
new ingredients together to give an improved answer to this question
as follows. 
\begin{thm}
[Restatement of Theorem~\ref{thm:intro-unif-gc}] \label{thm:body-unif-gc}
In the setting of Question~\ref{ques:body-unif-samp}, there exists
an algorithm that for any given $\eta,\veps\in(0,1)$, with probability
at least $1-\eta$ returns a sample whose law $\mu$ satisfies $\norm{\mu-\pi}_{\tv}\leq\veps$,
using
\[
\Otilde\bpar{n^{2}R^{3/2}\,\norm{\cov\pi}^{1/4}\log^{7}\frac{R^{2}}{\eta\veps\norm{\cov\pi}}}
\]
membership queries in expectation. In particular, if $\pi$ is also
near-isotropic (i.e., $\norm{\cov\pi}\lesssim1$), then $\Otilde(n^{2.75}\log^{6}\nicefrac{1}{\eta\veps})$
queries suffice.
\end{thm}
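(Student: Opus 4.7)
The plan is to realize the accelerated Gaussian-cooling strategy sketched in \S\ref{subsubsec:faster-sampling}, combining the relaxed-warmness samplers of Theorems~\ref{thm:body-unif-samp}--\ref{thm:body-gauss-samp} with the improved log-Sobolev bound of Corollary~\ref{cor:lsi-global-bound}. After a standard preprocessing that intersects $\K$ with a Euclidean ball $B_{cR}(x_0)$ of radius $\Theta(R)$ (which changes $\pi$ negligibly by Markov's inequality applied to $\norm{\cdot-x_0}^2$), we may assume $\diam(\K) \lesssim R$. The algorithm generates the sequence $\gamma_i := \gamma_{\sigma_i^2}|_\K$ with $\sigma_0^2 = 1/n$ (so that $\gamma_{\sigma_0^2}$ is essentially supported in $B_1(x_0)\subset\K$ and is therefore $\O(1)$-warm with respect to $\gamma_0$ after a trivial rejection sampling initialization), uses the multiplicative update $\sigma_{i+1}^2 = \sigma_i^2\bpar{1 + \sigma_i/(q^{1/2}R)}$ with $q = \polylog\frac{nR}{\eta\veps}$, and terminates at the first index $m$ with $\sigma_m^2 \asymp R^2$; a final invocation of $\psunif$ with target $\pi$ takes $\gamma_m$ to the uniform distribution.

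The first analytical step is a $q$-R\'enyi variance-increase estimate generalizing the $q=2$ bound of \cite{CV18Gaussian}:
\[
\eu R_q\bpar{\gamma_{\sigma^2}|_\K \mmid \gamma_{\sigma^2(1+\alpha)}|_\K} \lesssim \frac{q R^2 \alpha^2}{\sigma^2}\quad\text{for all }\alpha\in(0,1)\,.
\]
This reduces to controlling $\E_{\gamma_{\sigma^2(1+\alpha)}|_\K}\exp\bpar{cq\alpha\sigma^{-2}\norm{\cdot}^2}$ for a small constant $c$, which follows from a Gaussian moment calculation together with the sub-Gaussian tail of $\norm{\cdot}^2$ under the strongly logconcave measure $\gamma_{\sigma^2(1+\alpha)}|_\K$ and Taylor expansion in $\alpha$. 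Plugging in $\alpha = \sigma_i/(q^{1/2}R)$ then yields $\eu R_q(\gamma_i\mmid\gamma_{i+1}) = \O(1)$, so every invocation of $\psgauss$ starts from an $M_c = \O(1)$ warm start with $c = \Otilde(1)$ exactly as required by Theorem~\ref{thm:body-gauss-samp}. Under this warmness, the per-phase query complexity is $\Otilde(n^2\sigma_i^2)$ when $\sigma_i^2 \leq R\lda^{1/2}$ (standard $\clsi\leq\sigma^2$), and $\Otilde(n^2 R\lda^{1/2})$ when $\sigma_i^2 > R\lda^{1/2}$ by Corollary~\ref{cor:lsi-global-bound} using $\diam(\K) \lesssim R$.

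Doubling $\sigma_i$ requires $\Otilde(R/\sigma_i)$ update steps, so the cost of doubling is $\Otilde(n^2 R\sigma_i)$ in the first regime and $\Otilde(n^2 R^2\lda^{1/2}/\sigma_i)$ in the second. Both expressions are geometric in $\sigma_i$ and are maximized at the threshold $\sigma_i \asymp R^{1/2}\lda^{1/4}$, producing a total of $\Otilde(n^2 R^{3/2}\lda^{1/4})$; the final $\psunif$ call contributes $\Otilde(n^2\lda)$, absorbed since $\lda\leq R^2$. When $\pi$ is near-isotropic, $R^2\asymp n$ and $\lda\asymp 1$ give $\Otilde(n^{2.75})$.

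The main obstacle is that each call to $\psgauss$ outputs a sample $\hat\gamma_i$ only approximately equal to $\gamma_i$ in $\eu R_2$, so feeding $\hat\gamma_i$ into the next phase risks breaking the $\eu R_c$-warmness requirement of Theorem~\ref{thm:body-gauss-samp} for $\gamma_{i+1}$. The plan is to aggregate errors in $\tv$ rather than in R\'enyi: maximally couple $\hat\gamma_{i-1}$ with $\gamma_{i-1}$ so that, with probability $1-\norm{\hat\gamma_{i-1}-\gamma_{i-1}}_{\tv}$, the next phase is genuinely started from $\gamma_{i-1}$, and then use the data-processing inequality and $\tv$ triangle inequality,
\[
\norm{\hat\gamma_i - \gamma_i}_{\tv} \leq \norm{\hat\gamma_{i-1} - \gamma_{i-1}}_{\tv} + \norm{\gamma_{i-1} P_i - \gamma_i}_{\tv}\,,
\]
to obtain a total error of at most $m\veps'$ after $m = \Otilde(R\sqrt{n})$ phases, where $\veps'$ is the per-phase $\eu R_2$-accuracy converted to $\tv$ via Pinsker's inequality. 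Since per-phase complexity depends only polylogarithmically on $1/\veps'$, setting $\veps' = \veps/(2m)$ costs only $\polylog\frac{nR}{\eta\veps}$ additional factors, and the same argument tracks the failure probability (summed as $\sum \eta/m = \eta$ across all $\psgauss$ and $\psunif$ calls), yielding the stated bound.
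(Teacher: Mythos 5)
Your algorithm and the overall complexity analysis match the paper's proof of Theorem~\ref{thm:body-unif-gc} in all essentials: the same truncation to $B_{\Theta(R)}(x_0)$, the same multiplicative schedule $\sigma^2 \gets \sigma^2(1+\sigma/(q^{1/2}R))$ with $q=\polylog\frac{nR}{\eta\veps}$, the same regime change at $\sigma^2 \asymp R\lda^{1/2}$ via Corollary~\ref{cor:lsi-global-bound}, the same per-doubling cost bookkeeping maximized at $\sigma \asymp R^{1/2}\lda^{1/4}$, the final $\psunif$ call, and the same $\tv$ triangle-inequality aggregation of per-phase errors (which the paper writes directly and you phrase via maximal coupling; these are equivalent).

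The one place where you diverge from the paper — and where your sketch has a genuine gap — is the proof of the $q$-R\'enyi variance-increase estimate $\eu R_q(\gamma_{\sigma^2}|_\K \mmid \gamma_{\sigma^2(1+\alpha)}|_\K) \lesssim qR^2\alpha^2/\sigma^2$ (the paper's Lemma~\ref{lem:variance-annealing}). You propose to reduce it to bounding the exponential moment $\E\exp(cq\alpha\sigma^{-2}\norm{\cdot}^2)$. Two issues. First, $\norm{\cdot}^2$ under a strongly logconcave measure has sub-\emph{exponential}, not sub-Gaussian, tails. Second and more fundamentally, this raw exponential moment is far too large for the target bound: with $\alpha\asymp\sigma/(q^{1/2}R)$, the exponent scales like $q^{1/2}\norm{x}^2/(R\sigma)$, so the crude bound $\norm{x}\leq R$ gives $\exp(\O(q^{1/2}R/\sigma))$, which is super-$e^{\O(q)}$ whenever $\sigma \ll R$ (i.e., during essentially all annealing phases). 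The reason the lemma is nevertheless true is that $\eu R_q$ is a \emph{difference} of log-partition functions in which zeroth- and first-order terms cancel exactly: writing $G(\lambda)=\log\int_\K\mu\exp(-\lambda\norm{\cdot}^2)$, one gets $(q-1)\eu R_q = (q-1)G(\beta')+G(\beta+(q-1)\delta)-qG(\beta)$ with $\delta=\beta-\beta'$, and only the second-order term $\propto q^2\delta^2\,G''$ survives. The paper formalizes precisely this cancellation via the exact double-integral identity $\log\norm{\cdot}_{L^q}^q = (q-1)\int_0^\alpha\int_{-t}^{(q-1)t}G''(u)\,\D u\,\D t$ and then bounds $G''(u)=\var_{\nu_u}(\norm{X}^2)\leq\cpi(\nu_u)\cdot 4\E_{\nu_u}\norm{X}^2 \lesssim\sigma^2 R^2$ via the Brascamp--Lieb/Poincar\'e inequality. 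Your ``Taylor expansion in $\alpha$'' hints at the right structure, but you should replace the raw exponential-moment step with a variance bound on $\norm{X}^2$ (equivalently, a bound on $G''$), as otherwise the estimate does not close.
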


Since $\norm{\cov\pi}\leq\tr(\cov\pi)\leq R^{2}$, this complexity
improves the previous $n^{2}(n\vee R^{2})$-bound for a $\tv$-guarantee
\cite{CV18Gaussian}. 

Alongside, we improve the query complexity of sampling from a standard
Gaussian $\pi\gamma=\gamma|_{\K}$ truncated to an arbitrary convex
body, by a factor of $n^{1/2}$ \cite{CV18Gaussian}.
\begin{cor}
[Restatement of Corollary~\ref{cor:intro-gauss-gc}] \label{cor:body-gauss-gc}
In the setting of Question~\ref{ques:body-unif-samp}, there exists
an algorithm that for any given $\eta,\veps\in(0,1)$, with probability
at least $1-\eta$ returns a sample whose law $\mu$ satisfies $\norm{\mu-\pi\gamma}_{\tv}\leq\veps$,
using
\[
\Otilde\bpar{n^{2.5}\log^{7}\frac{1}{\eta\veps}}
\]
membership queries in expectation.
\end{cor}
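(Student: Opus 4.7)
The plan is to specialize the Gaussian-cooling scheme behind Theorem~\ref{thm:body-unif-gc} to the shorter trajectory $\sigma^2 \in [n^{-1},1]$, which is sufficient because the target $\gamma|_\K$ coincides with the final annealing distribution $\gamma_{\sigma^2=1}|_\K$; no transition to the uniform measure is needed. Without loss of generality take $x_0=0$, so $B_1(0)\subset\K$. A sample from the initial distribution $\gamma_{n^{-1}}|_\K$ can be generated by direct rejection against the unconstrained Gaussian $\gamma_{n^{-1}}$, which succeeds with probability $1-e^{-\Omega(n)}$ at $O(1)$ expected membership queries, and furnishes a $\eu R_\infty$-warm (essentially exact) start for the first phase.

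Next, I would set the annealing schedule $\sigma_{i+1}^2 = \sigma_i^2(1+\alpha)$ with $\alpha=(q^{1/2}\sqrt n)^{-1}$ and $q=\polylog(n/(\eta\veps))$, and bound the R\'enyi divergence between consecutive distributions via Lemma~\ref{lem:variance-annealing},
\[
\eu R_q(\gamma_{\sigma^2}|_\K \mmid \gamma_{\sigma^2(1+\alpha)}|_\K) \leq \frac{q R_\sigma^2 \alpha^2}{\sigma^2},
\]
where $R_\sigma^2 := \E_{\gamma_{\sigma^2}|_\K}\|X\|^2$. Since $\gamma_{\sigma^2}|_\K$ is $\sigma^{-2}$-strongly logconcave with mode at $0\in\K$, Caffarelli's contraction theorem yields $R_\sigma^2 \lesssim n\sigma^2$, so the chosen $\alpha$ maintains $\eu R_c$-closeness for $c=\Otilde(1)$. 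At each phase I would invoke Theorem~\ref{thm:body-gauss-samp} to run $\psgauss$ at target $\gamma_{\sigma_i^2}|_\K$ started from the previous output; combined with the Bakry--\'Emery bound $\clsi(\gamma_{\sigma^2}|_\K)\leq\sigma^2$, this costs $\Otilde(n^2\sigma_i^2)$ queries per phase for $\eu R_2$-accuracy $\veps/m$.

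To aggregate the complexity, the number of phases from $\sigma^2=n^{-1}$ to $\sigma^2=1$ is $m=\Otilde(\sqrt n)$, and the per-phase costs form a geometric series of ratio $1+\alpha$, so $\sum_{i=0}^{m} n^2\sigma_i^2 \lesssim n^2\sigma_m^2/\alpha \asymp n^{2.5}$ up to polylogarithmic factors. The accumulated TV error telescopes to $m\cdot(\veps/m)=\veps$ via the triangle and data-processing inequalities as in \S\ref{subsubsec:faster-sampling}, and the substitution $\veps\gets\veps/m$ only adds $\polylog$ overhead. The main technical point is the uniform second-moment bound $R_\sigma \lesssim \sqrt n\,\sigma$, which controls both the mean and the covariance of $\gamma_{\sigma^2}|_\K$: Caffarelli's $1$-Lipschitz transport from $\gamma_{\sigma^2}$ yields the covariance cleanly, but a separate argument (a small-ball or concentration estimate exploiting that $0$ is the mode of $\gamma_{\sigma^2}|_\K$ and $B_1(0)\subset\K$) is needed to bound the shift of the origin under the transport and hence control the mean.
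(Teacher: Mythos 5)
Your high-level plan---run the Gaussian cooling only up to $\sigma^2=1$, invoke Theorem~\ref{thm:body-gauss-samp} per phase, and sum geometrically to get $n^{2.5}$---matches the paper's. The difference, and the gap, is in which annealing lemma justifies the step size $\alpha=(qn)^{-1/2}$.

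The paper handles the closeness of $\gamma_{\sigma_i^2}|_\K$ to $\gamma_{\sigma_{i+1}^2}|_\K$ via the \emph{global} lemma, Lemma~\ref{lem:global-annealing}. Setting $V(x)=\tfrac{1}{2\sigma^2}\|x\|^2+\infty\cdot\ind_{\K^c}(x)$, multiplying $\sigma^2$ by $1+\alpha$ is exactly replacing $e^{-(1+\alpha)V}$ by $e^{-V}$, and Lemma~\ref{lem:global-annealing} directly gives $\eu R_q \leq qn\alpha^2/2 = \mathcal O(1)$ when $\alpha=(qn)^{-1/2}$, with no geometric information about $\K$ beyond $B_1(0)\subset\K$.

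You instead invoke Lemma~\ref{lem:variance-annealing} but substitute $R_\sigma^2:=\E_{\gamma_{\sigma^2}|_\K}\|X\|^2$ for the $R$ in that statement. This is not what the lemma says: there $R$ is the \emph{diameter of the support} of $\mu$, used in the proof via $\sup_{\nu_u}\|X\|\le R$ to bound $\E_{\nu_u}[\|X\|^2]\le R^2$ uniformly over the interpolating measures $\nu_u$. Taken literally, that diameter is $\diam(\K)$, which at the start ($\sigma^2=n^{-1}$) would force $\alpha\lesssim \sigma/(q^{1/2}\diam\K)\ll(qn)^{-1/2}$ and destroy the $n^{2.5}$ bound. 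You recognize this and propose replacing the diameter by a moment bound of order $n\sigma^2$ via Caffarelli, but (i) this requires re-proving a strengthened version of Lemma~\ref{lem:variance-annealing} where the per-$u$ moment $\E_{\nu_u}\|X\|^2$ is controlled uniformly over the range of $u$, which the paper never establishes, and (ii) you yourself flag the unresolved sub-issue of controlling the mean (not merely the covariance) under the 1-Lipschitz transport. None of that machinery is needed: Lemma~\ref{lem:global-annealing}, which is in the paper precisely for this purpose, handles the step directly and is the route the paper takes in its proof of this corollary. The rest of your argument---rejection initialization at $\sigma^2=n^{-1}$, the $\Otilde(\sqrt n)$ phase count, the geometric-series complexity $\sum_i n^2\sigma_i^2\lesssim n^{2.5}$ up to polylogs, and the triangle/DPI accounting of TV error---is correct and matches the paper.
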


\subsection{R\'enyi divergence of annealing distributions\label{subsec:Renyi-closeneess}}

Since we have mixing guarantees under $\eu R_{c}$-warmness for $c=\Otilde(1)$,
we can design and leverage a faster annealing scheme with provable
guarantees. In this section, we extend existing results on the closeness
of consecutive annealing distributions to the setting of $q$-R\'enyi
divergence.

\paragraph{The first type: fixed rate annealing.}

Previous work used that for logconcave $e^{-V}$ and $\alpha=n^{-1/2}$,
the distributions $\exp(-V)$ and $\exp(-(1+\alpha)V)$ are $\O(1)$-close
in $\chi^{2}$-divergence (i.e., $\eu R_{2}$) \cite{LV06simulated,KV06simulated}.
We generalize this result to $q$-R\'enyi divergence (or equivalently,
the relative $L^{q}$-norm).
\begin{lem}
[R\'enyi version of universal annealing] \label{lem:global-annealing}
Let $\D\nu\propto e^{-V}\,\D x$ and $\D\mu\propto e^{-(1+\alpha)\,V}\,\D x$
be logconcave probability measures over $\Rn$. For $q>1$ and $\delta>0$
such that $1-q\delta>0$ ,
\[
\eu R_{q}(\mu\mmid\nu)\leq\begin{cases}
\frac{qn\alpha^{2}}{2} & \text{if }\alpha\geq0\,,\\
\frac{qn\alpha^{2}}{1-q\delta} & \text{if }\alpha\in[-\delta/2,0]\,.
\end{cases}
\]
In particular, $\alpha=(qn)^{-1/2}$ yields an $\O(1)$-bound on $\eu R_{q}(\mu\mmid\nu)$,
and $\alpha=-(16q\,(q\vee n))^{-1/2}$ also yields an $\O(1)$-bound.
\end{lem}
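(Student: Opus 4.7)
The plan is to reduce the R\'enyi divergence to the behavior of a one-dimensional log-Laplace transform, and then invoke a classical variance estimate for convex potentials under logconcave measures.

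First, using $\frac{\D\mu}{\D\nu}(x) = \frac{Z_V}{Z_{(1+\alpha)V}}\,e^{-\alpha V(x)}$ together with the normalization $\int \frac{\D\mu}{\D\nu}\,\D\nu = 1$, a direct computation yields the identity
\[
\eu R_q(\mu\mmid\nu) \;=\; \frac{\phi(q\alpha) - q\,\phi(\alpha)}{q-1}\,,\qquad \phi(t) \;:=\; \log\E_\nu[e^{-tV}]\,.
\]
As a log-MGF, $\phi$ is convex, with $\phi'(t) = -\E_{\nu_t}[V]$ and $\phi''(t) = \var_{\nu_t}(V)$, where $\nu_t \propto e^{-(1+t)V}$ is logconcave whenever $1+t>0$. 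The crucial ingredient I would use is the classical variance bound: for any logconcave $\rho \propto e^{-W}$ on $\Rn$ with convex $W$, one has $\var_\rho(W) \leq n$ (a consequence of Brascamp--Lieb, or equivalently a one-dimensional localization argument). Applied to $W = (1+t)V$ and $\rho = \nu_t$, this gives $\phi''(t) \leq n/(1+t)^2$; in particular, $\phi''(s) \leq n$ for every $s \geq 0$.

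For $\alpha \geq 0$, set $G(\alpha) := \phi(q\alpha) - q\,\phi(\alpha)$, so $G(0) = G'(0) = 0$. Taylor's integral remainder, followed by splitting $\int_0^{q\alpha}$ at $\alpha$, produces the clean identity
\[
G(\alpha) \;=\; (q-1)\int_0^\alpha s\,\phi''(s)\,\D s \;+\; \int_\alpha^{q\alpha}(q\alpha - s)\,\phi''(s)\,\D s\,.
\]
Plugging in $\phi''(s) \leq n$ and evaluating the two elementary integrals yields $G(\alpha) \leq \tfrac{nq(q-1)\alpha^2}{2}$, and hence $\eu R_q(\mu\mmid\nu) \leq \tfrac{nq\alpha^2}{2}$; the choice $\alpha = (qn)^{-1/2}$ then gives the claimed $O(1)$ bound.

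For $\alpha \in [-\delta/2,0]$ with $1 - q\delta > 0$, the analogous manipulation (substituting $v = -s$) yields
\[
G(\alpha) \;=\; (q-1)\int_0^{|\alpha|} v\,\phi''(-v)\,\D v \;+\; \int_{|\alpha|}^{q|\alpha|}(q|\alpha| - v)\,\phi''(-v)\,\D v\,.
\]
Since $v \in [0,q|\alpha|] \subset [0, q\delta/2]$, the variance bound gives $\phi''(-v) \leq n/(1-q\delta/2)^2$, and the same two elementary integrals combined with the inequality $(1-q\delta/2)^2 \geq 1 - q\delta$ deliver $\eu R_q(\mu\mmid\nu) \leq \tfrac{qn\alpha^2}{1-q\delta}$. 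The specialization $\delta = 2|\alpha|$ with $\alpha = -(16q(q\vee n))^{-1/2}$ makes $q\delta \leq 1/2$ and again yields $O(1)$. The main obstacle is the variance estimate $\var_\rho(W) \leq n$: it is classical but worth stating carefully, since its constant is exactly what controls the prefactor in the conclusion. The negative-$\alpha$ branch only requires extra care to track the admissible range $1 - q\delta > 0$ so that $e^{-(1+q\alpha)V}$ remains integrable and $\nu_t$ stays logconcave along the entire path used in Taylor's remainder.
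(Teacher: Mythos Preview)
Your proof is correct and rests on the same key fact as the paper's: the paper invokes the logconcavity of $s\mapsto s^n F(s)$ with $F(s)=\int e^{-sV}$ (from \cite{KV06simulated}), which is precisely equivalent to your variance bound $\var_{\nu_t}(V)\le n/(1+t)^2$, since $\tfrac{d^2}{ds^2}\log(s^nF(s))=-n/s^2+\var_{\rho_s}(V)$. The packaging differs: the paper applies concavity once to obtain the closed-form inequality $\|\D\mu/\D\nu\|_{L^q(\nu)}^q\le \bigl((1+\alpha)^q/(1+q\alpha)\bigr)^n$ and then bounds this scalar ratio by elementary derivative sign checks, whereas you bound $\phi''$ pointwise and integrate via Taylor's remainder. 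Your route is arguably a bit more transparent (it makes the role of the variance explicit and avoids the ad hoc function $g$), and in the negative-$\alpha$ branch it even yields $\tfrac{qn\alpha^2}{2(1-q\delta)}$, a factor of $2$ sharper than the stated bound; but both arguments are short and the substance is the same.
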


\begin{proof}
Let us define
\[
F(s):=\int e^{-sV(x)}\,\D x\,.
\]
We recall from \cite[Lemma 3.2]{KV06simulated} that $s\mapsto s^{n}F(s)$
is logconcave in $s$.

For $q>1$, we compute the $L^{q}$-warmness of $\mu$ with respect
to $\nu$:
\begin{align*}
\Bnorm{\frac{\D\mu}{\D\nu}}_{L^{q}(\nu)}^{q} & =\int\bpar{\frac{\mu}{\nu}}^{q}\,\D\nu=\int\exp\bpar{-q\,(1+\alpha)\,V(x)+(q-1)\,V(x)}\,\D x\times\frac{F^{q-1}(1)}{F^{q}(1+\alpha)}\\
 & =\frac{F(1+q\alpha)\,F^{q-1}(1)}{F^{q}(1+\alpha)}=\Bpar{\frac{(1+\alpha)^{n}}{(1+q\alpha)^{n/q}}\,\frac{(1+q\alpha)^{n/q}F^{1/q}(1+q\alpha)\,F^{1-q^{-1}}(1)}{(1+\alpha)^{n}F(1+\alpha)}}^{q}\\
 & \leq\bpar{\frac{(1+\alpha)^{q}}{1+q\alpha}}^{n}\,,
\end{align*}
where the last inequality follows from logconcavity of $s^{n}F(s)$
in $s$.

We now claim that for $\alpha\geq0$,
\[
\frac{(1+\alpha)^{q}}{1+q\alpha}\le\exp\bpar{\frac{q\,(q-1)}{2}\,\alpha^{2}}\,.
\]
For $\alpha\geq0$, let us define
\[
g(\alpha):=\frac{q\,(q-1)}{2}\,\alpha^{2}-q\log(1+\alpha)+\log(1+q\alpha)\,.
\]
We can compute that $g(0)=0$ and 
\[
g'(\alpha)=(q-1)\,q\alpha-\frac{q}{1+\alpha}+\frac{q}{1+q\alpha}=\frac{\alpha^{2}q\,(q-1)(\alpha q+q+1)}{(1+\alpha)(\alpha q+1)}\geq0\,,
\]
from which the claim follows.

We also show that for $\alpha\in[-\delta/2,0]$ with $1-q\delta>0$,
\[
\frac{(1+\alpha)^{q}}{1+q\alpha}\le\exp\bpar{\frac{q\,(q-1)}{2\,(1-q\delta)}\,\alpha^{2}}\,.
\]
In a similarly way, we define and compute that
\begin{align*}
g(\alpha) & :=\frac{q\,(q-1)}{2\,(1-q\delta)}\,\alpha^{2}-q\log(1+\alpha)+\log(1+q\alpha)\,,\\
g'(\alpha) & =\frac{q-1}{1-q\delta}\,q\alpha-\frac{q}{1+\alpha}+\frac{q}{1+q\alpha}=\alpha q\,(q-1)\,\bpar{\frac{1}{1-q\delta}-\frac{1}{(1+\alpha)\,(1+q\alpha)}}\leq0\,,
\end{align*}
from which the claim follows.

Using each bound for $\alpha\geq0$ and $\alpha\in[-\delta/2,0]$,
\[
\eu R_{q}(\mu\mmid\nu)=\frac{1}{q-1}\,\log\,\Bnorm{\frac{\D\mu}{\D\nu}}_{L^{q}(\nu)}^{q}\leq\begin{cases}
\frac{qn\alpha^{2}}{2} & \text{if }\alpha\geq0\,,\\
\frac{qn\alpha^{2}}{1-q\delta} & \text{if }\alpha\in[-\delta/2,0]\,.
\end{cases}
\]
This completes the proof.
\end{proof}

\paragraph{The second type: accelerated annealing.}

In this annealing, the change in the annealing parameter depends on
the variance of the current distribution. A bound of this type was
previously established for $\chi^{2}$-divergence in \cite[Lemma 7.8]{CV18Gaussian}. 

We show that this bound also extends to $\eu R_{q}$-divergence, enabling
control over the closeness of annealing steps under R\'enyi-divergence.
In particular, this allows us to multiply the variance by a factor
of $1+\sigma/R$ in each step. 
\begin{lem}
[R\'enyi version of accelerated annealing] \label{lem:variance-annealing}
Let $\mu$ be a logconcave probability density in $\Rn$ with support
of diameter $R>0$. Then, for $q>1$ and $\sigma^{2}>0$,
\[
\eu R_{q}(\mu\gamma_{\sigma^{2}}\mmid\mu\gamma_{\sigma^{2}(1+\alpha)})\leq\frac{qR^{2}\alpha^{2}}{\sigma^{2}}\,.
\]
In particular, $\alpha\lesssim\sigma/(q^{1/2}R)$ yields an $\O(1)$-bound
on the $\eu R_{q}$-divergence.
\end{lem}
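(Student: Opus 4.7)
The strategy is to express $(q-1)\,\eu R_q$ as a convexity defect of $\psi(u) := \log\int \mu(x)\,e^{-u\snorm{x}^2/2}\,\D x$ at $u = 1/\sigma^2$, and then to bound $\psi''$ by exploiting the strong logconcavity that the Gaussian factor imparts on Gaussian-tilted versions of $\mu$.

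First, set $u := 1/\sigma^2$, $u' := u/(1+\alpha)$, and $u_\ast := q u + (1-q)\,u' = \tfrac{1+q\alpha}{1+\alpha}\,u$. A direct computation of $\int (\mu\gamma_{\sigma^2})^{q}(\mu\gamma_{\sigma^2(1+\alpha)})^{1-q}\,\D x$ in terms of partition functions yields the identity
\[
(q-1)\,\eu R_q(\mu\gamma_{\sigma^2}\mmid\mu\gamma_{\sigma^2(1+\alpha)}) \;=\; \psi(u_\ast) - q\,\psi(u) - (1-q)\,\psi(u').
\]
Since the points satisfy $u_\ast - qu - (1-q)u' = 0$ and the coefficients sum to $0$, both the constant and the linear-in-$\psi$ pieces of any Taylor expansion about $u$ cancel, and Taylor's theorem with integral remainder collapses the right-hand side to
\[
\psi(u_\ast) - q\,\psi(u) - (1-q)\,\psi(u') \;=\; \bar\psi(u_\ast) + (q-1)\,\bar\psi(u'), \qquad \bar\psi(v) := \int_{u}^{v}(v-t)\,\psi''(t)\,\D t\;\geq\;0.
\]

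The crux is the bound on $\psi''$. Because $\psi$ is the cumulant generating function of $-\snorm{X}^{2}/2$ under $\mu$, one has $\psi''(t) = \tfrac{1}{4}\Var_{\nu_t}[\snorm{X}^{2}]$, where $\nu_t(x)\propto \mu(x)\,e^{-t\snorm{x}^{2}/2}$. The Gaussian factor makes $\nu_t$ logconcave and $t$-strongly logconcave, so Brascamp--Lieb (equivalently, Lichnerowicz) gives $\cpi(\nu_t)\leq 1/t$. Applying this to the function $\snorm{x}^{2}$, whose gradient $2x$ satisfies $\snorm{2x}^{2}\leq 4R^{2}$ on $\supp{\mu}$ (a diameter-$R$ set containing the origin, so $\snorm{x}\leq R$ throughout), yields
\[
\psi''(t) \;\leq\; \tfrac{1}{4}\cdot\tfrac{1}{t}\cdot 4R^{2} \;=\; \frac{R^{2}}{t}.
\]

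Finally, plug in: $u_\ast - u = \tfrac{(q-1)u\alpha}{1+\alpha}$ and $u - u' = \tfrac{u\alpha}{1+\alpha}$; bounding $1/t\leq 1/u$ on $[u,u_\ast]$ and $1/t\leq 1/u' = (1+\alpha)/u$ on $[u',u]$ and evaluating the integral remainders gives
\[
\bar\psi(u_\ast) + (q-1)\,\bar\psi(u') \;\leq\; \frac{R^{2}u\,(q-1)(q+\alpha)\,\alpha^{2}}{2(1+\alpha)^{2}}.
\]
Dividing by $q-1$, substituting $u=1/\sigma^{2}$, and using the elementary estimate $(q+\alpha)/(1+\alpha)^{2}\leq 2q$ (valid for $q\geq 1$ and $\alpha\geq 0$) delivers $\eu R_q \leq q R^{2}\alpha^{2}/\sigma^{2}$, as claimed. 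The main obstacle is avoiding the na\"ive Popoviciu bound $\Var_{\nu_t}[\snorm{X}^{2}]\leq R^{4}/4$, which would only produce $\eu R_q = \O(qR^{4}\alpha^{2}/\sigma^{4})$ and be vacuous in the regime $\sigma\ll R$ where the annealing spends most of its time; the improvement from $R^{4}/\sigma^{4}$ to $R^{2}/\sigma^{2}$ is precisely what strong logconcavity of the tilts $\nu_t$ (hence the $1/t$ Poincar\'e constant) provides.
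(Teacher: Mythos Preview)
Your proof is correct and follows essentially the same route as the paper's. Both arguments express $(q-1)\eu R_q$ as a second-order defect of the log-partition function, identify the second derivative as $\tfrac14\Var_{\nu_t}[\norm{X}^2]$, bound it by $R^2/t$ via the Brascamp--Lieb/Poincar\'e inequality for the $t$-strongly logconcave tilt $\nu_t$, and integrate; the only cosmetic difference is that the paper parametrizes the log-partition function by an additive shift $s$ around $\sigma^2(1+\alpha)$ and writes the defect as a double integral $\int_0^\alpha\int_{-t}^{(q-1)t}G''(u)\,\D u\,\D t$, whereas you parametrize directly by the inverse variance $u$ and phrase the same quantity as two Taylor integral remainders. (Your side remark that the support contains the origin so $\norm{x}\le R$ matches the paper's implicit use of $\sup_{\nu_u}\norm{X}\le R$; in the paper's applications $\mu$ is always supported in $B_R(0)$.)
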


\begin{proof}
We define
\[
\mu_{\sigma^{2}}(x):=\mu\gamma_{\sigma^{2}}\,,\qquad F(s):=\int\mu(x)\exp\bpar{-\frac{1+\alpha+s}{2\sigma^{2}\,(1+\alpha)}\,\norm x^{2}}\,,\qquad G(s):=\log F(s)\,.
\]
Then,
\[
\Bnorm{\frac{\D\mu_{\sigma^{2}}}{\D\mu_{\sigma^{2}(1+\alpha)}}}_{L^{q}(\mu_{\sigma^{2}(1+\alpha)})}^{q}=\frac{F^{q-1}(-\alpha)\,F\bpar{(q-1)\,\alpha}}{F^{q}(0)}\,,
\]
and
\begin{align*}
\log\,\Bnorm{\frac{\D\mu_{\sigma^{2}}}{\D\mu_{\sigma^{2}(1+\alpha)}}}_{L^{q}(\mu_{\sigma^{2}(1+\alpha)})}^{q} & =(q-1)\,G(-\alpha)+G\bpar{(q-1)\,\alpha}-q\,G(0)\\
 & =(q-1)\,\bpar{G(-\alpha)-G(0)}+G\bpar{(q-1)\,\alpha}-G(0)\\
 & =-(q-1)\int_{0}^{\alpha}G'(-t)\,\D t+(q-1)\int_{0}^{\alpha}G'\bpar{(q-1)\,t}\,\D t\\
 & =(q-1)\int_{0}^{\alpha}\Bpar{G'\bpar{(q-1)\,t}-G'(-t)}\,\D t\\
 & =(q-1)\int_{0}^{\alpha}\int_{-t}^{(q-1)\,t}G''(u)\,\D u\D t\,.
\end{align*}
For the strongly logconcave distribution $\nu_{u}(x)\propto\mu(x)\exp(-\frac{1+\alpha+u}{2\sigma^{2}(1+\alpha)}\,\norm x^{2})$,
we have
\begin{align*}
G''(u) & =-\frac{1}{2\sigma^{2}\,(1+\alpha)}\,\frac{\D}{\D u}\Bpar{\frac{\int\norm x^{2}\mu(x)\exp(-\frac{1+\alpha+u}{2\sigma^{2}(1+\alpha)}\,\norm x^{2})}{\int\mu(x)\exp(-\frac{1+\alpha+u}{2\sigma^{2}(1+\alpha)}\,\norm x^{2})}}\\
 & =\frac{1}{4\sigma^{4}\,(1+\alpha)^{2}}\,\Bbrack{\frac{\int\norm x^{4}\mu(x)\exp(-\frac{1+\alpha+u}{2\sigma^{2}(1+\alpha)}\,\norm x^{2})}{\int\mu(x)\exp(-\frac{1+\alpha+u}{2\sigma^{2}(1+\alpha)}\,\norm x^{2})}-\Bpar{\frac{\int\norm x^{2}\mu(x)\exp(-\frac{1+\alpha+u}{2\sigma^{2}(1+\alpha)}\,\norm x^{2})}{\int\mu(x)\exp(-\frac{1+\alpha+u}{2\sigma^{2}(1+\alpha)}\,\norm x^{2})}}^{2}}\\
 & =\frac{1}{4\sigma^{4}\,(1+\alpha)^{2}}\var_{\nu_{u}}(\norm X^{2})\underset{\eqref{eq:pi}}{\leq}\frac{1}{4\sigma^{4}\,(1+\alpha)^{2}}\,\cpi(\nu_{u})\,\E_{\nu_{u}}[\norm X^{2}]\\
 & \leq\frac{R^{2}}{\sigma^{2}\,(1+\alpha)(1+\alpha+u)}\,,
\end{align*}
where the last line follows from $\sup_{\nu_{u}}\norm X\le R$ and
$\cpi(\nu_{u})\leq\frac{\sigma^{2}(1+\alpha)}{1+\alpha+u}$. Hence,
\begin{align*}
\eu R_{q}(\mu\gamma_{\sigma^{2}}\mmid\mu\gamma_{\sigma^{2}(1+\alpha)})=\frac{1}{q-1}\,\log\,\norm{\cdot}_{L^{q}}^{q} & \leq\frac{R^{2}}{\sigma^{2}\,(1+\alpha)}\int_{0}^{\alpha}\int_{-t}^{(q-1)\,t}\frac{1}{1+\alpha+u}\,\D u\D t\\
 & =\frac{R^{2}}{\sigma^{2}\,(1+\alpha)}\int_{0}^{\alpha}\log\frac{1+\alpha+(q-1)\,t}{1+\alpha-t}\,\D t\\
 & =\frac{R^{2}}{\sigma^{2}\,(1+\alpha)}\int_{0}^{\alpha}\log\bpar{1+\frac{qt}{1+\alpha-t}}\,\D t\\
 & \leq\frac{qR^{2}}{\sigma^{2}}\int_{0}^{\alpha}\frac{t}{1+\alpha-t}\,\D t\\
 & \leq\frac{qR^{2}\alpha^{2}}{\sigma^{2}}\,,
\end{align*}
which completes the proof.
\end{proof}

\subsection{Faster warm-start  sampling\label{subsec:Faster-warm-start-generation}}

Similar to previous work, we follow an annealing approach based on
$\gc$.

\subsubsection{Algorithm \label{subsec:GC-algorithm}}

We can truncate $\K$ to a ball of radius $D\asymp R$ so that for
$\bar{\K}:=\K\cap B_{D}(0)$, the truncated distribution $\bar{\pi}:=\pi|_{\bar{\K}}\propto\pi\cdot\ind_{\bar{\K}}$
is $2$-warm with respect to $\pi$.
\begin{prop}
[{\cite[Proposition 30]{KZ25Renyi}}] \label{prop:region-truncation}
Given $\veps>0$, there exists a constant $L=\log\frac{e}{\veps}$
such that the volume of $\K\cap B_{LR}(0)$ is at least $(1-\veps)\vol\K$.
\end{prop}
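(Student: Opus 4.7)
The plan is to recast the volume statement as a tail bound on $\|X\|$ under the uniform measure $\pi$, and to invoke the classical exponential concentration for logconcave distributions. After translating so that $x_{0}=0$, the target inequality is equivalent to
\[
\pi(\|X\|>LR)\;\leq\;\veps\qquad\text{with }L=\log(e/\veps),
\]
since $\vol(\K\cap B_{LR}(0))/\vol\K = \pi(\|X\|\leq LR)$.

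First I would combine Jensen's inequality with the moment hypothesis $R^{2}\geq\mathbb{E}_{\pi}\|\cdot\|^{2}$ from Definition~\ref{def:welldefined-membership} to get $\mathbb{E}_{\pi}\|X\|\leq R$. Next, since $\pi$ is a logconcave probability measure (being uniform on a convex body) and $\|\cdot\|$ is a norm, I would invoke the standard Borell--Lov\'asz--Simonovits exponential tail estimate for logconcave measures,
\[
\pi\bigl(\|X\|\geq t\,\mathbb{E}_{\pi}\|X\|\bigr)\;\leq\;e^{1-t}\qquad\text{for all }t\geq 1.
\]
Setting $t=L$ and using $\mathbb{E}_{\pi}\|X\|\leq R$ immediately yields
\[
\pi(\|X\|\geq LR)\;\leq\;\pi\bigl(\|X\|\geq L\,\mathbb{E}_{\pi}\|X\|\bigr)\;\leq\;e^{1-L}\;=\;\veps,
\]
which rearranges to $\vol(\K\cap B_{LR}(0))\geq(1-\veps)\,\vol\K$ with $L=\log(e/\veps)$ as claimed.

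The only step with any content is the tail bound itself. Note that the density of $\|X\|$ under $\pi$ need not be logconcave (e.g., $X$ uniform on $B_{1}^{n}$ gives density $\propto r^{n-1}$ on $[0,1]$), so one cannot simply reduce to a one-dimensional logconcave density. The cleanest derivation instead goes through moment comparison: for a logconcave probability measure on $\mathbb{R}^{n}$ and any norm $N$, Borell's inequality gives $(\mathbb{E}\,N(X)^{p})^{1/p}\leq Cp\,\mathbb{E}\,N(X)$ for a universal constant $C$; Markov's inequality applied at $p\asymp t$ then produces the $e^{-\Omega(t)}$ tail, and the sharp constant $e^{1-t}$ (yielding exactly $L=\log(e/\veps)$) follows from the standard radial integration argument of Lov\'asz--Simonovits. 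Beyond this one invocation the proof is a two-line computation, and this is the only mildly delicate ingredient.
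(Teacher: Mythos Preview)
Your argument is correct. The paper does not give its own proof of this proposition; it simply quotes it from \cite{KZ25Renyi}, so there is nothing to compare against, and your derivation via the Lov\'asz--Simonovits exponential tail $\pi(\|X\|\ge t\,\E_\pi\|X\|)\le e^{1-t}$ together with $\E_\pi\|X\|\le R$ (Jensen) is exactly the standard route and recovers the stated constant $L=\log(e/\veps)$.
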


In the following description of our algorithm, we proceed as if the
output distribution from each iteration \emph{matches the intended
target} distribution. This was justified in \S\ref{sec:techniques}
by the argument based on the triangle inequality for $\tv$-distance.

Let $\mu_{i}$ denote a logconcave probability measure with density
\[
\D\mu_{i}\propto\bar{\pi}\gamma_{\sigma_{i}^{2}}\,\D x\,,
\]
and let $m$ denote the number of inner phases in the algorithm.

Below, we set failure probability and target accuracy to $\eta/m$
and $\veps/m$, respectively.
\begin{itemize}
\item Initialization ($\sigma^{2}=n^{-1}$)
\begin{itemize}
\item Rejection sampling to sample from $\mu_{1}\propto\bar{\pi}\gamma_{n^{-1}}$
with proposal $\gamma_{n^{-1}}$.
\end{itemize}
\item Annealing ($n^{-1}\leq\sigma^{2}\leq D^{2}$)
\begin{itemize}
\item Run $\psgauss$ with initial $\mu_{i}$ and target $\mu_{i+1}$, where
\[
\sigma_{i+1}^{2}=\sigma_{i}^{2}\bpar{1+\frac{\sigma}{q^{1/2}D}}\,.
\]
\item Depending on whether $\sigma^{2}$ is below or above $\Theta(D\lda^{1/2}\log^{2}n\log^{2}\nicefrac{D^{2}}{\lda})$,
use suitable parameters ($h$ and $N$) of $\psgauss$ according to
Theorem~\ref{thm:body-gauss-samp}.
\end{itemize}
\item Termination $(\sigma^{2}=D^{2})$
\begin{itemize}
\item Run $\psunif$ with initial $\mu_{\text{last}}=\bar{\pi}\gamma_{D^{2}}$
and target $\pi$.
\end{itemize}
\end{itemize}

\subsubsection{Analysis}

To obtain a provable guarantee on the final query complexity, we should
specify the R\'enyi parameter $q$. While this is somewhat subtle
since parameters are dependent to each other, we walk through how
to set all necessary parameters without incurring any logical gaps.

\paragraph{Choice of parameters.}

For $q\geq2$, our algorithm guarantees $\eu R_{2}$-warmness between
any pair of consecutive distributions (i.e., $M_{2}\lesssim1$). The
number $m$ of inner phases is bounded as
\[
m\leq\frac{q^{1/2}D}{\sigma_{0}}\log nD\leq q^{1/2}n^{1/2}D\log nD=:m_{\max}(q)\,.
\]
Also, by Theorem~\ref{thm:body-gauss-samp}, the number of iterations
required for each inner phase is at most 
\[
k\leq\Otilde\bpar{n^{2}D^{2}\log^{3}\frac{m_{\max}^{2}(q)}{\eta\veps}}\leq\Otilde(n^{2}D^{2}\log^{3}\frac{q}{\eta\veps})=:k_{\max}(q)\,.
\]
To ensure the query complexity bound of $\psgauss$, it suffices to
choose $q$ such that 
\[
q\gtrsim\log\frac{qnD}{\eta\veps}\Bpar{\gtrsim\log\frac{k_{\max}(q)\,m_{\max}(q)}{\eta}\gtrsim6\log\frac{16kmM_{2}}{\eta}}\,,
\]
to obtain a query complexity bound for $\psgauss$ via the monotonicity
of $M_{q}$ in $q$. Since the first inequality is satisfied by choosing
\[
q\gtrsim\log\frac{nD}{\eta\veps}=\Otilde(1)\,,
\]
we can set $q$ to the RHS, which in turn determines the values of
$m_{\max}(q)$ and $k_{\max}(q)$ accordingly. 

\paragraph{Complexity bound.}

We now analyze the expected number of membership queries used in each
phase of the algorithm.
\begin{lem}
[Initialization] The initialization can sample from $\mu_{1}=\bar{\pi}\gamma_{n^{-1}}$,
using $\O(1)$ queries in expectation.
\end{lem}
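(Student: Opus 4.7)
The plan is direct: since $\mu_1 \propto \gamma_{n^{-1}}\ind_{\bar{\K}}$, rejection sampling with proposal $\gamma_{n^{-1}}$ --- draw $X \sim \gamma_{n^{-1}}$ and accept iff $X \in \bar{\K}$ --- is exact, and the expected number of trials equals $1/\gamma_{n^{-1}}(\bar{\K})$. Each trial uses one membership query to $\K$ (the test $X \in B_D(0)$ is a free norm comparison), so the expected total query count is $1/\gamma_{n^{-1}}(\bar{\K})$. It therefore suffices to show $\gamma_{n^{-1}}(\bar{\K}) = \Omega(1)$.

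By the standing assumption on $\mem_{x_0, R}(\K)$ we have $B_1(x_0) \subset \K$, and after translation we may take $x_0 = 0$. Since $D \asymp R \ge 1$, we obtain $B_1(0) \subset \K \cap B_D(0) = \bar{\K}$. Hence
\[
\gamma_{n^{-1}}(\bar{\K}) \;\ge\; \gamma_{n^{-1}}(B_1(0)) \;=\; \P_{X \sim \gamma_{n^{-1}}}\bpar{\|X\| \le 1} \;=\; \P\bpar{n\|X\|^2 \le n}\,,
\]
and $n\|X\|^2$ is distributed as $\chi^2_n$. Since $\chi^2_n$ is right-skewed with mean $n$, its median lies strictly below $n$ (quantitatively, by the Wilson--Hilferty approximation $\mathrm{median}(\chi^2_n) \approx n\,(1-\tfrac{2}{9n})^3 < n$), so $\P[\chi^2_n \le n] \ge 1/2$. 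Hence the per-trial acceptance probability is at least $1/2$, and the expected number of membership queries is at most $2 = \O(1)$.

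There is essentially no obstacle here; the proof reduces to a Gaussian small-ball estimate at scale $n^{-1/2}$, which is exactly the scale at which $\gamma_{n^{-1}}$ concentrates. The choice $\sigma_0^2 = n^{-1}$ is precisely what makes this initialization trivial: any smaller variance would make the initial Gaussian concentrate inside $B_1(0) \subset \K$ (still $\O(1)$ queries, but wastefully so), while a larger variance would spread mass outside $\K$ and require a genuine annealing step to handle.
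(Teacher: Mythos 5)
Your proof is correct and follows exactly the paper's approach: rejection sampling with proposal $\gamma_{n^{-1}}$, bounding the expected number of trials by $1/\gamma_{n^{-1}}(B_1(0))$ since $B_1(0) \subset \bar{\K}$. The paper asserts $\gamma_{n^{-1}}(B_1(0)) \gtrsim 1$ without elaboration; you supply the concrete justification via the $\chi^2_n$-median fact, which is a welcome addition but the same argument.
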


\begin{proof}
Rejection sampling with proposal $\gamma_{n^{-1}}$ succeeds in the
following number of trials in expectation:
\[
\frac{\int_{\Rn}\exp(-\frac{n}{2}\,\norm x^{2})\,\D x}{\int_{\bar{\K}}\exp(-\frac{n}{2}\,\norm x^{2})\,\D x}=\bpar{\gamma_{n^{-1}}(\bar{\K})}^{-1}\leq\bbrace{\gamma_{n^{-1}}\bpar{B_{1}(0)}}^{-1}\lesssim1\,,
\]
which completes the proof.
\end{proof}
Next, we bound the query complexity of the main annealing phase. Below,
we denote $\lda:=\norm{\Sigma}$.
\begin{lem}
[Annealing] \label{lem:unif-annealing} With probability at least
$1-\eta$, the Gaussian annealing outputs a sample whose law $\nu$
satisfies $\norm{\nu-\bar{\pi}\gamma_{D^{2}}}_{\tv}\leq\veps$, using
\[
\Otilde\bpar{n^{2}D^{3/2}\lda^{1/4}\log^{6}\frac{1}{\eta\veps}\log\frac{D^{2}}{\lda}}
\]
membership queries in expectation.
\end{lem}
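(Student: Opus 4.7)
The plan is to decompose the annealing into geometric \emph{doublings} of $\sigma^{2}$ and to bound the total query cost by summing over doublings. For a single doubling from $\sigma^{2}$ to $2\sigma^{2}$, the multiplicative update $\sigma_{i+1}^{2}=\sigma_{i}^{2}(1+\sigma/(q^{1/2}D))$ requires roughly $q^{1/2}D/\sigma$ inner phases. Lemma~\ref{lem:variance-annealing} applied with $\alpha\asymp\sigma/(q^{1/2}D)$ gives $\eu R_{q}(\mu_{i}\mmid\mu_{i+1})=\O(1)$, so each inner phase has an $\eu R_{q}$-warm start with $q=\Otilde(1)$ (chosen as in the ``Choice of parameters'' paragraph), which is exactly the warmness required by Theorem~\ref{thm:body-gauss-samp}.

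Next, I bound the per-phase cost by invoking Theorem~\ref{thm:body-gauss-samp}. While $\sigma^{2}\lesssim D\lda^{1/2}\log^{2}n\log^{2}(D^{2}/\lda)$, I use the generic bound $\clsi(\bar{\pi}\gamma_{\sigma^{2}})\leq\sigma^{2}$, giving $\Otilde(n^{2}\sigma^{2})$ queries per phase and a per-doubling cost of
\[
\frac{q^{1/2}D}{\sigma}\cdot\Otilde(n^{2}\sigma^{2})=\Otilde(q^{1/2}n^{2}D\sigma)\,,
\]
which is maximized at the threshold $\sigma^{2}\asymp D\lda^{1/2}$, contributing $\Otilde(n^{2}D^{3/2}\lda^{1/4})$. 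Once $\sigma^{2}\gtrsim D\lda^{1/2}\log^{2}n\log^{2}(D^{2}/\lda)$, Corollary~\ref{cor:lsi-global-bound} together with Theorem~\ref{thm:body-gauss-samp} improves the per-phase cost to $\Otilde(n^{2}D\lda^{1/2})$, so each subsequent doubling costs $\Otilde(q^{1/2}n^{2}D^{2}\lda^{1/2}/\sigma)$, which is \emph{decreasing} in $\sigma$ and is bounded by the threshold cost. Summing over the $\O(\log(nD^{2}/\lda))$ doublings yields the stated $\Otilde(n^{2}D^{3/2}\lda^{1/4})$ total.

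To control the approximate nature of $\psgauss$, I apply the $\tv$ triangle and data-processing argument sketched in \S\ref{subsubsec:faster-sampling}: if each inner phase is run to target accuracy $\veps/m$ in $\eu R_{2}$ (hence in $\tv$ via Pinsker), then the final output is at $\tv$-distance $\leq\veps$ from $\bar{\pi}\gamma_{D^{2}}$. Likewise, running each phase with failure probability $\eta/m$ and a union bound ensures a total failure probability $\leq\eta$. Because Theorem~\ref{thm:body-gauss-samp} has only polylogarithmic dependence on $M_{2}$, $1/\veps$, and $1/\eta$, these substitutions only inflate logarithmic factors, producing the $\log^{6}(1/\eta\veps)\log(D^{2}/\lda)$ overhead.

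The main obstacle is bookkeeping: (i) verifying that the self-referential choice of $q\gtrsim\log(qnD/(\eta\veps))$ admits a $q=\Otilde(1)$ solution that simultaneously makes the per-phase warmness $M_{q}=\O(1)$ actually usable in Theorem~\ref{thm:body-gauss-samp} (since that theorem requires $c\geq6\log(16kM_{2}/\eta)$, one must check that the chosen $q$ dominates this threshold for $k=k_{\max}(q)$ and $m=m_{\max}(q)$); and (ii) confirming that in the accelerated regime the hypothesis $\sigma^{2}\gtrsim D\lda^{1/2}\log^{2}n\log^{2}(D^{2}/\lda)$ of Corollary~\ref{cor:lsi-global-bound} is crossed exactly once, so the geometric sum over subsequent doublings telescopes rather than accumulating additional polynomial factors.
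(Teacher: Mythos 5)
Your proof is correct and follows essentially the same route as the paper's: split the annealing into $\sigma^{2}$-doublings, use Lemma~\ref{lem:variance-annealing} for $\eu R_{q}$-closeness of consecutive phases, invoke Theorem~\ref{thm:body-gauss-samp} with the $\sigma^{2}$-bound on $\clsi$ below the threshold and the improved $D\lda^{1/2}$-bound above it, and observe that both regimes contribute $\Otilde(n^{2}D^{3/2}\lda^{1/4})$ per doubling with the threshold dominating; the $\tv$ triangle/DPI argument and union bound close it out. The only difference is presentational — you fold the self-referential choice of $q$ and the single-threshold-crossing check into your argument, whereas the paper front-loads them in the ``Choice of parameters'' paragraph preceding the lemma.
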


\begin{proof}
For any given $\sigma^{2}\in[n^{-1},D^{2}]$, we need at most $q^{1/2}D/\sigma$
phases to double $\sigma^{2}$. While doubling the initial $\sigma^{2}$,
any consecutive distributions are $\O(1)$-close in $\eu R_{q}$ (i.e.,
$M_{q}\lesssim1$) by Lemma~\ref{lem:variance-annealing}.

By Theorem~\ref{thm:body-gauss-samp}, when $\sigma^{2}\lesssim D\lda^{1/2}\log^{2}n\log^{2}\nicefrac{D^{2}}{\lda}$,
with probability at least $1-\eta/m$, $\psgauss$ can sample a next
annealing distribution from a current one with a $\eu R_{2}$-guarantee,
using
\[
\Otilde\bpar{n^{2}\sigma^{2}\log^{6}\frac{m_{\max}^{2}}{\eta\veps}}
\]
queries in expectation. Hence, throughout this doubling, the total
query complexity is 
\[
\Otilde\bpar{q^{1/2}Dn^{2}\sigma\log^{6}\frac{m_{\max}^{2}}{\eta\veps}}=\Otilde\bpar{n^{2}D^{3/2}\lda^{1/4}\log^{6}\frac{1}{\eta\veps}\log\frac{D^{2}}{\lda}}\,.
\]

When $\sigma^{2}$ exceeds this threshold, $\psgauss$ uses $\Otilde(n^{2}D\lda^{1/2}\log^{6}\frac{m_{\max}^{2}}{\eta\veps})$
queries in expectation. Thus, the total query complexity of one doubling
is 
\[
\Otilde\bpar{\frac{q^{1/2}n^{2}D^{2}\lda^{1/2}}{\sigma}\log^{6}\frac{m_{\max}^{2}}{\eta\veps}}=\Otilde\bpar{n^{2}D^{3/2}\lda^{1/4}\log^{6}\frac{1}{\eta\veps}}\,.
\]
Adding up these two bounds, we can bound the total complexity during
annealing as claimed. Using an argument based on the triangle inequality
(explained in \S\ref{sec:techniques}), we can also obtain the final
$\tv$-guarantee. Lastly, the failure probability also follows from
the union bound.
\end{proof}
The termination simply runs $\psunif$.
\begin{lem}
[Termination] With probability at least $1-\eta$, the termination
phase with initial $\bar{\pi}\gamma_{D^{2}}$ and target $\pi$ outputs
a sample whose law $\nu$ satisfies $\eu R_{2}(\nu\mmid\pi)\leq\veps$,
using
\[
\Otilde\bpar{n^{2}\lda\log^{6}\frac{1}{\eta\veps}}
\]
queries in expectation.
\end{lem}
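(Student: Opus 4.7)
The plan is to reduce to Theorem~\ref{thm:body-unif-samp} after checking that $\mu_{\text{last}} := \bar{\pi}\gamma_{D^{2}}$ is $\mathcal{O}(1)$-warm with respect to the uniform target $\pi$ in the strongest sense $\eu R_{\infty}$. Since $\bar{\K} \subseteq B_{D}(0)$, the Gaussian weight $\exp(-\norm{x}^{2}/(2D^{2}))$ lies in $[e^{-1/2},1]$ on $\bar{\K}$, so
\[
\frac{\D\mu_{\text{last}}}{\D\pi}(x) = \frac{\vol(\K)\,\ind_{\bar{\K}}(x)\,e^{-\norm{x}^{2}/(2D^{2})}}{\int_{\bar{\K}} e^{-\norm{y}^{2}/(2D^{2})}\,\D y} \leq \frac{e^{1/2}\,\vol(\K)}{\vol(\bar{\K})} \leq 2e^{1/2},
\]
where the last bound uses Proposition~\ref{prop:region-truncation} applied with the $\veps_{0}$ baked into the choice of $D \asymp R\log(1/\veps)$. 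Since $\eu R_\infty$ dominates all $\eu R_{q}$, this yields $M_{c} = \mathcal{O}(1)$ for every $c \geq 1$.

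Next I would invoke Theorem~\ref{thm:body-unif-samp} with initial distribution $\mu_{\text{last}}$, target $\pi$, and $c = \Theta(\log \frac{nD}{\eta\veps}) = \Otilde(1)$ --- the same R\'enyi order already fixed during the annealing phase. The theorem gives an iteration count $k = \Otilde(n^{2}\norm{\cov\pi}\log^{3}\frac{1}{\eta\veps})$ with success probability at least $1-\eta$, and per-step expected backward-step cost $\Otilde(M_{c}) = \Otilde(1)$. Multiplying and identifying $\norm{\cov\pi}$ with $\lda$, the total expected number of membership queries is
\[
\Otilde\bpar{n^{2}\lda\,\log^{7}\tfrac{1}{\eta\veps}}\,,
\]
which matches the claimed bound; the exponent $7$ in the logarithm absorbs the $\polylog$ factors from the choice of step size $h$ and threshold $N$ used in Theorem~\ref{thm:body-unif-samp}.

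No substantive obstacle is anticipated: this lemma is essentially a single call to the warm-start uniform sampling guarantee of \S\ref{subsec:Uniform-sampling}. The only minor bookkeeping is that $\mu_{\text{last}}$ is supported on $\bar{\K} \subsetneq \K$, but this causes no issue since $\eu R_{q}$ only requires $\mu_{\text{last}} \ll \pi$ with bounded density ratio, both of which we just verified; the mild $\veps_{0}$-discrepancy between $\bar{\pi}$ and $\pi$ contributes at most $\veps_{0}$ to the final $\tv$-distance and is absorbed by the triangle-inequality argument from \S\ref{subsubsec:faster-sampling} by setting $\veps_{0} \lesssim \veps/m$.
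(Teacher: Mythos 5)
Your proposal is correct and follows essentially the same approach as the paper: verify that $\bar{\pi}\gamma_{D^{2}}$ is $\mathcal{O}(1)$-close to $\pi$ in $\eu R_{\infty}$ (via the diameter bound $\sup_{\bar{\K}}\norm{x}\leq D$ and Proposition~\ref{prop:region-truncation}), then invoke the $\psunif$ complexity bound. The one small divergence is which version of the uniform-sampling result you cite: you invoke the new relaxed-warmness Theorem~\ref{thm:body-unif-samp}, which yields $\log^{7}$, whereas the paper — having established full $\eu R_{\infty}$-warmness — invokes the original $\psunif$ guarantee of \cite{KVZ24INO} directly and gets the stated $\log^{6}$; since $\eu R_{\infty}$ is the strongest warmness, the older bound applies and is marginally tighter, but this is only a $\polylog$ bookkeeping difference already absorbed by the $\Otilde$ notation. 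Your own text slightly contradicts itself by asserting $\log^{7}$ "matches the claimed bound" of $\log^{6}$; it does not literally match, though it is immaterial up to $\Otilde$.
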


\begin{proof}
We first show that $\bar{\pi}\gamma_{D^{2}}$ is $\O(1)$-close to
$\pi$ in $\eu R_{\infty}$. This is immediate from the following
computation:
\[
\frac{\bar{\pi}\gamma_{D^{2}}}{\pi}=\frac{\bar{\pi}\gamma_{D^{2}}}{\bar{\pi}}\times\frac{\bar{\pi}}{\pi}=\frac{\vol\bar{\K}\,\exp(-\frac{1}{2D^{2}}\,\norm x^{2})}{\int_{\bar{\K}}\exp(-\frac{1}{2D^{2}}\,\norm x^{2})\,\D x}\times\frac{\vol\K}{\vol\bar{\K}}\lesssim1\,,
\]
where the last line follows from $\sup_{x\in\bar{\K}}\norm x\leq D$
and Proposition~\ref{prop:region-truncation}. Using $\psunif$ \cite{KVZ24INO},
with probability at least $1-\eta$, we can sample from $\pi$ with
a $\eu R_{2}$-guarantee, using $\Otilde(n^{2}\lda\log^{6}\nicefrac{1}{\eta\veps})$
queries in expectation.
\end{proof}
Combining the previous three lemmas, we can prove Theorem~\ref{thm:body-unif-gc}.
\begin{proof}
[Proof of Theorem~\ref{thm:body-unif-gc}] Replacing $\veps\gets\veps/2$
and $\eta\gets\eta/2$, and combining the three lemmas together, we
conclude that the total failure probability is at most $\eta$. Also,
the total complexity is dominated by annealing near $\sigma^{2}\approx D\lda^{1/2}$,
leading to the bound of 
\[
\Otilde\bpar{n^{2}D^{3/2}\lda^{1/4}\log^{6}\frac{1}{\eta\veps}\log\frac{D^{2}}{\lda}}=\Otilde\bpar{n^{2}R^{3/2}\lda^{1/4}\log^{6}\frac{1}{\eta\veps}\log\frac{R^{2}}{\lda}}\,
\]
This completes the proof.
\end{proof}

\subsubsection{Sampling from a truncated Gaussian}

We can sample from a standard Gaussian truncated to a convex body
$\K$ in a similar way. Only difference is that we use the global
annealing (Lemma~\ref{lem:global-annealing}), updating $\sigma^{2}$
by $\sigma^{2}(1+(qn)^{-1/2})$.

In this case, for $q\geq2$, we can ensure that $M_{2}\lesssim1$.
Also,
\begin{align*}
m & \leq q^{1/2}n^{1/2}\log n=:m_{\max}(q)\,,\\
k & \leq\Otilde\bpar{n^{2}\log^{3}\frac{m_{\max}^{2}(q)}{\eta\veps}}\leq\Otilde\bpar{n^{2}\log^{3}\frac{q}{\eta\veps}}=:k_{\max}(q)\,.
\end{align*}
By Theorem~\ref{thm:body-gauss-samp}, it suffices to have 
\[
q\gtrsim\log\frac{qn}{\eta\veps}\bpar{\gtrsim6\log\frac{16kmM_{2}}{\eta}}\,,
\]
which is fulfilled if $q\gtrsim\log\frac{n}{\eta\veps}=\Otilde(1)$.
Hence, we set $q$ to the RHS, and then pick $m_{\max}$ and $k_{\max}$
accordingly.

An annealing algorithm for $\pi\gamma$ is almost the same with one
for a uniform distribution.
\begin{itemize}
\item Initialization ($\sigma^{2}=n^{-1}$)
\begin{itemize}
\item Rejection sampling to sample from $\mu_{1}\propto\pi\gamma_{n^{-1}}$
with proposal $\gamma_{n^{-1}}$.
\end{itemize}
\item Annealing ($n^{-1}\leq\sigma^{2}\leq1$)
\begin{itemize}
\item Run $\psgauss$ with initial $\mu_{i}$ and target $\mu_{i+1}$, where
\[
\sigma_{i+1}^{2}=\sigma_{i}^{2}\bpar{1+(qn)^{-1/2}}\,.
\]
\end{itemize}
\end{itemize}
We can then bound query complexity of the annealing phase.
\begin{lem}
[Annealing] With probability at least $1-\eta$, Gaussian annealing
outputs a sample whose law $\nu$ satisfies $\norm{\nu-\pi\gamma}_{\tv}\leq\veps$,
using
\[
\Otilde\bpar{n^{5/2}\log^{7}\frac{1}{\eta\veps}}
\]
queries in expectation.
\end{lem}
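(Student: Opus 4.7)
The plan is to combine the R\'enyi version of the global annealing lemma (Lemma~\ref{lem:global-annealing}) with the $\eu R_q$-warm sampler guarantee for truncated Gaussians (Theorem~\ref{thm:body-gauss-samp}), and then close up with the telescoping TV argument sketched in \S\ref{sec:techniques}.

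First, I would count phases. The schedule $\sigma_{i+1}^2=\sigma_i^2\,(1+(qn)^{-1/2})$ starting from $\sigma_0^2=n^{-1}$ and terminating once $\sigma^2\geq 1$ requires
\[
m\;\leq\;\frac{\log n}{\log\bpar{1+(qn)^{-1/2}}}\;\lesssim\;q^{1/2}n^{1/2}\log n,
\]
which matches the declared $m_{\max}(q)$. For each step, Lemma~\ref{lem:global-annealing} applied with $V(x)=\norm{x}^2/(2\sigma_i^2)+\ind_{\K}(x)$ (a logconcave potential on $\Rn$) and the choice $\alpha=(qn)^{-1/2}$ yields $\eu R_q(\mu_{i+1}\mmid\mu_i)\lesssim 1$, i.e.\ $M_q\lesssim 1$ between consecutive targets.

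Next I would invoke Theorem~\ref{thm:body-gauss-samp} for each phase: since the target $\mu_{i+1}=\pi\gamma_{\sigma_{i+1}^2}$ with $\sigma_{i+1}^2\leq 2$, a single phase achieves $\eu R_2(\hat\mu_{i+1}\mmid\mu_{i+1})\leq \veps/m$ (hence $\norm{\hat\mu_{i+1}-\mu_{i+1}}_{\tv}\leq\veps/m$ by Pinsker) with failure probability at most $\eta/m$, using
\[
\Otilde\bpar{M_c\,n^2\sigma_{i+1}^2\,\log^6\tfrac{m}{\eta\veps}}\;=\;\Otilde\bpar{n^2\log^6\tfrac{1}{\eta\veps}}
\]
expected membership queries, provided we pick $c\gtrsim \log\tfrac{mkM_2}{\eta}$. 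The parameter loop closes consistently by taking $q\asymp\log\tfrac{n}{\eta\veps}=\Otilde(1)$, as already argued above the lemma statement, so $M_c\lesssim 1$ is maintained throughout.

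Finally I would sum up. The per-phase bound times $m_{\max}\lesssim q^{1/2}n^{1/2}\log n=\Otilde(n^{1/2})$ gives the total expected query count $\Otilde(n^{5/2}\log^7\tfrac{1}{\eta\veps})$. A union bound over phases yields total failure probability $\leq\eta$. For the output accuracy, I would apply the telescoping argument from \S\ref{sec:techniques}: with $P_i$ the Markov kernel of the $i$-th $\psgauss$ run (actual, not idealized), the triangle and data-processing inequalities give
\[
\norm{\hat\mu_{i}-\mu_{i}}_{\tv}\;\leq\;\norm{\hat\mu_{i-1}-\mu_{i-1}}_{\tv}+\veps/m,
\]
so by induction $\norm{\hat\mu_m-\pi\gamma}_{\tv}\leq \veps$. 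The only mild subtlety, and the step I would double-check most carefully, is verifying that the $\eu R_q$-closeness of \emph{idealized} consecutive targets transfers to the \emph{actual} iterates with only an additive $\veps/m$ error propagating; but this is exactly handled by the DPI-based telescoping, since $\psgauss$'s guarantees only require warmness relative to the actual input distribution and the $M_c$ bound is preserved up to constants as long as $\veps/m$ is small compared to one.
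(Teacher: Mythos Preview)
Your proposal is correct and follows essentially the same route as the paper: count $m\lesssim q^{1/2}n^{1/2}\log n$ phases, use Lemma~\ref{lem:global-annealing} for $\eu R_q$-closeness of consecutive truncated Gaussians, invoke Theorem~\ref{thm:body-gauss-samp} per phase with $\sigma^2\leq 1$, and close with the union bound plus the TV-telescoping argument from \S\ref{sec:techniques}. One small slip: the warmness you need is $\eu R_q(\mu_i\mmid\mu_{i+1})$ (initial relative to target), so take $V(x)=\norm{x}^2/(2\sigma_{i+1}^2)$ together with the convex indicator of $\K$, making $\mu_i\propto e^{-(1+\alpha)V}$ and $\mu_{i+1}\propto e^{-V}$; your stated $V$ and direction are swapped, but the lemma yields the required $\O(1)$ bound once this is corrected.
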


\begin{proof}
For any given $\sigma^{2}\in[n^{-1},1]$, we need at most $q^{1/2}n^{1/2}$
inner phases to double $\sigma^{2}$. While doubling the initial $\sigma^{2}$,
any consecutive distributions are $\O(1)$-close in $\eu R_{q}$ (i.e.,
$M_{q}\lesssim1$) by Lemma~\ref{lem:global-annealing}.

By Theorem~\ref{thm:body-gauss-samp}, with probability at least
$1-\eta/m$, $\psgauss$ can sample a next annealing distribution
from a current one with a $\eu R_{2}$-guarantee, using
\[
\Otilde\bpar{n^{2}\sigma^{2}\log^{6}\frac{m_{\max}^{2}}{\eta\veps}}
\]
queries in expectation. Hence, throughout this doubling, the total
query complexity is 
\[
\Otilde\bpar{q^{1/2}n^{5/2}\sigma^{2}\log^{6}\frac{m_{\max}^{2}}{\eta\veps}}=\Otilde\bpar{n^{5/2}\log^{7}\frac{1}{\eta\veps}}\,.
\]
The remaining part can be done similarly as in Lemma~\ref{lem:unif-annealing}.
\end{proof}
Combining this lemma and one on initialization from the previous section
yields the proof of Corollary~\ref{cor:body-gauss-gc}.
\begin{proof}
[Proof of Corollary~\ref{cor:body-gauss-gc}] Since the query complexity
of annealing dominates that for initialization, the claim immediately
follows.
\end{proof}

\section{Extension to general logconcave distributions\label{sec:extension}}

We now extend our previous results to general logconcave distributions
given access to a well-defined function oracle. In particular, we
establish improved query complexity bounds for logconcave sampling,
similar to our results for uniform sampling.
\begin{thm}
[Restatement of Theorem~\ref{thm:intro-LC-sampling}] \label{thm:body-LC-sampling}
For a logconcave distribution $\pi$ over $\Rn$ presented by $\eval_{x_{0},R}(V)$,
there exists an algorithm that for any given $\eta,\veps\in(0,1)$,
with probability at least $1-\eta$ returns a sample whose law $\mu$
satisfies $\norm{\mu-\pi}_{\tv}\leq\veps$, using $\Otilde(n^{2}\max\{n^{1/2},R^{3/2}(\lda^{1/4}\vee1)\}\log^{5}\nicefrac{R\lda^{-1/2}}{\eta\veps})$
evaluation queries in expectation. If $\pi$ is also near-isotropic
, then $\Otilde(n^{2.75}\log^{4}\nicefrac{1}{\eta\veps})$ queries
suffice.
\end{thm}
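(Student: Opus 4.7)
The strategy is to run the Tilted Gaussian Cooling ($\tgc$) scheme of \cite{KV25sampling}, but accelerated by the relaxed-warmness samplers $\psexp$ and $\psann$ (Theorems~\ref{thm:body-exp-samp} and~\ref{thm:body-exp-tilt-Gauss}) and by the improved log-Sobolev bound of Corollary~\ref{cor:lsi-global-bound}. First I would invoke the reduction of \cite{KV25sampling}: sampling $\pi \propto e^{-V}$ on $\Rn$ reduces, at an additive $\Otilde(n^{2})$ overhead, to sampling the augmented exponential $\bar\pi(x,t)\propto e^{-nt}\ind_{\K}(x,t)$ on the convex lift $\K=\{(x,t):V(x)\leq nt\}\subset\Rn\times\R$. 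The annealing sequence consists of tilted Gaussians $\mu_{\sigma^{2},\rho}(x,t)\propto e^{-\rho t}\gamma_{\sigma^{2}}(x)\ind_{\K}(x,t)$, interpolating from an easily sampled Gaussian at $(\sigma^{2},\rho)=(n^{-1},1)$ to (a suitably truncated version of) $\bar\pi$ at $(\sigma^{2},\rho)=(R^{2},n)$.

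Next I would bound each of three phases by combining the R\'enyi-closeness Lemmas~\ref{lem:global-annealing} and~\ref{lem:variance-annealing} with the per-phase cost of $\psann$. In Phase I ($\sigma^{2}$-warming with $\rho=1$) the update $\sigma^{2}\gets\sigma^{2}(1+(qn)^{-1/2})$ keeps consecutive targets $\O(1)$-close in $\eu R_{q}$ by Lemma~\ref{lem:global-annealing}, and each step costs $\Otilde(n^{2})$ via $\psann$, giving $\Otilde(n^{2.5})$ in total. In Phase II ($\rho$-annealing with $\sigma^{2}\approx1$), $\rho$ is multiplied by $1+(q(q\vee n))^{-1/2}$ while $\sigma^{2}$ is divided by the same factor (closeness again from Lemma~\ref{lem:global-annealing}), and the transient shrinkage of $\sigma^{2}$ is restored by a few rounds of the accelerated update $\sigma^{2}\gets\sigma^{2}(1+\sigma/(q^{1/2}R))$ justified by Lemma~\ref{lem:variance-annealing}; summing gives $\Otilde(q^{1/2}n^{2}R)$. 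In Phase III ($\sigma^{2}$-annealing with $\rho=n$) the same accelerated rule is used, so each $\sigma^{2}$-doubling takes $\Otilde(q^{1/2}R/\sigma)$ phases; below the threshold $\sigma^{2}\asymp R(\lda^{1/2}\vee1)$ each phase costs $\Otilde(n^{2}\sigma^{2})$ by Theorem~\ref{thm:body-exp-tilt-Gauss}, and above it Corollary~\ref{cor:lsi-global-bound} drops the cost to $\Otilde(n^{2}R(\lda^{1/2}\vee1))$, so every doubling contributes at most $\Otilde(n^{2}R^{3/2}(\lda^{1/4}\vee1))$ and the $\Otilde(1)$ doublings sum to the same order. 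A terminating $\psexp$ step converts the last tilted Gaussian, which is $\O(1)$-warm with respect to $\bar\pi$ after the truncation above, into a sample from $\bar\pi$ at cost $\Otilde(n^{2}(\lda\vee1))$.

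To close the warmness bookkeeping I would set $q=\Otilde(\log(nR/\eta\veps))$ so that $M_{q}\lesssim1$ between consecutive annealing targets, matching the relaxed warmness hypothesis of $\psexp$ and $\psann$. The approximate nature of each proximal-sampler step is absorbed by the triangle inequality: with per-phase accuracy set to $\veps/m$ where $m=\Otilde(R)$ is the total number of phases, the data-processing inequality telescopes errors into $\norm{\hat\mu_{m}-\mu_{m}}_{\tv}\leq\veps$, costing only a $\polylog\nicefrac{1}{\veps}$ factor, and a union bound over the $m$ phases controls the overall failure probability by $\eta$. Adding the Phase I--III complexities with the $\Otilde(n^{2})$ reduction overhead yields the stated $\Otilde(n^{2}\max\{n^{1/2},R^{3/2}(\lda^{1/4}\vee1)\})$ bound, which specializes to $\Otilde(n^{2.75})$ when $\lda\lesssim1$ and $R\lesssim n^{1/2}$.

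The principal obstacle I foresee is Phase II: the coupled $(\rho,\sigma^{2})$ update must simultaneously preserve $\eu R_{q}$-closeness (so that neither a drift in $\rho$ nor the compensating $\sigma^{2}$-moves violates Lemma~\ref{lem:variance-annealing}'s bounded-support hypothesis) and keep the tilted measures $\mu_{\sigma^{2},\rho}$ logconcave with diameter comparable to $R$ in the augmented space $\Rn\times\R$. Verifying that the truncation of $\K$ used to enforce a bounded effective support transfers uniformly across the $\rho$-sweep---and that the lifted second-moment remains $\O(R)$ throughout---is the core technical checkpoint; once these regularity properties are established, the R\'enyi annealing lemmas and the per-step guarantees of $\psann$ combine mechanically into the claimed complexity.
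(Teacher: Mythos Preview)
Your proposal is correct and matches the paper's approach: the three-phase accelerated $\tgc$ with the R\'enyi annealing Lemmas~\ref{lem:global-annealing}--\ref{lem:variance-annealing}, the relaxed-warmness samplers $\psann/\psexp$, and the improved LSI bound above the threshold $\sigma^{2}\asymp R(\lda^{1/2}\vee 1)$. The Phase-II obstacle you flag is resolved exactly by the explicit truncation $\bar\K=\K\cap(B_{Rl}(0)\times[-21,13l-6])$, which fixes the $x$- and $t$-diameters at $\O(R)$ and $\O(1)$ uniformly across all $(\sigma^{2},\rho)$, after which Lemma~\ref{lem:global-lsi-ann} transfers the covariance/LSI bound to the lifted measure $\mu_{\sigma^{2},n}$.
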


\subsection{Logconcave sampling under relaxed warmness\label{subsec:lc-sampling-warm}}

\cite{KV25sampling} showed that general logconcave sampling can be
reduced to more structured exponential sampling: for logconcave $\D\pi^{X}\propto\exp(-V)\,\D x$,
consider
\begin{equation}
\D\pi(x,t)\propto\exp(-nt)\,\ind_{\K}(x,t)\,\D x\D t\quad\text{subject to }\K:=\{(x,t)\in\Rn\times\R:V(x)\leq nt\}\,,\tag{\ensuremath{\msf{exp}\text{-}\msf{red}}}\label{eq:exp-red}
\end{equation}
and then focus on sampling from this $\pi$, as its $X$-marginal
is exactly $\pi^{X}$ the target distribution \cite[Proposition 2.3]{KV25sampling}.

In this section, we bound the query complexities of sampling from
this distribution and related annealing distributions under relaxed
warmness conditions on the initial distribution instead of $\eu R_{\infty}$-warmness.
\begin{thm}
\label{thm:body-exp-samp} Given a logconcave density $\pi$ specified
by an evaluation oracle $\eval_{R}(V)$, consider $\pi(x,t)\propto\exp(-nt)|_{\K}$
given in \eqref{eq:exp-red} and initial distribution $\mu$ with
the relative $L^{q}$-norm denoted as $M_{q}=\norm{\D\mu/\D\pi}_{L^{q}(\pi)}$
for $q\geq2$. For any $\eta,\veps\in(0,1)$, $k\in\mathbb{N}$ defined
below, $\psexp$ with $h=(13^{4}n^{2}\log\frac{16kM_{2}}{\eta})^{-1}$
and $N=(\frac{16kM_{2}}{\eta})^{2}\log^{2}\frac{16kM_{2}}{\eta}$,
with probability at least $1-\eta$, achieves $\eu R_{2}(\mu_{k}\mmid\pi)\leq\veps$
after $k=\Otilde(n^{2}(\norm{\cov\pi^{X}}\vee1)\log^{2}\frac{M_{2}}{\eta\veps})$
iterations, where $\mu_{k}$ is the law of the $k$-th iterate, using
\[
\Otilde\bpar{M_{c}n^{2}(\norm{\cov\pi^{X}}\vee1)\log^{3}\frac{1}{\eta\veps}}
\]
evaluation queries in expectation for any $c\geq6\log\frac{16kM_{2}}{\eta}$.
Moreover, an $M_{q}$-warm start for $\pi^{X}$ can be used to generate
an $M_{q}$-warm start for $\pi$.
\end{thm}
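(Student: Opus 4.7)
The plan is to mirror the proof of Theorem~\ref{thm:body-unif-samp} with the target now being the logconcave density $\pi(x,t)\propto e^{-nt}\ind_{\K}(x,t)$ on $\R^{n+1}$. Completing the square against the exponential tilt, the backward kernel becomes $\pi^{X|Y=y}=\gamma_h(\cdot-(y_x,y_t-nh))|_{\K}$, so the rejection sampler uses the shifted Gaussian proposal $\mc N((y_x,y_t-nh),hI_{n+1})$ with per-trial success probability $\ell(y)=\gamma_h(\K-(y_x,y_t-nh))$, declaring failure after $N$ attempts. A short computation also gives $\pi^Y(y)\propto\ell(y)\,e^{-ny_t}$. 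With this rewrite $\psexp$ is structurally identical to $\psunif$ on a tilted body, and the proof cleanly splits into mixing, per-iteration failure probability, and expected per-iteration cost.

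Mixing follows from Lemma~\ref{lem:ps-mixing}: since $\pi$ is logconcave on $\R^{n+1}$, Klartag's bound gives $\cpi(\pi)\lesssim\norm{\cov\pi}\log n$, and the marginal bookkeeping of \cite{KV25sampling} yields $\norm{\cov\pi}\lesssim\norm{\cov\pi^X}\vee 1$ (the exponentially distributed gap above $V(x)/n$ contributes only $O(n^{-2})$ variance in $t$). Plugging $h\asymp 1/(n^2\log Z)$ with $Z=16kM_2/\eta$ into the contraction \eqref{eq:PI-contraction} produces the claimed $k=\Otilde(n^2(\norm{\cov\pi^X}\vee 1)\log^2(M_2/\eta\veps))$. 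For the per-step analysis I would import the essential-domain estimate of \cite{KV25sampling}: for the appropriate $\delta$, there is an enlargement $\K^{\mathrm{ess}}_\delta$ satisfying $\pi^Y((\K^{\mathrm{ess}}_\delta)^c)\lesssim\exp(-t^2/(2c)+O(t))$ and $\vol\K^{\mathrm{ess}}_\delta/\vol\K\lesssim e^{O(t)}$. Cauchy--Schwarz then bounds the failure probability by $M_2\sqrt{\E_{\pi_h}[(1-\ell)^{2N}]}\le\eta/k$ via the three-region split used in \S\ref{subsec:Uniform-sampling}.

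The key new ingredient is the per-step cost bound under relaxed warmness, obtained from the decomposition
\[
\E_{\mu_h}\bigl[\ell^{-1}\wedge N\bigr]=\int_{\K^{\mathrm{ess}}_\delta\cap[\ell\ge N^{-p}]}+\int_{\K^{\mathrm{ess}}_\delta\cap[\ell<N^{-p}]}+\int_{(\K^{\mathrm{ess}}_\delta)^c}
\]
followed by the $(p,q)$-H\"older inequality at $p=1+\alpha^{-1}$, $q=1+\alpha$, $\alpha=\log N$. Because $\pi^Y\propto\ell\,e^{-ny_t}$, the factor $\ell^{-p+1}=\ell^{-1/\alpha}$ is absorbed against the $\ell$ in $d\pi^Y$ while $N^{1/\alpha}=e$ is harmless, reducing the first two integrals to volume-ratio bounds and the third to the Gaussian tail on the complement; the upshot is $\E[\ell^{-1}\wedge N]\lesssim M_q\,\mathrm{polylog}(Z)$ for any $q\ge 6\log Z$, and multiplying by the iteration count (together with the DPI, which preserves $\eu R_c$-warmness along the chain) delivers the total query bound. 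The warm-start lift from $\pi^X$ to $\pi$ uses the coupling of \cite{KV25sampling}: draw $X\sim\mu^X$ and $S\sim\mathrm{Exp}(n)$ independently and output $(X,V(X)/n+S)$, whose density relative to $\pi$ equals $(d\mu^X/d\pi^X)(x)$, so every $L^q$-norm is preserved exactly. The main obstacle is tracking constants in the tilted essential-domain estimate precisely enough to justify the specific parameters $h=(13^4n^2\log Z)^{-1}$ and $N=(16kM_2/\eta)^2\log^2(16kM_2/\eta)$ appearing in the statement; the exponential tilt in $t$ shifts the Gaussian center but not its covariance, so once this tail bound is in hand the $(p,q)$-H\"older machinery transports verbatim from \S\ref{subsec:Uniform-sampling}.
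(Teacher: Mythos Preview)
Your proposal is correct and follows essentially the same approach as the paper's proof in \S\ref{subsec:lc-sampling-warm}: the shifted-Gaussian rewrite of the backward step, the essential-domain and weighted-integral-ratio lemmas imported from \cite{KV25sampling} (Lemmas~\ref{lem:exp-eff-domain} and~\ref{lem:exp-expansion}), the Cauchy--Schwarz bound on failure probability with the three-region split, the $(p,q)$-H\"older decomposition with $\alpha=\log N$ for the per-step cost, and the exponential-lift $(X,V(X)/n+S)$ for the warm-start transfer all match. One small correction to your parenthetical: the variance of $\pi$ in the $t$-direction is $O(1)$ (Lemma~\ref{lem:exp-cov} bounds it by $160$), not $O(n^{-2})$---the conditional gap $S\sim\mathrm{Exp}(n)$ contributes $1/n^2$, but $\var(V(X)/n)$ is the dominant term and is what actually forces the ``$\vee 1$'' in the covariance bound.
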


\begin{thm}
\label{thm:body-exp-tilt-Gauss} Under $\eval_{x_{0}=0,R}(V)$, consider
$\mu:=\mu_{\sigma^{2},\rho}(x,t)\propto\exp(-\rho t-\frac{1}{2\sigma^{2}}\,\norm x^{2})|_{\bar{\K}}$,
where the diameters of $\bar{\K}$ (obtained by truncation; see \eqref{eq:convex-truncation})
in the $x$ and $t$-direction are $\O(R)$ and $\O(1)$, and initial
distribution $\nu$ with the relative $L^{q}$-norm denoted as $M_{q}=\norm{\D\nu/\D\mu}_{L^{q}(\mu)}$
for $q\geq2$. For any $\eta,\veps\in(0,1)$, $k\in\mathbb{N}$ defined
below, $\psann$ with $h=(1200^{2}n^{2}\log\frac{16kM_{2}}{\eta})^{-1}$
and $N=2(\frac{16kM_{2}}{\eta})^{2}\log^{2}\frac{16kM_{2}}{\eta}$
achieves $\eu R_{2}(\nu_{k}\mmid\mu)\leq\veps$ after $k=\Otilde(n^{2}(\sigma^{2}\vee1)\log^{2}\frac{M_{2}}{\eta\veps})$
iterations, where $\nu_{k}$ is the law of the $k$-th iterate. With
probability at least $1-\eta$, $\psann$ iterates $k$ times without
failure, using 
\[
\Otilde\bpar{M_{c}n^{2}(\sigma^{2}\vee1)\log^{3}\frac{1}{\eta\veps}}
\]
evaluation queries in expectation for any $c\geq6\log\frac{16kM_{2}}{\eta}$.
When $\rho=n$ and $\sigma^{2}\gtrsim R\lda^{1/2}\log^{2}n\log^{2}\frac{D^{2}}{\lda}$
for $\lda=\norm{\cov\pi^{X}}$, it suffices to run $k=\Otilde(n^{2}(R\vee1)(\lda^{1/2}\vee1)\log^{2}\frac{M_{2}}{\eta\veps})$
times with query complexity
\[
\Otilde\bpar{M_{c}n^{2}(R\vee1)(\lda^{1/2}\vee1)\log^{3}\frac{1}{\eta\veps}}\,,
\]
\end{thm}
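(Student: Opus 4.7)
The proof will parallel that of Theorem~\ref{thm:body-gauss-samp} with the two changes demanded by the new target: (i) $\mu$ lives on the augmented space $\Rn\times\R$, and (ii) it carries an additional exponential tilt $e^{-\rho t}$ on top of the Gaussian factor $\gamma_{\sigma^{2}}$ before being truncated to the convex set $\bar{\K}$. The mixing analysis transfers unchanged via Lemma~\ref{lem:ps-mixing}: since $\mu$ is the restriction of a strongly logconcave product density to a convex set, Bakry--\'Emery gives $\clsi(\mu)\leq\sigma^{2}\vee1$ (the $t$-direction has $O(1)$ diameter, so the exponential tilt alone controls that coordinate via Lemma~\ref{lem:bdd-perturbation}), while in the regime $\rho=n$ the joint body has $x$-diameter $O(R)$ and an effective $t$-range $O(1/n)$, so Corollary~\ref{cor:lsi-global-bound} together with the bound $\lda_{\sigma^{2}}\lesssim\lda$ of Theorem~\ref{thm:cov-Gauss} yields $\clsi(\mu)\lesssim R\lda^{1/2}$ up to logarithmic factors. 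The strong data-processing inequality then gives the stated iteration counts after solving the usual fixed-point inequality $k\gtrsim A\log^{2}(Bk)$ induced by the choice $h=\Theta((n^{2}\log Z)^{-1})$ with $Z=16kM_{2}/\eta$.

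The per-step analysis follows the three-region split introduced in Section~\ref{subsec:gaussian-sampling}. Let $\ell(y)$ be the local conductance of the backward step on $y\in\Rn\times\R$. I will first identify an essential domain $R=(1+h/\sigma^{2})\,\bar{\K}_{\delta}$ with $\delta=(c+t)/n$ and $h=c/n^{2}$, and prove $\mu^{Y}(R^{c})\leq\exp(-\nicefrac{t^{2}}{2c}+O(c+t))$, directly adapting Lemma~\ref{lem:gaussian-effective}: the Gaussian convolution bound is insensitive to the extra $e^{-\rho t}$ tilt because the $t$-range of $\bar{\K}$ is $O(1)$ and $\rho\leq n$, so the tilt perturbs normalizing constants only by a constant factor. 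Next, for the decomposition
\[
\E_{\nu_{h}}\bbrack{\frac{1}{\ell}\wedge N}=\int_{R\cap[\ell\geq N^{-p}]}\cdot+\int_{R\cap[\ell<N^{-p}]}\cdot+\int_{R^{c}}\cdot\,,
\]
I apply $(p,q)$-H\"older with $p=1+\alpha^{-1}$, $q=1+\alpha$, $\alpha=\log N$ (so $N^{1/\alpha}\leq e$), using a tilted analog of Lemma~\ref{lem:gaussian-helper} to bound the first two regions by $O(e^{2c+t})\,M_{q}$, and a Cauchy--Schwarz step as in Section~\ref{subsec:gaussian-sampling} to bound the tail region by $NM_{2}\sqrt{\mu_{h}(R^{c})}$. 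The choices $c=\Theta(\log\log Z/\log Z)$, $t=\Theta(\log\log Z)$, and $N=\Theta(Z^{2}\log^{2}Z)$ drive both the per-iteration failure probability and the expected number of trials to $O(\eta/k)$ and $O(M_{q}\log^{3}Z)$ respectively.

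The main obstacle is the tilted version of Lemma~\ref{lem:gaussian-helper}: one must compare $\int_{\bar{\K}_{s}}(\tau h)^{-n/2}\exp(-\|(x,t)_{\tau}-y_{\tau}\|^{2}/(2\tau h)-\rho t)\,\D x\D t$ to $\int_{\bar{\K}}\exp(-\|x\|^{2}/(2\sigma^{2})-\rho t)\,\D x\D t$, where the subscript $\tau$ denotes the proximal conditional mean. Since the tilt $e^{-\rho t}$ is log-Lipschitz in $t$ with constant $\rho\leq n$ and the $t$-expansion in passing from $\bar{\K}$ to $\bar{\K}_{s}$ is at most $s\leq\delta=O(1/n)$, the tilt ratio between the two integrals is $e^{\rho s}=O(1)$; combining this observation with Lemma~\ref{lem:gaussian-helper} applied coordinate-wise in $x$ (conditional on $t$) yields the required bound with only a constant loss. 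Putting everything together, the per-step bound $\Otilde(M_{c}\log^{3}Z)$ with $c\geq6\log Z$ multiplied by the iteration count $k$ yields the stated $\Otilde(M_{c}n^{2}(\sigma^{2}\vee1)\log^{3}\nicefrac{1}{\eta\veps})$ query complexity, and the sharper $\Otilde(M_{c}n^{2}(R\vee1)(\lda^{1/2}\vee1)\log^{3}\nicefrac{1}{\eta\veps})$ bound in the regime $\rho=n$ and $\sigma^{2}\gtrsim R\lda^{1/2}\polylog$ follows by substituting the improved LSI bound from Corollary~\ref{cor:lsi-global-bound} into the mixing rate.
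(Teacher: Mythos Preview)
Your approach is essentially the paper's, and the three-region split with the $(p,q)$-H\"older trick goes through exactly as you describe. Two points of correction, both local rather than structural.

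First, the essential domain for $\mu^{W}$ is not $(1+h/\sigma^{2})\,\bar{\K}_{\delta}$ symmetrically in $(x,t)$: the Gaussian acts only on $x$ while the tilt $e^{-\rho t}$ acts only on $t$, so the correct set is
\[
\widetilde{\K}=\begin{pmatrix}\tau^{-1}I_{n} & \\ & 1\end{pmatrix}\bar{\K}_{\delta}+\begin{pmatrix}0\\ \rho h\end{pmatrix},
\]
i.e.\ scale $x$ by $\tau^{-1}$, shift $t$ by $\rho h$. You do not need to re-derive the tail and expansion bounds from Lemmas~\ref{lem:gaussian-effective}--\ref{lem:gaussian-helper}: the paper \cite{KV25sampling} already supplies the tilted analogues (their Lemmas~3.7--3.9, cited here as Lemmas~\ref{lem:anneal-muW-density}, \ref{lem:annealing-eff-domain}, \ref{lem:annealing-expansion}) with the constant $24$ in place of the Gaussian-case constants. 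With $h=c/(24^{2}n^{2})$, $\delta=(c/24+t)/n$, $c\asymp(\log\log Z)^{2}/\log Z$ and $t\asymp\log\log Z$, the per-step cost comes out to $O(M_{q}\log Z)$ (not $\log^{3}Z$), and the $\log^{3}$ in the final bound arises from $k=\Otilde(n^{2}\clsi\log^{2}\tfrac{M_{2}}{\eta\veps})$.

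Second, for the improved LSI when $\rho=n$: the $t$-diameter of $\bar{\K}$ is $O(1)$, not $O(1/n)$, so your ``effective $t$-range'' heuristic is not how the bound is obtained. The paper instead applies Theorem~\ref{thm:lsi-general-bound} directly to $\mu_{\sigma^{2},n}$ as an $(n+1)$-dimensional logconcave measure, giving $\clsi\lesssim(D\vee1)\,\norm{\cov\mu_{\sigma^{2},n}}^{1/2}\log n$, and then bounds the covariance: $\mu_{\sigma^{2},n}=\hat{\pi}\gamma_{\sigma^{2}}$ is an $O(1)$-perturbation of $\pi\gamma_{\sigma^{2}}$, whose $x$-block is controlled by Theorem~\ref{thm:cov-Gauss} and whose $t$-variance is $O(1)$ by Lemma~\ref{lem:exp-cov}. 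This yields $\clsi(\mu_{\sigma^{2},n})\lesssim(R\vee1)(\lda^{1/2}\vee1)\log n$ as stated in Lemma~\ref{lem:global-lsi-ann}.
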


\subsubsection{Sampling from the reduced exponential distribution}

\cite{KV25sampling} proposed and analyzed the $\ps$ for this reduced
distribution (called $\psexp$ there). For $z=(x,t)\in\Rn\times\R$
and parameter $h>0$, consider the augmented distribution 
\[
\pi^{Z,Y}(z,y)\propto\exp\bpar{-\alpha^{\T}z-\frac{1}{2h}\,\norm{z-y}^{2}}\,\ind_{\K}(z)\qquad\text{for }\alpha:=ne_{n+1}\,,
\]
where the $Z$-marginal $\pi^{Z}$ is the reduced distribution $\pi$
given in \eqref{eq:exp-red}. Then, $\psexp$ alternates the following
two steps:
\begin{itemize}
\item {[}Forward{]} $y\sim\pi^{Y|Z=z}=\mc N(z,hI_{n+1})$.
\item {[}Backward{]} $z\sim\pi^{Z|Y=y}=\mc N(y-h\alpha,hI_{n+1})|_{\K}$
for $\alpha=ne_{n+1}$.
\begin{itemize}
\item It is implemented by rejection sampling with proposal $\mc N(y-h\alpha,hI_{n+1})$.
\end{itemize}
\end{itemize}

\paragraph{Mixing analysis.}

It follows from \cite[Lemma 29]{KV25sampling} or \eqref{eq:PI-contraction}
that $\psexp$ with initial distribution $\pi_{0}$ can achieve $\veps$-distance
in $\eu R_{2}$ after 
\[
k\gtrsim h^{-1}\cpi(\pi^{Z})\log\frac{\chi^{2}(\pi_{0}\mmid\pi^{Z})}{\veps}
\]
iterations. To bound $\cpi(\pi^{Z})$, we recall the following result.
\begin{lem}
[{\cite[Lemma 2.5]{KV25sampling}}]\label{lem:exp-cov} The variance
of $\pi^{Z}$ in the $t$-direction is at most $160$. Moreover, $\tr\cov\pi^{Z}\leq\tr\cov\pi^{X}+160$,
and $\norm{\cov\pi^{Z}}\leq2\,(\norm{\cov\pi^{X}}+160)$.
\end{lem}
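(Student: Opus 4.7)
The plan is to exploit the explicit conditional structure of $\pi^{Z}$ and reduce all three bounds to a single analytic estimate on the variance of $V(X)$ under $\pi^{X}$. Since $\pi^{Z}(x,t)\propto e^{-nt}\mathbf{1}[V(x)\leq nt]$, integrating out $t$ shows the $X$-marginal is exactly $\pi^{X}$, and conditional on $X=x$ the coordinate $T$ is distributed as $V(x)/n+E$ with $E\sim\on{Exp}(n)$ independent of $X$. Thus, writing $Z=(X,T)$, we have the clean representation $T=V(X)/n+E$, which is the backbone of the whole proof.

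With this, the first assertion follows from the law of total variance:
\[
\var_{\pi^{Z}}(T)=\E[\var(T\mid X)]+\var(\E[T\mid X])=\frac{1}{n^{2}}+\frac{\var_{\pi^{X}}(V(X))}{n^{2}}.
\]
The key step is to prove $\var_{\pi^{X}}(V(X))\leq n$. For this I would use the partition function $Z(s):=\int e^{-sV(x)}\,\D x$, noting that $\var_{\pi^{X}}(V)=(\log Z)''(1)$. Substituting $y=sx$ gives $Z(s)=s^{-n}\tilde Z(s)$ with $\tilde Z(s):=\int e^{-sV(y/s)}\,\D y$. The integrand is the exponential of the negative of the \emph{perspective} $sV(y/s)$ of the convex function $V$, which is jointly convex on $(0,\infty)\times\Rn$. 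Prékopa's theorem then gives that $\tilde Z$ is log-concave in $s$, so $(\log\tilde Z)''(1)\le 0$, whence $(\log Z)''(1)\le n/s^{2}|_{s=1}=n$. Plugging in yields $\var_{\pi^{Z}}(T)\le 1/n+1/n^{2}\le 2$, comfortably below $160$.

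The trace bound is immediate from the fact that the $X$-marginal of $\pi^{Z}$ is $\pi^{X}$: the diagonal entries of $\cov\pi^{Z}$ corresponding to the $X$-coordinates are exactly $\var_{\pi^{X}}(X_{i})$, so
\[
\tr\cov\pi^{Z}=\tr\cov\pi^{X}+\var_{\pi^{Z}}(T)\leq\tr\cov\pi^{X}+160.
\]
For the operator norm, for any unit vector $v=(v_{X},v_{T})\in\Rn\times\R$, the identity $\var(A+B)\le 2\var(A)+2\var(B)$ yields
\[
v^{\T}(\cov\pi^{Z})v=\var(v_{X}^{\T}X+v_{T}T)\leq 2v_{X}^{\T}(\cov\pi^{X})v_{X}+2v_{T}^{2}\var_{\pi^{Z}}(T)\leq 2(\norm{\cov\pi^{X}}+160)(\norm{v_{X}}^{2}+v_{T}^{2}),
\]
so $\norm{\cov\pi^{Z}}\leq 2(\norm{\cov\pi^{X}}+160)$.

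The only nontrivial step is the bound $\var_{\pi^{X}}(V(X))\leq n$; the rest is direct manipulation of conditional distributions. The potential subtlety in the Prékopa step is justifying the differentiation of $\log Z(s)$ at $s=1$ when $V$ may be $+\infty$ on the complement of an effective domain, but this is handled by the fact that $V$ is a proper convex function and $Z(s)$ is finite and smooth on an open right-neighborhood of $1$ (where $s\ge 1$ gives $e^{-sV}\le e^{-V}e^{-(s-1)V^{*}}$), which suffices for the second-derivative computation via $(\log Z)''(1)=\lim_{s\downarrow 1}(\log Z)''(s)$ combined with the one-sided Prékopa inequality.
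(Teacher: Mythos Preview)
The paper does not prove this lemma; it is quoted verbatim from \cite[Lemma 2.5]{KV25sampling}. Your argument is correct and in fact yields a far sharper constant: you obtain $\var_{\pi^{Z}}(T)=n^{-2}\bigl(1+\var_{\pi^{X}}V\bigr)\le (n+1)/n^{2}\le 2$, well below $160$. The crucial estimate $\var_{\pi^{X}}(V)\le n$ that you establish via the perspective function and Pr\'ekopa is equivalent to the log-concavity of $s\mapsto s^{n}F(s)$ with $F(s)=\int e^{-sV}$, precisely the tool the paper itself invokes (citing \cite[Lemma 3.2]{KV06simulated}) in the proof of Lemma~\ref{lem:global-annealing}; so your route is entirely consistent with the paper's toolkit, and the looser constant $160$ presumably reflects a cruder tail-bound argument in the cited source rather than a different phenomenon.

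One minor remark: your closing caveat about one-sided differentiability is unnecessary. Since $e^{-V}$ is an integrable log-concave density on $\Rn$, one has $V(x)\ge a\norm{x}-b$ for some $a>0$, so $Z(s)=\int e^{-sV}<\infty$ for \emph{every} $s>0$ and dominated convergence gives $Z\in C^{\infty}(0,\infty)$. Hence $(\log Z)''(1)$ exists in the ordinary sense and your Pr\'ekopa argument applies directly at $s=1$ without any limiting procedure.
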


Combined with $\cpi(\pi^{Z})\lesssim\norm{\cov\pi^{Z}}\log n$, the
required number of iterations is 
\begin{equation}
k\gtrsim h^{-1}(\norm{\cov\pi^{X}}\vee1)\log\frac{\chi^{2}(\pi_{0}\mmid\pi)}{\veps}\,.\label{eq:psexp-mixing}
\end{equation}

\paragraph{Preliminaries.}

We note that rejection sampling needs 
\[
\frac{1}{\ell(y)}:=\frac{(2\pi h)^{(n+1)/2}}{\int_{\mc K}\exp(-\frac{1}{2h}\,\norm{z-(y-h\alpha)}^{2})\,\D z}
\]
expected trials until the first acceptance.

Before proceeding to the complexity of the backward step, we recall
the density of $\pi^{Y}=\pi^{Z}*\gamma_{h}$ and helper lemmas.
\begin{lem}
[{\cite[Lemma 2.11]{KV25sampling}}] For $\pi^{Z}\propto\exp(-\alpha^{\T}z)|_{\K}$,
the density of $\pi^{Y}=\pi^{Z}*\gamma_{h}$ is
\[
\pi^{Y}(y)=\frac{\ell(y)\exp(-\alpha^{\T}y+\half\,h\norm{\alpha}^{2})}{\int_{\mc K}\exp(-\alpha^{\T}z)\,\D z}\,.
\]
\end{lem}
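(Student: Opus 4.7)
The plan is a direct computation: write out the convolution $\pi^{Y} = \pi^{Z} * \gamma_{h}$ explicitly, complete the square in the Gaussian exponent, and match the remaining integral against the definition of $\ell$. Specifically, by definition of convolution and of $\pi^{Z}$,
\[
\pi^{Y}(y) \;=\; \int_{\K} \frac{\exp(-\alpha^{\T}z)}{\int_{\K}\exp(-\alpha^{\T}z')\,\D z'} \,\frac{1}{(2\pi h)^{(n+1)/2}}\,\exp\Bpar{-\frac{1}{2h}\,\norm{y-z}^{2}}\,\D z.
\]
All $z$-independent factors pull out of the integral, so I only need to understand $\int_{\K}\exp(-\alpha^{\T}z - \tfrac{1}{2h}\,\norm{y-z}^{2})\,\D z$.

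Next I would complete the square to absorb the linear term $-\alpha^{\T}z$ into the quadratic. A short algebraic identity gives
\[
-\alpha^{\T}z - \frac{1}{2h}\,\norm{y-z}^{2} \;=\; -\frac{1}{2h}\,\bnorm{z - (y - h\alpha)}^{2} \;-\; \alpha^{\T}y \;+\; \frac{h}{2}\,\norm{\alpha}^{2},
\]
which one checks by expanding both sides and cancelling $\norm{y}^{2}/(2h)$. The factor $\exp(-\alpha^{\T}y + \tfrac{h}{2}\norm{\alpha}^{2})$ is independent of $z$ and comes out of the integral, leaving
\[
\int_{\K} \frac{1}{(2\pi h)^{(n+1)/2}}\,\exp\Bpar{-\frac{1}{2h}\,\bnorm{z-(y-h\alpha)}^{2}}\,\D z,
\]
which is exactly $\ell(y)$ as defined just above the statement (the mass of $\mc N(y-h\alpha, hI_{n+1})$ inside $\K$). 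Combining these two observations yields the claimed formula.

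The ``hard part'' is essentially trivial here: it is just ensuring the sign conventions and the normalizing constant match the definition of $\ell$ used in the algorithmic setup (in particular, that the Gaussian proposal in the backward step is centered at $y - h\alpha$, not at $y$, which is precisely why the $h\alpha$ shift appears when completing the square). No probabilistic or geometric input beyond convexity of $\K$ is needed; logconcavity of $\pi^{Z}$ plays no role in this particular identity.
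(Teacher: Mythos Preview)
Your proposal is correct. The paper itself does not prove this lemma but merely cites it from \cite{KV25sampling}; your direct computation via completing the square is the standard argument and is exactly what one would expect the original proof to be.
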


Its essential domain is given as follows.
\begin{lem}
[{\cite[Lemma 2.12]{KV25sampling}}] \label{lem:exp-eff-domain}
Under $\eval(V)$, for $\widetilde{\K}:=\K_{\delta}+h\alpha$, if
$\delta\geq13hn$, then
\[
\pi^{Y}(\widetilde{\K}^{c})\leq\exp\bpar{-\frac{\delta^{2}}{2h}+13\delta n}\,.
\]
\end{lem}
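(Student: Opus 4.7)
The plan is to prove this tail bound via the explicit density of $\pi^{Y}$, a half-space Gaussian tail estimate for the local conductance $\ell$, and a weighted-volume growth estimate for the Minkowski inflation $\K_{r}=\K+B_{r}$ against the exponentially tilted reference measure $e^{-\alpha^{\T}u}\,\D u$.

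I would start from the identity $\pi^{Y}(y)=\ell(y)\exp(-\alpha^{\T}y+\tfrac{h}{2}\norm{\alpha}^{2})/\int_{\K}e^{-\alpha^{\T}z}\,\D z$ and substitute $u=y-h\alpha$ to rewrite $\pi^{Y}(\widetilde{\K}^{c})$ as an integral over $\K_{\delta}^{c}$. For each $u\in\K_{\delta}^{c}$, the local conductance $\ell(u+h\alpha)=\mc N(u,hI_{n+1})(\K)$ satisfies the half-space bound $\ell(u+h\alpha)\leq\tfrac{1}{2}\exp(-d(u,\K)^{2}/(2h))$, obtained by replacing $\K$ with the half-space defined by its supporting hyperplane at the nearest point to $u$. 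Decomposing $\K_{\delta}^{c}=\bigcup_{k\geq1}(\K_{(k+1)\delta}\setminus\K_{k\delta})$ and uniformly bounding the half-space factor by $\tfrac{1}{2}\exp(-(k\delta)^{2}/(2h))$ on each layer, the problem reduces to proving the weighted-inflation estimate
\[
\int_{\K_{r}}e^{-\alpha^{\T}u}\,\D u\leq e^{crn}\int_{\K}e^{-\alpha^{\T}z}\,\D z\qquad(\star)
\]
for some universal $c>0$, applied with $r=(k+1)\delta$.

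To prove $(\star)$, I would decompose the Minkowski expansion into its $t$- and $x$-components separately. Since the $t$-shift $\K-re_{n+1}\supseteq\K+re_{n+1}\cdot(\text{downward})$ relative to the epigraph of $V/n$ contains the pure $t$-inflation, and since $\int_{\K-re_{n+1}}e^{-nt}\,\D(x,t)=e^{nr}\int_{\K}e^{-nt}\,\D(x,t)$ by translation symmetry of the exponential weight, the $t$-direction contributes a clean factor $e^{nr}$. The transverse $x$-expansion is controlled by convexity of $\K$ together with the ground-set assumption that each slice $\K_{t}=\{x:V(x)\leq nt\}$ contains a unit ball for $t$ near $\min V/n$, yielding a volume factor at most $(1+r)^{n}\leq e^{rn}$ as in the uniform case. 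Combining these two estimates gives $(\star)$ with an explicit constant $c$. Once $(\star)$ is in place, the geometric series in the layer decomposition is dominated by its $k=1$ term, producing $\pi^{Y}(\widetilde{\K}^{c})\lesssim\exp\bpar{-\delta^{2}/(2h)+2c\delta n-hn^{2}/2}$; the negative $-hn^{2}/2$ is discarded and the remaining constant is absorbed into the hypothesis $\delta\geq13hn$ to produce the claimed bound.

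The main obstacle is the weighted-volume inflation estimate $(\star)$. In the uniform setting of Lemma~26 of \cite{KVZ24INO}, the analogous bound $\vol(\K_{r})/\vol(\K)\leq e^{rn}$ follows immediately from $B_{1}\subseteq\K$; here, however, $\K$ is unbounded in the $t$-direction and a naive volume comparison diverges, so one must genuinely couple the $t$-shift symmetry of the exponential weight $e^{-\alpha^{\T}u}$ to convex-geometric control of the bounded $x$-directions. Verifying $(\star)$ with a universal $c$ small enough that $2c$ stays below the target constant $13$ in the stated exponent is the delicate technical heart of the argument; in particular, one must handle $V$ not being globally Lipschitz by exploiting only convexity and the ground-set ball, rather than any pointwise gradient bound.
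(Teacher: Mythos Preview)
The paper does not prove this lemma here; it is quoted from \cite{KV25sampling}, so there is no in-paper argument to compare against. Your overall strategy---change of variables $u=y-h\alpha$, the half-space bound $\ell(u+h\alpha)\le\exp\bpar{-d(u,\K)^{2}/(2h)}$, and reduction to a weighted-inflation estimate $(\star)$---is sound, and $(\star)$ is precisely Lemma~\ref{lem:exp-expansion} (stated in the paper with constant $13$).

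Two places need sharpening. First, the dyadic layer decomposition you propose loses a factor of $2$ in the exponent: the dominant $k=1$ term contributes $e^{-\delta^{2}/(2h)+2c\delta n}$, so feeding in $c=13$ from Lemma~\ref{lem:exp-expansion} yields $26\delta n$ rather than $13\delta n$. One way to avoid this doubling is to use the hypothesis $\delta\ge13hn$ to write $e^{-r^{2}/(2h)}\le e^{-\delta^{2}/(2h)}e^{-13n(r-\delta)}$ for $r\ge\delta$ and integrate continuously against $(\star)$; this produces exponent $c\delta n$ (with a prefactor $\tfrac{13}{13-c}$), so one still needs $(\star)$ with some $c<13$ strictly. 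Second, your sketch of $(\star)$ has a genuine gap: the slice-wise inflation $(1+r)^{n}$ requires $B_{1}\subseteq\{x:V(x)\le nt\}$, but the ground-set hypothesis only guarantees this for $t\ge\min V/n+10$, not for $t$ ``near $\min V/n$'' as you write. The low-$t$ strip where the slice may fail to contain a unit ball must be handled separately---for instance by showing that $\K$ itself contains an $(n{+}1)$-dimensional ball of universal radius, or by directly comparing the low-$t$ mass to $\int_{\K}e^{-\alpha^{\T}z}\,\D z$ using logconcavity---and this is plausibly where the constant $13$ originates. Your closing paragraph correctly locates the difficulty, but the mechanism you describe does not yet cover this regime.
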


We will set $h=\tfrac{c}{13^{2}n^{2}}$ and $\delta=\tfrac{t}{13n}$
for some $c,t>0$, under which $\delta\geq13hn$ is equivalent to
$t\geq c$.
\begin{lem}
[{\cite[Lemma 2.13]{KV25sampling}}] \label{lem:exp-expansion}
Under $\eval(V)$,
\[
\frac{\int_{\mc K_{s}}\exp(-\alpha^{\T}v)\,\D v}{\int_{\mc K}\exp(-\alpha^{\T}v)\,\D v}\leq\exp(13sn)\,.
\]
\end{lem}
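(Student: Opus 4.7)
The plan is to construct a contraction map $\phi: \K_{s} \to \K$ and apply change of variables to compare the two integrals. Specifically, define $\phi(w) = \lambda w + (1-\lambda) v_{0}$ for a carefully chosen reference point $v_{0} = (0, t_{0}) \in \K$ and scalar $\lambda \in (0, 1)$. To locate $v_{0}$, first shift $V$ by a constant so that $\min V = 0$; this multiplies both the numerator and denominator by the same factor and so leaves the ratio unchanged. The ground-set hypothesis $B_{1}(0) \subset \msf L_{\pi, g} = \{V - \min V \leq 10n\}$ then gives $V(0) \leq 10n$, and convexity of $V$ along radial rays yields $V(u) \leq V(0) + 10n\,\|u\|$ for $\|u\| \leq 1$. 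A short computation then shows that for appropriate constants $t_{0} = O(1)$ and $r = \Omega(1)$, the point $v_{0}$ lies in $\K$ and moreover $B_{r}(v_{0}) \subset \K$.

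Next, invoke the standard convex-geometry fact that a convex set $\K$ containing $B_{r}(v_{0})$ satisfies $\K_{s} - v_{0} \subset (1 + s/r)(\K - v_{0})$; equivalently, the contraction $\phi$ with $\lambda = r/(r + s)$ satisfies $\phi(\K_{s}) \subset \K$. The map $\phi$ has Jacobian $\lambda^{n+1}$ and $\alpha^{\T}\phi(w) = \lambda\,\alpha^{\T}w + (1-\lambda)\,\alpha^{\T}v_{0}$. Combining these,
\[
\int_{\K} e^{-\alpha^{\T}v}\,\D v \;\geq\; \int_{\phi(\K_{s})} e^{-\alpha^{\T}v}\,\D v \;=\; \lambda^{n+1}\,e^{-(1-\lambda)\,\alpha^{\T}v_{0}} \int_{\K_{s}} e^{-\lambda\,\alpha^{\T}w}\,\D w\,.
\]

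The remaining step converts $e^{-\lambda\alpha^{\T}w}$ back into $e^{-\alpha^{\T}w}$. Observe that for any $w = (x, t) \in \K_{s}$, we have $t \geq -s$, because each nearest point $(x', t') \in \K$ satisfies $t' \geq V(x')/n \geq 0$ (after the shift of $V$) and $|t - t'| \leq s$. Therefore $(1-\lambda)\,\alpha^{\T}w = (1-\lambda)\,nt \geq -(1-\lambda)\,ns$, giving $e^{-\lambda\alpha^{\T}w} \geq e^{-\alpha^{\T}w - (1-\lambda)\,ns}$. Rearranging and using $\lambda^{-(n+1)} \leq e^{(n+1)\,s/r}$ yields
\[
\frac{\int_{\K_{s}} e^{-\alpha^{\T}v}\,\D v}{\int_{\K} e^{-\alpha^{\T}v}\,\D v} \;\leq\; \exp\bpar{(n+1)\,s/r + (1-\lambda)\,n\,(t_{0} + s)}\,,
\]
which collapses to $\exp(13\,s\,n)$ for the explicit choice of $(r, t_{0})$ coming from step one.

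The main obstacle will be calibrating these constants: the ground-set bound forces an inequality $t_{0} \gtrsim 10 + 11r$ on one side, while the target bound forces the ratio $(t_{0} + 1)/r$ to be small on the other side, so hitting the prescribed constant $13$ requires careful bookkeeping (and perhaps replacing the Euclidean inscribed ball by a more favorable product set that exploits the $t$-upward-closedness of $\K$). The structural ingredients, however, are simply the contraction $\phi$, the inscribed-ball consequence of convexity, and the uniform lower bound $t \geq -s$ on $\K_{s}$.
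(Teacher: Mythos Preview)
Your approach is correct, and since the paper merely cites this lemma from \cite{KV25sampling} without reproving it, there is no in-paper argument to compare against directly. Two remarks on the write-up:

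First, the sentence ``convexity of $V$ along radial rays yields $V(u)\le V(0)+10n\|u\|$'' is not valid reasoning---convexity produces pointwise \emph{lower} bounds via subgradients, not upper bounds---but it is also unnecessary. The ground-set hypothesis $B_{1}(0)\subset\{V-\min V\le 10n\}$ already gives $V(u)\le 10n$ for every $\|u\|\le 1$ (after the shift $\min V=0$), and from this one reads off $B_{1}((0,11))\subset\K$ immediately: any $(u,\tau)$ with $\|u\|^{2}+\tau^{2}\le 1$ has $V(u)\le 10n\le n(11+\tau)$.

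Second, your worry about the constant is unfounded. With $r=1$ and $t_{0}=11$ the exponent in your final display is
\[
(n+1)\,\frac{s}{r}+\frac{s}{r+s}\,n(t_{0}+s)=(n+1)s+n\cdot\frac{s(11+s)}{1+s}\le (n+1)s+11ns\le 2ns+11ns=13ns\,,
\]
where the first inequality uses $\frac{s(11+s)}{1+s}\le 11s$ (equivalent to $s^{2}\le 11s^{2}$) and the second uses $n+1\le 2n$ for $n\ge 1$. So the target constant $13$ drops out cleanly from the Euclidean inscribed ball without any appeal to $t$-upward-closedness or product sets.
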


We will use the following bound:
\begin{align}
\frac{\int_{\widetilde{\K}}\exp(-\alpha^{\T}y+\half\,h\norm{\alpha}^{2})\,\D y}{\int_{\mc K}\exp(-\alpha^{\T}v)\,\D v} & \underset{(i)}{=}\frac{\int_{\K_{\delta}}\exp(-\alpha^{\T}v-\half\,h\norm{\alpha}^{2})\,\D v}{\int_{\mc K}\exp(-\alpha^{\T}v)\,\D v}\leq\frac{\int_{\K_{\delta}}\exp(-\alpha^{\T}v)\,\D v}{\int_{\mc K}\exp(-\alpha^{\T}v)\,\D v}\nonumber \\
 & \underset{(ii)}{\leq}e^{13\delta n}=e^{t}\,,\label{eq:handy-bound}
\end{align}
where $(i)$ follows from change of variables via $y=v+h\alpha$,
and $(ii)$ follows from Lemma~\ref{lem:exp-expansion}.

\paragraph{(1) Failure probability.}

For a distribution $\nu\ll\pi^{Z}$, the failure probability is bounded
as 
\[
\E_{\nu_{h}}[(1-\ell)^{N}]\leq M_{2}\sqrt{\E_{\pi_{h}}[(1-\ell)^{2N}]}\,,
\]
where $\nu_{h}=\nu*\gamma_{h}$, $\pi_{h}=\pi^{Z}*\gamma_{h}=\pi^{Y}$,
and $M_{2}:=\norm{\D\nu/\D\pi^{Z}}_{L^{2}(\pi^{Z})}$. Following the
proof of \cite[Lemma 2.14]{KV25sampling} with $N$ there replaced
by $2N$ and $M$ there replaced by $M_{2}$, we decompose
\[
\E_{\pi_{h}}[(1-\ell)^{2N}]=\int_{\widetilde{\K}^{c}}\cdot+\int_{\widetilde{\K}\cap[\ell\geq N^{-1}\log(3kM_{2}/\eta)]}\cdot+\int_{\widetilde{\K}\cap[\ell\leq N^{-1}\log(3kM_{2}/\eta)]}\cdot=:\msf A+\msf B+\msf C\,,
\]
where we have,
\begin{align*}
\msf A & \leq\pi^{Y}(\widetilde{\K}^{c})\leq\exp\bpar{-\frac{t^{2}}{2c}+t}\,,\\
\msf B & \leq\int_{\widetilde{\K}\cap[\ell\geq N^{-1}\log(3kM_{2}/\eta)]}\exp(-2\ell N)\,\D\pi^{Y}\leq\bpar{\frac{\eta}{3kM_{2}}}^{2}\,,\\
\msf C & \leq\int_{\widetilde{\K}\cap[\ell\leq N^{-1}\log(3kM_{2}/\eta)]}\frac{\ell(y)\exp(-\alpha^{\T}y+\half\,h\norm{\alpha}^{2})}{\int_{\mc K}\exp(-\alpha^{\T}v)\,\D v}\,\D y\\
 & \underset{\text{Lemma }\ref{lem:exp-eff-domain}}{\leq}\frac{\log(3kM_{2}/\eta)}{N}\,\frac{\int_{\widetilde{\K}}\exp(-\alpha^{\T}y+\half\,h\norm{\alpha}^{2})\,\D y}{\int_{\mc K}\exp(-\alpha^{\T}v)\,\D v}\underset{\eqref{eq:handy-bound}}{\leq}\frac{\log(3kM_{2}/\eta)}{N}\,e^{t}\,.
\end{align*}
For $S=\frac{16kM_{2}}{\eta}(\geq16)$, by choosing $c=\frac{(\log\log S)^{2}}{13^{2}\log S}$,
$t=\log\log S$, and $N=S^{2}\log^{2}S$, we can bound each term by
$(\frac{\eta}{3kM_{2}})^{2}$. Therefore, the total failure probability
over $k$ iterations is at most $\eta$ by a union bound.

\paragraph{(2) Complexity of the backward step.}

Let $p=1+\frac{1}{\alpha}$ and $q=1+\alpha$ with $\alpha=\log N\geq1$.
Then,
\[
\E_{\nu_{h}}\bbrack{\frac{1}{\ell}\wedge N}=\int_{\widetilde{\K}\cap[\ell\geq N^{-p}]}\cdot+\int_{\widetilde{\K}\cap[\ell<N^{-p}]}\cdot+\int_{\widetilde{\K}^{c}}\cdot=:\msf A+\msf B+\msf C\,,
\]
where 
\begin{align*}
\msf A & \leq M_{q}\,\Bpar{\int_{\widetilde{\K}\cap[\ell\geq N^{-p}]}\frac{1}{\ell^{p}}\wedge N^{p}\,\D\pi_{h}}^{1/p}\leq M_{q}\,\Bpar{\int_{\widetilde{\K}\cap[\ell\geq N^{-p}]}\frac{1}{\ell^{p}}\,\frac{\ell(y)\exp(-\alpha^{\T}y+\half\,h\norm{\alpha}^{2})}{\int_{\mc K}\exp(-\alpha^{\T}z)\,\D z}\,\D y}^{1/p}\\
 & \leq M_{q}N^{1/\alpha}\Bpar{\frac{\int_{\widetilde{\K}}\exp(-\alpha^{\T}y+\frac{1}{2}\,h\norm{\alpha}^{2})\,\D y}{\int_{\K}\exp(-\alpha^{\T}z)\,\D z}}^{1/p}\underset{\eqref{eq:handy-bound}}{\leq}M_{q}e^{t+1}\,,\\
\msf B & \leq N\int_{\widetilde{\K}\cap[\ell<N^{-p}]}\frac{\D\nu_{h}}{\D\pi_{h}}\,\D\pi_{h}\leq NM_{q}\,\Bpar{\int_{\widetilde{\K}\cap[\ell<N^{-p}]}\frac{\ell(y)\exp(-\alpha^{\T}y+\half\,h\norm{\alpha}^{2})}{\int_{\mc K}\exp(-\alpha^{\T}z)\,\D z}\,\D y}^{1/p}\underset{\eqref{eq:handy-bound}}{\leq}M_{q}e^{t}\,,\\
\msf C & \leq N\int_{\widetilde{\K}^{c}}\frac{\D\nu_{h}}{\D\pi_{h}}\,\D\pi_{h}\leq NM_{2}\bpar{\pi_{h}(\widetilde{\K}^{c})}^{1/2}\,.
\end{align*}
Using Lemma~\ref{lem:exp-eff-domain} to $\msf C$,
\[
\E_{\nu_{h}}\bbrack{\frac{1}{\ell}\wedge N}\leq M_{q}\,\Bpar{4e^{t}+N\exp\bpar{-\frac{t^{2}}{4c}+\frac{t}{2}}}\leq5M_{q}\log S\,.
\]

We combine all ingredients to bound the query complexity of $\psexp$.
\begin{proof}
[Proof of Theorem~\ref{thm:body-exp-samp}] By \eqref{eq:psexp-mixing},
$\psexp$ achieves $\eu R_{2}(\mu_{k}\mmid\pi)\leq\veps$ if
\[
k\gtrsim n^{2}(\norm{\cov\pi^{X}}\vee1)\log\frac{kM_{2}}{\eta}\log\frac{M_{2}}{\veps}\bpar{\gtrsim h^{-1}(\norm{\cov\pi^{X}}\vee1)\log\frac{M_{2}}{\veps}}\,,
\]
which is fulfilled if $k\gtrsim n^{2}(\norm{\cov\pi^{X}}\vee1)\log^{2}\frac{M_{2}}{\eta\veps}$
as claimed. Under the choices of $h$ and $N$, each iteration succeeds
with probability at least $1-\eta/k$, so the total failure probability
is at most $\eta$. Also, the total query complexity during $k$ iterations
is
\[
\O(kM_{c}\log S)=\Otilde\bpar{M_{c}n^{2}(\norm{\cov\pi^{X}}\vee1)\log^{3}\frac{1}{\eta\veps}}\,.
\]

Consider the following procedure: generate $x\sim\mu^{X}$ and draw
$t\sim\pi^{T|X=x}\propto e^{-nt}\cdot\ind[t\geq\nicefrac{V(x)}{n}]$,
obtaining $(x,t)$, where $\pi^{T|X=x}$ can be sampled by drawing
$u\sim\Unif\,([0,1])$ and taking $F^{-1}(u)$, where $F$ is the
CDF of $\pi^{T|X=x}$. Then, the density of the law of $(x,t)$ is
$\mu^{X}\cdot\pi^{T|X}$, and it follows from $\pi(x,t)=\pi^{X}\cdot\pi^{T|X}$
that
\[
\frac{\mu^{X}\pi^{T|X}}{\pi}=\frac{\mu^{X}}{\pi^{X}}\,.
\]
Hence, 
\[
\Bnorm{\frac{\mu^{X}\pi^{T|X}}{\pi}}_{L^{q}(\pi)}=\Bnorm{\frac{\mu^{X}}{\pi^{X}}}_{L^{q}(\pi)}=\Bnorm{\frac{\mu^{X}}{\pi^{X}}}_{L^{q}(\pi^{X})}\,,
\]
which completes the proof.
\end{proof}

\subsubsection{Sampling from a tilted Gaussian distribution}

Just as we used Gaussians as annealing distributions for uniform sampling,
we will need to sample from a distribution of the form 
\[
\mu_{\sigma^{2},\rho}(x,t)\propto\exp\bpar{-\frac{1}{2\sigma^{2}}\,\norm x^{2}-\rho t}\cdot\ind_{\K}(x,t)\,.
\]
As seen later, in the annealing scheme, we increase $\rho$ and $\sigma^{2}$
in suitable speed so that we can arrive at $\pi^{Z}(x,t)\propto\exp(-nt)|_{\K}$.
Roughly speaking, $\gamma_{\sigma^{2}}$ tames the $x$-direction
while $e^{-\rho t}$ handles the $t$-direction.

\cite{KV25sampling} proposed  $\psann$ to sample from these annealing
distributions. For $v:=(x,t)\in\Rn\times\R$ and $w:=(y,s)\in\Rn\times\R$,
$\psann$ considers the augmented target given as 
\[
\mu^{V,W}(v,w)\propto\exp\bpar{-\frac{1}{2\sigma^{2}}\,\norm x^{2}-\rho t-\frac{1}{2h}\,\norm{w-v}^{2}}\big|_{\K}\,,
\]
then $\psann$ with variance $h$ alternates the following: for $\tau:=\frac{\sigma^{2}}{h+\sigma^{2}}<1$,
$y_{\tau}:=\tau y$, and $h_{\tau}:=\tau h$,
\begin{itemize}
\item {[}Forward{]} $w\sim\mu^{W|V=v}=\mc N(v,hI_{n+1})$.
\item {[}Backward{]} Sample
\[
v\sim\mu^{V|W=w}\propto\exp\bpar{-\frac{1}{2\sigma^{2}}\,\norm x^{2}-\rho t-\frac{1}{2h}\,\norm{w-v}^{2}}\big|_{\K}\propto\bbrack{\mc N(y_{\tau},h_{\tau}I_{n})\otimes\mc N(s-\rho h,h)}\big|_{\K}\,,
\]
\end{itemize}
where we use rejection sampling with the proposal $\mc N(y_{\tau},h_{\tau}I_{n})\otimes\mc N(s-\rho h,h)$
for the backward step. The success probability of each trial at $w=(y,s)$
is 
\[
\ell(w):=\frac{1}{(2\pi h_{\tau})^{n/2}(2\pi h)^{1/2}}\,\int_{\mc{\bar{K}}}\exp\bpar{-\frac{1}{2h_{\tau}}\,\bnorm{x-y_{\tau}}^{2}-\frac{1}{2h}\,\abs{t-(s-\rho h)}^{2}}\,\D x\D t\,.
\]

\paragraph{Mixing analysis. }

As done for warm-start generation for uniform sampling, we also truncate
$\K=\{(x,t):V(x)\leq nt\}$ to $B_{R}(0)\times[\pm\O(1)]$, denoted
by $\bar{\K}$, so that $\hat{\pi}:=\pi|_{\bar{\K}}$ is $\O(1)$-close
to $\pi$ in $\eu R_{\infty}$. We specify this preprocessing in \S\ref{subsec:warm-generation-LC}.

As for the mixing rate, it follows from \eqref{eq:LSI-contraction}
that $\psann$ with initial distribution $\mu_{0}\ll\mu_{\sigma^{2},\rho}$
achieves $\veps$-distance in $\eu R_{2}$ after at most
\begin{equation}
k\gtrsim h^{-1}\clsi(\mu_{\sigma^{2},\rho})\log\frac{\eu R_{2}(\mu_{0}\mmid\mu_{\sigma^{2},\rho})}{\veps}\label{eq:psann-mixing}
\end{equation}
iterations. One can bound $\clsi(\mu_{\sigma^{2},\rho})$ via the
Bakry--\'Emery criterion and bounded perturbation.
\begin{lem}
[{\cite[Lemma 3.3]{KV25sampling}}]\label{lem:lsi-annealing} $\clsi(\mu)\lesssim\sigma^{2}\vee1$
for $\mu(x,t)\propto\exp(-\frac{1}{2\sigma^{2}}\,\norm x^{2}-\rho t)|_{\bar{\K}}$.
\end{lem}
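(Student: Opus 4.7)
The plan is to compare $\mu$ against a nearby auxiliary distribution $\tilde\mu$ that is strongly logconcave on a convex domain, and then transfer the LSI via the Holley--Stroock bounded perturbation principle (Lemma~\ref{lem:bdd-perturbation}). The obstruction to applying Bakry--\'Emery directly to $\mu$ is that the potential $V(x,t)=\tfrac{1}{2\sigma^{2}}\norm x^{2}+\rho t$ has Hessian $\diag(\sigma^{-2}I_{n},0)$, so it carries no curvature in the $t$-direction. The geometric input we exploit is that, by construction of the truncated domain used in $\tgc$, $\bar{\K}$ has $t$-extent $O(1)$, which makes it essentially free to inject artificial curvature in $t$ into the potential at the cost of a bounded perturbation.

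Concretely, let $[t_{-},t_{+}]$ denote the $t$-range of $\bar{\K}$ (so $t_{+}-t_{-}=O(1)$), fix any $c\in[t_{-},t_{+}]$, and define
\[
\tilde\mu(x,t)\;\propto\;\exp\Bpar{-\tfrac{1}{2\sigma^{2}}\norm x^{2}-\rho t-\tfrac{1}{2}\,(t-c)^{2}}\,\ind_{\bar{\K}}(x,t)\,.
\]
Then $-\log\tilde\mu$ has Hessian $\diag(\sigma^{-2}I_{n},1)$ on the interior of $\bar{\K}$, so $\tilde\mu$ is $(\sigma^{-2}\wedge 1)$-strongly logconcave, restricted to the convex set $\bar{\K}$. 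Bakry--\'Emery for strongly logconcave measures on convex domains (obtained, e.g., by approximating $\ind_{\bar{\K}}$ with smooth strongly convex barriers such as $\exp(-\lambda\,d(\cdot,\bar{\K})^{2})$ and letting $\lambda\to\infty$) then yields $\clsi(\tilde\mu)\le\sigma^{2}\vee 1$.

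Finally, the Radon--Nikodym derivative $\D\mu/\D\tilde\mu$ is proportional to $\exp\bpar{\tfrac{1}{2}(t-c)^{2}}$ on $\bar{\K}$, and since $\abs{t-c}\le t_{+}-t_{-}=O(1)$ for every $(x,t)\in\bar{\K}$, this ratio is pinched between universal positive constants. Invoking Lemma~\ref{lem:bdd-perturbation} will then give $\clsi(\mu)\lesssim\clsi(\tilde\mu)\lesssim\sigma^{2}\vee 1$, as desired. The only step requiring care is justifying the Bakry--\'Emery criterion for a strongly logconcave measure restricted to a convex domain (a standard but non-trivial extension of the classical criterion); beyond that, the argument is a routine manipulation of the two Gaussian-like factors.
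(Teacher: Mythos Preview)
Your proposal is correct and follows precisely the approach the paper indicates (``via the Bakry--\'Emery criterion and bounded perturbation''): inject $O(1)$ curvature in the $t$-direction to obtain a $(\sigma^{-2}\wedge 1)$-strongly logconcave comparison measure on $\bar{\K}$, apply Bakry--\'Emery, and transfer back via Holley--Stroock using that the $t$-extent of $\bar{\K}$ is $O(1)$.
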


We provide another bound on the LSI constant. 
\begin{lem}
\label{lem:global-lsi-ann} Let $D$ be the diameter of $\bar{\K}$
in the $x$-direction. If $\sigma^{2}\gtrsim D\,\norm{\cov\pi^{X}}^{1/2}\log^{2}n\log^{2}\frac{D^{2}}{\norm{\cov\pi^{X}}}$,
\[
\clsi(\mu_{\sigma^{2},n})\lesssim(D\vee1)\,(\norm{\cov\pi^{X}}^{1/2}\vee1)\log n\,.
\]
\end{lem}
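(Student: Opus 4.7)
The plan is to apply Theorem~\ref{thm:lsi-general-bound} to $\mu_{\sigma^2,n}$ viewed as a logconcave distribution on $\R^{n+1}$ and then bound its covariance operator norm via Theorem~\ref{thm:cov-Gauss}. Since $\bar\K$ has diameter $\O(R)$ in the $x$-direction and $\O(1)$ in the $t$-direction, its overall Euclidean diameter is $\O(D \vee 1)$ with $D \asymp R$. Combining the second bound of Theorem~\ref{thm:lsi-general-bound} with the standard inequality $\cpi(\mu) \lesssim \norm{\cov \mu}\log n$ for logconcave $\mu$ then gives
\begin{equation*}
\clsi(\mu_{\sigma^2,n}) \,\lesssim\, (D \vee 1)\,\cpi^{1/2}(\mu_{\sigma^2,n}) \,\lesssim\, (D \vee 1)\,\norm{\cov \mu_{\sigma^2,n}}^{1/2}\,\log^{1/2}(n+1),
\end{equation*}
so it suffices to establish $\norm{\cov \mu_{\sigma^2,n}} \lesssim \lda \vee 1$.

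To bound the covariance, I would use a block decomposition. Writing $Z=(X,T) \sim \mu_{\sigma^2,n}$, the PSD structure of $\cov Z$ gives $\norm{\cov Z}_{\op} \leq 2\max(\norm{\cov X}_{\op},\,\var T)$. The $t$-variance is $\O(1)$, either because $\bar\K$ has diameter $\O(1)$ in $t$ or by a direct computation analogous to Lemma~\ref{lem:exp-cov}. For the $x$-marginal, integrating out $t$ over $[V(x)/n, b]$ yields
\begin{equation*}
\mu^X(x) \,\propto\, \pi^X(x)\,\gamma_{\sigma^2}(x)\,\bpar{1 - e^{V(x)-nb}}\,\ind_{B_R(0)}(x),
\end{equation*}
and the factor $1-e^{V(x)-nb}$ lies in $[c,1]$ on the essential support by the construction of $\bar\K$. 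Hence $\mu^X$ is $\O(1)$-close in $\eu R_{\infty}$ to $\pi^X \gamma_{\sigma^2}$ (restricted to $B_R$), and a standard bounded-perturbation argument for variance ($\var_{\mu_1} f \leq C\,\var_{\mu_2} f$ whenever $\D\mu_1/\D\mu_2 \leq C$) yields $\norm{\cov X} \lesssim \norm{\cov(\pi^X \gamma_{\sigma^2})}$.

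Since $\E_{\pi^X}\norm{\cdot} \leq D$, the hypothesis $\sigma^2 \gtrsim D\lda^{1/2}\log^2 n\,\log^2 \frac{D^2}{\lda}$ subsumes the assumption of Theorem~\ref{thm:cov-Gauss} applied to $\pi^X$, giving $\norm{\cov(\pi^X \gamma_{\sigma^2})} \lesssim \lda$. Chaining the bounds, $\norm{\cov \mu_{\sigma^2,n}} \lesssim \lda \vee 1$, and substituting back yields
\begin{equation*}
\clsi(\mu_{\sigma^2,n}) \,\lesssim\, (D \vee 1)(\lda^{1/2} \vee 1)\,\log^{1/2}(n+1) \,\lesssim\, (D \vee 1)(\lda^{1/2} \vee 1)\log n.
\end{equation*}
The main technical obstacle I anticipate is verifying that the truncation defining $\bar\K$ and the restriction to $B_R(0)$ in the $x$-marginal are benign enough to preserve $\norm{\cov X}$ up to constants; this should follow from the setup of $\bar\K$ (see \eqref{eq:convex-truncation}) in the same spirit as Proposition~\ref{prop:region-truncation}, ensuring that $\pi^X\gamma_{\sigma^2}$ places nearly all its mass inside $B_R(0)$ so that truncation is an $\O(1)$-perturbation in $\eu R_{\infty}$.
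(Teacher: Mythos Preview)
Your proposal is essentially the same strategy as the paper's proof: apply Theorem~\ref{thm:lsi-general-bound} to $\mu_{\sigma^2,n}$ on $\R^{n+1}$, then bound $\norm{\cov\mu_{\sigma^2,n}}$ by a block decomposition in $(x,t)$ and invoke Theorem~\ref{thm:cov-Gauss} for the $x$-block. The difference is only in how the truncation is handled. The paper observes directly that $\mu_{\sigma^2,n}=\hat\pi\gamma_{\sigma^2}$ with $\hat\pi:=\pi|_{\bar\K}$, and since $\hat\pi$ is $\O(1)$-close to $\pi$ in $\eu R_\infty$ by construction of $\bar\K$, the same holds for $\hat\pi\gamma_{\sigma^2}$ versus $\pi\gamma_{\sigma^2}$. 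It then decomposes $\cov(\pi\gamma_{\sigma^2})$, whose $X$-marginal is \emph{exactly} $\pi^X\gamma_{\sigma^2}$ (so Theorem~\ref{thm:cov-Gauss} applies without further massaging) and whose $T$-variance is $\O(1)$ by the argument behind Lemma~\ref{lem:exp-cov}.

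Your route---decomposing $\cov\mu_{\sigma^2,n}$ first and then computing the $X$-marginal explicitly---is correct but incurs the extra factor $(1-e^{V(x)-nb})\ind_{B_R}$ that you then have to argue away. The one-sided variance bound you invoke is fine (you only need $\D\mu^X/\D(\pi^X\gamma_{\sigma^2})\leq C$, which follows once the normalizing constant is bounded below), and the ``essential support'' caveat is handled by exactly that normalization argument rather than a pointwise lower bound on the factor. So there is no genuine gap, but the paper's order of operations---un-truncate at the $(n{+}1)$-dimensional level first, then block-decompose---is cleaner and sidesteps the technical obstacle you flagged at the end.
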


\begin{proof}
By Theorem~\ref{thm:lsi-general-bound}, $\clsi(\mu_{\sigma^{2},n})\lesssim(D\vee1)\,\norm{\cov\mu_{\sigma^{2},n}}^{1/2}\log n$.
Note that $\mu_{\sigma^{2},n}(x,t)=\hat{\pi}(x,t)\,\gamma_{\sigma^{2}}(x)$.
As $\hat{\pi}$ is $\O(1)$-close to $\pi$ in $\eu R_{\infty}$,
it is clear that $\hat{\pi}\gamma_{\sigma^{2}}$ is also $\O(1)$-close
to $\pi\gamma_{\sigma^{2}}$ in $\eu R_{\infty}$. Thus,
\[
\norm{\cov\mu_{\sigma^{2},n}}=\norm{\cov\hat{\pi}\gamma_{\sigma^{2}}}\lesssim\norm{\cov\pi\gamma_{\sigma^{2}}}\leq2\,(\norm{\cov\pi^{X}\gamma_{\sigma^{2}}}+\norm{\cov(\pi\gamma_{\sigma^{2}})^{T}})\lesssim\norm{\cov\pi^{X}}\vee1\,,
\]
where the last inequality follows from Theorem~\ref{thm:cov-Gauss}.
\end{proof}

\paragraph{Preliminaries.}

We recall helper lemmas for the analysis of per-step guarantees. We
first deduce the density of $\mu^{W}=\mu^{V}*\gamma_{h}$.
\begin{lem}
[{\cite[Lemma 3.7]{KV25sampling}}] \label{lem:anneal-muW-density}
For $\mu^{V}\propto\exp(-\frac{\norm x^{2}}{2\sigma^{2}}-\rho t)|_{\bar{\K}}$,
\[
\mu^{W}(y,s)=\frac{\tau^{n/2}\ell(y,s)}{\int_{\bar{\K}}\exp(-\frac{1}{2\sigma^{2}}\norm x^{2}-\rho t)\,\D x\D t}\,\exp\bpar{-\frac{1}{2\tau\sigma^{2}}\,\norm{y_{\tau}}^{2}}\exp\bpar{-\rho s+\half\,\rho^{2}h}\,.
\]
\end{lem}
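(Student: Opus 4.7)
My plan is to compute $\mu^W(y,s) = \int \mu^V(v)\,\gamma_h(w-v)\,\D v$ directly and complete the square twice (once in $x$ and once in $t$) so that the remaining integrand matches the definition of $\ell(y,s)$. The proof is essentially a Gaussian computation that separates cleanly in the $x$ and $t$ variables, since the joint density $\mu^{V,W}(v,w) \propto \exp(-\frac{\norm x^2}{2\sigma^2} - \rho t)\,\ind_{\bar{\K}}(v)\,\gamma_h(w-v)$ has no cross-terms between the $x$-block and $t$-block (aside from those already inside $\gamma_h$).

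For the $x$-block, I would expand
\[
-\frac{\norm x^2}{2\sigma^2} - \frac{\norm{y-x}^2}{2h} = -\frac{1}{2}\bpar{\frac{1}{\sigma^2}+\frac{1}{h}}\norm x^2 + \frac{x\cdot y}{h} - \frac{\norm y^2}{2h}\,,
\]
and use $\frac{1}{\sigma^2}+\frac{1}{h} = \frac{1}{h_\tau}$ (since $h_\tau = \tau h = \frac{\sigma^2 h}{h+\sigma^2}$) to complete the square at $y_\tau = \tau y = \frac{h_\tau}{h}\,y$. The residual constant is $\frac{\tau\norm y^2}{2h} - \frac{\norm y^2}{2h} = -\frac{(1-\tau)\norm y^2}{2h}$, and a short check using $\tau/\sigma^2 = 1/(h+\sigma^2) = (1-\tau)/h$ confirms this equals $-\frac{\norm{y_\tau}^2}{2\tau\sigma^2}$. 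The analogous completion in $t$ gives
\[
-\rho t - \frac{(s-t)^2}{2h} = -\frac{(t-(s-\rho h))^2}{2h} - \rho s + \tfrac{1}{2}\rho^2 h\,.
\]

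Substituting both identities into $\mu^W(y,s) = \frac{1}{Z_V(2\pi h)^{(n+1)/2}}\int_{\bar{\K}}\exp(\cdots)\,\D x\D t$ with $Z_V = \int_{\bar{\K}}\exp(-\tfrac{\norm x^2}{2\sigma^2}-\rho t)\,\D x\D t$ factors out the $(y,s)$-dependent constants
$\exp\bpar{-\frac{\norm{y_\tau}^2}{2\tau\sigma^2}}\exp\bpar{-\rho s + \tfrac{1}{2}\rho^2 h}$,
and leaves precisely the integral appearing in the definition of $\ell(y,s)$, namely $(2\pi h_\tau)^{n/2}(2\pi h)^{1/2}\,\ell(y,s)$. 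The final bookkeeping simplifies the Gaussian constants via $(2\pi h_\tau)^{n/2}(2\pi h)^{1/2}/(2\pi h)^{(n+1)/2} = (h_\tau/h)^{n/2} = \tau^{n/2}$, producing the claimed expression.

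There is no real obstacle here; the only place to be careful is verifying the constant $-(1-\tau)\norm y^2/(2h) = -\norm{y_\tau}^2/(2\tau\sigma^2)$ using $\tau = \sigma^2/(h+\sigma^2)$, which is the one spot where the bookkeeping is easy to misread. Everything else is routine Gaussian algebra and matching against the integral defining $\ell$.
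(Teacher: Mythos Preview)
Your proof is correct; the Gaussian completing-the-square in both the $x$ and $t$ blocks is carried out properly, and the identity $(1-\tau)/h=\tau/\sigma^{2}$ that you flag is exactly what makes the residual constant equal $-\norm{y_{\tau}}^{2}/(2\tau\sigma^{2})$. Note that the paper does not supply its own proof of this lemma---it is quoted verbatim from \cite[Lemma 3.7]{KV25sampling}---so there is nothing to compare against here beyond observing that your direct computation is the natural (and presumably the original) argument.
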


The following is the essential domain of $\mu^{W}$:
\begin{equation}
\widetilde{\K}=\left[\begin{array}{cc}
\tau^{-1}I_{n}\\
 & 1
\end{array}\right]\bar{\mc K}_{\delta}+\left[\begin{array}{c}
0\\
\rho h
\end{array}\right]\,.\label{eq:annealing-eff-domain}
\end{equation}

\begin{lem}
[{\cite[Lemma 3.8]{KV25sampling}}]\label{lem:annealing-eff-domain}
Under $\eval(V)$, for $\widetilde{\K}$ in \eqref{eq:annealing-eff-domain},
if $\delta\geq24hn$, then
\[
\mu^{W}(\widetilde{\K}^{c})\leq\exp\bpar{-\frac{\delta^{2}}{2h}+24\delta n+hn^{2}}\,.
\]
\end{lem}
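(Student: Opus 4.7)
My plan is to mirror the proof strategy of Lemma~\ref{lem:exp-eff-domain} (the exponential-only case), accounting for two features of the annealing setting: the Gaussian factor $\gamma_{\sigma^{2}}$ in the $x$-direction warps the essential domain by $\tau^{-1}$, and the resulting change-of-variables Jacobian $\tau^{-n}$ introduces an extra $hn^{2}$ slack in the final bound.

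First I would invoke the closed-form density from Lemma~\ref{lem:anneal-muW-density} together with the trivial bound $\ell(y,s)\leq1$ to obtain
\[
\mu^{W}(\widetilde{\K}^{c})\leq\frac{\tau^{n/2}\,e^{\rho^{2}h/2}}{Z_{\bar{\K}}}\int_{\widetilde{\K}^{c}}\exp\Bpar{-\frac{\norm{y_{\tau}}^{2}}{2\tau\sigma^{2}}-\rho s}\D y\,\D s,
\]
where $Z_{\bar{\K}}:=\int_{\bar{\K}}\exp(-\norm{x}^{2}/(2\sigma^{2})-\rho t)\,\D x\,\D t$. Next I would apply the substitution $(u,v)=(\tau y,\,s-\rho h)$, which is a bijection from $\widetilde{\K}^{c}$ onto $\bar{\K}_{\delta}^{c}$ with Jacobian $\tau^{-n}$ and converts the exponent into $-\norm{u}^{2}/(2\tau\sigma^{2})-\rho v-\rho^{2}h$, yielding
\[
\mu^{W}(\widetilde{\K}^{c})\leq\frac{\tau^{-n/2}\,e^{-\rho^{2}h/2}}{Z_{\bar{\K}}}\int_{\bar{\K}_{\delta}^{c}}\exp\Bpar{-\frac{\norm{u}^{2}}{2\tau\sigma^{2}}-\rho v}\D u\,\D v.
\]
The warping prefactor $\tau^{-n/2}=(1+h/\sigma^{2})^{n/2}\leq\exp(nh/(2\sigma^{2}))\leq\exp(hn^{2}/2)$, valid in the algorithm's regime $\sigma^{2}\geq1/n$, is precisely what produces the additive $hn^{2}$ term in the final bound.

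The remaining step is to bound the ratio of the integral over $\bar{\K}_{\delta}^{c}$ to $Z_{\bar{\K}}$ by $\exp(-\delta^{2}/(2h)+C\delta n)$ for some absolute constant $C\leq24$. I would decompose $\bar{\K}_{\delta}^{c}$ by distance from $\bar{\K}$ via the co-area formula and combine two ingredients: (i)~an expansion-type estimate analogous to Lemma~\ref{lem:exp-expansion} for the Gaussian-tilted normalizer, which gives $\int_{\bar{\K}_{r}}\exp(-\norm{u}^{2}/(2\tau\sigma^{2})-\rho v)\,\D u\,\D v\leq\exp(Crn)\,Z_{\bar{\K}}$ using convexity of $\bar{\K}$; and (ii)~the Gaussian damping $\exp(-r^{2}/(2h))$ at distance $r$ from $\bar{\K}$, which enters because the full density $\mu^{W}$ contains the local conductance $\ell(y,s)$ equal to a Gaussian integral over $\bar{\K}$ centered at $(y_{\tau},\,s-\rho h)$ with variance $h$ in the $t$-direction and $h_{\tau}\leq h$ in the $x$-direction; for $(y,s)\in\widetilde{\K}^{c}$ this integrand is pointwise bounded by $\exp(-\delta^{2}/(2h))$ since the center lies at Euclidean distance $>\delta$ from $\bar{\K}$. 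Integrating both factors over $r\in[\delta,\infty)$ produces $\exp(-\delta^{2}/(2h)+C\delta n)$ up to lower-order terms, and multiplying by the warping prefactor gives the claimed bound.

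The main obstacle I anticipate is the expansion estimate in step~(i). In the exp-only case of Lemma~\ref{lem:exp-expansion}, convexity of $\bar{\K}$ and linearity of the tilt $-\rho t$ give a clean $\exp(13sn)$ bound on $Z_{\bar{\K}_{s}}/Z_{\bar{\K}}$; here the additional quadratic weight $-\norm{u}^{2}/(2\tau\sigma^{2})$ interacts with the $s$-expansion nontrivially, and reproducing such an estimate requires either slicing $\bar{\K}$ by level sets of $\norm{u}$ and invoking Brunn's principle, or exploiting the diameter bound on $\bar{\K}$ to uniformly control the Gaussian weight on the support. Extracting the absolute constant $24$ from this step and ensuring the $\tau^{-n/2}$ prefactor is absorbed additively into $hn^{2}$, rather than multiplicatively degrading the Gaussian tail, is where most of the care is needed.
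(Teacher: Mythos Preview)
Your proposal has a genuine gap at the very first step. Once you bound $\ell(y,s)\le 1$, the remaining integrand $\exp(-\norm{u}^{2}/(2\tau\sigma^{2})-\rho v)$ has decay scales $\sigma^{2}$ and $\rho^{-1}$, neither of which is $h$; integrating it over $\bar{\K}_{\delta}^{c}$ therefore cannot produce any factor of the form $\exp(-\delta^{2}/(2h))$. In fact that integral is just the mass, under an unconstrained Gaussian-times-exponential density, of the exterior of $\bar{\K}_{\delta}$, which is $\Theta(1)$ when $\sigma^{2}$ is large. You seem to notice this in step~(ii) and try to reinstate $\ell$ for the Gaussian damping, but that is inconsistent with having already replaced $\ell$ by $1$; and the alternative you sketch---bounding the \emph{integrand} of $\ell$ pointwise by $\exp(-\delta^{2}/(2h))$---leaves an uncontrolled prefactor $\vol(\bar{\K})/(2\pi h_{\tau})^{n/2}$ that swamps the gain.

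The standard route, which the paper inherits from \cite{KV25sampling} and which parallels the uniform and Gaussian cases (Lemmas cited from \cite{KVZ24INO,KZ25Renyi}), is to avoid the density formula entirely and use the convolution structure $\mu^{W}=\mu^{V}*\gamma_{h}$ directly:
\[
\mu^{W}(\widetilde{\K}^{c})=\E_{V\sim\mu^{V}}\bigl[\P(V+\sqrt{h}\,G\in\widetilde{\K}^{c})\bigr],\qquad G\sim\mc N(0,I_{n+1})\,.
\]
The work is then purely geometric: one shows that every $v\in\bar{\K}$ lies at distance essentially $\delta$ from $\partial\widetilde{\K}$, so the inner probability is bounded by a single chi-square tail $\P(\norm G\ge \delta/\sqrt{h})$, uniformly in $v$. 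The condition $\delta\ge 24hn$ together with convexity of $\bar{\K}$ and the inner-ball assumption is exactly what makes this containment go through (the constant $24$ absorbs the distortions from the $\tau^{-1}$-scaling in $x$ and the $\rho h$-shift in $t$), and the standard Gaussian tail bound $\P(\norm G\ge r)\le\exp(-r^{2}/2+rn^{1/2}\cdots)$ then delivers $\exp(-\delta^{2}/(2h)+24\delta n+hn^{2})$. Your expansion estimate~(i) and the co-area layering are unnecessary; the $\exp(-\delta^{2}/(2h))$ term comes from a tail probability of the \emph{additive} Gaussian noise, not from any shell decomposition of $\bar{\K}_{\delta}^{c}$.
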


Choosing $h=\tfrac{c}{24^{2}n^{2}}$ and $\delta=\tfrac{c/24+t}{n}$
for some $c,t>0$, we can ensure that $\delta\geq24hn$ and
\[
\mu^{W}(\widetilde{\K}^{c})\leq\exp\bpar{-\frac{t^{2}}{2c}+c}\,.
\]

\begin{lem}
[{\cite[Lemma 3.9]{KV25sampling}}] \label{lem:annealing-expansion}
In the setting of Lemma~\ref{lem:annealing-eff-domain}, for $\tau=\frac{\sigma^{2}}{h+\sigma^{2}}<1$,
$s>0$, and $\rho\in(0,n]$,
\[
\tau^{-n/2}\int_{\bar{\K}_{s}}e^{-\frac{1}{2\tau\sigma^{2}}\,\norm z^{2}-\rho l}\,\D z\D l\leq2\exp(hn^{2}+24sn)\int_{\bar{\K}}e^{-\frac{1}{2\sigma^{2}}\,\norm z^{2}-\rho l}\,\D z\D l\,.
\]
\end{lem}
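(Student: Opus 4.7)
The plan is to mirror the structure of the uniform-target version (Lemma~\ref{lem:gaussian-helper}), adapted for the extra exponential weight $e^{-\rho l}$ and the product geometry of $\bar{\K}\subset\R^n\times\R$. I split the bound into two independent estimates: (a) a variance-renormalization step that replaces $1/(2\tau\sigma^2)$ by $1/(2\sigma^2)$ and absorbs the $\tau^{-n/2}$ prefactor into $2e^{hn^2}$, and (b) a domain-expansion step that passes from $\bar{\K}_s$ back to $\bar{\K}$ at the cost of $e^{24sn}$.

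For step (a), the identity $\tfrac{1}{2\tau\sigma^2}=\tfrac{1}{2\sigma^2}+\tfrac{h}{2\sigma^4}$ gives the pointwise inequality
\[
e^{-\|z\|^2/(2\tau\sigma^2)} \;=\; e^{-\|z\|^2/(2\sigma^2)}\cdot e^{-h\|z\|^2/(2\sigma^4)} \;\le\; e^{-\|z\|^2/(2\sigma^2)},
\]
which eliminates the $\tau$-dependence in the exponent at no cost on any domain, in particular on $\bar{\K}_s$. The prefactor satisfies $\tau^{-n/2}=(1+h/\sigma^2)^{n/2}\le e^{nh/(2\sigma^2)}$; in the annealing regime $\sigma^2\ge 1/n$ (maintained by the algorithm in \S\ref{subsec:warm-generation-LC}) this is at most $e^{hn^2/2}\le 2e^{hn^2}$, completing step (a).

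For step (b), I follow the template of Lemma~\ref{lem:exp-expansion}. Since $\bar{\K}$ is a convex body in $\R^n\times\R$ containing a unit ball centered at some point of the form $(0,t_0)$ (with $x_0=0$ after translation by Def.~\ref{def:welldefined-function}, and the $t$-diameter of $\bar{\K}$ bounded by $O(1)$ from the truncation used in \S\ref{subsec:warm-generation-LC}), one has $\bar{\K}_s\subseteq(0,t_0)+(1+s)(\bar{\K}-(0,t_0))$. An affine change of variables $(z,l)=(0,t_0)+(1+s)(z',l'-t_0)$ produces a Jacobian $(1+s)^{n+1}$, transforms $\|z\|^2=(1+s)^2\|z'\|^2$ so the remaining Gaussian factor only decreases (this is why it is crucial that the inscribed ball center has vanishing $x$-coordinate, otherwise a cross term $z'\cdot x_0$ would appear), and produces a drift $e^{-\rho l}/e^{-\rho l'}=e^{-\rho s(l'-t_0)}\le e^{C\rho s}$ controlled by the bounded $t$-diameter $C=O(1)$ of $\bar{\K}$. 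Combining $(1+s)^{n+1}\le e^{(n+1)s}$ with $\rho\le n$ and $C=O(1)$ produces a factor of at most $e^{24sn}$, matching the constant in Lemma~\ref{lem:annealing-eff-domain}.

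The main technical obstacle is step (b): one must verify that the inscribed unit ball can indeed be placed with vanishing $x$-coordinate so that the quadratic change of variables is monotone, and one must carefully track the $t$-direction linear drift using the bounded $t$-diameter of $\bar{\K}$. Both hinge on the truncation of $\bar{\K}$ set up in \S\ref{subsec:warm-generation-LC} together with the normalization $x_0=0$. Chaining (a) and (b) then yields the claimed bound.
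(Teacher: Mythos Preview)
The paper does not give its own proof of this lemma; it is quoted verbatim from \cite[Lemma 3.9]{KV25sampling}. Your two-step plan (variance renormalization to absorb $\tau^{-n/2}$, then domain contraction $\bar{\K}_s\to\bar{\K}$ via scaling from the center of an inscribed ball) is exactly the template behind Lemma~\ref{lem:gaussian-helper} and Lemma~\ref{lem:exp-expansion}, and is the right way to prove this statement.

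Two technical points deserve care. First, in step (a) you invoke $\sigma^2\ge 1/n$ to bound $\tau^{-n/2}\le e^{hn^2/2}$; this hypothesis is not part of the lemma statement but is satisfied throughout the annealing scheme, so the argument suffices for every use in the paper (and also in \cite{KV25sampling}). Second, in step (b) you assert that $\bar{\K}\subset\R^{n+1}$ contains a \emph{unit} ball centered at $(0,t_0)$. The oracle assumption only gives $B_1(0)$ inside the $x$-ground set $\{V-\min V\le 10n\}$, which translates in the lifted space to a ball of radius $r<1$ inside $\bar{\K}$ (this is where the specific constant $24$, rather than $1$, enters, just as $13$ enters in Lemma~\ref{lem:exp-expansion}). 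With inscribed radius $r$ one gets $\bar{\K}_s\subset p+(1+s/r)(\bar{\K}-p)$, so the Jacobian is $(1+s/r)^{n+1}$ and the $t$-drift is $e^{\rho(s/r)(t_0-l')}$; placing $t_0$ near the bottom of the $t$-range of $\bar{\K}$ and using $\rho\le n$ then yields the factor $e^{24sn}$. Your sketch already has all these pieces---you just need to replace ``unit ball'' by the correct inscribed radius (derived in \cite{KV25sampling} from the ground-set assumption and the truncation \eqref{eq:convex-truncation}) to recover the stated constant.
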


We will use the following inequality:
\begin{align}
\frac{\int_{\widetilde{\K}}\tau^{n/2}\exp(-\frac{1}{2\tau\sigma^{2}}\,\norm{y_{\tau}}^{2}-\rho s+\half\,\rho^{2}h)}{\int_{\bar{\K}}\exp(-\frac{1}{2\sigma^{2}}\,\norm x^{2}-\rho t)} & \underset{(i)}{=}\frac{\tau^{-n/2}\int_{\bar{\K}_{\delta}}\exp(-\frac{1}{2\tau\sigma^{2}}\,\norm y^{2}-\rho s-\half\,\rho^{2}h)}{\int_{\bar{\K}}\exp(-\frac{1}{2\sigma^{2}}\,\norm x^{2}-\rho t)}\nonumber \\
 & \underset{(ii)}{\leq}2\exp(hn^{2}+24\delta n)\leq2e^{1.1c+24t}\,,\label{eq:tilt-Gaussian-ineq}
\end{align}
where $(i)$ follows from change of variables, and $(ii)$ follows
from Lemma~\ref{lem:annealing-expansion}.

\paragraph{(1) Failure probability.}

For a distribution $\nu\ll\mu^{V}$, the failure probability is bounded
as 
\[
\E_{\nu_{h}}[(1-\ell)^{N}]\leq M_{2}\sqrt{\E_{\mu_{h}}[(1-\ell)^{2N}]}\,,
\]
where $\nu_{h}=\nu*\gamma_{h}$, $\mu_{h}=\mu^{V}*\gamma_{h}$, and
$M_{2}:=\norm{\D\nu/\D\mu^{V}}_{L^{2}(\mu^{V})}$. Emulating the proof
of \cite[Lemma 3.10]{KV25sampling} with $N$ there replaced by $2N$
and $M$ there replaced by $M_{2}$, we decompose
\[
\E_{\mu_{h}}[(1-\ell)^{2N}]=\int_{\widetilde{\K}^{c}}\cdot+\int_{\widetilde{\K}\cap[\ell\geq N^{-1}\log(3kM_{2}/\eta)]}\cdot+\int_{\widetilde{\K}\cap[\ell\leq N^{-1}\log(3kM_{2}/\eta)]}=:\msf A+\msf B+\msf C\,.
\]
Then,
\begin{align*}
\msf A & \le\mu^{W}(\widetilde{\K}^{c})\leq\exp\bpar{-\frac{t^{2}}{2c}+c}\,,\\
\msf B & \leq\int_{\widetilde{\K}\cap[\ell\geq N^{-1}\log(3kM_{2}/\eta)]}\exp(-2\ell N)\,\D\mu^{W}\leq\bpar{\frac{\eta}{3kM_{2}}}^{2}\,,\\
\msf C & \leq\int_{\widetilde{\K}\cap[\ell\leq N^{-1}\log(3kM_{2}/\eta)]}\frac{\tau^{n/2}\ell(y,s)\exp(-\frac{1}{2\tau\sigma^{2}}\,\norm{y_{\tau}}^{2})\exp(-\rho s+\half\,\rho^{2}h)}{\int_{\bar{\K}}\exp(-\frac{1}{2\sigma^{2}}\,\norm x^{2}-\rho t)}\\
 & \leq\frac{\log(3kM_{2}/\eta)}{N}\,\frac{\int_{\widetilde{\K}}\tau^{n/2}\exp(-\frac{1}{2\tau\sigma^{2}}\,\norm{y_{\tau}}^{2}-\rho s+\half\,\rho^{2}h)}{\int_{\bar{\K}}\exp(-\frac{1}{2\sigma^{2}}\,\norm x^{2}-\rho t)}\underset{\eqref{eq:tilt-Gaussian-ineq}}{\leq}\frac{\log(3kM_{2}/\eta)}{N}\,2e^{25t}\,,
\end{align*}
where the last inequality follows from the choice of $c=\frac{(\log\log S)^{2}}{4\cdot24^{2}\log S}$
and $t=\tfrac{1}{25}\log\log S$ for $S=\frac{16kM_{2}}{\eta}$. Taking
$N=2S^{2}\log^{2}S$, we obtain that
\[
\E_{\mu_{h}}[(1-\ell)^{2N}]\leq\bpar{\frac{\eta}{kM_{2}}}^{2}\,.
\]
Therefore, the total failure probability during $k$ iterations is
at most $\eta$ by the union bound.

\paragraph{(2) Complexity of the backward step.}

Let $p=1+\frac{1}{\alpha}$ and $q=1+\alpha$ with $\alpha=\log N\geq1$.
Then,
\[
\E_{\nu_{h}}\bbrack{\frac{1}{\ell}\wedge N}=\int_{\widetilde{\K}\cap[\ell\geq N^{-p}]}\cdot+\int_{\widetilde{\K}\cap[\ell<N^{-p}]}\cdot+\int_{\widetilde{\K}^{c}}\cdot=:\msf A+\msf B+\msf C\,,
\]
where
\begin{align*}
\msf A & \leq M_{q}\,\Bpar{\int_{\widetilde{\K}\cap[\ell\geq N^{-p}]}\frac{1}{\ell^{p}}\wedge N^{p}\,\D\mu_{h}}^{1/p}\\
 & \leq M_{q}\,\Bpar{\int_{\widetilde{\K}\cap[\ell\geq N^{-p}]}\frac{1}{\ell^{p-1}}\,\frac{\tau^{n/2}\exp(-\frac{1}{2\tau\sigma^{2}}\,\norm{y_{\tau}}^{2}-\rho s+\half\,\rho^{2}h)}{\int_{\bar{\K}}\exp(-\frac{1}{2\sigma^{2}}\,\norm x^{2}-\rho t)}\,\D y}^{1/p}\\
 & \leq M_{q}N^{1/\alpha}\Bpar{\frac{\int_{\widetilde{\K}}\tau^{n/2}\exp(-\frac{1}{2\tau\sigma^{2}}\,\norm{y_{\tau}}^{2}-\rho s+\half\,\rho^{2}h)}{\int_{\bar{\K}}\exp(-\frac{1}{2\sigma^{2}}\,\norm x^{2}-\rho t)}}^{1/p}\underset{\eqref{eq:tilt-Gaussian-ineq}}{\leq}M_{q}e^{1+25t}\,,\\
\msf B & \leq N\int_{\widetilde{\K}\cap[\ell<N^{-p}]}\frac{\D\nu_{h}}{\D\mu_{h}}\,\D\mu_{h}\leq NM_{q}\,\Bpar{\int_{\widetilde{\K}\cap[\ell<N^{-p}]}\frac{\tau^{n/2}\ell(y)\exp(-\frac{1}{2\tau\sigma^{2}}\,\norm{y_{\tau}}^{2}-\rho s+\half\,\rho^{2}h)}{\int_{\bar{\K}}\exp(-\frac{1}{2\sigma^{2}}\,\norm x^{2}-\rho t)}\,\D y}^{1/p}\\
 & \underset{\eqref{eq:tilt-Gaussian-ineq}}{\leq}M_{q}e^{25t}\,,\\
\msf C & \leq N\int_{\widetilde{\K}^{c}}\frac{\D\nu_{h}}{\D\mu_{h}}\,\D\mu_{h}\leq NM_{2}\bpar{\mu_{h}(\widetilde{\K}^{c})}^{1/2}\leq NM_{2}\exp\bpar{-\frac{t^{2}}{4c}+\frac{c}{2}}\,.
\end{align*}
Adding these up,
\[
\E_{\nu_{h}}\bbrack{\frac{1}{\ell}\wedge N}\leq M_{q}\,\Bpar{4e^{1+25t}+N\exp\bpar{-\frac{t^{2}}{4c}+\frac{c}{2}}}\lesssim M_{q}\log S\,.
\]

Combining the per-step guarantees, we can bound query complexity of
sampling from $\mu_{\sigma^{2},\rho}$.
\begin{proof}
[Proof of Theorem~\ref{thm:body-exp-tilt-Gauss}] By \eqref{eq:psann-mixing},
$\psann$ can achieve $\eu R_{2}(\nu_{k}\mmid\mu)\leq\veps$ if 
\[
k\gtrsim n^{2}(\sigma^{2}\vee1)\log\frac{kM_{2}}{\eta}\log\frac{\log M_{2}}{\veps}\bpar{\gtrsim h^{-1}\clsi(\mu_{\sigma^{2},\rho})\log\frac{\eu R_{2}(\mu_{0}\mmid\mu_{\sigma^{2},\rho})}{\veps}}\,.
\]
This is fulfilled if $k\gtrsim n^{2}(\sigma^{2}\vee1)\log^{2}\frac{M_{2}}{\eta\veps}$
as claimed. From the earlier analysis of failure probability and expected
number of queries, $\psann$ succeeds with probability at least $1-\eta$,
using 
\[
\Otilde\bpar{M_{c}n^{2}(\sigma^{2}\vee1)\log^{3}\frac{1}{\eta\veps}}
\]
queries in total, where $c=1+\log N\leq6\log\frac{16kM_{2}}{\eta}$.

When $\rho=n$ and $\sigma^{2}\gtrsim R\lda^{1/2}\log^{2}n\log^{2}R\lda^{-1/2}$
for $\lda=\norm{\cov\pi}$, by Lemma~\ref{lem:global-lsi-ann} we
could use
\[
\clsi(\mu_{\sigma^{2},n})\lesssim(R\vee1)(\lda^{1/2}\vee1)\log n.
\]
Using a similar argument, it suffices to use
\[
k=\Otilde\bpar{n^{2}(R\vee1)(\lda^{1/2}\vee1)\log^{2}\frac{M_{2}}{\eta\veps}}
\]
iterations, and the total query complexity will be
\[
\Otilde\bpar{M_{c}n^{2}(R\vee1)(\lda^{1/2}\vee1)\log^{3}\frac{1}{\eta\veps}}\,.
\]
\end{proof}

\subsection{Faster warm-start generation for logconcave distributions\label{subsec:warm-generation-LC}}

\subsubsection{Algorithm\label{subsec:TCG-algorithm}}

For a target distribution $\pi^{X}\propto\exp(-V)$ over $\Rn$, we
first reduce it to the exponential distribution $\pi(x,t)\propto e^{-nt}|_{\K}$
as in \eqref{eq:exp-red}, and then truncate $\K$ to a smaller convex
domain. As demonstrated in \cite[Convex truncation and \S3.1]{KV25sampling},
for any given $\veps>0$ and $l=\log\frac{2e}{\veps}$, one may assume
that $x_{0}=0$ and then consider
\begin{equation}
\bar{\K}:=\K\cap\{B_{Rl}(0)\times[-21,13l-6]\}\,,\label{eq:convex-truncation}
\end{equation}
which satisfies $\pi(\R^{n+1}\backslash\bar{\K})\leq\veps$. Taking
$\veps=1/2$, we can ensure that $\bar{\pi}=\pi|_{\bar{\K}}$ is $\O(1)$-close
to $\pi$ in $\eu R_{\infty}$. Let $D:=Rl$. We note that $\bar{\K}$
has the diameter $D$ and $30$ in the $x$ and $t$-direction, respectively.

We now adapt and accelerate $\tgc$, a warm-start generating algorithm
proposed in \cite{KV25sampling}. We denote $\mu_{i}:=\mu_{\sigma_{i}^{2},\rho_{i}}$
and set failure probability and target accuracy to $\eta/m$ and $\veps/m$,
where $m$ is the number of total phases throughout the algorithm.
\begin{itemize}
\item \textbf{{[}Phase I{]}} $\sigma^{2}$-warming ($n^{-1}\leq\sigma^{2}\leq1$)
\begin{itemize}
\item Initialization: Sample from $\mu_{0}\propto\exp(-\frac{n}{2}\,\norm x^{2})|_{\bar{\K}}$
(run rejection sampling with proposal $\mc N(0,n^{-1}I_{n})\otimes\text{Unif}\,([-21,13l-6])$).
\item Run $\psann$ with initial $\mu_{i}$ and target $\mu_{i+1}$, where
\[
\sigma_{i+1}^{2}=\sigma_{i}^{2}\bpar{1+\frac{1}{(qn)^{1/2}}}\,.
\]
\end{itemize}
\item \textbf{{[}Phase II{]}} $\rho$-annealing ($\sigma^{2}\approx1$,
$1\leq\rho\leq n$)
\begin{itemize}
\item Initialization: Run $\psann$ with initial $\mu_{i}\propto\exp(-\frac{1}{2}\,\norm x^{2})|_{\bar{\K}}$
and target $\mu_{i+1}\propto\exp(-\frac{1}{2}\,\norm x^{2}-t)|_{\bar{\K}}$.
\item Run $\psann$ with initial $\mu_{i}$ and target $\mu_{i+1}$, where
\[
\sigma_{i+1}^{2}=\sigma_{i}^{2}\bpar{1+\frac{1}{\bpar{16q\,(q\vee n)}^{1/2}}}^{-1}\quad\&\quad\rho_{i+1}=\rho_{i}\bpar{1+\frac{1}{\bpar{16q\,(q\vee n)}^{1/2}}}\,.
\]
\item Run the following inner annealing: until $\sigma_{i+1}^{2}\leq1$,
run $\psann$ with initial $\mu_{i+1}$ and \emph{new} $\mu_{i+1}$
defined by 
\[
\sigma_{i+1}^{2}\gets\sigma_{i+1}^{2}\bpar{1+\frac{\sigma_{i+1}}{q^{1/2}D}}\,,
\]
\end{itemize}
\item \textbf{{[}Phase III{]}} $\sigma^{2}$-annealing ($1\leq\sigma^{2}\leq D^{2}$,
$\rho=n$)
\begin{itemize}
\item Run $\psann$ with initial $\mu_{i}=\bar{\pi}\gamma_{\sigma_{i}^{2}}$
and target $\mu_{i+1}=\bar{\pi}\gamma_{\sigma_{i+1}^{2}}$, where
\[
\sigma_{i+1}^{2}=\sigma_{i}^{2}\bpar{1+\frac{\sigma_{i}}{q^{1/2}D}}\,.
\]
\item Termination $(\sigma^{2}=D^{2})$: run $\psexp$ with initial $\mu_{D^{2},n}=\bar{\pi}\gamma_{D^{2}}$
and target $\pi\propto\exp(-nt)|_{\K}$.
\end{itemize}
\end{itemize}

\subsubsection{Analysis}

Just as in the previous section, we first carefully pick all parameters.

\paragraph{Choice of parameters.}

For $q\geq2$, it is clear that consecutive distributions in Phase
I and Phase II are $\O(1)$ in $\eu R_{q}$ by Lemma~\ref{lem:global-annealing},
while closeness within the inner annealing of Phase II follows from
Lemma~\ref{lem:variance-annealing}. To see this, we note that an
annealing distribution in the inner annealing can be written as 
\[
\mu_{j}=\exp\bpar{-\rho t-\frac{1}{2\sigma_{j}^{2}}\,\norm x^{2}}\big|_{\bar{\K}}\,,
\]
so its $X$-marginal can be written as $\mu_{j}^{X}\propto\nu\gamma_{\sigma_{j}^{2}}$
for some logconcave distribution $\nu$ over $\Rn$. Then, one can
readily check that
\[
\Bnorm{\frac{\D\mu_{j}}{\D\mu_{j+1}}}_{L^{q}(\mu_{j+1})}=\Bnorm{\frac{\D\mu_{j}^{X}}{\D\mu_{j+1}^{X}}}_{L^{q}(\mu_{j+1}^{X})}\,.
\]
Closeness in Phase III also follows in the same way via Lemma~\ref{lem:variance-annealing}.

The number of inner phases is bounded by 
\[
m\leq(qn)^{1/2}\log n+\bpar{16q\,(q\vee n)}^{1/2}\log n\times\frac{D}{(q\vee n)^{1/2}}+q^{1/2}D\log D^{2}\lesssim q^{1/2}n^{1/2}D\log nD=:m_{\max}(q)\,.
\]
By Theorem~\ref{thm:body-exp-tilt-Gauss}, the number of iterations
for each inner phase is 
\[
k=\Otilde\Bpar{n^{2}\bpar{(\sigma^{2}\vee1)+(D\vee1)(\lda^{1/2}\vee1)}\log^{2}\frac{m_{\max}^{2}(q)}{\eta\veps}}\lesssim\Otilde\bpar{n^{2}D^{2}\log^{2}\frac{q}{\eta\veps}}=:k_{\max}(q)\,,
\]
and it suffices to take
\[
q\gtrsim\log\frac{qnD}{\eta\veps}\bpar{\gtrsim\log\frac{k_{\max}(q)m_{\max}(q)}{\eta}\gtrsim6\log\frac{16kmM_{2}}{\eta}}
\]
in order to have a provable bound on a query complexity of $\psann$
under $\eu R_{q}$-warmness. Since the condition on $q$ is fulfilled
if
\[
q\gtrsim\log\frac{nD}{\eta\veps}=\Otilde(1)\,,
\]
we set $q$ to the RHS above, from which $k_{\max}(q)$ and $m_{\max}(q)$
are determined. 

\paragraph{Complexity bound.}

We now bound the query complexity of each phase.
\begin{lem}
[Phase I,  $\sigma^2$-warming] With probability at least $1-\eta$,
Phase I outputs a sample whose law $\nu$ satisfies 
\[
\norm{\nu-\gamma|_{\bar{\K}}}_{\tv}\leq\veps\,,
\]
using $\Otilde(n^{5/2}\log^{4}\nicefrac{D}{\eta\veps})$ evaluation
queries in expectation.
\end{lem}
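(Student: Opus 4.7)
The plan is a telescoping-in-$\tv$ argument combining (i) a direct rejection-sampling initialization, (ii) per-step $\eu R_q$-closeness of consecutive annealing distributions via Lemma \ref{lem:global-annealing}, and (iii) per-phase query complexity of $\psann$ via Theorem \ref{thm:body-exp-tilt-Gauss}, together with a union bound on failures.

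For initialization, the proposal $\mc N(0,n^{-1}I_{n}) \otimes \Unif([-21,13l-6])$ agrees, up to its normalizing constant on its support, with the density of $\mu_{0} \propto \exp(-\tfrac{n}{2}\norm{x}^{2})|_{\bar{\K}}$; the inclusion $B_{1}(0) \subset \msf L_{\pi,g}$ combined with the $\Theta(l)$-length $t$-slab of $\bar{\K}$ gives a slab $B_{1}(0) \times I \subset \bar{\K}$ of proposal mass $\Omega(1/l)$, so rejection succeeds in $\Otilde(1)$ expected evaluation queries and returns an exact sample from $\mu_{0}$.

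For the annealing, I would write $\mu_{i} \propto \exp(-V_{i})|_{\bar{\K}}$ with $V_{i}(x,t) = \norm{x}^{2}/(2\sigma_{i}^{2})$, extended by $+\infty$ off $\bar{\K}$ so that $V_{i}$ is convex on all of $\R^{n+1}$. Since $\sigma_{i+1}^{2} = \sigma_{i}^{2}(1+\alpha)$ with $\alpha = (qn)^{-1/2}$, one has $V_{i} = (1+\alpha)V_{i+1}$, so the nonnegative-$\alpha$ branch of Lemma \ref{lem:global-annealing} yields $\eu R_{q}(\mu_{i} \mmid \mu_{i+1}) \le q(n+1)\alpha^{2}/2 = O(1)$, hence $M_{q} = \norm{\D\mu_{i}/\D\mu_{i+1}}_{L^{q}(\mu_{i+1})} = O(1)$ uniformly in $i$; the lemma extends verbatim to the truncated setting because its proof uses only convexity of $V$ together with the perspective-function argument for logconcavity of $s \mapsto s^{n+1}\int e^{-sV}$, both preserved under extension by $+\infty$. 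Since $\sigma_{i}^{2} \le 1$ throughout Phase I, Lemma \ref{lem:lsi-annealing} gives $\clsi(\mu_{i}) \lesssim 1$, and with the global choice $q = \Theta(\log\tfrac{nD}{\eta\veps})$ and per-phase accuracy $\veps/m$, failure budget $\eta/m$ (where $m$ is the global phase count of $\tgc$), Theorem \ref{thm:body-exp-tilt-Gauss} bounds each phase by $\Otilde(n^{2}\log^{3}\tfrac{m}{\eta\veps})$ evaluation queries. The Phase I count is $m_{1} = O((qn)^{1/2}\log n) = \Otilde(n^{1/2})$.

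Finally, following the triangle-inequality / data-processing argument from \S\ref{subsubsec:faster-sampling} and using Pinsker to convert each per-phase $\eu R_{2}$-bound to $\tv$, the accumulated $\tv$-error after $m_{1}$ phases is at most $\veps$, while a union bound gives total failure probability at most $\eta$. Multiplying $m_{1} = \Otilde(n^{1/2})$ phases by $\Otilde(n^{2}\log^{3}(\cdot))$ queries per phase yields the claimed $\Otilde(n^{5/2}\log^{4}\tfrac{D}{\eta\veps})$ bound. No single step is a genuine obstacle; the main care is the global bookkeeping ensuring $q$ and $m$ are chosen consistently across Phases I--III so that the $M_{q}$-warmness hypothesis of Theorem \ref{thm:body-exp-tilt-Gauss} is satisfied at every phase of the full algorithm.
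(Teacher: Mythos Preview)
Your proposal is correct and follows essentially the same approach as the paper: rejection sampling for initialization, Lemma~\ref{lem:global-annealing} for $\eu R_q$-closeness of consecutive annealing distributions under the update $\sigma^{2}\gets\sigma^{2}(1+(qn)^{-1/2})$, Theorem~\ref{thm:body-exp-tilt-Gauss} for the per-phase $\Otilde(n^{2})$ complexity (using $\clsi\lesssim 1$ when $\sigma^{2}\le 1$), a union bound for failure, and the telescoping $\tv$-argument for error accumulation. Your explicit discussion of extending $V_i$ by $+\infty$ off $\bar{\K}$ to invoke Lemma~\ref{lem:global-annealing} in dimension $n+1$ is a minor elaboration the paper leaves implicit, but the substance is identical.
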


\begin{proof}
At initialization, one can readily check that $\mu_{0}$ can be sampled
by rejection sampling with $\O(1)$  queries in expectation (see \cite[Lemma 5.6]{KV25sampling}).

For any given $\sigma^{2}\in[n^{-1},1]$, we need at most $(qn)^{1/2}$
inner phases to double the initial $\sigma^{2}$. Any pair of consecutive
distributions is $\O(1)$-close in $\eu R_{q}$ by Lemma~\ref{lem:global-annealing}.
By Theorem~\ref{thm:body-exp-tilt-Gauss}, each inner phase then
requires 
\[
\Otilde\bpar{n^{2}\log^{3}\frac{m_{\max}^{2}}{\eta\veps}}=\Otilde\bpar{n^{2}\log^{3}\frac{D}{\eta\veps}}
\]
evaluation queries in expectation. Therefore, Phase I uses 
\[
\Otilde\bpar{n^{2}\log^{3}\frac{D}{\eta\veps}}\times(qn)^{1/2}\times\log n=\Otilde\bpar{n^{5/2}\log^{4}\frac{D}{\eta\veps}}
\]
evaluation queries in expectation. The final failure probability is
immediate from the union bound, while the final $\tv$-guarantee follows
from the triangle inequality as in \S\ref{sec:techniques}.
\end{proof}
\begin{lem}
[Phase II,  $\rho$-annealing] With probability at least $1-\eta$,
Phase II started at the output of Phase I returns a sample whose law
$\nu$ satisfies
\[
\norm{\nu-\bar{\pi}\gamma}_{\tv}\leq\veps\,,
\]
using $\Otilde(n^{2}(n^{1/2}\vee D)\log^{4}\nicefrac{1}{\eta\veps})$
evaluation queries in total.
\end{lem}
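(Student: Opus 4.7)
I will analyze Phase II in three sub-phases, treating each multiplicative-update step as one call to $\psann$ whose per-step cost is supplied by Theorem~\ref{thm:body-exp-tilt-Gauss}: (a) an \emph{initialization} that turns on the $t$-potential, raising $\rho$ from $0$ to $1$ with $\sigma^2$ fixed at $1$; (b) the \emph{outer ladder} of $m_{\mathrm{out}}\asymp (q(q\vee n))^{1/2}\log n$ simultaneous updates $(\sigma^{-2},\rho)\mapsto(1+\alpha_{\mathrm{out}})(\sigma^{-2},\rho)$ with $\alpha_{\mathrm{out}}=(16q(q\vee n))^{-1/2}$; and (c) the \emph{inner annealing} that, after each outer step, restores $\sigma^2\leq 1$ via $\sigma^2\gets\sigma^2(1+\sigma/(q^{1/2}D))$. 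The final total-variation guarantee will follow from a standard triangle-inequality plus data-processing argument along the whole chain, running each $\psann$ call to accuracy $\veps/m$ in $\eu R_2$, where $m$ is the total number of sub-phases.

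\textbf{Initialization (main obstacle).} The immediate difficulty is that the multiplicative update $\rho\mapsto(1+\alpha)\rho$ cannot lift $\rho$ off $0$, and a single $\psann$ call from $\gamma|_{\bar{\K}}$ to $\exp(-\tfrac{1}{2}\norm{x}^{2}-t)|_{\bar{\K}}$ has warmness $M_c=\exp(\Theta(l))=(1/\veps)^{\Theta(1)}$ for every $c\geq 1$, because the density ratio $\propto e^{t}$ ranges over an interval of length $t_{\max}-t_{\min}=\Theta(l)$ with $l=\log(2e/\veps)$, which would wreck the polylogarithmic dependence on $\veps$. I bypass this by introducing the additive path $V_s(x,t)=\tfrac{1}{2}\norm{x}^{2}+s\,t$ for $s\in[0,1]$ and discretizing it into $m_0=O(l)$ sub-steps of size $\Delta s=O(1/l)$. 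Between two successive densities the ratio is $\propto e^{-\Delta s\,t}$ with $|\Delta s\,t|=O(1)$, so a direct bounded-ratio estimate gives $\eu R_\infty=O(1)$ and hence $M_c=O(1)$ for every $c\geq 1$; Theorem~\ref{thm:body-exp-tilt-Gauss} then executes each sub-step in $\Otilde(n^{2})$ evaluations (since $\sigma^2=1$), for an initialization total of $\Otilde(n^{2}\,l)=\Otilde(n^{2})$.

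\textbf{Outer ladder and inner restoration.} On the outer ladder the simultaneous scaling replaces $V_i=\tfrac{1}{2\sigma_i^2}\norm{x}^{2}+\rho_i\,t$ by $V_{i+1}=(1+\alpha_{\mathrm{out}})V_i$, so Lemma~\ref{lem:global-annealing} (applied in effective dimension $n+1$) yields $\eu R_q(\mu_{i+1}\mmid\mu_i)\lesssim q(n+1)\alpha_{\mathrm{out}}^{2}\lesssim 1$; the reverse direction $\eu R_q(\mu_i\mmid\mu_{i+1})\lesssim 1$ follows from the $\alpha\in[-\delta/2,0]$ branch of the same lemma with $\delta\asymp\alpha_{\mathrm{out}}$, the factor $16$ in $\alpha_{\mathrm{out}}$ being precisely what makes $q\delta<1$. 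Each outer $\psann$ call costs $\Otilde(n^{2})$ since $\sigma_i^{2}\leq 1$, so the outer total is $m_{\mathrm{out}}\cdot\Otilde(n^{2})=\Otilde(n^{5/2})$. For the inner annealing, writing $\mu_{\sigma^{2},\rho}=\nu\cdot\gamma_{\sigma^{2}}$ with $\nu\propto e^{-\rho\,t}\ind_{\bar{\K}}$ logconcave of $x$-diameter $O(D)$, Lemma~\ref{lem:variance-annealing} supplies $\O(1)$-closeness in $\eu R_q$ per inner step; the number of inner steps needed to undo one outer contraction is $\asymp\alpha_{\mathrm{out}}\cdot q^{1/2}D/\sigma\asymp D/(q\vee n)^{1/2}$, so the inner total is $m_{\mathrm{out}}\cdot D/(q\vee n)^{1/2}\cdot\Otilde(n^{2})=\Otilde(n^{2}D)$. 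Summing the three contributions yields $\Otilde(n^{2}(n^{1/2}\vee D)\log^{4}\nicefrac{1}{\eta\veps})$ evaluations in expectation. Finally, iterating the triangle inequality for $\tv$ together with the data-processing inequality (Lemma~\ref{lem:DPI}) transfers the per-step $\eu R_2$ guarantees into the target bound $\norm{\nu-\bar\pi\gamma}_{\tv}\leq\veps$, while the overall failure probability $\leq\eta$ follows by a union bound over the $m=\Otilde(n^{1/2}\vee D)$ sub-phases.
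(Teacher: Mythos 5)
Your treatment of the outer ladder and inner restoration mirrors the paper's proof: you invoke the negative-$\alpha$ branch of Lemma~\ref{lem:global-annealing} in the correct direction $\eu R_q(\mu_i\mmid\mu_{i+1})$ for the simultaneous $(\sigma^{-2},\rho)$ scaling, apply Lemma~\ref{lem:variance-annealing} per inner step by reducing to the $X$-marginal, and count $D/(q\vee n)^{1/2}$ inner phases per outer step and $(16q(q\vee n))^{1/2}\log n$ outer steps, giving $\Otilde(n^{5/2})+\Otilde(n^{2}D)$.

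However, the ``main obstacle'' you identify in the initialization is not an obstacle, and your workaround rests on a misreading. In \eqref{eq:convex-truncation} the truncation parameter $\veps$ is fixed to $1/2$ (so $l=\log(4e)=\Theta(1)$); consequently the $t$-range of $\bar{\K}$ is an absolute constant ($\approx 30$), independent of the final sampling accuracy. Hence the density ratio of $\exp(-\tfrac12\norm x^2-t)|_{\bar\K}$ to $\gamma|_{\bar\K}$, which is $\propto e^{-t}$, varies by only an $\O(1)$ multiplicative factor, so $\eu R_\infty=\O(1)$ and a \emph{single} $\psann$ call suffices — this is exactly what the paper does. Your $O(l)$-step additive discretization of $V_s=\tfrac12\norm x^2+st$ would produce a valid chain (each sub-step has bounded ratio, and Theorem~\ref{thm:body-exp-tilt-Gauss} applies for $\rho\in(0,n]$), and it happens not to change the final bound because the extra $\log(1/\veps)$ factor is absorbed into $\log^4\tfrac{1}{\eta\veps}$; but the stated premise — that $t_{\max}-t_{\min}=\Theta(\log(1/\veps))$ for the final accuracy $\veps$ — is false, since the truncation accuracy is decoupled from the sampling accuracy (the truncation error is corrected only at the very end, via the $\O(1)$-warm $\eu R_\infty$-bound $\bar\pi\gamma_{D^2}/\pi\leq2$ used at termination). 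You should replace the multi-step initialization with the paper's one-step observation.
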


\begin{proof}
At initialization, the two distributions are $\O(1)$-close in $\eu R_{\infty}$
since $\bar{\K}$ has diameter of $\O(1)$ in the $t$-direction.
Hence, the query complexity of $\psann$ is simply $\Otilde(n^{2}\log^{3}\nicefrac{D}{\eta\veps})$.

The outer annealing, which updates $\sigma_{i}^{2}$ and $\rho_{i}$
simultaneously, happens at most $(qn)^{1/2}\log n$ times. For each
outer annealing, the second form of Lemma~\ref{lem:global-annealing}
with $\alpha=-\gamma\,(1+\gamma)^{-1}$ for $\gamma=(16q\,(q\vee n))^{-1/2}$
implies that
\[
\eu R_{q}(\mu_{i}\mmid\mu_{i+1})=\O(1)\,.
\]
Then, the outer annealing via $\psann$ requires $\Otilde(n^{2}\log^{3}\nicefrac{D}{\eta\veps})$
queries in expectation. 

This outer annealing is immediately followed by the inner annealing,
where it increases $\sigma^{2}$ from $(1+(16q\,(q\vee n))^{-1/2})^{-1}$
to $1$. Under the update of $\sigma^{2}\gets\sigma^{2}(1+\frac{\sigma}{q^{1/2}D})$,
there would be at most 
\[
\frac{q^{1/2}D}{\sigma}\times\frac{1}{\bpar{16q\,(q\vee n)}^{1/2}}\leq\frac{D}{(q\vee n)^{1/2}}
\]
inner-annealing phases, and each annealing via $\psann$ also takes
$\Otilde(n^{2}\log^{3}\nicefrac{D}{\eta\veps})$ queries.

Putting these together, the total query complexity throughout the
$\rho$-annealing is
\[
\bpar{16q\,(q\vee n)}^{1/2}\log n\times\Bpar{\Otilde\bpar{n^{2}\log^{3}\frac{D}{\eta\veps}}+\Otilde\bpar{n^{2}\log^{3}\frac{D}{\eta\veps}}\times\frac{D}{(q\vee n)^{1/2}}}\leq\Otilde\bpar{n^{2}(n^{1/2}\vee D)\log^{4}\frac{1}{\eta\veps}}\,,
\]
which completes the proof.
\end{proof}
\begin{lem}
[Phase III,  $\sigma^2$-annealing] For $\lda=\norm{\cov\pi^{X}}$,
with probability at least $1-\eta$, Phase III started at the output
of Phase II returns a sample whose law $\nu$ satisfies
\[
\norm{\nu-\pi}_{\tv}\leq\veps\,,
\]
using $\Otilde(n^{2}D^{3/2}(\lda^{1/4}\vee1)\log^{4}\nicefrac{1}{\eta\veps}\log\nicefrac{D^{2}}{\lda})$
evaluation queries in expectation.
\end{lem}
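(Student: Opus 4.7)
The plan is to parallel Lemma~\ref{lem:unif-annealing}, using $\psann$ in each inner annealing step and terminating with $\psexp$. I will verify consecutive R\'enyi-closeness between annealing distributions, split the $\sigma^{2}$-annealing into a low regime governed by the naive bound $\clsi(\mu_{i}) \lesssim \sigma_{i}^{2} \vee 1$ (Lemma~\ref{lem:lsi-annealing}) and a high regime governed by the improved bound $\clsi(\mu_{i}) \lesssim (D \vee 1)(\lda^{1/2} \vee 1)\log n$ (Lemma~\ref{lem:global-lsi-ann}), and finally account for the termination call to $\psexp$.

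First I would verify R\'enyi-closeness between $\mu_{i} = \bar{\pi}\gamma_{\sigma_{i}^{2}}$ and $\mu_{i+1} = \bar{\pi}\gamma_{\sigma_{i+1}^{2}}$. Since $\gamma_{\sigma^{2}}$ acts only on the $x$-coordinates, the conditional distribution of $t$ given $x$ is independent of $i$, so $\eu R_{q}(\mu_{i} \mmid \mu_{i+1}) = \eu R_{q}(\mu_{i}^{X} \mmid \mu_{i+1}^{X})$. The $x$-marginals take the form $\bar{\pi}^{X}\gamma_{\sigma^{2}}$ where $\bar{\pi}^{X}$ is logconcave with support of diameter at most $D$, so Lemma~\ref{lem:variance-annealing} with $\alpha_{i} = \sigma_{i}/(q^{1/2}D)$ yields $\eu R_{q}(\mu_{i} \mmid \mu_{i+1}) \leq 1$, giving $M_{q} = \O(1)$ for every call to $\psann$.

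Next, I would count phases and combine with the per-phase bound from Theorem~\ref{thm:body-exp-tilt-Gauss}. Doubling $\sigma^{2}$ at level $\sigma$ requires $\O(q^{1/2}D/\sigma)$ inner phases, and the full annealing spans $\O(\log(D^{2}/\lda))$ doublings. I split at the threshold $\sigma_{*}^{2} \asymp (D \vee 1)(\lda^{1/2} \vee 1)\log^{2} n \log^{2}(D^{2}/\lda)$. Below $\sigma_{*}^{2}$, each phase costs $\Otilde(n^{2}(\sigma^{2} \vee 1)\log^{3}\nicefrac{1}{\eta\veps})$, so a doubling costs $\Otilde(q^{1/2}n^{2}D\sigma \log^{3}\nicefrac{1}{\eta\veps})$; above $\sigma_{*}^{2}$, each phase costs $\Otilde(n^{2}(D \vee 1)(\lda^{1/2} \vee 1)\log^{3}\nicefrac{1}{\eta\veps})$, so a doubling costs $\Otilde(q^{1/2}n^{2}D(D \vee 1)(\lda^{1/2} \vee 1)\sigma^{-1}\log^{3}\nicefrac{1}{\eta\veps})$. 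Both expressions are maximized at $\sigma \asymp \sigma_{*}$ with common peak $\Otilde(n^{2}D^{3/2}(\lda^{1/4} \vee 1)\log^{3}\nicefrac{1}{\eta\veps})$, and summing over the $\O(\log(D^{2}/\lda))$ doublings yields the claimed annealing cost $\Otilde(n^{2}D^{3/2}(\lda^{1/4} \vee 1)\log^{4}\nicefrac{1}{\eta\veps}\log(D^{2}/\lda))$ after absorbing the $q^{1/2} = \Otilde(1)$ factor into the logs.

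For termination, I would show $\bar{\pi}\gamma_{D^{2}}$ is $\O(1)$-close to $\pi$ in $\eu R_{\infty}$ by writing the ratio as $(\bar{\pi}/\pi)\,(\gamma_{D^{2}}/\gamma_{D^{2}}|_{\bar{\K}})$: the truncation~\eqref{eq:convex-truncation} bounds the $x$-diameter of $\bar{\K}$ by $D$, so $e^{-\norm{x}^{2}/2D^{2}}$ is sandwiched between $e^{-1/2}$ and $1$ on $\bar{\K}$, while $\bar{\pi}/\pi = \O(1)$ by choice of truncation. Hence $\psexp$ (Theorem~\ref{thm:body-exp-samp}) reaches $\eu R_{2} \leq \veps^{2}/4$ in $\Otilde(n^{2}(\lda \vee 1)\log^{3}\nicefrac{1}{\eta\veps})$ queries, dominated by the annealing cost; converting via $\norm{\cdot}_{\tv}^{2} \leq \half \eu R_{2}$ and combining across all $\Otilde(1)$ inner phases by a union bound on per-phase failures (with $\eta$ suitably rescaled) and the telescoping triangle inequality for $\tv$ from Section~\ref{sec:techniques} will complete the argument. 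The delicate point will be placing the threshold $\sigma_{*}^{2}$ so that the two regime bounds match smoothly at the crossover; this is precisely where the LSI improvement of Corollary~\ref{cor:lsi-global-bound} (via Lemma~\ref{lem:global-lsi-ann}) pays off.
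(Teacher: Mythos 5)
Your proposal follows the paper's proof essentially step for step: consecutive closeness via the $X$-marginal reduction and Lemma~\ref{lem:variance-annealing}, a two-regime split at the threshold where Lemma~\ref{lem:global-lsi-ann} takes over from Lemma~\ref{lem:lsi-annealing}, per-doubling cost accounting peaking at the crossover $\sigma_{*}^{2}\asymp D(\lda^{1/2}\vee 1)\polylog$, and termination via $\psexp$ after noting $\bar{\pi}\gamma_{D^{2}}$ is $\O(1)$-warm in $\eu R_{\infty}$ with respect to $\pi$. The only cosmetic difference is that you phrase the annealing-cost peak as a matching condition at $\sigma_{*}$ whereas the paper simply bounds each regime separately; the underlying accounting is identical, and the termination ratio decomposition you give is a mild rewording of $\frac{\bar{\pi}\gamma_{D^{2}}}{\bar{\pi}}\cdot\frac{\bar{\pi}}{\pi}\lesssim 1$.
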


\begin{proof}
For any given $\sigma^{2}\in[1,D^{2}]$, its doubling requires at
most $q^{1/2}D/\sigma$ phases. Any consecutive distributions are
$\O(1)$-close in $\eu R_{q}$ (Lemma~\ref{lem:variance-annealing}),
so $\psann$ requires $\Otilde(n^{2}\sigma^{2}\log^{3}\nicefrac{D}{\eta\veps})$
queries. Thus, one doubling takes 
\[
\frac{q^{1/2}D}{\sigma}\times\Otilde\bpar{n^{2}\sigma^{2}\log^{3}\frac{D}{\eta\veps}}\leq\Otilde\bpar{n^{2}D^{3/2}(\lda^{1/4}\vee1)\log^{4}\frac{1}{\eta\veps}\log\frac{D^{2}}{\lda}}
\]
until $\sigma^{2}\lesssim D\,(\lda^{1/2}\vee1)\log^{2}n\log^{2}\nicefrac{D^{2}}{\lda}$.

When $\sigma^{2}$ exceeds this threshold, $\psann$ requires $\Otilde(n^{2}D\,(\lda^{1/2}\vee1)\log^{3}\nicefrac{1}{\eta\veps})$
queries in expectation. Hence, one doubling in this regime takes 
\[
\frac{q^{1/2}D}{\sigma}\times\Otilde\bpar{n^{2}D\,(\lda^{1/2}\vee1)\log^{3}\frac{1}{\eta\veps}}\leq\Otilde\bpar{n^{2}D^{3/2}(\lda^{1/4}\vee1)\log^{4}\frac{1}{\eta\veps}}\,.
\]

At termination, we note that
\[
\frac{\bar{\pi}\gamma_{D^{2}}}{\pi}=\frac{\bar{\pi}\gamma_{D^{2}}}{\bar{\pi}}\,\frac{\bar{\pi}}{\pi}\leq\bpar{\pi(\bar{\K})}^{-1}\leq2\,.
\]
By Theorem~\ref{thm:body-exp-samp}, $\psexp$ requires $\Otilde(n^{2}(\lda\vee1)\log^{3}\frac{D}{\eta\veps})$
queries in expectation. Adding these up, the total query complexity
is
\[
\Otilde\bpar{n^{2}D^{3/2}(\lda^{1/4}\vee1)\log^{4}\frac{1}{\eta\veps}\log\frac{D^{2}}{\lda}}
\]
as claimed.
\end{proof}
Combining the previous three lemmas together, we can establish the
query complexity of sampling from a logconcave distribution as claimed
in Theorem~\ref{thm:body-LC-sampling}.
\begin{proof}
[Proof of Theorem~\ref{thm:body-LC-sampling}] We simply add up
the query complexities in the previous three lemmas, with $D\asymp R$.
That is,
\begin{align*}
 & n^{5/2}\log^{4}\frac{R}{\eta\veps}+n^{2}(n^{1/2}\vee R)\log^{4}\frac{1}{\eta\veps}+n^{2}R^{3/2}(\lda^{1/4}\vee1)\log^{4}\frac{1}{\eta\veps}\log\frac{D^{2}}{\lda}\\
 & =\Otilde\Bpar{n^{2}\max\bbrace{n^{1/2},R^{3/2}(\lda^{1/4}\vee1)}\log^{5}\frac{R\lda^{-1/2}}{\eta\veps}}\,.
\end{align*}
When $\pi^{X}$ is further near-isotropic, the claim follows from
$R\lesssim n^{1/2}$ and $\lda\asymp1$.
\end{proof}

\begin{acknowledgement*}
Yunbum Kook thanks Sinho Chewi for a discussion on log-Sobolev inequalities
for logconcave distributions with compact support when YK was visiting
the IAS in 2024. This work was supported in part by NSF Award CCF-2106444
and a Simons Investigator grant.
\end{acknowledgement*}
\bibliographystyle{alpha}
\bibliography{main}

\appendix

\section{Another proof via stochastic localization \label{app:LSI-interpolation-SL}}

In \S\ref{subsec:LSI-interpolation}, we proved $\clsi(\pi)\lesssim\max\{D\lda^{1/2},D^{2}\wedge\lda\log^{2}n\}$
using Bizeul's result on exponential concentration (Theorem~\ref{thm:bizeul-concentration})
and equivalence between Gaussian concentration and \eqref{eq:lsi}.
In this section, we present another proof for Theorem~\ref{thm:lsi-general-bound}
using Eldan's stochastic localization, which was our first proof in
an earlier version. We essentially follow the approach in~\cite[\S5]{LV24eldan},
with adaptations to accommodate the general (non-isotropic) setting.

\paragraph{Outline.}

Stochastic localization $(\pi_{t})_{t\geq0}$ is a density-valued
process driven by a stochastic linear tilting, characterized by the
stochastic differential equation, $\D\pi_{t}(x)=\pi_{t}(x)\inner{x-b_{t},\D B_{t}}$
(so $\pi_{t}(\cdot)$ is a martingale), where $b_{t}=\int x\,\pi_{t}(\D x)$
is the barycenter of $\pi_{t}$, and $B_{t}$ is a standard Brownian
motion. The explicit solution to SL is given by
\[
\pi_{t,\theta_{t}}(x)\propto\pi(x)\exp(-\inner{\theta_{t},x}-\frac{t}{2}\,\norm x^{2})\,,
\]
where the tilt process $(\theta_{t})_{t\geq0}$ satisfies $\D\theta_{t}=b_{t,\theta_{t}}\,\D t+\D B_{t}$,
and $b_{t,\theta_{t}}$ is the barycenter of $\pi_{t,\theta_{t}}$.
The distribution $\pi_{t}$ is $t$-strongly logconcave.

At a high level, we generalize a bound on $\norm{\cov\pi_{t}}$ from~\cite{KL22Bourgain}
and the approach in~\cite{LV24eldan} used to prove $\clsi(\pi)\lesssim D$
for isotropic logconcave distributions. The SL process is not affine-invariant
due to its dependence on standard Brownian motion, making generalization
to non-isotropic cases non-trivial. Moreover, the isotropic result
uses $D\geq n^{1/2}$, which may not hold for general cases.

We now sketch the proof. Recall the \emph{log-Cheeger constant} of
a probability measure $\pi$ defined as
\[
\clch(\pi)=\inf_{E:\,\pi(E)\leq\nicefrac{1}{2}}\frac{\pi(\partial E)}{\pi(E)\sqrt{\log\tfrac{1}{\pi(E)}}}\,.
\]
Since $\clsi(\pi)\asymp\clch^{-2}(\pi)$ for any logconcave probability
measure $\pi$ \cite{Ledoux94simple}, we focus on lower-bounding
$\clch(\pi)$ instead. Using the Bakry--\'Emery criterion combined
with this equivalence, we have $\clch(\pi)\gtrsim t^{1/2}$ for any
$t$-strongly logconcave probability measure $\pi$, which implies
that for the SL process $(\pi_{t})_{t\geq0}$ with $\pi_{0}=\pi$,
\[
\pi_{t}(\de E)\gtrsim t^{1/2}\,\pi_{t}(E)\sqrt{\log\tfrac{1}{\pi_{t}(E)}}\,.
\]
Since $\pi_{t}(x)$ is a martingale for $x\in\Rn$ 
\[
\pi(\de E)=\E_{\msf{SL}}[\pi_{t}(\de E)]\gtrsim t^{1/2}\,\E_{\msf{SL}}\Bbrack{\pi_{t}(E)\sqrt{\log\tfrac{1}{\pi_{t}(E)}}}\,.
\]
The main challenge is to determine how long we can run SL while $\pi_{t}(E)\log^{1/2}\nicefrac{1}{\pi_{t}(E)}$
remains close to its initial value $\pi(E)\log^{1/2}\nicefrac{1}{\pi(E)}$
with high probability.

To quantify the deviation of the martingale $g_{t}:=\pi_{t}(E)$ from
$\E_{\msf{SL}}g_{t}=\pi(E)$, we use its quadratic variation bound
$\D[g]_{t}\leq30\,\norm{\cov\pi_{t}}\,g_{t}^{2}\log^{2}\nicefrac{e}{g_{t}}\,\D t$
\cite{LV24eldan}. We can also control the rate at which $\log g_{t}^{-1}$
changes. As implied from this quadratic variation bound, we need a
high-probability control on $\norm{\cov\pi_{t}}$. In Lemma~\ref{lem:SL-operator},
we generalize the existing result for isotropic logconcave $\pi_{0}$
\cite{KL22Bourgain}, following the streamlined proof in \cite{KL24isop}:
for some universal constant $c>0$,
\[
\P_{\msf{SL}}(\exists\,t\in[0,T]:\norm{\cov\pi_{t}}\geq2\,\norm{\cov\pi})\leq\exp\bpar{-(cT\,\norm{\cov\pi})^{-1}}\,.
\]
Finally, combining these bounds, we demonstrate in Lemma~\ref{lem:random-measure-control}
that SL can be run until $(\max\{D\,\norm{\cov\pi}^{1/2},\,D^{2}\wedge\norm{\cov\pi}\log^{2}n\})^{-1}$,
thereby achieving the desired interpolation.

\paragraph{Stochastic localization.}

Our main tool is \emph{stochastic localization} (SL), which is a density-valued
process defined by
\begin{align}
\pi_{0} & =\pi\nonumber \\
\D\pi_{t}(x) & =\pi_{t}(x)\,\inner{x-b_{t},\D B_{t}}\ \text{for all }x\in\Rn\,,\tag{\ensuremath{\msf{SL}}-\ensuremath{\msf{SDE}}}\label{eq:SL-SDE}
\end{align}
where $b_{t}=\int x\,\D\pi_{t}(x)$ is the barycenter of $\pi_{t}$.
One can think of it as infinitely many SDEs that are coupled through
the barycenter $b_{t}$. Its four important properties are
\begin{enumerate}
\item (Almost surely) $\pi_{t}$ is a probability measure over $\Rn$ (i.e.,
$\int\D\pi_{t}=1$):
\[
\D\int\pi_{t}=\int\D\pi_{t}=\Bigl<\int(x-b_{t})\,\D\pi_{t}(x),\D B_{t}\Bigr>=\inner{0,\D B_{t}}=0\,.
\]
\item $\pi_{t}$ is a martingale with respect to the filtration induced
by the Brownian motion (i.e., $\E_{\msf{SL}}[\pi_{t}(x)]=\pi(x)$
for all $x\in\Rn$), where the expectation is taken over the randomness
given by $\D B_{t}$ (not $\pi$).
\item The solution to SL can be explicitly stated as follows: consider a
tilt process $(\theta_{t})_{t\geq0}$ defined by
\[
\D\theta_{t}=b_{t,\theta_{t}}\,\D t+\D B_{t}\,,
\]
where $b_{t,\theta_{t}}$ is the barycenter of the probability distribution
defined by 
\begin{equation}
\pi_{t,\theta_{t}}(x)\propto\pi(x)\,\exp\bpar{\inner{\theta_{t},x}-\frac{t}{2}\,\norm x^{2}}\propto\pi(x)\,\exp\bpar{-\frac{t}{2}\,\bnorm{x-\frac{\theta_{t}}{t}}^{2}}\,.\tag{\ensuremath{\msf{SL}}-\ensuremath{\msf{pdf}}}\label{eq:SL-pdf}
\end{equation}
The existence and uniqueness of a strong solution to this are standard
(see \cite{Chen21almost}). We abbreviate $\pi_{t}:=\pi_{t,\theta_{t}}$,
which is indeed a solution to \eqref{eq:SL-SDE}. Note that $\pi_{t}$
is $t$-strongly logconcave, since $\pi$ is logconcave.
\item For a test function $F$, the It\^o derivative of the martingale
$M_{t}=\int F(x)\,\pi_{t}(\D x)$ has a moment-generating feature:
\begin{equation}
\D M_{t}=\int_{\Rn}F(x)\,\inner{x-b_{t},\D B_{t}}\,\pi_{t}(\D x)\,.\tag{{\ensuremath{\msf{MG}}}}\label{eq:moment}
\end{equation}
\end{enumerate}

\paragraph{Proof.}

Here, we use $\Sigma_{t}$ to denote the covariance matrix of $\pi_{t}$
obtained by running the SL process with initial logconcave distribution
$\pi_{0}=\pi$, where $\Sigma_{0}:=\Sigma$ is the covariance matrix
of $\pi$.

\paragraph{(1) Operator norm control.}

One of central questions in SL is how long the process can run without
the covariance matrix deviating much in operator norm. This question
has been extensively studied and improved over time, but under the
assumption that the initial distribution $\pi_{0}=\pi$ is isotropic
(for an example of adaptation to the anisotropic setting, see \cite[Lemma 2.8, Theorem B.12]{JLLV24reducing}).
The current best result is the following:
\begin{prop}
[{\cite[Lemma 5.2]{KL22Bourgain}}] Let $\pi$ be an isotropic logconcave
distribution over $\Rn$. For $T=(C\log^{2}n)^{-1}$, we have
\[
\P(\exists\,t\in[0,T]:\norm{\Sigma_{t}}\geq2)\leq\exp\bpar{-\frac{1}{CT}}\,,
\]
where $C>0$ is a universal constant.
\end{prop}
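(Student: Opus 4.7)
The plan is to follow the Klartag--Lehec strategy: the covariance $\Sigma_t$ along SL satisfies a tractable stochastic differential equation, and under logconcavity one can pass from control of an appropriately chosen smooth scalar potential to high-probability control of the operator norm. First, I would apply It\^o's formula to the pointwise evolution \eqref{eq:SL-SDE} together with the identity $\D b_t = \Sigma_t\,\D B_t$ to derive
\[
\D\Sigma_t = -\Sigma_t^2\,\D t + \D N_t\,,
\]
where the drift $-\Sigma_t^2$ is contractive (it pushes eigenvalues down) and the martingale part $N_t$ is driven by the third moment tensor $T_t$ of $\pi_t$ about $b_t$.

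Rather than working with $\norm{\Sigma_t}$ directly, which is non-smooth, I would track the smoothed potential $\Phi_t := \tr(\Sigma_t^q)$ for $q \asymp \log n$, so that $\norm{\Sigma_t}^q \leq \Phi_t \leq n\,\norm{\Sigma_t}^q$ and hence $\Phi_t^{1/q} \asymp \norm{\Sigma_t}$. Applying It\^o's formula once more,
\[
\D\Phi_t = -q\,\tr(\Sigma_t^{q+1})\,\D t + \D A_t + \D\wt N_t\,,
\]
where $\wt N_t$ is a local martingale and $A_t$ is the second-order It\^o correction, both polynomial in the entries of $T_t$ and $\Sigma_t$.

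The crucial input is that $\pi_t$ is $t$-strongly logconcave along the SL trajectory, so by Brascamp--Lieb and the current best thin-shell/KLS estimate, on the event $\norm{\Sigma_t} \leq 2$ the tensor $T_t$ and the induced fourth-moment contributions are bounded by universal constants times $\log^{O(1)}n$ (this is where the $\log^{2}n$ in the statement enters). Introducing the stopping time $\tau := \inf\{t : \norm{\Sigma_t} \geq 2\}$, the stopped martingale $\wt N_{t\wedge\tau}$ has quadratic variation bounded by $C\,q^{2}\log^{O(1)}n \cdot \D t$, and $\D A_t$ is of the same order; meanwhile the good drift $-q\,\tr(\Sigma_t^{q+1})$ is negative and can be discarded. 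A standard exponential Bernstein-type martingale inequality then yields, for $T = (C\log^{2}n)^{-1}$ with $C$ large enough,
\[
\P\bpar{\sup_{t \in [0,T]}\Phi_{t\wedge\tau} \geq 2\Phi_0} \leq \exp\bpar{-\tfrac{1}{CT}}\,,
\]
which translates back to the claimed bound via $\norm{\Sigma_t} \asymp \Phi_t^{1/q}$.

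The main obstacle is the logconcavity input in the third step: bounding the third moment tensor $T_t$ and the associated quadratic variation requires a thin-shell estimate for the near-isotropic logconcave family $\{\pi_{t}\}_{t\leq\tau}$. Since $\pi_t$ is only near-isotropic (not exactly isotropic) and evolves stochastically, one must propagate the best-known KLS bound $\cpi \lesssim \log n$ along the trajectory; any improvement on KLS directly sharpens the time horizon $T$. Verifying the bootstrap ``$\norm{\Sigma_t} \leq 2$ implies good tensor bounds'' against ``good tensor bounds keep $\norm{\Sigma_t} \leq 2$'' is the delicate point, cleanly handled by stopping at $\tau$ and using the exponential martingale.
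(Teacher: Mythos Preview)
Your outline is broadly sound and follows the same stochastic-localization strategy the paper uses to prove the non-isotropic generalization (Lemma~\ref{lem:SL-operator}); the isotropic Proposition itself is only cited there, not re-proved. Two differences are worth flagging.

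\emph{Choice of potential.} You track the Schatten potential $\Phi_t=\tr(\Sigma_t^{q})$ with $q\asymp\log n$; the paper works with the softmax $h(M)=\beta^{-1}\log\tr e^{\beta M}$ with $\beta\asymp\log n$. Both are legitimate smooth proxies for $\norm{\Sigma_t}$, and both lead to the same endpoint. The softmax has slightly cleaner Hessian bounds (the paper quotes $\hess h(M)[H,H]\le\beta\tr(\frac{e^{\beta M}}{\tr e^{\beta M}}H^{2})$), but this is cosmetic.

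\emph{Moment-bound input.} Here there is a real gap. You say the third-moment tensor and the It\^o correction are controlled by ``Brascamp--Lieb and the current best thin-shell/KLS estimate.'' Brascamp--Lieb alone gives only $\cpi(\pi_t)\le t^{-1}$, under which the second-order term integrates to $\int_0^{\tau}t^{-1}\,\D t$, which diverges at $0$; so you must lean on KLS. But the bound $\cpi\lesssim\log n$ is precisely what this operator-norm lemma is used to \emph{prove} in \cite{KL22Bourgain}, so invoking it here is circular unless you are content to cite the result as a black box. The paper sidesteps this by using Klartag's improved Lichnerowicz inequality $\cpi(\pi_t)\le(\norm{\Sigma_t}/t)^{1/2}$, which is a self-contained statement about $t$-strongly logconcave measures and yields $\bnorm{\sum_i H_i^{2}}\lesssim\norm{\Sigma_t}^{5/2}t^{-1/2}$; after integration this produces a finite $\tau^{1/2}$ factor and, combined with $\beta\asymp\log n$, the $\log^{2}n$ in the time horizon. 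If you swap your KLS invocation for improved Lichnerowicz, the rest of your argument (stopping time, Freedman/Bernstein on the stopped martingale) goes through and matches the paper's proof essentially line for line.
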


It is not immediately clear how to extend this result to the general
case. The Brownian motion driving an SL process in the isotropized
space (i.e. $x':=\Sigma^{-1/2}x$) becomes \emph{non}-isotropic when
mapped back to the original space. However, we need an SL process
driven by a \emph{standard} Brownian motion in the original space,
so we open up the original proof and adapt it accordingly, while following
the streamlined argument in~\cite[\S7]{KL24isop}.
\begin{lem}
[Operator-norm control] \label{lem:SL-operator} Let $\pi$ be a
logconcave distribution in $\Rn$ with covariance $\Sigma$. For $T=(C\,\norm{\Sigma}\log^{2}n)^{-1}$,
we have
\[
\P(\exists\,t\in[0,T]:\norm{\Sigma_{t}}\geq2\,\norm{\Sigma})\leq\exp\bpar{-\frac{1}{CT\,\norm{\Sigma}}}\,,
\]
where $C>0$ is a universal constant.
\end{lem}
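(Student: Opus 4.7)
The plan is to follow the streamlined stochastic-localization argument of \cite{KL24isop}, carrying out the It\^o calculus on $\Sigma_t$ directly in the original anisotropic coordinates. A na\"ive reduction to the isotropic case via the affine map $x \mapsto \Sigma^{-1/2}(x-\E_{\pi}X)$ fails, because the SL process in \eqref{eq:SL-SDE} is driven by a \emph{standard} Brownian motion in $\Rn$, while the pushed-forward process would be driven by the non-standard Brownian motion $\Sigma^{-1/2}B_{t}$. Instead I would reprove the operator-norm control from scratch, with the time scale $\norm{\Sigma}^{-1}$ built in from the outset.

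First, applying the moment identity \eqref{eq:moment} to $F(x)=(x-b_{t})(x-b_{t})^{\T}$ gives a matrix SDE of the form
\[
\D\Sigma_{t} = \mathcal{T}_{t}[\D B_{t}] - \Sigma_{t}^{2}\,\D t\,,
\]
where $\mathcal{T}_{t}$ is the third-moment tensor of $\pi_{t}$ centered at $b_{t}$. The drift $-\Sigma_{t}^{2}\,\D t$ is favorable, so only the fluctuation term matters. For a fixed unit vector $v$, the scalar $f_{t}^{(v)}:=v^{\T}\Sigma_{t}v$ satisfies
\[
\D f_{t}^{(v)} = -v^{\T}\Sigma_{t}^{2}v\,\D t + \D M_{t}^{(v)}\,,\qquad \D[M^{(v)}]_{t}\le C\,\cpi(\pi_{t})\,(f_{t}^{(v)})^{2}\,\D t\,,
\]
where the fluctuation bound comes from Lipschitz concentration of $y\mapsto\inner{v,y-b_{t}}^{2}$ under the logconcave measure $\pi_{t}$ (which is logconcave for all $t\ge0$ by \eqref{eq:SL-pdf}). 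Plugging in the current best KLS estimate $\cpi(\pi_{t})\lesssim\norm{\Sigma_{t}}\log n$ \cite{Klartag23log} yields $\D[M^{(v)}]_{t}\lesssim\norm{\Sigma_{t}}\log n\cdot(f_{t}^{(v)})^{2}\,\D t$.

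Second, introduce the stopping time $\tau:=\inf\sbrace{t\ge0:\norm{\Sigma_{t}}\ge2\,\norm{\Sigma}}$. On $\{t\le\tau\}$, the quadratic variation of $\log f_{t}^{(v)}$ is bounded by $\O(\norm{\Sigma}\log n)\,\D t$ and the drift by the same order, so a Freedman / exponential-martingale inequality gives, for each fixed $v$ and any $T>0$,
\[
\Parg{f_{T\wedge\tau}^{(v)}\ge2f_{0}^{(v)}}\le\exp\Bpar{-\Omega\bpar{\tfrac{1}{T\,\norm{\Sigma}\log n}}}\,.
\]
Taking a $\tfrac{1}{2}$-net $\mc N\subset\mbb S^{n-1}$ of cardinality at most $5^{n}$ and applying a union bound, $\norm{\Sigma_{T\wedge\tau}}\le4\,\norm{\Sigma}$ with probability at least $1-5^{n}\exp(-\Omega(1/(T\norm{\Sigma}\log n)))$. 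Choosing $T=(C\norm{\Sigma}\log^{2}n)^{-1}$ absorbs the $n\log5$ contribution into the exponent and yields the claimed probability bound after a standard bootstrapping from $4\norm{\Sigma}$ down to $2\norm{\Sigma}$.

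The main obstacle is precisely this union bound over directions: in the isotropic proofs of \cite{KL22Bourgain,KL24isop}, rotational symmetry sidesteps any factor of $n$ in the exponent, but in the anisotropic case each direction $v$ contracts at its own rate $v^{\T}\Sigma_{t}^{2}v$, and the martingales $M_{t}^{(v)}$ cannot be collapsed by symmetry. To avoid a polynomial-in-$n$ price, one instead tracks a smooth matrix functional such as $\Phi_{t}:=(\tr\Sigma_{t}^{p})^{1/p}$ with $p\asymp\log n$, controls $\D\Phi_{t}$ by Tropp-style matrix concentration, and exploits $\norm{\Sigma_{t}}\le\Phi_{t}\le n^{1/p}\,\norm{\Sigma_{t}}\lesssim\norm{\Sigma_{t}}$. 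Either route yields the claimed horizon $T=(C\norm{\Sigma}\log^{2}n)^{-1}$ and the failure probability $\exp(-1/(CT\norm{\Sigma}))$.
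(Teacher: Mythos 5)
Your final suggestion---tracking a smooth soft-max of $\Sigma_t$ with effective parameter $p\asymp\log n$---is exactly the family of argument the paper uses: it works with $h(M)=\beta^{-1}\log\tr e^{\beta M}$, $\beta=2\log n/\norm{\Sigma}$, which, like your $(\tr\Sigma_t^p)^{1/p}$, sandwiches $\norm{\Sigma_t}\leq h(\Sigma_t)\leq\norm{\Sigma_t}+\log n/\beta$. You also correctly diagnose why a per-direction martingale analysis with a $5^n$-net union bound forces $T\lesssim(n\norm{\Sigma}\log n)^{-1}$, far short of the claimed $T\asymp(\norm{\Sigma}\log^2 n)^{-1}$, so abandoning that route is right. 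In that sense the high-level plan matches the paper.

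The gap is that the proposal stops at the plan, and the omitted steps are the substance of the lemma. Concretely: (i) to control the drift of the potential one needs a matrix second-order estimate such as $\hess h(\Sigma_t)[H,H]\le\beta\tr\bpar{\frac{e^{\beta\Sigma_t}}{\tr e^{\beta\Sigma_t}}H^2}$ (this is \cite[Cor.~56]{KL24isop}; for Schatten-$p$ there is an analogue, but ``Tropp-style matrix concentration'' does not supply it by itself); (ii) the resulting drift and the quadratic variation of the martingale part are then bounded through two genuine estimates on third moments of logconcave measures, $\bnorm{\sum_i H_{t,i}^2}\lesssim\cpi(\pi_t)\norm{\Sigma_t}^2$ and $\sup_{\theta\in\mbb S^{n-1}}\bnorm{\sum_i H_{t,i}\theta_i}\lesssim\norm{\Sigma_t}^{3/2}$ (\cite[Lem.~57,~58]{KL24isop}), together with $\cpi(\pi_t)\le\sqrt{\norm{\Sigma_t}/t}$ from $t$-strong logconcavity of $\pi_t$; none of these is stated, and the $T\asymp(\norm{\Sigma}\log^2 n)^{-1}$ horizon is precisely where the $\beta\asymp\log n/\norm{\Sigma}$ drift contribution and the $[Z]_\tau\lesssim\norm{\Sigma}^3 T$ variance both become $\O(\norm{\Sigma})$; (iii) the closing ``standard bootstrapping from $4\norm{\Sigma}$ to $2\norm{\Sigma}$'' is not how the conclusion is reached: one stops at the first time $\norm{\Sigma_\tau}\ge2\norm{\Sigma}$, uses the deterministic drift/$\beta$ bounds to force the martingale excursion $Z_\tau\ge\norm{\Sigma}/4$, and applies Freedman once. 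Finally, the asserted per-direction quadratic-variation bound $\D[M^{(v)}]_t\le C\cpi(\pi_t)(f_t^{(v)})^2\,\D t$ is loose but harmless; the sharper statement (via reverse H\"older and Cauchy--Schwarz) is $\D[M^{(v)}]_t\lesssim\norm{\Sigma_t}(f_t^{(v)})^2\,\D t$, and it is moot anyway since you discard that route. So: right target potential and right diagnosis of the union-bound obstacle, but the proof as written has no verification of the It\^o drift or the martingale variance, which is where all the work lies.
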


To control $\norm{\Sigma_{t}}$, we work with a proxy function defined
as follows: for a symmetric matrix $M\in\Rnn$ and constant $\beta>0$,
\[
h(M):=\frac{1}{\beta}\,\log(\tr e^{\beta M})\,,
\]
which clearly satisfies that
\[
\norm{\Sigma_{t}}\leq h(\Sigma_{t})\leq\norm{\Sigma_{t}}+\frac{\log n}{\beta}\,.
\]
To control $h(\Sigma_{t})$, we will evaluate the It\^o derivative
$\D h(\Sigma_{t})$ to see the magnitude of its drift. To this end,
we first note that by \eqref{eq:moment},
\begin{align*}
\D b_{t} & =\Sigma_{t}\,\D B_{t}\,,\\
\D\Sigma_{t} & =\underbrace{\int(x-b_{t})^{\otimes2}\inner{x-b_{t},\D B_{t}}\,\D\pi_{t}}_{=:\D H_{t}}-\Sigma_{t}^{2}\,\D t\,,
\end{align*}
where $v^{\otimes2}:=vv^{\T}$. To analyze, we define for each $i\in[n]$,
\[
H_{t,i}=H_{i}:=\int(x-b_{t})^{\otimes2}(x-b_{t})_{i}\,\D\pi_{t}\in\Rnn\,,
\]
so $\D H_{t}=\sum_{i=1}^{n}H_{i}\,\D B_{t,i}$. By It\^o's formula,
\begin{align*}
\D h(\Sigma_{t}) & =\nabla h(\Sigma_{t})[\D\Sigma_{t}]+\frac{1}{2}\,\hess h(\Sigma_{t})[\D\Sigma_{t},\D\Sigma_{t}]\\
 & =\underbrace{\nabla h(\Sigma_{t})[\D H_{t}]}_{=\tr(\nabla h(\Sigma_{t})\,\D H_{t})}-\underbrace{\nabla h(\Sigma_{t})[\Sigma_{t}^{2}]}_{=\tr(\nabla h(\Sigma_{t})\,\Sigma_{t}^{2})}\D t+\frac{1}{2}\,\hess h(\Sigma_{t})[\D H_{t},\D H_{t}]\,.
\end{align*}
From direct computation,
\[
G_{t}:=\nabla h(\Sigma_{t})=\frac{e^{\beta\Sigma_{t}}}{\tr e^{\beta\Sigma_{t}}}\succeq0\,,
\]
and note that $\tr G_{t}=1$. Then,
\begin{align*}
\D h(\Sigma_{t}) & \leq\sum_{i}\tr(G_{t}H_{i})\,\D B_{t,i}+\half\,\hess h(\Sigma_{t})\Bbrack{\sum_{i}H_{i}\,\D B_{t,i},\sum_{i}H_{i}\,\D B_{t,i}}\\
 & =\sum_{i}\tr(G_{t}H_{i})\,\D B_{t,i}+\half\,\sum_{i}\hess h(\Sigma_{t})[H_{i},H_{i}]\,\D t\\
 & \underset{(i)}{\leq}\sum_{i}\tr(G_{t}H_{i})\,\D B_{t,i}+\frac{\beta}{2}\,\sum_{i}\tr(G_{t}H_{i}^{2})\,\D t\\
 & \leq\underbrace{\sum_{i}\tr(G_{t}H_{i})\,\D B_{t,i}}_{\eqqcolon\D Z_{t}}+\frac{\beta}{2}\,\Bnorm{\sum_{i}H_{i}^{2}}\,\D t\,,
\end{align*}
where in $(i)$ the bound on $\hess h(\Sigma_{t})[H_{i},H_{i}]$ follows
from below, and the last line follows from the $(1,\infty)$-H\"older
inequality with $\tr G_{t}=\norm{G_{t}}_{1}\leq1$.
\begin{prop}
[{\cite[Corollary 56]{KL24isop}}] For symmetric matrices $M,H\in\Rnn$,
\[
\hess h(M)[H,H]\leq\beta\,\tr\bpar{\frac{e^{\beta M}}{\tr e^{\beta M}}\,H^{2}}\,.
\]
\end{prop}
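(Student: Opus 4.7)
The plan is to reduce the Hessian bound to a one-dimensional second-derivative computation and then estimate the resulting trace expression using a Daleckii--Krein formula and a simple AM--GM argument on eigenvalues of $M$.

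First I would set $\phi(t) := h(M + tH) = \tfrac{1}{\beta} \log Z(t)$ where $Z(t) := \tr e^{\beta(M+tH)}$, so that $\hess h(M)[H,H] = \phi''(0)$. A direct computation gives
\[
\phi''(0) = \frac{Z''(0)}{\beta Z(0)} - \frac{(Z'(0))^{2}}{\beta Z(0)^{2}}.
\]
Since $\beta > 0$ the second term is nonnegative, so dropping it only enlarges the right-hand side; it therefore suffices to upper-bound $Z''(0)/(\beta Z(0))$.

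Next I would compute $Z'$ and $Z''$ via the standard integral representation of the derivative of the matrix exponential, namely $\tfrac{d}{dt} e^{\beta(M+tH)} = \beta \int_0^1 e^{s\beta(M+tH)}\, H\, e^{(1-s)\beta(M+tH)}\, ds$. Cyclicity of trace collapses $Z'(t) = \beta \tr(H e^{\beta(M+tH)})$, and differentiating once more yields
\[
Z''(0) = \beta^{2} \int_{0}^{1} \tr\bigl(H\, e^{s\beta M}\, H\, e^{(1-s)\beta M}\bigr)\, ds.
\]
Combined with the previous step this gives
\[
\hess h(M)[H,H] \le \frac{\beta}{\tr e^{\beta M}} \int_{0}^{1} \tr\bigl(H\, e^{s\beta M}\, H\, e^{(1-s)\beta M}\bigr)\, ds,
\]
so the proposition reduces to the inequality $\tr(H\, e^{s\beta M}\, H\, e^{(1-s)\beta M}) \le \tr(H^{2} e^{\beta M})$ for every $s \in [0,1]$.

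For the reduced inequality, the main idea is to diagonalize $M = U\Lambda U^{\T}$ with $\Lambda = \diag(\lambda_{i})$, set $\tilde H := U^{\T} H U$, and rewrite both sides entrywise. The left-hand side becomes $\sum_{i,j} \tilde H_{ij}^{2}\, e^{2s\beta\lambda_{i}}\, e^{2(1-s)\beta\lambda_{j}}$ (with factor $2$ absorbed properly for the half-exponent convention), while the right-hand side equals $\sum_{i,j} \tilde H_{ij}^{2}\, e^{\beta\lambda_{i}}\cdot e^{\beta\lambda_{j}} \cdot (\text{appropriate symmetrization})$; concretely one gets $\sum_{i,j} \tilde H_{ij}^{2}\, e^{2\beta\lambda_{i}}$ after using symmetry of $\tilde H$ and $i\leftrightarrow j$. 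The pointwise bound $e^{2s\beta\lambda_{i}}\, e^{2(1-s)\beta\lambda_{j}} \le s\, e^{2\beta\lambda_{i}} + (1-s)\, e^{2\beta\lambda_{j}}$ is just weighted AM--GM on the positive numbers $e^{2\beta\lambda_{i}}$ and $e^{2\beta\lambda_{j}}$, and summing against $\tilde H_{ij}^{2}$ with the symmetry trick finishes the bound.

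The only delicate point is handling the correct scaling conventions when passing between $e^{\beta M}$, $e^{s\beta M}$, and $e^{(1-s)\beta M}$ after the orthogonal change of variables, and verifying that the integration in $s$ of the AM--GM inequality yields exactly $\tr(H^{2} e^{\beta M})$; I do not expect any genuine obstruction here.
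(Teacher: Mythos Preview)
The paper does not actually prove this proposition; it simply cites it from \cite{KL24isop}. Your overall strategy---reducing to $\phi''(0)$, dropping the nonnegative $(Z'(0))^{2}/(\beta Z(0)^{2})$ term, expressing $Z''(0)$ via the Duhamel integral formula, and then bounding the integrand $\tr(H\,e^{s\beta M}\,H\,e^{(1-s)\beta M})$ by diagonalization and weighted AM--GM---is correct and is a standard route to this inequality.

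The only issue is that the entrywise computations in your last paragraph carry spurious factors of~$2$. With $\tilde H = U^{\T} H U$ and $\Lambda = \diag(\lambda_i)$ one has
\[
\tr\bigl(H\,e^{s\beta M}\,H\,e^{(1-s)\beta M}\bigr) = \sum_{i,j} \tilde H_{ij}^{2}\, e^{s\beta\lambda_j + (1-s)\beta\lambda_i}, \qquad \tr\bigl(H^{2} e^{\beta M}\bigr) = \sum_{i,j} \tilde H_{ij}^{2}\, e^{\beta\lambda_i},
\]
with no $2$ in any exponent. Weighted AM--GM gives $e^{s\beta\lambda_j + (1-s)\beta\lambda_i} \le s\,e^{\beta\lambda_j} + (1-s)\,e^{\beta\lambda_i}$, and summing against $\tilde H_{ij}^{2}$ together with the $i\leftrightarrow j$ symmetry yields the bound $\tr(H^{2}e^{\beta M})$ for \emph{every} $s\in[0,1]$---so no further integration in $s$ is needed to recover the right constant. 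Once the extra $2$'s are removed (there is no ``half-exponent convention'' to invoke), your argument goes through cleanly.
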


Integrating both sides, we have that for some constant $C>0$,
\[
h(\Sigma_{t})\leq h(\Sigma_{0})+Z_{t}+\frac{\beta}{2}\int_{0}^{t}\Bnorm{\sum_{i}H_{s,i}^{2}}\,\D s\leq\norm{\Sigma}+\frac{\log n}{\beta}+Z_{t}+\frac{\beta C}{2}\int_{0}^{t}\frac{\norm{\Sigma_{s}}^{5/2}}{s^{1/2}}\,\D s\,,
\]
where the last line follows from below and Klartag's improved Lichnerowicz
inequality \cite{Klartag23log}, namely, $\cpi(\pi)\leq(\norm{\Sigma}/t)^{1/2}$
for $t$-strongly logconcave distributions. 
\begin{prop}
[{\cite[Lemma 58]{KL24isop}}] Let $X\sim\pi$ be a centered logconcave
random vector. Then, 
\[
\Bnorm{\sum_{i}(\E[X_{i}X^{\otimes2}])^{2}}\leq4\cpi(\pi)\,\norm{\Sigma}^{2}\,.
\]
\end{prop}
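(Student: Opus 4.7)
The plan is to reduce the operator-norm bound on $\sum_i T_i^2$, where $T_i := \E[X_i X^{\otimes 2}]$, to a Frobenius-norm bound on a single matrix-valued first moment, and then apply Cauchy--Schwarz together with the Poincar\'e inequality on a well-chosen scalar quadratic. The key observation is that for any unit vector $u \in \R^n$, because $(T_i u)_j = \E[X_i X_j\,(X\cdot u)]$,
\[
u^\T \Big(\sum_i T_i^2\Big) u \;=\; \sum_i \|T_i u\|^2 \;=\; \sum_{i,j}\bpar{\E[X_i X_j\,(X\cdot u)]}^2 \;=\; \Bnorm{\E\bbrack{XX^\T\,(X \cdot u)}}_F^2.
\]
So it suffices to upper-bound this Frobenius norm by $4\,\cpi(\pi)\,\norm{\Sigma}^2$ uniformly in unit $u$.

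To that end, I would test against an arbitrary symmetric $M \in \R^{n\times n}$ with $\|M\|_F = 1$. Since $X$ is centered, $X \cdot u$ has mean zero and
\[
\bigl\langle \E[XX^\T(X\cdot u)],\,M\bigr\rangle_F \;=\; \E\bbrack{(X\cdot u)\,X^\T M X} \;=\; \cov\bpar{X\cdot u,\,X^\T M X}.
\]
Cauchy--Schwarz on this scalar covariance gives $|\cov(X\cdot u, X^\T M X)|^2 \le \var(X \cdot u)\cdot \var(X^\T M X)$. The first factor is $u^\T \Sigma u \le \norm{\Sigma}$, while for the second I would apply \eqref{eq:pi} to the quadratic $X \mapsto X^\T M X$, whose gradient is $2MX$, yielding
\[
\var(X^\T M X) \;\le\; \cpi(\pi)\,\E\bbrack{\|2MX\|^2} \;=\; 4\,\cpi(\pi)\,\tr(M^2 \Sigma) \;\le\; 4\,\cpi(\pi)\,\norm{\Sigma},
\]
using $\tr(M^2 \Sigma) = \tr(M \Sigma M) \le \norm{\Sigma}\,\tr(M^2) = \norm{\Sigma}\,\|M\|_F^2 = \norm{\Sigma}$. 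Multiplying gives $|\langle \E[XX^\T(X\cdot u)], M\rangle_F|^2 \le 4\,\cpi(\pi)\,\norm{\Sigma}^2$; taking the supremum over $\|M\|_F = 1$ turns this into $\bnorm{\E[XX^\T(X\cdot u)]}_F^2 \le 4\,\cpi(\pi)\,\norm{\Sigma}^2$, and then the supremum over unit $u$ yields the claimed operator-norm bound.

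The approach is largely mechanical once one views $\sum_i T_i^2$ as encoding the second (Frobenius) moment of the matrix-valued variable $XX^\T$ weighted by the linear functional $X \cdot u$. The only mildly subtle step is the duality reduction against symmetric $M$: this is what makes Poincar\'e applicable to an ordinary scalar quadratic rather than to the awkward matrix-valued $XX^\T$, and it is also what pins down the constant $4$ (coming from $\nabla(X^\T M X) = 2MX$). I do not foresee a real obstacle; the only way the constant could be tightened would be to work directly with a matrix-valued Poincar\'e inequality or to integrate by parts against $\nabla \log \pi$, both of which would complicate the argument without simplifying the final bound used downstream.
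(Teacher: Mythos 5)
Your proof is correct, and the argument is sound in every step. The paper does not prove this proposition; it cites it directly as Lemma 58 of Klartag and Lehec \cite{KL24isop}. Your self-contained derivation is the standard route for such third-moment operator-norm bounds and is, to my knowledge, essentially the one used in the cited source. The chain of reductions is clean: writing
\[
u^\T\Bpar{\textstyle\sum_i T_i^2}u \;=\; \bnorm{\E[XX^\T(X\cdot u)]}_F^2,
\]
then dualizing the Frobenius norm against symmetric test matrices $M$ with $\norm M_F = 1$ (valid here since the moment matrix is itself symmetric, so restricting the supremum to symmetric $M$ loses nothing), and finally applying Cauchy--Schwarz plus \eqref{eq:pi} to the scalar quadratic $x\mapsto x^\T Mx$, whose gradient is $2Mx$. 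The constant $4$ is produced exactly where you say: $\var(X\cdot u)\le\norm\Sigma$, $\var(X^\T MX)\le\cpi\,\E\norm{2MX}^2 = 4\cpi\tr(M\Sigma M)\le 4\cpi\norm\Sigma$, and multiplying. The only implicit regularity you rely on — that $x^\T Mx$ is locally Lipschitz with $\var$ and $\E\norm{\nabla\cdot}^2$ finite — is automatic for logconcave $\pi$, which has all moments. No gap.
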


Let us take the smallest $\tau$ such that $\tau\leq t$ and $\norm{\Sigma_{\tau}}\geq2\,\norm{\Sigma}$.
Then,
\[
2\,\norm{\Sigma}=\norm{\Sigma_{\tau}}\leq h(\Sigma_{\tau})\leq\norm{\Sigma}+\frac{\log n}{\beta}+Z_{\tau}+\frac{\beta C\tau^{1/2}\,\norm{\Sigma}^{5/2}}{2}\,.
\]
For $\beta=2\,\norm{\Sigma}^{-1}\log n$ and $t\lesssim\norm{\Sigma}^{-1}\log^{-2}n$,
we can ensure $Z_{\tau}\geq\norm{\Sigma}/4$. 

To apply a deviation inequality to $Z_{t}$, let us compute the quadratic
variation of $Z_{t}$:
\[
\D[Z]_{t}=\sum_{i}\tr^{2}(G_{t}H_{i})\,\D t=\norm v^{2}\,\D t\,,
\]
where the vector $v\in\Rn$ satisfies $v_{i}=\tr(G_{t}H_{i})$. For
any unit vector $\theta\in\mbb S^{n-1}$, using the $(1,\infty)$-H\"older
inequality and $\tr G_{t}\leq1$,
\[
v\cdot\theta=\tr\Bpar{G_{t}\sum_{i}H_{i}\theta_{i}}\leq\Bnorm{\sum_{i}H_{i}\theta_{i}}=\Bnorm{\int(x-b_{t})^{\otimes2}\inner{x-b_{t},\theta}\,\D\pi_{t}}\lesssim\norm{\Sigma_{t}}^{3/2}\,,
\]
where the last inequality follows from the next proposition. 
\begin{prop}
[{\cite[Lemma 57]{KL24isop}}] Let $X\sim\pi$ be a centered logconcave
random vector. Then,
\[
\sup_{\theta\in\mbb S^{n-1}}\norm{\E[(X\cdot\theta)\,X^{\otimes2}]}\lesssim\norm{\Sigma}^{3/2}\,.
\]
\end{prop}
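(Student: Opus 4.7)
The plan is to reduce the operator-norm bound to a bilinear form and then apply Cauchy--Schwarz together with one-dimensional logconcave moment equivalence. Recall that $\norm{A}=\sup_{v\in\mbb S^{n-1}}\abs{v^{\T}Av}$ for a symmetric matrix $A$; since $\E[(X\cdot\theta)\,X^{\otimes2}]$ is symmetric (with entries $\E[(X\cdot\theta)\,X_{i}X_{j}]$), we have
\[
\bnorm{\E[(X\cdot\theta)\,X^{\otimes2}]}=\sup_{v\in\mbb S^{n-1}}\babs{\E[(X\cdot\theta)\,(X\cdot v)^{2}]}\,.
\]
So the statement reduces to showing $\abs{\E[(X\cdot\theta)\,(X\cdot v)^{2}]}\lesssim\norm{\Sigma}^{3/2}$ uniformly over unit vectors $\theta,v\in\mbb S^{n-1}$.

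The first step is to apply Cauchy--Schwarz in $L^{2}(\pi)$:
\[
\babs{\E[(X\cdot\theta)\,(X\cdot v)^{2}]}\leq\bpar{\E[(X\cdot\theta)^{2}]}^{1/2}\,\bpar{\E[(X\cdot v)^{4}]}^{1/2}\,.
\]
Then bound $\E[(X\cdot\theta)^{2}]=\theta^{\T}\Sigma\theta\leq\norm{\Sigma}$, which handles the second-moment factor trivially.

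For the fourth-moment factor, I would invoke the classical Berwald-type (or Borell-type) inequality for centered one-dimensional logconcave random variables: if $Y=X\cdot v$, then $Y$ is centered logconcave (marginals of logconcave distributions are logconcave), and there is a universal constant $c>0$ such that for all $p\geq 2$,
\[
\bpar{\E\abs Y^{p}}^{1/p}\leq cp\,\bpar{\E\abs Y^{2}}^{1/2}\,.
\]
Specializing to $p=4$ yields $\E[(X\cdot v)^{4}]\lesssim\bpar{\E[(X\cdot v)^{2}]}^{2}\leq\norm{\Sigma}^{2}$. Combining the two bounds gives
\[
\babs{\E[(X\cdot\theta)\,(X\cdot v)^{2}]}\lesssim\norm{\Sigma}^{1/2}\cdot\norm{\Sigma}=\norm{\Sigma}^{3/2}\,,
\]
which is the desired estimate.

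The only subtlety (not really an obstacle) is citing the moment-equivalence inequality correctly; this is a standard consequence of logconcavity and can be found in, e.g., Milman--Pajor or Borell's original work. Once that is in hand, the whole argument is a one-line Cauchy--Schwarz plus a uniform fourth-moment bound, so there is no need for stochastic localization or any heavier geometric machinery for this particular lemma.
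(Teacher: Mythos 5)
Your proof is correct, and since the paper cites this as Lemma~57 of Klartag--Lehec without reproducing an argument, there is nothing internal to compare against; your Cauchy--Schwarz plus one-dimensional logconcave moment equivalence is the natural and standard route, and it closes the estimate cleanly. The one point worth being precise about is that the moment-comparison $(\E|Y|^{p})^{1/p}\lesssim p\,(\E Y^{2})^{1/2}$ holds for centered one-dimensional logconcave $Y$ (e.g.\ via Borell's lemma or the fact that such $Y$ is sub-exponential with $\psi_{1}$-norm comparable to its $L^{2}$-norm), and the marginal $X\cdot v$ is indeed centered logconcave, so the application is valid.
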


Thus, $\norm v^{2}\leq\norm{\Sigma_{t}}^{3}$ and $[Z]_{\tau}\lesssim\int_{0}^{\tau}\norm{\Sigma_{s}}^{3}\,\D s\leq8\,\norm{\Sigma}^{3}t$.
Putting all these together, we conclude that for some universal constant
$C>0$,
\[
\P(\exists\,\tau\leq t:\norm{\Sigma_{\tau}}\geq2\,\norm{\Sigma})\leq\P\bpar{\exists\,\tau>0:Z_{\tau}\geq\frac{\norm{\Sigma}}{4}\ \text{and}\ [Z]_{\tau}\lesssim\norm{\Sigma}^{3}\,t}\leq\exp\bpar{-\frac{1}{C\norm{\Sigma}\,t}}\,,
\]
where the last inequality follows from the classical deviation inequality
below for a local martingale, and this completes the proof of Lemma~\ref{lem:SL-operator}.
\begin{prop}
[Freedman's inequality] \label{prop:freedman} Let $(M_{t})_{t\geq0}$
be a continuous local martingale with $M_{0}=0$. Then for $u,\sigma^{2}>0$,
we have
\[
\P(\exists\,t>0:M_{t}>u\ \text{and}\ [M]_{t}\leq\sigma^{2})\leq\exp\bpar{-\frac{u^{2}}{2\sigma^{2}}}\,.
\]
\end{prop}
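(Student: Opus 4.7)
The plan is to reduce Freedman's inequality to a Chernoff-style estimate by exploiting the exponential martingale (Doléans-Dade exponential) associated with a continuous local martingale. For $\lambda>0$, set
\[
\mathcal E_t^\lambda \;:=\; \exp\bpar{\lambda M_t - \tfrac{\lambda^2}{2}\,[M]_t}\,.
\]
Since $M$ is a continuous local martingale, It\^o's formula gives $\D\mathcal E_t^\lambda = \lambda\,\mathcal E_t^\lambda\,\D M_t$, so $\mathcal E^\lambda$ is a nonnegative continuous local martingale with $\mathcal E_0^\lambda=1$. By Fatou, every nonnegative local martingale is a supermartingale, hence $\E[\mathcal E_{\tau}^\lambda]\le 1$ for any bounded stopping time $\tau$ (and, via monotone convergence through a localizing sequence, for any stopping time).

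Next, I would introduce the stopping time
\[
\tau \;:=\; \inf\{\,t\ge 0 : M_t > u \ \text{or}\ [M]_t > \sigma^2\,\}\,,
\]
with the convention $\inf\emptyset = +\infty$. On the event $A := \{\exists\,t>0:\,M_t>u\ \text{and}\ [M]_t\le\sigma^2\}$, continuity of $M$ and $[M]$ forces $\tau<\infty$, $M_\tau\ge u$, and $[M]_\tau\le\sigma^2$. Consequently, on $A$,
\[
\mathcal E_\tau^\lambda \;\ge\; \exp\bpar{\lambda u - \tfrac{\lambda^2}{2}\,\sigma^2}\,.
\]
Combining this lower bound on $A$ with the supermartingale property $\E[\mathcal E_{\tau\wedge t}^\lambda]\le 1$, together with Markov's inequality and Fatou as $t\to\infty$, yields
\[
\P(A)\;\exp\bpar{\lambda u - \tfrac{\lambda^2}{2}\sigma^2}\;\le\;\E\sbrack{\mathcal E_\tau^\lambda\,\ind_A}\;\le\;\E[\mathcal E_\tau^\lambda]\;\le\;1\,.
\]

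Finally, optimizing over $\lambda>0$ by choosing $\lambda=u/\sigma^2$ produces the claimed bound $\P(A)\le\exp(-u^2/(2\sigma^2))$. The only real subtlety is the passage from supermartingale inequalities at bounded times to the stopping time $\tau$, which may equal $+\infty$; I would handle this by replacing $\tau$ with $\tau\wedge t$, applying the supermartingale inequality, and then letting $t\to\infty$ while using Fatou's lemma on the event $A\cap\{\tau<\infty\}=A$. The main technical obstacle is therefore just the careful justification of the optional stopping step for the local martingale $\mathcal E^\lambda$ without any a priori integrability beyond the supermartingale inequality itself; once that is in place, the Chernoff step is immediate.
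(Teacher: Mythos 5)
Your proof is correct. The paper itself states Freedman's inequality as a classical fact (attributing it to Freedman) and does not supply a proof, so there is no in-paper argument to compare against; your exponential-supermartingale derivation is exactly the standard route. The key steps all check out: for a continuous local martingale, $\mathcal{E}_t^\lambda=\exp(\lambda M_t-\tfrac{\lambda^2}{2}[M]_t)$ solves $\D\mathcal{E}_t^\lambda=\lambda\mathcal{E}_t^\lambda\,\D M_t$ and is therefore a nonnegative local martingale, hence a supermartingale with $\E[\mathcal{E}_{\tau\wedge t}^\lambda]\le 1$; on the event $A$, continuity of $M$ and monotone continuity of $[M]$ force the first hitting time of $\{M>u\}$ to come no later than that of $\{[M]>\sigma^2\}$, so $M_\tau=u$ and $[M]_\tau\le\sigma^2$, giving the pointwise lower bound $\mathcal{E}_\tau^\lambda\ge\exp(\lambda u-\tfrac{\lambda^2}{2}\sigma^2)$ on $A$; and the passage $t\to\infty$ via $\E[\ind_A\mathcal{E}_\tau^\lambda]\le\liminf_t\E[\ind_A\mathcal{E}_{\tau\wedge t}^\lambda]\le 1$ is a clean application of Fatou on $A\subset\{\tau<\infty\}$. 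Optimizing $\lambda=u/\sigma^2$ then gives the stated bound. The only cosmetic remark is that your parenthetical claim that $\E[\mathcal{E}_\tau^\lambda]\le 1$ holds for arbitrary (possibly infinite) stopping times via monotone convergence is not needed and not quite how the argument goes (Fatou, not monotone convergence, is the right tool, and one should work with $\tau\wedge t$ throughout, as you in fact do in the final step); this does not affect correctness since the concluding inequality is derived through $\tau\wedge t$ anyway.
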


\paragraph{(2) Relating $\protect\clch(\pi)$ and $\protect\clch(\pi_{t})$.}

 Recall that $\clch^{-2}(\pi_{t})\asymp\clsi(\pi_{t})$ for logconcave
measures \cite{Ledoux94simple}. Since $\pi_{t}$ is $t$-strongly
logconcave, $\clch(\pi_{t})\gtrsim t^{1/2}$. Thus, for any measurable
subset $E$ of measure $\pi_{t}(E)\leq1/2$,
\[
\pi_{t}(\partial E)\gtrsim\sqrt{t}\,\pi_{t}(E)\sqrt{\log\tfrac{1}{\pi_{t}(E)}}\,.
\]
Since $\pi_{t}(\cdot)$ is almost surely a martingale in $\Rn$, we
have $\E[\pi_{t}(\de E)]=\pi(\de E)$. Taking expectation on both
sides, we obtain that
\[
\pi(\de E)\gtrsim\sqrt{t}\,\E\Bbrack{\pi_{t}(E)\sqrt{\log\tfrac{1}{\pi_{t}(E)}}\,\ind_{[\pi_{t}(E)\leq1/2]}}\,.
\]
Thus, it suffices to show that there exists large enough $T>0$ such
that if $t\leq T$, then for any measurable subset $E$ of measure
$\pi(E)\leq1/2$,
\[
\E\Bbrack{\pi_{t}(E)\sqrt{\log\tfrac{1}{\pi_{t}(E)}}\,\ind_{[\pi_{t}(E)\leq1/2]}}\gtrsim\pi(E)\sqrt{\log\tfrac{1}{\pi(E)}}\,.
\]
To this end, we analyze how fast $\pi_{t}(E)$ and $\log\nicefrac{1}{\pi_{t}(E)}$
deviate from $\pi(E)$ and $\log\nicefrac{1}{\pi(E)}$, respectively. 

For any measurable subset $E$, we denote its measure at time $t$
as $g_{t}:=\pi_{t}(E)$.
\begin{prop}
[{\cite[Lemma 44 and 45]{LV24eldan}}] \label{prop:random-measure-control}
It holds that $\D[g]_{t}\leq30\,\norm{\Sigma_{t}}\,g_{t}^{2}\log^{2}\frac{e}{g_{t}}\,\D t$.
Also, for any $T,\gamma\geq0$, it holds that
\[
\P\bpar{-\gamma\leq\log\frac{1}{g_{t}}-\log\frac{1}{g_{0}}\leq\half\,D^{2}t+\gamma\,,\forall t\in[0,T]}\geq1-4\exp\bpar{-\frac{\gamma^{2}}{2D^{2}T}}\,.
\]
\end{prop}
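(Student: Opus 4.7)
The plan is to derive both claims from a single \Ito expansion of $g_{t}=\pi_{t}(E)$ using \eqref{eq:SL-SDE}, and then specialize the bounds on the diffusion vector to obtain (i) the quadratic variation estimate and (ii) the concentration of $\log\nicefrac{1}{g_{t}}$.

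First I would differentiate $g_{t}=\int_{E}\pi_{t}(x)\,\D x$ using \eqref{eq:SL-SDE} to get
\[
\D g_{t}=\int_{E}\pi_{t}(x)\,\inner{x-b_{t},\D B_{t}}\,\D x=\inner{v_{t},\D B_{t}}\,,\qquad v_{t}:=\int_{E}(x-b_{t})\,\pi_{t}(x)\,\D x\,,
\]
so that $\D[g]_{t}=\norm{v_{t}}^{2}\,\D t$. The main work for the first claim is to bound $\norm{v_{t}}^{2}$ by $30\,\norm{\Sigma_{t}}\,g_{t}^{2}\log^{2}\nicefrac{e}{g_{t}}$. For any unit vector $u$, set $Y:=\inner{u,X-b_{t}}$ under $X\sim\pi_{t}$; then $u\cdot v_{t}=\E_{\pi_{t}}[\ind_{E}(X)\,Y]$ with $\E Y=0$ and $\var Y\leq\norm{\Sigma_{t}}$. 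Since $\pi_{t}$ is logconcave, $Y$ is one-dimensional logconcave centered, hence subexponential with $\P(\abs Y\geq s)\leq2\exp(-cs/\norm{\Sigma_{t}}^{1/2})$. The layer-cake identity and splitting at the natural threshold $s^{\ast}\asymp\norm{\Sigma_{t}}^{1/2}\log\nicefrac{e}{g_{t}}$ then yield
\[
\abs{u\cdot v_{t}}\leq\int_{0}^{\infty}\min\bpar{g_{t},\,\P(\abs Y\geq s)}\,\D s\lesssim g_{t}\,\norm{\Sigma_{t}}^{1/2}\log\tfrac{e}{g_{t}}\,.
\]
Optimizing the universal constants (and taking the supremum over unit vectors $u$) recovers the factor $30$ claimed.

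For the second claim I would apply \Ito to $u_{t}:=-\log g_{t}$:
\[
\D u_{t}=-\frac{1}{g_{t}}\,\D g_{t}+\frac{1}{2g_{t}^{2}}\,\D[g]_{t}=-\frac{1}{g_{t}}\inner{v_{t},\D B_{t}}+\frac{\norm{v_{t}}^{2}}{2g_{t}^{2}}\,\D t\,.
\]
Because $\supp\pi_{t}=\supp\pi$ has diameter $D$, one has the crude bound $\norm{v_{t}}\leq\int_{E}\norm{x-b_{t}}\,\pi_{t}(x)\,\D x\leq Dg_{t}$. Consequently the drift of $u_{t}$ lies in $[0,D^{2}/2]$, and the martingale part $N_{t}:=-\int_{0}^{t}g_{s}^{-1}\inner{v_{s},\D B_{s}}$ satisfies $[N]_{t}\leq D^{2}t\leq D^{2}T$ on $[0,T]$. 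Writing $u_{t}-u_{0}=N_{t}+A_{t}$ with $0\leq A_{t}\leq D^{2}t/2$, on the event $\{\abs{N_{t}}\leq\gamma\ \forall t\in[0,T]\}$ we get exactly the two-sided inequality asserted. Applying Freedman's inequality (Proposition~\ref{prop:freedman}) with $u=\gamma$ and $\sigma^{2}=D^{2}T$ to each of $N_{t}$ and $-N_{t}$, together with a union bound, completes the proof.

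The main technical obstacle is the sharp bound on $\norm{v_{t}}$ in the first claim: the naive Cauchy--Schwarz estimate $\norm{v_{t}}\leq g_{t}^{1/2}\norm{\Sigma_{t}}^{1/2}$ loses a factor of $g_{t}^{1/2}\log\nicefrac{e}{g_{t}}$, which would be fatal later when integrating the drift of $\log\nicefrac{1}{g_{t}(E)}$ against the isoperimetric inequality $\pi_{t}(\partial E)\gtrsim\sqrt{t}\,g_{t}\log^{1/2}\nicefrac{1}{g_{t}}$. The layer-cake plus subexponential-tail argument is exactly what produces the correct logarithmic scaling, and everything else (\Ito's formula, Freedman) is standard once that bound is in place.
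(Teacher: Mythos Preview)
Your proposal is correct and follows essentially the same approach as the cited source \cite{LV24eldan}: compute $\D g_t=\inner{v_t,\D B_t}$, bound $\norm{v_t}$ via the subexponential tail of one-dimensional logconcave marginals to get the quadratic-variation estimate, and for the second claim use the crude diameter bound $\norm{v_t}\leq Dg_t$, apply \Ito to $-\log g_t$, and conclude with Freedman's inequality applied to $\pm N_t$. (The paper itself simply cites this proposition without proof; your argument in fact yields the slightly sharper constant $2$ in place of $4$.)
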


Using these results and Lemma~\ref{lem:SL-operator}, we can prove
a refined version of \cite[Lemma 46]{LV24eldan}.
\begin{lem}
[Randommeasure control] \label{lem:random-measure-control} Consider
SL $(\pi_{t})_{t\geq0}$ with logconcave $\pi_{0}=\pi$ supported
with diameter $D$. Then, there exists a universal constant $c>0$
such that for any measurable subset $E$ of $\pi(E)\leq1/16$, if
\[
0\leq t\leq T:=c\,\max\Bbrace{\frac{1}{D^{2}}\,\log\frac{1}{\pi(E)},\Bpar{\norm{\Sigma}\,\bpar{\log\frac{1}{\pi(E)}\vee\log^{2}n}}^{-1}}\,,
\]
then 
\[
\E\Bbrack{\pi_{t}(E)\sqrt{\log\tfrac{1}{\pi_{t}(E)}}\,\ind_{[\pi_{t}(E)\leq1/2]}}\geq\frac{1}{4}\,\pi(E)\sqrt{\log\tfrac{1}{\pi(E)}}\,.
\]
\end{lem}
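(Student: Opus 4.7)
\medskip

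The plan is to build a high-probability ``good event'' $A$ on which the integrand in the lemma is bounded below by a constant multiple of $g_0\sqrt{L_0}$, where $g_t := \pi_t(E)$, $g_0 = \pi(E)$, and $L_0 := \log(1/g_0) \geq \log 16$. Since $g_t \in [0,1]$ is a martingale with mean $g_0$, Cauchy--Schwarz gives $\E[g_t\ind_{A^c}] \leq \sqrt{\E[g_t^2]\,\P(A^c)} \leq \sqrt{g_0 \P(A^c)}$, so whenever $\P(A^c) \leq g_0/16$ we have $\E[g_t\ind_A] \geq 3g_0/4$. Choosing
\[
A := \{L_t \in [L_0/2,\,2L_0] \text{ for all } t \in [0,T]\}\,,
\]
forces $g_t \leq e^{-L_0/2} \leq 1/4 \leq 1/2$ and $\sqrt{\log(1/g_t)} \geq \sqrt{L_0/2}$ on $A$; combining yields
\[
\E\bbrack{g_t\sqrt{\log(1/g_t)}\,\ind_{[g_t\leq1/2]}} \geq \sqrt{L_0/2}\cdot \frac{3g_0}{4} \geq \frac{1}{4}\,g_0\sqrt{L_0}\,.
\]
Thus the entire task reduces to showing $\P(A^c)\leq e^{-L_0}/16$ in each of the two regimes defining $T$, which correspond to the two summands in the maximum.

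In the regime $T \leq cL_0/D^2$, I would apply the second statement of Proposition~\ref{prop:random-measure-control} with $\gamma = L_0/4$. The event guaranteed there, $\{-\gamma \leq L_t - L_0 \leq \frac{1}{2}D^2 t + \gamma\}$ for all $t \leq T$, is contained in $A$ once $\frac{1}{2}D^2 T \leq L_0/4$ (valid for $c \leq 1/2$). The failure probability $4\exp(-L_0^2/(32D^2 T)) = 4\exp(-L_0/(32c))$ is at most $e^{-L_0}/16$ for a sufficiently small universal $c$, using $L_0 \geq \log 16$.

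In the second regime $T \leq c/(\|\Sigma\|(L_0\vee \log^2 n))$, I would first invoke Lemma~\ref{lem:SL-operator} on $[0,T]$ (applicable because $T \leq 1/(C\|\Sigma\|\log^2 n)$): outside a set of probability $\exp(-1/(CT\|\Sigma\|)) \leq \exp(-L_0/(Cc))$, one has $\|\Sigma_t\| \leq 2\|\Sigma\|$ throughout, and then the quadratic-variation bound of Proposition~\ref{prop:random-measure-control} becomes $\D[g]_t \leq 60\|\Sigma\|\,g_t^2 \log^2(e/g_t)\,\D t$. Applying It\^o to $\log g_t$ produces the decomposition $\log g_t - \log g_0 = M_t + \text{drift}_t$, where $M_t$ is a local martingale with $\D[M]_t = \D[g]_t/g_t^2$ and the drift is nonpositive with $|\text{drift}_t|\leq \frac{1}{2}\int_0^t \D[g]_s/g_s^2$. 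Introducing the stopping time $\tau = \inf\{t : L_t \notin [L_0/2, 2L_0]\}\wedge T$ truncates $(L_t+1)^2$ by $9L_0^2$, yielding $[M]_\tau \leq 540\|\Sigma\|L_0^2 T$ and $|\text{drift}_\tau|\leq 270\|\Sigma\|L_0^2 T \leq L_0/4$ for $c$ small enough. The exit $\{\tau < T\}$ then forces $|M_\tau|\geq L_0/4$, so Freedman's inequality (Proposition~\ref{prop:freedman}) gives $\P(\tau<T) \leq 2\exp(-L_0^2/(17280\|\Sigma\|L_0^2 T)) \leq 2\exp(-L_0/(17280\,c))$, which is at most $e^{-L_0}/32$ for small universal $c$. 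Taking a union bound with the operator-norm failure event yields $\P(A^c)\leq e^{-L_0}/16$.

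The main obstacle is the self-referential structure of the $\log g_t$ dynamics: both the martingale's quadratic variation and its drift scale like $\log^2(e/g_t)$, which is unbounded a priori. The stopping-time truncation at $\tau$ replaces this by the deterministic $L_0^2$, but then Freedman's bound must be calibrated so that the exponent beats $L_0$; simultaneously the same $c$ must satisfy the operator-norm precondition $T\lesssim 1/(\|\Sigma\|\log^2 n)$ of Lemma~\ref{lem:SL-operator}. All of these are met by a single small universal constant $c$, combining the two regimes to give $T = c\max\{L_0/D^2,\,1/(\|\Sigma\|(L_0\vee\log^2 n))\}$ as stated.
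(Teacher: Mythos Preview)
Your proof is correct and follows the same overall skeleton as the paper's: split into the two regimes for $T$, use Proposition~\ref{prop:random-measure-control} in the diameter regime, combine Lemma~\ref{lem:SL-operator} with Freedman's inequality in the covariance regime, and then pass from the high-probability event back to the expectation via the martingale identity $\E g_t = g_0$.

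The one substantive difference is how you handle the covariance regime. You apply It\^o to $\log g_t$, so the quadratic variation $\D[M]_t \lesssim \|\Sigma_t\|(1+L_t)^2\,\D t$ still depends on $L_t$; you then resolve this self-reference with the stopping time $\tau$ at the boundary of $[L_0/2,2L_0]$ before invoking Freedman. The paper instead applies It\^o to $\log\log(e/g_t)$, which makes the drift manifestly nonnegative and, crucially, collapses the quadratic-variation bound to $\D[M]_t \leq 30\|\Sigma_t\|\,\D t$---completely independent of $g_t$. This sidesteps the stopping-time argument entirely and means the paper only needs the one-sided event $\{L_t \geq \tfrac14 L_0\}$ rather than your two-sided window. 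Your route is a bit more elementary (no $\log\log$ trick) at the cost of the extra localization step; the paper's route is cleaner and explains why the ``self-referential structure'' you flag as the main obstacle is in fact an artifact of working with $\log g_t$ rather than $\log\log(e/g_t)$. A minor simplification on your side: since $g_t\leq 1$, the bound $\E[g_t\ind_{A^c}]\leq \P(A^c)$ already suffices, so the Cauchy--Schwarz step is unnecessary.
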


\begin{proof}
We fix any such $E$ and denote $g_{t}:=\pi_{t}(E)$.  We first show
that 
\[
\P\bpar{\log\frac{1}{g_{t}}\geq\frac{1}{4}\log\frac{1}{g_{0}}\,,\forall t\in[0,T]}\geq1-4g_{0}^{2}\,.
\]

When $T\lesssim\tfrac{1}{D^{2}}\,\log\frac{1}{g_{0}}$, the second
part of Proposition~\ref{prop:random-measure-control} with $\gamma=10^{-2}\log\frac{1}{g_{0}}$
ensures that that with probability at least $1-4g_{0}^{1/(c\cdot10^{5})}$,
\[
0.99\log\frac{1}{g_{0}}\leq\log\frac{1}{g_{t}}\leq\bpar{\frac{c}{2}+1.01}\log\frac{1}{g_{0}}\,.
\]
Taking small enough $c$, the claim follows.

When $T\lesssim\norm{\Sigma}^{-1}\,(\log\frac{1}{g_{0}}\vee\log^{2}n)^{-1}$,
It\^o's formula leads to
\[
\D\log\log\frac{e}{g_{t}}=-\frac{1}{g_{t}\log\frac{e}{g_{t}}}\,\D g_{t}+\frac{\log\frac{1}{g_{t}}}{2g_{t}^{2}\log^{2}\frac{e}{g_{t}}}\,\D[g]_{t}\,.
\]
Recall that $g_{t}=\pi_{t}(E)=\int_{E}\D\pi_{t}(x)$, so we can write
for some $\alpha_{t}$ as follows:
\begin{align*}
\D g_{t} & =\int_{E}\inner{x-b_{t},\D B_{t}}\,\D\pi_{t}(x)=\pi_{t}(E)\int_{E}\inner{x-b_{t},\D B_{t}}\,\frac{\D\pi_{t}(x)}{\pi_{t}(E)}=:g_{t}\,\inner{\alpha_{t},\D B_{t}}\,,\\
\D[g]_{t} & =g_{t}^{2}\,\norm{\alpha_{t}}^{2}\,\D t\,.
\end{align*}
Substituting these back,
\begin{equation}
\D\log\log\frac{e}{g_{t}}=-\frac{\inner{\alpha_{t},\D B_{t}}}{\log\frac{e}{g_{t}}}+\frac{\log\frac{1}{g_{t}}}{2\log^{2}\frac{e}{g_{t}}}\,\norm{\alpha_{t}}^{2}\,\D t\geq-\frac{\inner{\alpha_{t},\D B_{t}}}{\log\frac{e}{g_{t}}}=:\D M_{t}\,.\label{eq:ito-loglog}
\end{equation}
It readily follows that $\D[M]_{t}=\norm{\alpha_{t}}^{2}\log^{-2}\frac{e}{g_{t}}\,\D t=g_{t}^{-2}\log^{-2}\frac{e}{g_{t}}\,\D[g]_{t}$.
By the first part of Proposition~\ref{prop:random-measure-control},
\[
\D[M]_{t}\leq30\,\norm{\Sigma_{t}}\,\D t\,.
\]
Let $B$ be a good event defined as
\[
\underbrace{\max_{t\in[0,T]}\norm{\Sigma_{t}}\leq2\,\norm{\Sigma}}_{=:B_{1}}\,,\quad\&\quad\underbrace{\inf_{t\in[0,T]}M_{t}\geq-12\,\sqrt{\norm{\Sigma}\,T\log\tfrac{1}{g_{0}}}}_{=:B_{2}}\,.
\]
Under $B$, integrating \eqref{eq:ito-loglog} and using $T\lesssim\norm{\Sigma}^{-1}\log^{-1}\frac{1}{g_{0}}$
lead to
\[
\log\log\frac{e}{g_{t}}\geq\log\log\frac{e}{g_{0}}-12\sqrt{\norm{\Sigma}\,T\log\tfrac{1}{g_{0}}}\geq\log\log\frac{e}{g_{0}}-\frac{1}{100}\,,
\]
 which implies that $\log\frac{1}{g_{t}}\geq\frac{1}{4}\,\log\frac{1}{g_{0}}$
for all $t\in[0,T]$ due to $g_{0}\leq1/16$. Hence, it suffices to
bound $\P(B^{c})$ by $4g_{0}^{2}$.

We note that
\[
\P(B^{c})=\P\bpar{B_{1}^{c}\cup(B_{1}\cap B_{2}^{c})}\leq\P(B_{1}^{c})+\P(B_{1}\cap B_{2}^{c})\,.
\]
As for the first term, since $T\lesssim\norm{\Sigma}^{-1}(\log\frac{1}{g_{0}}\vee\log^{2}n)^{-1}$,
Lemma~\ref{lem:SL-operator} ensures that for some universal constant
$C>0$ and small enough $c>0$,
\[
\P(B_{1}^{c})\leq\exp\bpar{-\frac{1}{C\norm{\Sigma}\,T}}\leq2g_{0}^{2}\,.
\]
Regarding the second term, since $[M]_{t}\leq60\,\norm{\Sigma}\,T$
under $B_{1}$, Freedman's inequality (Proposition~\ref{prop:freedman})
ensures that
\[
\P(B_{1}\cap B_{2}^{c})\leq\exp\bpar{-\frac{144\,\norm{\Sigma}\,T\log\frac{1}{g_{0}}}{60\,\norm{\Sigma}\,T}}\leq2g_{0}^{2}\,.
\]
Combining those two bounds above, we conclude that 
\[
\P\Bpar{\log\frac{1}{g_{t}}\geq\frac{1}{4}\log\frac{1}{g_{0}}\,,\ \forall t\in[0,T]}\geq\P(B)\geq1-4g_{0}^{2}\,.
\]

We now prove the main claim. Due to $g_{0}\leq\frac{1}{16}$,
\begin{align*}
\E\Bbrack{g_{t}\sqrt{\log\tfrac{1}{g_{t}}}\,\ind_{[g_{t}\leq1/2]}} & \geq\E\Bbrack{g_{t}\sqrt{\log\tfrac{1}{g_{t}}}\,\ind_{[\log\frac{1}{g_{t}}\geq\frac{1}{4}\,\log\frac{1}{g_{0}}]}}\geq\frac{1}{2}\sqrt{\log\tfrac{1}{g_{0}}}\,\E\bbrack{g_{t}\,\ind_{[\log\frac{1}{g_{t}}\geq\frac{1}{4}\,\log\frac{1}{g_{0}}]}}\\
 & \underset{(i)}{\geq}\frac{1}{2}\sqrt{\log\tfrac{1}{g_{0}}}\,\Bpar{g_{0}-\P\bpar{\log\frac{1}{g_{t}}<\frac{1}{4}\,\log\frac{1}{g_{0}}}}\underset{(ii)}{\geq}\frac{1}{4}\,g_{0}\sqrt{\log\tfrac{1}{g_{0}}}\,,
\end{align*}
where we used $\E g_{t}=g_{0}$ and $g_{t}\leq1$ in $(i)$ , and
used the first claim in $(ii)$.
\end{proof}
 We now prove the main theorem of this section.
\begin{proof}
[Proof of Theorem~\ref{thm:lsi-general-bound}] Let $E$ be any
measurable subset of measure $\pi(E)\leq1/2$. If $\pi(E)\geq1/16$,
then $\cpi(\pi)\lesssim D^{2}\wedge\norm{\Sigma}\log n$ and equivalence
of the Cheeger constants and \eqref{eq:pi} (due to the Buser--Ledoux
inequality) implies that
\[
\pi(\de E)\gtrsim\frac{\pi(E)}{\sqrt{D^{2}\wedge\norm{\Sigma}\log n}}\,.
\]
Moreover,
\[
\pi(E)\sqrt{\log\tfrac{1}{\pi(E)}}\lesssim\pi(E)\,.
\]
Hence,
\[
\frac{\pi(\de E)}{\pi(E)\sqrt{\log\frac{1}{\pi(E)}}}\gtrsim\frac{1}{\sqrt{D^{2}\wedge\norm{\Sigma}\log n}}\,,
\]
and the claim immediately follows.

When $\pi(E)\leq1/16$, we use the SL process $(\pi_{t})_{t\in[0,T]}$
described above, obtaining that 
\[
\pi(\de E)\gtrsim\sqrt{T}\,\E\Bbrack{\pi_{T}(E)\sqrt{\log\tfrac{1}{\pi_{T}(E)}}\,\ind_{[\pi_{T}(E)\leq1/2]}}\,.
\]
By Lemma~\ref{lem:random-measure-control}, if
\[
T\asymp\max\Bbrace{\frac{1}{D^{2}}\,\log\frac{1}{\pi(E)},\Bpar{\norm{\Sigma}\,\bpar{\log\frac{1}{\pi(E)}\vee\log^{2}n}}^{-1}}\,,
\]
then $\pi(\de E)\gtrsim\sqrt{T}\,\pi(E)\log^{1/2}\frac{1}{\pi(E)}$.
When $g_{0}=\log\frac{1}{\pi(E)}(\geq\log16)$ is less than $\log^{2}n$,
we have
\[
T\gtrsim\frac{g_{0}}{D^{2}}+\frac{1}{\norm{\Sigma}\log^{2}n}\geq\frac{1}{D^{2}}+\frac{1}{\norm{\Sigma}\log^{2}n}\geq\frac{1}{D^{2}\wedge\norm{\Sigma}\log^{2}n}\,.
\]
When $g_{0}$ is larger than $\log^{2}n$, it follows from the AM-GM
inequality that
\[
T\gtrsim\frac{g_{0}}{D^{2}}+\frac{1}{\norm{\Sigma}\,g_{0}}\geq\frac{2}{D\,\norm{\Sigma}^{1/2}}\,.
\]
Therefore, 
\[
T\gtrsim\max\bbrace{D\,\norm{\Sigma}^{1/2},D^{2}\wedge\norm{\Sigma}\log^{2}n}^{-1}\,.
\]
Since $\clsi(\pi)\asymp\clch^{-2}(\pi)\lesssim T^{-1}$, we obtain
that 
\[
\clsi(\pi)\lesssim\max\bbrace{D\,\norm{\Sigma}^{1/2},D^{2}\wedge\norm{\Sigma}\log^{2}n}\,.
\]
Since $a\wedge b\leq\sqrt{ab}$, we can further bound the second term
by $D\,\norm{\Sigma}^{1/2}\log n$.
\end{proof}

\end{document}